\newcommand\mydef{\mathrel{\overset{\makebox[0pt]{\mbox{\normalfont\tiny\sffamily def}}}{=}}}
\providecommand{\keywords}[1]
{
  \small	
  \textbf{\textit{Keywords---}} #1
}
\newcommand{\sos}{\text{\sc{SoS}}}
\newcommand{\F}{\mathcal{F}}
\newcommand{\Cc}{\mathcal{C}}
\newcommand{\GIdeal}[1]{\left\langle #1 \right\rangle}
\newcommand{\CSP}{\textsc{CSP}}
\newcommand{\IMP}{\textsc{IMP}}
\newcommand{\Pol}{\textsf{Pol}}
\newcommand{\PC}{\textsc{PC}}
\newcommand{\Nsatz}{\textsc{Nsatz}}
\newcommand{\Max}{\textsf{Max}}
\newcommand{\Min}{\textsf{Min}}
\newcommand{\lex}{\textsf{lex }}
\newcommand{\lexns}{\textsf{lex}}
\newcommand{\grlex}{\textsf{grlex }}
\newcommand{\grlexns}{\textsf{grlex}}
\newcommand{\Variety}[1]{{\textbf{V}}\left( #1 \right)}
\newcommand{\I}{\emph{$\rm{I}$}}
\newcommand{\Zz}{\mathbb{Z}}
\newcommand{\N}{\mathbb{N}}
\newcommand{\multideg}{\textnormal{multideg}}
\newcommand{\LM}{\textnormal{LM}}
\newcommand{\LT}{\textnormal{LT}}
\newcommand{\LCM}{\textnormal{lcm}}
\newcommand{\GB}{\text{Gr\"{o}bner} }
\newcommand{\Field}{\mathbb{Q}}
\newcommand{\Do}{\mathcal{D}}
\newcommand{\T}{\mathcal{T}}
\newcommand\p{\mathcal{P}}
\newcommand\q{\mathcal{Q}}
\newcommand\uval{{(|D|-1)}}
\newcommand\db{2(d+k-1)}
\newcommand*\patchAmsMathEnvironmentForLineno[1]{%
  \expandafter\let\csname old#1\expandafter\endcsname\csname #1\endcsname
  \expandafter\let\csname oldend#1\expandafter\endcsname\csname end#1\endcsname
  \renewenvironment{#1}%
     {\linenomath\csname old#1\endcsname}%
     {\csname oldend#1\endcsname\endlinenomath}}% 
\newcommand*\patchBothAmsMathEnvironmentsForLineno[1]{%
  \patchAmsMathEnvironmentForLineno{#1}%
  \patchAmsMathEnvironmentForLineno{#1*}}%
\newtheorem{theorem}{Theorem}[section]
\newtheorem{lemma}[theorem]{Lemma}
\newtheorem{definition}[theorem]{Definition}
\newtheorem{proposition}[theorem]{Proposition}
\newtheorem{corollary}[theorem]{Corollary}
\newtheorem{example}[theorem]{Example}
\newtheorem{problem}[theorem]{Problem}
\newtheorem{remark}[theorem]{Remark}
\newcommand{\ignore}[1]{}
\newcommand{\SOSe}{\rm{SOS}_\varepsilon\text{-complete}}
\title{On the Degree Automatability of Sum-of-Squares Proofs} %TODO Please add
\date{}
\author{Alex Bortolotti \thanks{University of Applied Sciences and Arts of Southern Switzerland, IDSIA, Lugano, Switzerland. E-mail: \href{mailto:alex.bortolotti@supsi.ch}{\texttt{alex.bortolotti@supsi.ch}}.} \  \and Monaldo Mastrolilli \thanks{University of Applied Sciences and Arts of Southern Switzerland, IDSIA, Lugano, Switzerland. E-mail: \href{mailto:monaldo.mastrolilli@supsi.ch}{\texttt{monaldo.mastrolilli@supsi.ch}}.} \ \and Luis Felipe Vargas\thanks{University of Applied Sciences and Arts of Southern Switzerland, IDSIA, Lugano, Switzerland. E-mail: \href{mailto:luis.vargas@supsi.ch}{\texttt{luis.vargas@supsi.ch}}.}}
\begin{document}

\maketitle

\begin{abstract}
The Sum-of-Squares (\sos) hierarchy, also known as Lasserre hierarchy, has emerged as a promising tool in optimization.
However, it remains unclear whether fixed-degree $\sos$ proofs can be automated [O'Donnell (2017)].
Indeed, there are examples of polynomial systems with bounded coefficients that admit low-degree $\sos$ proofs, but these proofs necessarily involve numbers with an exponential number of bits, implying that low-degree $\sos$ proofs cannot always be found efficiently.
 
A sufficient condition derived from the Nullstellensatz proof system [Raghavendra and Weitz (2017)] identifies cases where bit complexity issues can be circumvented.
One of the main problems left open by Raghavendra and Weitz is proving any result for refutations, as their condition applies only to polynomial systems with a large set of solutions.

In this work, we broaden the class of polynomial systems for which degree-$d$ $\sos$ proofs can be automated. To achieve this, we develop a new criterion and we demonstrate how our criterion applies to polynomial systems beyond the scope of Raghavendra and Weitz's result. In particular, we establish a separation for instances arising from Constraint Satisfaction Problems ($\CSP$s).
Moreover, our result extends to refutations, establishing that polynomial-time refutation is possible for broad classes of polynomial time solvable constraint problems, highlighting a first advancement in this area.

\noindent\keywords{Sum of squares, Polynomial calculus, Polynomial ideal membership, Polymorphisms, \GB basis theory, Constraint satisfaction problems, Proof complexity.}
\end{abstract}
\newpage
{
  \hypersetup{linkcolor=black}
  \tableofcontents
}

%%%%%%%%%%%%%%%%%%%%%%%%%%%%%%%%%%%%%%%%%%%%%%%%%%%%%%%%%%%%%%%%%%%%%%%%%%%%%%

\newpage
%%%%%%%%%%%%%%%%%%%%%%%%%%%%%%%%%%%%%%%%%%%%%%%%%
\section{Introduction}\label{sect:introduction}
%%%%%%%%%%%%%%%%%%%%%%%%%%%%%%%%%%%%%%%%%%%%%%%%%
Semidefinite programming (SDP) relaxations have been a powerful technique for approximation algorithm design ever since the celebrated result of Goemans and Williamson \cite{GoemansWilliamson1995}. With the aim to construct stronger and stronger SDP relaxations, the Sum-of-Squares ($\sos$) hierarchy has emerged as a systematic and versatile method for approximating many combinatorial optimization problems, see e.g. \cite{Lasserre2001, Parrilo03, FlemingKothariPitassi19, Laurent2009}.
However, fundamental questions remain unanswered. For instance, it is still unknown under what conditions \sos\ can be \emph{automated}, meaning whether one can find a degree-$d$ \sos\ proof in time $n^{O(d)}$, provided it exists.
%However, it is still not clear whether fixed degree $\sos$ can be automatized. 
%
%As noted by O'Donnell~\cite{odonnell2017}, the existence of a bounded-degree $\sos$ proof does not necessarily entail the existence of a proof with low bit complexity. This observation demonstrates that the frequently cited assertion—that for any fixed degree \( d \), the \( d \)-th \sos\ relaxation can be found in polynomial time—is significantly inaccurate.
O'Donnell \cite{odonnell2017} observed that the prevailing belief regarding the automatability of \sos\ using ellipsoid algorithms is not entirely accurate. Issues may arise when the only degree-$d$ proofs contain exceedingly large coefficients, thereby hindering the ellipsoid method from operating within polynomial time. 
In this paper, we establish novel conditions that ensure \sos\ automatability.

\paragraph{Polynomial optimization.}
Polynomial optimization asks for minimizing a polynomial over a given set of polynomial constraints. That is, given polynomials $r, p_1, \ldots, p_m \in \mathbb{R}[x_1, \ldots, x_n]$, the task is to find (or approximate) the infimum of the following probth:
\begin{align}\label{eqn:POP_formulation}
    \inf_{x\in S} r(x), \quad {\text{where}} \quad S=\{x\in \mathbb{R}^n \ | \ p_1(x)=\cdots=p_m(x)=0\}.
\end{align}
%Typically $S$ is defined both by a set of equality constraints $\mathcal{P}$ and set of inequality constraints $\mathcal{Q}$. However, for all our applications, it suffices to restrict to the case where $\mathcal{Q} = \emptyset$ and $S$ is a finite set. This setting allows us to model many relevant combinatorial problems. However, we emphasize that our results can be easily extended to the semialgebraic setting, i.e. where $\mathcal{Q} \neq \emptyset$. For these technicalities we refer to the paper version.
%
Typically, \( S \) is defined by a set of equality constraints, in this case \( \mathcal{P} = \{p_1, \ldots, p_m\}\), as well as a set of inequality constraints, \( \mathcal{Q} \). For all applications considered here, however, it suffices to restrict to the case where \( \mathcal{Q} = \emptyset \) and \( S \) is finite, enabling the modeling of various relevant combinatorial problems. Nonetheless, we emphasize that our results readily extend to the semialgebraic setting, where \( \mathcal{Q} \neq \emptyset \). For further details, we refer to \cref{sect:semialgebraic_sos_criterion}.

A common approach for solving (or approximating) a polynomial optimization problem is by means of sums of squares of polynomials, as we now explain.
\begin{definition}[$\sos$ Proof System]\label{def:SOS_proof}
    Let $\mathcal{P} = \{p_1=0,\ldots,p_m=0\}$ be a set of polynomial equations, and consider a polynomial $r\in \mathbb{R}[x_1, \dots, x_n]$. An $\sos$ proof of $``r\geq 0"$ (over $S$) from $\mathcal{P}$ is an identity of the form $r = \sum_{i=1}^{t_0} s_i^2 + \sum_{i=1}^m h_i p_i,$
%    \begin{equation}\label{eqn:SOS_proof}
%        r = \sum_{i=1}^{t_0} s_i^2 + \sum_{i=1}^m h_i p_i,
%    \end{equation}
    where $s_i, h_i\in \mathbb{R}[x_1, \ldots, x_n]$. Moreover, we say that the above $\sos$ proof %\eqref{eqn:SOS_proof} 
    has \emph{degree} at most $d$ if $\deg(s_i^2) \leq d$, for all $i \in [t_0]$, and $\deg(h_i p _i) \leq d$ for all $i \in [s]$. An \emph{$\sos$ refutation of $\mathcal{P}$} is an $\sos$ proof of $``-1 \geq 0"$ from $\mathcal{P}$.
\end{definition}

The $\sos$ hierarchy is based on the following observation: if there exists an $\sos$ proof of $`` r - \theta \geq 0"$ from $\mathcal{P}$, then we have that $\min_{x \in S} r(x) \geq \theta$. Moreover, the supremum of the values $\theta$ such that there is an $\sos$ proof of $``r - \theta \geq 0"$ from $\mathcal{P}$ of degree $d$, is called $d$-th $\sos$ relaxation, also known as the $d$-th Lasserre relaxation of problem (\ref{eqn:POP_formulation}) \cite{Lasserre2001,Parrilo03}. It turns out that the $d$-th  $\sos$ relaxation can be formulated as an SDP of size $n^{O(d)}$.

\(\sos\) relaxations have gained increasing popularity and success; yet, they remain a relatively recent development. Fundamental questions about their properties and capabilities still lack definitive answers. %For example, we do not even know when $\sos$ relaxations can be computed in polynomial time. As observed by O’Donnell~\cite{odonnell2017}, the existence of a bounded degree $\sos$ proof does not necessarily imply the existence of a low bit complexity proof, showing that the often repeated claim that, for any fixed degree $d$, the $d$-th $\sos$ relaxation can be found in polynomial time is far from true.
O'Donnell~\cite{odonnell2017} posed the open problem of identifying meaningful conditions that ensure that “small” \(\sos\) proofs can be found. We will consider systems $\mathcal{P} = \{p_1 = 0,\ldots,p_m = 0\}$ of polynomials and an ``input" polynomial $r$ of degree at most $d$, with the (mild) assumption that the bit complexity needed to represent $\mathcal{P}$ and $r$ is polynomial in $n$. Moreover, we assume that $\mathcal{P}$ is explicitly Archimedean, i.e. there is $N<2^{poly(n^d)}$ such that there exists a ``small" $\sos$ proof of $``N - x_i^2 \geq 0"$ from $\mathcal{P}$ for any variable $x_i$.
We restate O'Donnell's question as follows: \emph{Consider an explicitly Archimedean polynomial system \(\mathcal{P}\); under what conditions on $\mathcal{P}$ does the following property hold?}
\begin{itemize}[label=$(\rm{P})$]
    \item Assume there exists an $\sos$ proof of $``r \geq 0"$ from $\mathcal{P}$ of degree $2d$. Then, for every $\varepsilon>0$, there also exists an $\sos$ proof of $``r +\varepsilon \geq 0"$ from $\mathcal{P}$ with degree $O(d)$ and coefficients bounded by $2^{poly(n^d,\lg \frac{1}{\varepsilon})}$.
\end{itemize}

The assumption of explicitly Archimedeanity guarantees that if there exists an approximate $\sos$ proof of $``r - \theta \geq 0"$, then there exists an (exact) $\sos$ proof of $``r - \theta + \varepsilon \geq 0"$, up to any arbitrary precision $\varepsilon$.
Moreover, explicit Archimedeanity implies that the SDP has no duality gap~\cite{JoszH16}. Therefore, it is often assumed in literature since numerical methods for solving SDPs are guaranteed to converge only when the duality gap is zero.

%
%%%%%%%%%%%%%%%%%%%%%%%%%%%%%%%%%%%%
\paragraph{$\Nsatz$ criterion.}
%%%%%%%%%%%%%%%%%%%%%%%%%%%%%%%%%%%%
Since O'Donnell~\cite{odonnell2017} raised his question in 2017, very few papers have been published that address this issue. An initial elegant solution to this question is provided by Raghavendra and Weitz~\cite{raghavendra_weitz2017}, which is based on the Nullstellensatz proof system~\cite{BeameIKPP94}, as we will now outline. For additional results from the literature related to this problem, see~\cref{sect:previous_work}.

We denote the vector space of polynomials for variables $x_1, \ldots, x_n$ up to degree $d$ as $\mathbb{R}[x_1, \dots, x_n]_d$. Moreover, we denote by $\I(S) = \{p \in \mathbb{R}[x_1,\ldots, x_n] \ | \ p(x) = 0 \  \forall x \in S\}$ the vanishing ideal generated by $S$, and by $\I_{d}(S) = \I(S) \cap \mathbb{R}[x_1, \ldots, x_n]_{d}$ the $d$-truncated ideal.  
%%%%%%%%%%%%%%%%%%%%%%%%%%%%%%%%%%%%%%%%%%%%
\begin{definition}[$\Nsatz$ Proof System]\label{def:d-completeness}
    Consider a system of polynomial equations $\mathcal{P}=\{p_1=0,\ldots,p_m=0\}$. A \emph{Nullstellensatz ($\Nsatz$) proof} of $``p=0"$ from $\mathcal{P}$ is a sequence of polynomials $(h_1,\ldots,h_m)$ such that the polynomial identity $p=\sum_{i=1}^m h_ip_i$ holds. We say that the proof has degree $d$ if $\max_i\{\deg h_ip_i\}=d$.  We say that $\mathcal{P}$ is $\Nsatz$ \emph{$d$-complete} over $S$ if for every $p\in \I_{d}(S)$, the identity $``p=0"$ can be derived using a degree-$O(d)$ $\Nsatz$ proof from $\mathcal{P}$.
\end{definition}
%%%%%%%%%%%%%%%%%%%%%%%%%%%%%%%%%%%%%%%%%%%%%
Next, we recall the criterion proposed by Raghavendra and Weitz for the algebraic setting~\footnote{We remark that their criterion is formulated for the semialgebraic setting, i.e. when there are also inequalities.}. Moreover, for the sake of clarity of the exposition, we present their result in the case $S$ is finite. We define the algebraic variety \(S\) as the set of common zeros of \(\mathcal{P}=\{p_1=0,\ldots,p_m=0\}\). We first observe that this criterion necessitates a technical condition on the solution set \( S \), referred to as \emph{\(\delta\)-spectrality}, which we will outline below.
Let \(\mathbf{v}_d\) represent the column vector whose entries correspond to the elements of the standard monomial basis of \(\mathbb{R}[x_1, \dots, x_n]_d\). For \(\alpha \in \mathbb{R}^n\), \(\mathbf{v}_d(\alpha)\) denotes the vector of real numbers obtained by evaluating the entries of \(\mathbf{v}_d\) at \(\alpha\). 

%%%%%%%%%%%%%%%%%%%%%%%%%%%%%%%%
\begin{definition}
    Let $S$ be a finite algebraic variety. We say that $S$ is $\delta$-spectrally rich up to degree $d$ if every nonzero eigenvalue of the moment matrix $\frac{1}{|S|} \sum_{\alpha \in S} \mathbf{v}_d(\alpha)\mathbf{v}_d^{\sf{T}}(\alpha)$ is at least $\delta$.
\end{definition}
%%%%%%%%%%%%%%%%%%%%%%%%%%%%%%%%%
This property holds for $\frac{1}{\delta} = 2^{poly(n^d)}$ in many natural instances, for example when $S \subseteq \{0,1\}^n$, or more in general, when $S \subseteq D^n$ for any finite domain $D \subseteq \mathbb{Q}$ (see \cref{sect:delta_spectrality} and \cite{raghavendra_weitz2017}).

%%%%%%%%%%%%%%%%%%%%%%%%%%%%%%%%%%%%%%%%%%%%%%%%%%%%%%%%%%%%%%%%%%%%%%%%%%%%%%%%%%
\begin{theorem}[$\Nsatz$ criterion~\cite{raghavendra_weitz2017}]\label{th:Nsatz_crit}
    Let $\mathcal{P}$ be a system of polynomial equalities over $n$ variables with solution set $S$. Assume that 
    \begin{itemize}
        \item [(1)] $S$ is $\delta$-spectrally rich up to degree $d$.
        \item [(2)] $\mathcal{P}$ is $\Nsatz$ $d$-complete over $S$.\label{d-comp}
    \end{itemize} 
    Let $r$ be a polynomial and assume there exists an $\sos$ proof of $``r \geq 0"$ from $\mathcal{P}$ of degree $d$. Then, there also exists an $\sos$ proof of $``r \geq 0"$ from $\mathcal{P}$ with degree $O(d)$ and with absolute values of the coefficients bounded by $2^{poly(n^d ,\lg \frac{1}{\delta})}$.
\end{theorem}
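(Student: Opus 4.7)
The plan is to take an arbitrary degree-$d$ $\sos$ proof $r=\sum_i s_i^2+\sum_j h_j p_j$ with possibly huge coefficients, write the $\sos$ part in Gram form $\sum_i s_i^2=\vb{v}^{\sf T}Q\,\vb{v}$ with $\vb{v}:=\vb{v}_{d/2}$ the monomial vector up to degree $d/2$ and $Q\succeq 0$, and then massage this into a well-conditioned proof by replacing $Q$ with its compression $Q':=\Pi Q\Pi$ onto the range of the normalized moment matrix $M=\frac{1}{|S|}\sum_{\alpha\in S}\vb{v}(\alpha)\vb{v}(\alpha)^{\sf T}$. The idea is that $Q'$ will be well-conditioned thanks to the $\delta$-spectral richness assumption, while the change $\vb{v}^{\sf T}(Q-Q')\vb{v}$ will lie in $\I_d(S)$ and can be re-certified as a short ideal combination by invoking $\Nsatz$ $d$-completeness.

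First I derive the entrywise bound on $Q'$. Evaluating the identity at $\alpha\in S$ kills the $h_jp_j$ terms, giving $\vb{v}(\alpha)^{\sf T}Q\,\vb{v}(\alpha)=r(\alpha)\ge 0$. Explicit Archimedeanity forces $|r(\alpha)|\le R:=2^{poly(n^d)}$ on $S$, and averaging over $\alpha\in S$ yields $\tr(QM)\le R$. By $\delta$-spectral richness $M\succeq\delta\Pi$, hence
$$\tr(\Pi Q\Pi)=\tr(Q\Pi)\le \tfrac{1}{\delta}\tr(QM)\le R/\delta.$$
Since $Q'=\Pi Q\Pi\succeq 0$, this bounds the Frobenius norm and hence every entry of $Q'$ by $2^{poly(n^d,\lg 1/\delta)}$; consequently the coefficients of the polynomial $\vb{v}^{\sf T}Q'\vb{v}$ are bounded by the same quantity up to a factor $n^{O(d)}$.

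Next I show that swapping $Q$ for $Q'$ only changes the polynomial by an element of $\I_d(S)$. The kernel of $M$ equals $\{u:u^{\sf T}\vb{v}\in\I_{d/2}(S)\}$, so every entry of the vector $(I-\Pi)\vb{v}$ is a degree-$(d/2)$ polynomial vanishing on $S$. Using the decomposition $Q-Q'=(I-\Pi)Q+\Pi Q(I-\Pi)$, the polynomial $\vb{v}^{\sf T}(Q-Q')\vb{v}$ splits into sums of products of such polynomials with entries of $\vb{v}$, and therefore lies in $\I_d(S)$. Since $r-\vb{v}^{\sf T}Q\vb{v}=\sum_j h_jp_j\in\I_d(S)$ as well, we conclude $r-\vb{v}^{\sf T}Q'\vb{v}\in\I_d(S)$. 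By $\Nsatz$ $d$-completeness there exist $h_j'$ of degree $O(d)$ with $r-\vb{v}^{\sf T}Q'\vb{v}=\sum_j h_j'p_j$. Because the $h_j'$ are determined by a rational linear system of size $n^{O(d)}$ whose data has bit complexity $poly(n^d,\lg 1/\delta)$, Cramer-rule style bounds produce a solution with coefficients bounded by $2^{poly(n^d,\lg 1/\delta)}$. Assembling, $r=\vb{v}^{\sf T}Q'\vb{v}+\sum_j h_j'p_j$ is the desired proof.

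The main obstacle is the interplay between the two hypotheses. Spectral richness alone only controls the projection of $Q$ onto $\mathrm{range}(M)$, so discarding the kernel component of $Q$ is legitimate only modulo $\I(S)$ and must be re-certified via the $p_j$. The technical crux is to ensure that this re-certification does not itself blow up the coefficients, and this is exactly what the $\Nsatz$ $d$-completeness hypothesis provides, together with the elementary but essential observation that fixed-degree $\Nsatz$ certificates are solutions of a polynomial-size rational linear feasibility problem and therefore enjoy polynomial bit complexity in the data.
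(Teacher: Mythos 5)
Your proposal is correct and follows essentially the same route the paper takes: the compression $Q\mapsto \Pi Q\Pi$, the trace bound via $\delta$-spectral richness, and the observation that the discarded part lies in the truncated vanishing ideal are exactly the content of the paper's decomposition lemma (Lemma~\ref{th:SOS_decomposition}), after which the ideal part is re-certified by $\Nsatz$ $d$-completeness as in the remark following that lemma. The only addition is that you spell out the Cramer-rule/linear-system argument for why the $\Nsatz$ certificate can be taken with bounded coefficient magnitudes, a detail the paper merely asserts; note that since the entries of $Q'$ are reals of bounded magnitude rather than rationals of bounded bit length, the bound one actually extracts is on magnitudes (ratios of determinants of bounded-magnitude data), which is all the theorem's conclusion requires.
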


%\paragraph{Limitations.} 
This criterion is applicable to various optimization problems, including \textsc{Max-Clique}, \textsc{Matching}, and \textsc{Max-CSP}~\cite{raghavendra_weitz2017}. 
However, the \(\Nsatz\) criterion is subject to significant limitations. First, the criterion is sufficient but not necessary.
Second, it is important to observe how the $\Nsatz$ criterion (see condition \textit{(2)} in \cref{th:Nsatz_crit}) is influenced by the complexity of a well-known problem known as the \emph{Ideal Membership Problem} (\(\IMP\)). This problem involves determining whether an input polynomial \(r\) belongs to the ideal generated by \(\{p_1, \ldots, p_m\}\). We denote the \(\IMP\) where the input polynomial \(r\) has degree at most \(d=O(1)\) as \(\IMP_d\). The \(\IMP\) was first studied by Hilbert~\cite{Hilbert1893} and is a fundamental algorithmic problem with significant applications in solving polynomial systems and polynomial identity testing (see, for example, \cite{Cox}). In general, the \(\IMP\) is notoriously intractable, and the results of Mayr and Meyer demonstrate that it is EXPSPACE-complete \cite{Mayr1989, MAYR1982305}.
It remains unclear under what conditions the \(\IMP\) is tractable within the \(\Nsatz\) proof system, specifically regarding when condition \textit{(2)} in \cref{th:Nsatz_crit} is satisfied. More importantly, the limitations of the \(\Nsatz\) proof system (see e.g. \cite{FlemingKothariPitassi19}) affect the applicability of \cref{th:Nsatz_crit}. 
In simpler terms, it is intuitive to suggest that if we could replace the \(\Nsatz\) proof system with a more powerful proof system, we would be able to broaden the applicability of the criterion to new problems.

Finally, a key limitation—and one of the main open problems left by Raghavendra and Weitz \cite{raghavendra_weitz2017, Weitz:Phd}—is the inapplicability of the $\Nsatz$ criterion to \(\sos\) refutations.

For example the $\Nsatz$ criterion does not allow one to show that the following decision problem can be solved in polynomial time. 

\begin{problem}[Degree-$d$ Sum-of-Squares Refutation for $\CSP$]\label{prob:sos-csp}
Given a Constraint Satisfaction Problem (\(\CSP\)) with constraints \(\phi_1(x) = 0, \dots, \phi_m(x) = 0\) over a finite domain, decide whether:
\begin{itemize}
    \item \textbf{YES}: There exists a degree-\(d\) sum-of-squares (\(\sos\)) proof of the infeasibility of the system, i.e., a derivation of \(-1 \geq 0\) from the axioms \(\phi_1(x) = 0, \dots, \phi_m(x) = 0\) and the domain constraints.
    \item \textbf{NO}: No such degree-\(d\) \(\sos\) proof exists.
\end{itemize}
\end{problem}
Let us call this problem ``$\sos$-$\CSP$". This is perhaps the most natural formulation of ``the $\sos$ algorithm for $\CSP$s". 
%it is embarrassing that we do not know if it is really a polynomial-time algorithm or not.
It is quite striking that we still do not know whether there exists or not a polynomial-time decider for $\sos$-$\CSP$ (even for certain restricted classes of problems).

%%%%%%%%%%%%%%%%%%%%%%%%%%%%%%%%%%%%%%%%%%%%%%%%%%%%%%%%%%%%%%%%%%%%%%%
\subsection*{References to the related literature}\label{sect:previous_work}
%%%%%%%%%%%%%%%%%%%%%%%%%%%%%%%%%%%%%%%%%%%%%%%%%%%%%%%%%%%%%%%%%%%%%%%
O'Donnell \cite{odonnell2017} raised the issue of \(\sos\) bit complexity in 2017, as discussed in \cref{sect:introduction}. 
O'Donnell also presented an example of a polynomial system with bounded coefficients that allows for a degree 2 \(\sos\) proof, which necessarily has doubly-exponential coefficients.

The aforementioned result (\cref{th:Nsatz_crit}) by Raghavendra and Weitz \cite{raghavendra_weitz2017} offered an initial elegant, albeit partial, solution.
Raghavendra and Weitz expanded O'Donnell's work and presented an example of a polynomial system containing the Boolean constraints and a polynomial that admits degree 2 $\sos$ proof, but for which any $\sos$ proof of degree $O(\sqrt{n})$ must have coefficients of doubly-exponential magnitude in $n$.

Interestingly, Hakoniemi \cite{Hakoniemi21} demonstrated that both \(\sos\) and Polynomial Calculus ($\PC$)  refutations over Boolean variables encounter the same bit complexity issue. This finding also raises significant concerns regarding the frequently asserted degree automatability of \(\PC\).

Furthermore, strategies in \cite{BharathiM21,BulatovRSTOC22,BulatovARXIV21, Mastrolilli21TALG} to address the problem of \sos\ bit complexity involve replacing the original input polynomial constraints \(\mathcal{P}\) with a new set of polynomials $\mathcal{P}^{(d)}$ that satisfies the \(\Nsatz\) criterion, and generally depends on the \sos\ degree \(d\). This set $\mathcal{P}^{(d)}$ is computed externally (by an algorithm specifically designed for this purpose \footnote{In general such an algorithm cannot be simulated by $\sos$. We defer the interested reader to \cref{sect:csp_literature} for details.}), serving as the input for \(\sos\) in place of $\mathcal{P}$. For example, in the semilattice case, if $\mathcal{P}$ consists of $m$ polynomials, the set $\mathcal{P}^{(d)}$, used in \cite{BulatovRSTOC22,Mastrolilli21TALG}, is generated by a specific algorithm and has a size of \( m^{O(d)} \); that is, $\mathcal{P}^{(d)}$ depends on \( d \) and grows exponentially with the \(\sos\) degree \( d \). This preprocessing step ensures that \(\sos\) retains ``low" bit complexity, but only if $\mathcal{P}$ is substituted with $\mathcal{P}^{(d)}$. Essentially, the approach utilized in \cite{BharathiM21,BulatovRSTOC22,BulatovARXIV21, Mastrolilli21TALG} is to apply the \(\Nsatz\) criterion without enhancing or extending it, with the goal of replacing the initial input polynomial system with a new one that is computed externally and satisfies the \Nsatz\ criterion.
Our results demonstrate that all preprocessing steps employed in \cite{BharathiM21,BulatovRSTOC22,Mastrolilli21TALG} are unnecessary, as \(\sos\) achieves low bit complexity for any fixed \( d \) when \(\mathcal{P}\) is provided directly as input.

Recently, Gribling et al. \cite{Gribling23} showed that, under specific algebraic and geometric conditions, $\sos$ relaxations can be computed in polynomial time. However, as they noted, their algebraic conditions are more restrictive than $d$-completeness. Their geometric conditions apply to systems of only inequality constraints with full dimensional feasibility set.  Additionally, Palomba et al. \cite{palomba} showed that, under some mild conditions, sum-of-squares bounds for copositive programs can be computed in polynomial time. These results also yielded insights into the bit complexity of $\sos$ proofs.
%For other approaches using Raghavendra and Weitz techniques, see \cite{Mastrolilli21TALG, BulatovARXIV21} and the discussion in \cref{sect:csp_literature}.

The quest of characterizing conditions for ensuring tractability of the $\sos$ proof system fits into the more general context of real algebraic geometry, and in particular of the so-called \textit{effective Positivstellensatz}, i.e., the study of the complexity for representing polynomials using rational sums of squares. To this end, we mention Baldi et al. \cite{BaldiKM24}, who recently proved an exponential upper bound on the coefficients' bit complexity of sums of squares proofs in the general case of radical zero-dimensional ideals when the equality constraints form a graded basis.
Furthermore, as observed, $\sos$ feasibility can be reformulated as an SDP feasibility problem, which remains a well-known open question.
In this context, we highlight the work of Pataki and Touzov \cite{PatakiT24}, who showed that many SDP's have feasible sets whose elements have large encoding size. Moreover, they initiated the study of characterizing the conditions under which SDP feasibility sets with large encoding sizes occur.

Several papers in the literature investigate the automatability of the $\sos$ proof system in relation to degree lower bounds. Specifically, various instances have been studied where a set of polynomial equations $\mathcal{P} = \{p_1=0, \ldots, p_m=0\}$ and a polynomial $q$ satisfy the condition that `` $q \geq 0$ on $S$", but any $\sos$ proof of this fact necessarily requires a high degree (see, e.g.,~\cite{GrigorievHP_MMJ02,KLM-hard-prob-formulation,KLM-unbounded-SOS-hierarchy-integrality-gap,KLM-tight-SOS-LB-binary-POP,KLM-SOS-hierarchy-LB-symmetric-formulations,potechin:SOS-LB-from-symmetry}). Within the context of the Lasserre hierarchy, these examples correspond to polynomial optimization problems for which many rounds of the hierarchy are needed to reach optimality.

In contrast, our focus is on a different aspect of automatability: we aim to understand under what conditions a fixed-degree level of the Lasserre hierarchy can be computed in time polynomial in the input size (up to a prescribed precision). This shifts the question from degree necessity to degree tractability within a computational framework.

%%%%%%%%%%%%%%%%%%%%%%%%%%%%%%%%%
\subsection{Our contributions}\label{sect:contribution}
%%%%%%%%%%%%%%%%%%%%%%%%%%%%%%%

Our main contribution is to study and expand the class of polynomial systems for which finding degree-$d$ \(\sos\) proofs can be automated. To this end, we first introduce a new criterion based on the \textit{Polynomial Calculus} ($\PC$) proof system which guarantees that property (\textsc{P}) holds, referred to as the \textit{$\PC$ criterion}. This criterion holds both for $\sos$ refutations and for $\sos$ proofs over feasible systems over finite domains. Remarkably, as we will demonstrate, this criterion applies to a broad class of instances arising from Constraint Satisfaction Problems ($\CSP$s) where the $\Nsatz$ criterion does not. Specifically, we will establish tractability results for broad class of $\sos$-$\CSP$, and, moreover, prove complete degree-$d$ automatability beyond refutations for certain polynomial systems arising from $\CSP(\Gamma)$. The proof of the $\PC$ criterion combines several results, including a simulation of the $\PC$ proof system by the $\sos$  proof system together with the development of a different criterion based on the $\sos$ proof system, called the \(\sos_{\varepsilon}\) criterion.

%%%%%%%%%%%%%%%%%%%%%%%%%%%%%%%%%%%%%%%%%%%%%%%%
\subsubsection*{$\PC$ criterion} \label{sect:PC-crit}
%%%%%%%%%%%%%%%%%%%%%%%%%%%%%%%%%%%%%%%%%%%%%%%%
We begin by introducing \textit{polynomial systems over finite domains}, i.e., systems where each variable is restricted to assume values from a fixed set of $k$ rational numbers $ \rho_{1}, \rho_{2}, \ldots, \rho_{k} $. Given a polynomial system of equations $\mathcal{P}$ and a finite domain $D = \{\rho_1, \ldots, \rho_{k}\}$, we say that $\mathcal{P}$ is a \emph{polynomial system over finite domain $D$} if it includes the following \textit{domain polynomials}:
\begin{align}
D_{k}(x_i)&=(x_i-\rho_{1})(x_i-\rho_{2})\cdots (x_i-\rho_{k}) = 0\qquad i\in[n].
\end{align}
These polynomials ensure that each variable $x_1, \ldots, x_n$ is constrained to take values from $D$. Note that this formulation generalizes polynomial systems with Boolean variables, i.e., where the constraint \( x_i^2 - x_i = 0 \) enforces $x_i$ to be either 0 or 1.

Polynomial Calculus over $\mathbb{R}$ ($\textrm{PC}\slash \mathbb{R}$), or in short Polynomial Calculus ($\PC$), is a proof system used in computational complexity and proof complexity to study the efficiency of algebraic reasoning. It operates over polynomials and is particularly useful in analyzing the complexity of solving systems of polynomial equations. The goal is to derive the polynomial \(1\) (i.e., show inconsistency) or to demonstrate that a certain polynomial is implied by the given system. 
Originally introduced as a \textit{refutation system} in \cite{CleggEI96}, $\PC$ can be viewed as a degree-truncated version of Buchberger’s algorithm \cite{JeffersonJGD13, Cox}. Essentially, $\PC$ is a dynamic version of the $\Nsatz$ proof system, employing schematic inference rules to reason about polynomial equations. We emphasize that, for the remainder of the paper, we will consider \(\PC\) in the broader sense of polynomial derivation (so not restricted to refutation) and with polynomials over the reals.

$\PC$ consists of the following derivation rules for polynomial equations $(f = 0), (g = 0) \in  \mathcal{P}$, domain polynomial equations $(D_{k}(x_j)=0)$, variable $x_j$, and numbers $a, b \in \mathbb{R}$
%%%%%%%%%%%%%%%%%%%%%%%%%%%%%%%
\begin{equation}\label{eq:PCrules_introduction}
    \frac{}{f=0} \qquad \frac{}{D_{k}(x_j)=0} \qquad \frac{f=0 \qquad g=0}{a f + b g =0} \qquad \frac{f=0}{x_jf=0} 
\end{equation}
%%%%%%%%%%%%%%%%%%%%%%%%%%%%%%

A \emph{$\PC$ derivation} of $``r=0"$ from $\mathcal{P}$ is a sequence $(r_1=0,\ldots,r_L=0)$ of polynomial equations iteratively derived by using \eqref{eq:PCrules_introduction} with $r=r_L$. The \emph{size} of a derivation is the sum of the sizes of the binary encoding of the polynomials in the derivation and the \emph{degree} is the maximum degree of the polynomials in the derivation. A \emph{PC refutation} is a derivation of $``1=0"$.

Next, we present one of our main contributions, namely a framework for showing that Property (\textsc{P}) holds for certain polynomial systems over finite domains.

%%%%%%%%%%%%%%%%%%%%%%%%%%%%%%%%%%%%%%%%%%%%%%%%%%%%%%%%%%%%%%%%%%%%%%%%%%
\begin{theorem}[$\PC$ criterion]\label{th:PC_criterion_introduction}
    Let $\mathcal{P}$ be a polynomial system over a finite domain $D$ of $k$ rational values, let $S$ be its variety. Let $\mathcal{G}_{2d}$ be a $2d$-truncated \GB basis of $I(S)$ according to the $\grlexns$ order such that $\|\mathcal{G}_{2d}\|_{\infty}\leq 2^{poly(n^d)}$. Assume that, for every $g\in \mathcal{G}_{2d}$, there exist a $\PC$ derivation of $g$ from $\mathcal{P}$ of size $poly(n^d)$ and degree $O(d)$. 
    
    Let $r$ be a polynomial and assume there exists an $\sos$ proof of $``r \geq 0"$ from $\mathcal{P}$ of degree $2d$. Then, for every $\varepsilon>0$, there exists an $\sos$ proof of $``r + \varepsilon \geq 0"$ of degree $O(d)$
    such that the absolute values of the coefficients of every polynomial appearing in the proof are bounded by $2^{poly(n^d, \lg \frac{1}{\varepsilon})}$.
\end{theorem}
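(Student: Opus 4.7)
The plan is to decompose the argument into two complementary steps: first apply a Gröbner-basis-based criterion (the $\sos_\varepsilon$ criterion developed separately in the paper) that yields bounded-coefficient $\sos$ proofs when $\mathcal{G}_{2d}$ is available as additional axioms, and then use a $\PC$-to-$\sos$ simulation to rewrite each Gröbner basis element in terms of the original system $\mathcal{P}$, transferring the coefficient bound back.

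For the first step, I would start from the given degree-$2d$ $\sos$ proof $r = \sum_i s_i^2 + \sum_j h_j p_j$. Its ``ideal part'' $\sum_j h_j p_j$ lies in $I_{2d}(S)$, so by the $2d$-truncated Gröbner basis property of $\mathcal{G}_{2d}$ (in the $\grlexns$ order) it admits a degree-$2d$ representation $\sum_{g \in \mathcal{G}_{2d}} \alpha_g g$. I would then invoke the $\sos_\varepsilon$ criterion---an analogue of the Raghavendra--Weitz $\Nsatz$ criterion in which $\delta$-spectrality and $\Nsatz$ $d$-completeness are replaced by the existence of a bounded-coefficient truncated Gröbner basis---to obtain, for every $\varepsilon>0$, an $\sos$ proof of $``r+\varepsilon \geq 0"$ of degree $O(d)$ from the enlarged axiom set $\mathcal{P} \cup \mathcal{G}_{2d}$, with coefficients bounded by $2^{poly(n^d,\, \lg 1/\varepsilon)}$.

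For the second step, the hypothesis provides, for every $g \in \mathcal{G}_{2d}$, a $\PC$ derivation from $\mathcal{P}$ of size $poly(n^d)$ and degree $O(d)$. Unrolling such a derivation yields a Nullstellensatz identity $g = \sum_l h_{g,l}\, p_l$: because the linear-combination and variable-multiplication rules preserve the explicit representation without cancellations, every term $h_{g,l}\, p_l$ has degree at most the derivation degree $O(d)$, and $\|h_{g,l}\|_\infty \leq 2^{poly(n^d)}$ is inherited from the derivation size. Since any Nullstellensatz identity is a fortiori an $\sos$ identity, substituting these expressions into the Step-1 proof produces
\[
r + \varepsilon \;=\; \sum_i s_i^2 \;+\; \sum_j \tilde h_j\, p_j \;+\; \sum_{g,\,l} \alpha_g\, h_{g,l}\, p_l .
\]
Each new multiplier has degree $\deg(\alpha_g) + \deg(h_{g,l}\, p_l) \leq O(d) + O(d) = O(d)$, and its coefficients are the product of two quantities of order $2^{poly(n^d,\, \lg 1/\varepsilon)}$, hence still $2^{poly(n^d,\, \lg 1/\varepsilon)}$. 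Regrouping by $p_l$ then yields the required degree-$O(d)$ $\sos$ proof of $``r+\varepsilon \geq 0"$ from $\mathcal{P}$ alone.

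The principal obstacle is the first step---establishing the $\sos_\varepsilon$ criterion itself---which I expect to require a careful bound on the condition number of the dual SDP for the $\sos$ relaxation, exploiting the fact that a bounded-coefficient $\grlexns$ Gröbner basis induces a well-behaved normal-form decomposition on the degree-$O(d)$ quotient space, thereby replacing $\delta$-spectrality with Gröbner structure directly. By contrast, the $\PC$-to-Nullstellensatz unfolding and the substitution/regrouping are essentially bookkeeping once the degree accounting is set up carefully.
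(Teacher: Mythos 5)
Your Step 2 contains a genuine gap that breaks the argument. You claim that unrolling a $\PC$ derivation of $g$ of degree $O(d)$ and size $poly(n^d)$ yields a Nullstellensatz identity $g=\sum_l h_{g,l}p_l$ with $\deg(h_{g,l}p_l)=O(d)$, ``because the linear-combination and variable-multiplication rules preserve the explicit representation without cancellations.'' This is exactly backwards: the degree of a $\PC$ derivation bounds the degree of each \emph{derived line} $r_j$, not the degree of the terms in its unrolled certificate. Each application of the multiplication rule $f\mapsto x_jf$ sends $\sum_l h_lp_l$ to $\sum_l (x_jh_l)p_l$, increasing the certificate degree by one, even when cancellations keep $\deg(r_j)$ small. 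Over a derivation of length $poly(n^d)$ the unrolled Nullstellensatz degree can therefore grow to $\Omega(n)$. If your claim were true, every degree-$O(d)$ $\PC$ derivation would give a degree-$O(d)$ $\Nsatz$ proof, collapsing the $\PC$ criterion onto the $\Nsatz$ criterion and contradicting the separation the theorem is designed to establish (e.g.\ the induction principle $\text{IND}_n$ has degree-$2$ $\PC$ derivations but $\Nsatz$ degree $\Theta(\log n)$, and the semilattice/dual-discriminator systems in the paper are precisely cases where $\mathcal{P}$ is \emph{not} $\Nsatz$ $d$-complete).

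The paper circumvents this by never converting $\PC$ to $\Nsatz$. Instead it simulates the $\PC$ derivation \emph{inside} $\sos$ step by step (\cref{th:sos_sim_PC_1}), obtaining an $\sos$ proof of $``-g^2\geq 0"$ of degree $2(d+k-1)$ with polynomial bit size; the multiplication rule is simulated using the explicit Archimedeanity of finite-domain systems and an $\sos$ decomposition of $4D_{2k}(x)-(x-t)^2+C$. It then passes from $``-g^2\geq 0"$ to $``\pm g+\varepsilon\geq 0"$ via the $\sos$-approximability chain $\mathcal{G}_{2d}\lesssim_{\sos}\mathcal{G}_{2d}^{(2,\ldots,2)}\lesssim_{\sos}\mathcal{P}$ (\cref{prop:alpha_powers}, \cref{th:transitivity_of_approximations}), concluding that $\mathcal{P}$ is $\sos_\varepsilon$-complete (\cref{cor:powers-complete}) and applying the $\sos_\varepsilon$ criterion with $\delta$-spectral richness from \cref{cor:rich-finite} — your first step is essentially this, minus the $\delta$-spectrality, which the paper does still use. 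You also omit the refutation case $S=\emptyset$, where $\mathcal{G}_{2d}=\{1\}$ and the moment-matrix decomposition is unavailable; the paper treats it separately via the direct simulation of the $\PC$ refutation.
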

%%%%%%%%%%%%%%%%%%%%%%%%%%%%%%%%%%%%%%%%%%%%%%%%%%%%%%%%%%%%%%%%%%%%%%%%

Moreover, suppose $\mathcal{P}$ is $\Nsatz$ $d$-complete over $S$. It follows that, for every $g \in \mathcal{G}_{2d}$, the identity $``g=0"$ admits a degree-$O(d)$ $\Nsatz$ proof from $\mathcal{P}$. Further, $\PC$ is known to be strictly stronger than $\Nsatz$ \cite{CleggEI96}, thus implying that our criterion is also more powerful than the $\Nsatz$ criterion.
The separation between the $\Nsatz$ and the $\PC$ criteria is strict and it will be further discussed in \cref{sec:separation-intro}.

The proof of \cref{th:PC_criterion_introduction} combines multiple techniques, which we will present in greater detail in \cref{sect:outline_PC_proof}. A key component of this proof is the development of a different criterion, the $\sos_\varepsilon$ criterion. As we will see, this criterion provides a more general framework that extends beyond finite domains (see \cref{ex:separation2}). Nevertheless, the strong connection between $\PC$ and Buchberger's algorithm makes the $\PC$ criterion an effective tool for many instances where the $\IMP_d$ can be efficiently solved \cite{Cox}.

%%%%%%%%%%%%%%%%%%%%%%%%%%%%%%%%%%%%%%%%%%%%%%%%%%%%%%%%%%%%%%%%%%%%%%%%%%%%%%%%%%%%%
\subsubsection*{Two main applications from $\CSP$s.}
%%%%%%%%%%%%%%%%%%%%%%%%%%%%%%%%%%%%%%%%%%%%%%%%%%%%%%%%%%%%%%%%%%%%%%%%%%%%%%%%%%%%%
In the following, we present our main two applications of \cref{th:PC_criterion_introduction}.
%We identify a class of polynomial systems arising naturally from combinatorial problems over finite domains, where $\IMP_d$ can be solved using $\PC$ but not $\Nsatz$. This directly establishes explicit separations between the $\Nsatz$ and $\PC$ criteria.
In both cases we focus on restricted classes of Constraint Satisfaction Problems (\(\CSP\)s), denoted as \(\CSP(\Gamma)\), where constraints are limited to relations from a specified set~\(\Gamma\). These language restrictions have proven effective for analyzing computational complexity classifications and other algorithmic properties of $\CSP$s, leading to recent breakthroughs in \cite{Bulatov17,Zhuk17,Zhuk20} (see, e.g., \cite{barto_et_al:DFU:2017:6959,2017dfu7,Chen09} and \cref{sect:CSP_IMP_background} for further details and necessary background).

%%%%%%%%%%%%%%%%%%%%%%%%%%%%%%%%%%%%%%%%%%%%%%%%%%%%%%%%%%%%%%%%%%%%%%%%%%%%%%%%%%%%%%%%%%%%%%
\paragraph{First main application: refutation for bounded width $\CSP$s}
%%%%%%%%%%%%%%%%%%%%%%%%%%%%%%%%%%%%%%%%%%%%%%%%%%%%%%%%%%%%%%%%%%%%%%%%%%%%%%%%%%%%%%%%%%%%%%
%
All known tractable Constraint Satisfaction Problems $\CSP(\Gamma)$ for a fixed constraint language $\Gamma$  are solvable using two fundamental algorithmic principles. The first relies on the \emph{few subpowers property} (see e.g. \cite{barto_et_al:DFU:2017:6959}). The second, \emph{local consistency checking}, is the most widely known and natural approach for solving $\CSP$s \cite{BartoK14,barto_et_al:DFU:2017:6959,Bulatov17}.

We consider the class of constraint languages $\Gamma$ for which $\CSP(\Gamma)$ has \emph{bounded width}, meaning that it can be solved by a local consistency checking algorithm (see e.g.~\cite{BartoK14,barto_et_al:DFU:2017:6959}). Identifying and characterizing such languages is crucial for understanding the tractability of constraint satisfaction problems \cite{BartoK14,barto_et_al:DFU:2017:6959}. Note that for languages that rely on the {few subpowers property} in general $\sos$ requires high degree for refutation \cite{barto_et_al:DFU:2017:6959,BartoK14, GRIGORIEV2001613}.

As a corollary of \cref{th:PC_criterion_introduction}, we establish the polynomial-time feasibility of the $\sos$ refutations for the whole class $\sos$-$\CSP(\Gamma)$ problems (see \cref{prob:sos-csp}) for which $\CSP(\Gamma)$ has \emph{bounded width}. %We can state the following result.
%%%%%%%%%%%%%%%%%%%%%%%%%%%%%%%%%%%
\begin{corollary}\label{th:BW}
    For constraint languages $\Gamma$ over finite domains for which $\CSP(\Gamma)$ has \emph{bounded width}, the $\sos$-$\CSP(\Gamma)$ \cref{prob:sos-csp} can be solved in polynomial time for any fixed degree~$d$.
\end{corollary}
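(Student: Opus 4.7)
The plan is to combine the polynomial-time solvability of bounded-width $\CSP$s with the $\PC$ criterion (\cref{th:PC_criterion_introduction}), reducing the refutation decision to a tractable SDP. Given an input instance of $\sos$-$\CSP(\Gamma)$, I first run the local consistency algorithm for the bounded-width language $\Gamma$, which decides feasibility of the underlying $\CSP$ in polynomial time. If the instance is feasible, then by soundness of $\sos$ no refutation exists at any degree and the algorithm outputs NO, so it suffices to handle the infeasible case.

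In the infeasible case the variety $S$ is empty, hence $I(S) = \mathbb{R}[x_1,\ldots,x_n]$ and its reduced $\grlexns$ \GB basis truncated at degree $2d$ is simply $\mathcal{G}_{2d} = \{1\}$, which trivially satisfies $\|\mathcal{G}_{2d}\|_\infty \leq 2^{poly(n^d)}$. To apply \cref{th:PC_criterion_introduction} it remains to exhibit a $\PC$ derivation of ``$1=0$'' from the input axioms $\mathcal{P}$ of size $poly(n^d)$ and degree $O(d)$. This is the heart of the argument: since $\CSP(\Gamma)$ has bounded width $w$, any infeasible instance is refuted by the $(k,\ell)$-consistency algorithm (with $k$ and $\ell$ depending only on $w$), producing a polynomial-size combinatorial refutation. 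Each elementary step of the consistency reasoning can be translated into a $\PC$ derivation involving products of only constantly many variables, so the whole refutation lifts to a $\PC$ derivation of $1$ from $\mathcal{P}$ of degree $O(1)$ and size $poly(n)$, comfortably meeting the required $O(d)$ and $poly(n^d)$ bounds for any fixed $d$.

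With these hypotheses verified, \cref{th:PC_criterion_introduction} applied with $r = -1$ guarantees that whenever a degree-$2d$ $\sos$ refutation exists, for every $\varepsilon>0$ there is a degree-$O(d)$ $\sos$ proof of ``$-1 + \varepsilon \geq 0$'' whose coefficients have bit length $poly(n, \lg \frac{1}{\varepsilon})$. The degree-$d$ $\sos$ SDP can therefore be solved by the ellipsoid method in $poly(n)$ time with polynomial bit precision: if an exact refutation exists, the small-coefficient guarantee produces a feasible point of polynomially bounded magnitude that ellipsoid locates; otherwise infeasibility is detected via a pseudo-expectation separating hyperplane. Explicit Archimedeanity of $\mathcal{P}$, which is provided by the domain polynomials, ensures the SDP has no duality gap, so the numerical decision coincides with the exact one.

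The main obstacle is the second step above: exhibiting a polynomial-size, constant-degree $\PC$ derivation of $1$ from $\mathcal{P}$ for every infeasible bounded-width instance. This is where the algebraic encoding of the combinatorial local consistency algorithm must be carried out carefully, using the structural properties of $\Gamma$ (bounded width, Datalog definability of the non-solutions); the remaining ingredients—reducing to the infeasible case, controlling the Gröbner basis when $S=\emptyset$, and running the small-coefficient SDP—are comparatively routine once the $\PC$ criterion is in hand.
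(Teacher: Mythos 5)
Your proposal is correct and follows essentially the same route as the paper: both reduce the claim to exhibiting a polynomial-size, degree-$O(d)$ $\PC$ derivation of ``$1=0$'' for infeasible instances by simulating the local consistency algorithm within $\PC$ (truncated Buchberger), and then invoke the $\PC$ criterion with $\mathcal{G}_{2d}=\{1\}$. The extra detail you supply on the feasible case and the ellipsoid-method endgame is consistent with, though more explicit than, the paper's sketch.
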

%%%%%%%%%%%%%%%%%%%%%%%%%%%%%%%%%%
\begin{proof}[Proof sketch]
    For refutations, \cref{th:PC_criterion_introduction} requires that there exists a $\PC$ derivation of $``1 = 0"$ from $\mathcal{P}$ of size $poly(n^d)$ and degree $O(d)$.
    The claim follows by observing that the local consistency algorithm can be simulated by a truncated Buchberger's algorithm (that we call $\PC$). Thus, any information obtained by enforcing local consistency{, and therefore, by definition, by deciding any bounded width language,} can be obtained by performing a truncated Buchberger's algorithm \cite{JeffersonJGD13,Bulatov23}.
\end{proof}

Note that both the decision and search versions of \cref{prob:sos-csp} with bounded width are solvable in polynomial time for any fixed degree~$d$, as a consequence of \cref{th:PC_criterion_introduction}.

Further, we mention that in \cite{ThapperZ18} it was obtained a similar result in the context of the Sherali-Adams proof system. However, this result applies only to a fixed limited form of $\mathcal{P}$, namely the Boolean canonical linear program. As remarked in \cref{sect:previous_work}, our focus is on deriving SoS proofs directly from $\mathcal{P}$ with variables over general finite domains without any preprocessing. To this end, in \cref{th:BW}, we demonstrate that for \emph{any} system of equations $\mathcal{P}$ over a finite domain that defines a bounded-width relation, finding SoS refutations directly derived from $\mathcal{P}$ can be automated.

%%%%%%%%%%%%%%%%%%%%%%%%%%%%%%%%%%%%%%%%%%%%%%%%%%%%%%%%%%%%%%%%%%%%%%%%%%%%%
\paragraph{Second main application: strong separations arising from $\CSP$s}\label{sec:separation-intro}
%%%%%%%%%%%%%%%%%%%%%%%%%%%%%%%%%%%%%%%%%%%%%%%%%%%%%%%%%%%%%%%%%%%%%%%%%%%%%
%
As second application of \cref{th:PC_criterion_introduction}, we examine constraint languages (and polynomial equations) that are closed under the semilattice and dual discriminator polymorphisms\footnote{In the context of $\CSP$s, a \emph{polymorphism} is a special kind of function that helps us understand the structure of the constraints. Specifically, it is a function that combines multiple solutions of a \CSP\ in a way that still satisfies the constraints. Polymorphisms are useful because they reveal patterns in the constraints, and studying them can help determine how easy or hard a \CSP\ is to solve. We refer to \cref{def:polymorph} for a formal definition.} (see, e.g., \cite{barto_et_al:DFU:2017:6959} and \cref{sect:PCsemilattice,sect:PCdual} for the necessary background). 
Propositional formulas from \textsc{HORN-SAT} or \textsc{2-SAT} can be easily translated into system of polynomial equations that are semillatice or dual discriminator closed, respectively.
Moreover, these two classes extend \textsc{HORN-SAT} and \textsc{2-SAT} formulas, respectively, to general finite domain cases and have held a significant role in the theory of Constraint Satisfaction Problems of the form $\CSP(\Gamma)$; see, e.g.,~\cite{barto_et_al:DFU:2017:6959,2017dfu7} and references therein.

%%%%%%%%%%%%%%%%%%%%%%%%%%%%%%%%%%%%%%%%%%%%%%%%%%%%%%
\begin{theorem}\label{th:semilattice+dualdiscr}
    For a system $\mathcal{P}$ of polynomial equations over $n$ variables that is closed under the semilattice (or dual discriminator) polymorphism, then the $\PC$ criterion (\cref{th:PC_criterion_introduction}) applies.
\end{theorem}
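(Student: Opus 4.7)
The plan is to verify both hypotheses of the $\PC$ criterion (\cref{th:PC_criterion_introduction}) separately for the semilattice case and the dual discriminator case: namely, exhibit a $2d$-truncated $\grlexns$-\GB basis $\mathcal{G}_{2d}$ of $\I(S)$ with $\|\mathcal{G}_{2d}\|_{\infty}\leq 2^{poly(n^d)}$, and for each $g\in \mathcal{G}_{2d}$ construct a $\PC$ derivation from $\mathcal{P}$ of size $poly(n^d)$ and degree $O(d)$. In both cases the strategy is the same: leverage the structural consequences that a semilattice (resp.\ dual discriminator) polymorphism imposes on the variety $S$ to write down, \emph{explicitly and in closed form}, a small collection of low-degree generators of $\I_{2d}(S)$, and then show that each such generator is obtained from $\mathcal{P}$ by a short sequence of additions, scalar multiplications, and variable multiplications, i.e.\ by the $\PC$ rules \eqref{eq:PCrules_introduction}.

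For the semilattice case, I would exploit the fact that closure under a semilattice polymorphism forces every constraint relation to be ``rectangular'' along orbits of the polymorphism, which yields the well-known description of the vanishing ideal already used in \cite{Mastrolilli21TALG,BulatovRSTOC22}. Concretely, the truncated ideal $\I_{2d}(S)$ is generated by (i) the domain polynomials $D_k(x_i)$, which are already in $\mathcal{P}$, and (ii) a family of ``projection'' polynomials of degree at most $2d$, indexed by subsets $I\subseteq[n]$ with $|I|\le 2d$, each encoding the semilattice-closed projection of $S$ onto $I$. Because each such polynomial has bounded size and rational coefficients of magnitude $2^{poly(n^d)}$ (the domain is fixed, $|I|=O(d)$, and the number of semilattice-closed subsets of $D^I$ is polynomial in $n^d$), the bound $\|\mathcal{G}_{2d}\|_{\infty}\leq 2^{poly(n^d)}$ is immediate; running a $\grlexns$ inter-reduction on this set (only on degree $\le 2d$ polynomials) produces the desired $\mathcal{G}_{2d}$ while preserving the coefficient bound. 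The $\PC$ derivability then follows because the construction of each projection polynomial from $\mathcal{P}$ is essentially a degree-truncated Buchberger reduction carried out one variable at a time, and each such step is precisely an instance of the $\PC$ rules; the simulation has $poly(n^d)$ many steps because only $O(n^d)$ monomials of degree $\le O(d)$ are ever involved.

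For the dual discriminator case, the approach is analogous but uses a different structural decomposition: a relation closed under the dual discriminator is the intersection of unary constraints and ``at most binary'' equality/disequality constraints between pairs of variables (cf.\ \cite{barto_et_al:DFU:2017:6959} and the discussion in \cref{sect:PCdual}). This gives an explicit generating set for $\I(S)$ consisting of the domain polynomials, linear polynomials capturing unary restrictions, and degree-$2$ polynomials of the form $(x_i-\rho)(x_j-\rho')$ capturing binary restrictions between pairs of variables that are connected by a dual-discriminator constraint. From these generators, one can form the $2d$-truncated $\grlexns$-\GB basis by performing $S$-polynomial reductions only among pairs whose lcm has degree $\le 2d$; the bit complexity of each step is controlled because all leading coefficients are bounded integers from the finite domain $D$, so the final coefficients remain bounded by $2^{poly(n^d)}$. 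Each generator admits a short $\PC$ derivation from $\mathcal{P}$ because the dual discriminator forces every constraint of $\mathcal{P}$ to entail these pairwise relations; the entailment can be realized by a sequence of additions/multiplications of constraint polynomials by monomials of degree $O(d)$, again directly simulating a truncated Buchberger computation.

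The technical heart of the argument, and the step I expect to be the main obstacle, is verifying in detail that the \emph{truncated} Buchberger computation used to produce each Gröbner basis element can be carried out entirely within degree $O(d)$, without ever needing to invoke an intermediate polynomial of degree exceeding $O(d)$ --- this is what allows the $\PC$ simulation to respect the degree bound required by \cref{th:PC_criterion_introduction}. For the semilattice case this uses the fact that semilattice-closed projections compose monotonically under restriction, so intermediate computations for degree-$2d$ monomials only need polynomials of degree $\le 2d$; for the dual discriminator case it uses the pairwise nature of the constraints, which keeps all $S$-polynomial computations at degree $\le 4$ after the initial generating set is in place. Once the degree control is established, the size bound on the $\PC$ derivation is a routine count of the number of monomials of degree $\le O(d)$ in $n$ variables, and the coefficient bound follows from tracking the polynomially many rational operations with denominators drawn from the finite domain~$D$.
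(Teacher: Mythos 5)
Your high-level plan---verify the two hypotheses of \cref{th:PC_criterion_introduction} by exhibiting a bounded-coefficient truncated \GB basis and then giving short $\PC$ derivations of its elements---is exactly the paper's strategy (via \cref{th:semilattice} and \cref{th:dual_discriminator}). However, the substance of the second step is missing, and the gap sits precisely where you flag ``the main obstacle.'' You assert that each \GB basis element is obtained by ``a degree-truncated Buchberger reduction carried out one variable at a time, and each such step is precisely an instance of the $\PC$ rules.'' That a degree-$O(d)$ truncated Buchberger computation actually \emph{reaches} the elements of $\mathcal{G}_{2d}$ is the theorem to be proven, not a routine verification: $\PC$ \emph{is} a degree-truncated Buchberger procedure, so the claim is circular. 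The paper explicitly notes that the known algorithms for computing these truncated bases (the ``guessing'' trick of \cite{Mastrolilli21TALG}, the FGLM-style conversion of \cite{BulatovRSTOC22}) are \emph{not} known to be $\PC$-simulable, and indeed for closely related classes (e.g.\ \textsc{3Lin(2)}, handled by those same external algorithms) bounded-degree $\PC$ provably cannot derive the basis. So degree control cannot be waved through by appeal to ``monotone composition of projections'' or ``pairwise constraints.''

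Concretely, what the paper does and your proposal lacks: for the semilattice case, the reduced basis consists of $2$-term polynomials $f=p-q$ (not ``projection polynomials indexed by subsets''), and the derivation of each such $f$ goes through an encoding of its minimal non-vanishing partial assignments as an \emph{infeasible Horn system}, whose low-degree $\PC$ refutation is then multiplied back and telescoped into a derivation of $f$ (\cref{th:MiNVA_sos,th:f_sos}); the general finite domain is then handled by a carefully chosen \emph{bijective}, structure-preserving binary encoding (\cref{th:bin_encoding}) that allows Boolean $\PC$ proofs to be pulled back---the paper stresses that the pp-interpretability reduction from the literature does not permit this. For the dual discriminator, your structural description (unary plus ``equality/disequality'' binary constraints with degree-$2$ generators) omits the permutation constraints, which require interpolating polynomials of degree up to $|D|-1$ chained across long paths of variables; handling these needs the chain-permutation-constraint machinery and the explicit derivation schemes of \cref{sect:dual-proof}, plus a $\PC$ simulation of $(2,3)$-consistency for the complete/two-fan part. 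Without these ingredients the degree and size bounds on the $\PC$ derivations are unsubstantiated.
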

%%%%%%%%%%%%%%%%%%%%%%%%%%%%%%%%%%%%%%%%%%%%%%%%%%%%%
%
Note that these classes of problems are known to be bounded width (see e.g. \cite{barto_et_al:DFU:2017:6959}). Therefore, by \cref{th:BW} the refutation \cref{prob:sos-csp} can be solved in polynomial time.
However, \cref{th:semilattice+dualdiscr} establishes a significantly stronger result.
Indeed, \cref{th:semilattice+dualdiscr} indicates that \emph{any} degree \( d \) \sos\ proof of $p\geq 0$, for any polynomial $p$, can be computed in \( n^{O(d)} \) time with arbitrary precision (not only degree-$d$ $\sos$ proofs for $-1\geq 0$, as required by refutation). %Furthermore, our proof will show (\cref{sect:applications}) that \(\sos\) resolves the refutation problem in polynomial time, thus ensuring polynomial bit complexity (refer to the discussion immediately preceding \cref{sect:contribution}).

We emphasize that Buss and Pitassi \cite{BussP98} 
%examine propositional formulas in conjunctive normal form (CNF) by translating each clause into a polynomial equation. They
show that the \(\Nsatz\) proof system necessitates a degree \(\Theta(\log n)\) proof for the induction principle \(\text{IND}_n\), a polynomial inference rule that can be formalized as a derivation in either \textsc{HORN-SAT} or \textsc{2-SAT} formulae. As a result, if \(\mathcal{P}\) is closed under the semilattice (or dual discriminator) polymorphism, it cannot be \(\Nsatz\) \(d\)-complete for any \(d = o(\log n)\).
Thus, \cref{th:PC_criterion_introduction}, \cref{th:semilattice+dualdiscr} and \cite{BussP98} establish a clear separation between the $\PC$ criterion and the \(\Nsatz\) criterion. %Additionally, our results provide a partial advancement toward resolving the open problem posed by Hakoniemi \cite{Hakoniemi21} (see also \cref{sect:open problem}).

%The basic outline and intuition behind the proof can be summarized as follows.
%\begin{enumerate}
    %\item 
    Polynomial Calculus (\PC) is a rule-based, dynamic extension of Nullstellensatz (see e.g.~\cite{FlemingKothariPitassi19}). Due to its dynamic nature, it can sometimes achieve a refutation of significantly lower degree through cancellations than would be possible with the static Nullstellensatz system. A notable example is the induction principle \(\text{IND}_n\) mentioned above, which has degree 2 refutations in $\PC$. By contrast, its Nullstellensatz degree has been shown to be \(\Theta(\log n)\)~\cite{BussP98}.

    We demonstrate that $\PC$, in addition to solving refutation for the very special case of \(\text{IND}_n\) with low degree, also addresses the much more general Ideal Membership Problem $\IMP_d$ in $n^{O(d)}$ time for two families of problems that significantly generalize \textsc{HORN-SAT} and \textsc{2-SAT} in multiple respects and apply to all finite rational domains. This also demonstrates that \(\PC\) is complete and free from bit complexity issues for these problems (Hakoniemi recently raised concerns regarding the bit complexity in \(\PC\) \cite{Hakoniemi21}, see also \cref{sect:open problem}).  
    
    %\mon{This result is intrinsically interesting and may initially appear that its lengthy proof diverges from the main goal of this article. However, this is not the case, as the $\PC$ criterion demonstrates that solvability via Polynomial Calculus and the bit complexity of Sum-of-Squares are deeply interconnected.}

    {This result is not only intrinsically interesting but also closely aligned with the main goal of this article. Indeed, the $\PC$ criterion demonstrates that solvability via Polynomial Calculus and the bit complexity of Sum-of-Squares are deeply interconnected.
    }
    Finally, we emphasize that it is not implied by the recent result of Bulatov and Rafiey \cite{BulatovRSTOC22}; more details are given in \cref{sect:applications}.

The proof of this broad generalization is technically complex and lengthy, necessitating a dedicated space with the necessary preliminaries. Therefore, we defer the full discussion—including a detailed review, proof, the underlying intuition and the literature review—to \cref{sect:applications}.

    %\item We establish \cref{th:semilattice+dualdiscr} by proving and leveraging the capacity of $\sos$ to \emph{approximate}, with arbitrary additive precision $\varepsilon>0$, a dynamic proof system, such as $\PC$, for all sets of rational finite domains, as described in the following \cref{sect:outline_PC_proof,sect:PC_criterion_introduction}. 
    %We emphasize that the role of the $\varepsilon$ in the $\sos_{\varepsilon}$ criterion allows us to approximately simulate $\PC$.
%\end{enumerate}

%Furthermore, strategies in \cite{BharathiM21,BulatovRSTOC22,BulatovARXIV21, Mastrolilli21TALG} to address the problem of \sos\ bit complexity involve replacing the original input polynomial constraints \(\mathcal{P}\) (see \cref{def:SOS_proof}) with a new set of polynomials $\mathcal{P}^d$ that satisfies the \(\Nsatz\) criterion, and generally depends on the \sos\ degree \(d\) exponentially.  Essentially, 
%The approach utilized in \cite{BharathiM21,BulatovRSTOC22,BulatovARXIV21, Mastrolilli21TALG} is to apply the \(\Nsatz\) criterion without enhancing or extending it, with the goal of replacing the initial input polynomial system with a new one that is computed externally and satisfies the \Nsatz\ criterion.

This paper aims to deepen our understanding of the bit complexity issue and to explore the conditions under which it arises. For instance, we demonstrate that all preprocessing steps aimed at replacing \(\mathcal{P}\) with a new set \(\mathcal{P'}\) to satisfy the \(\Nsatz\) criterion, as used in \cite{BulatovRSTOC22,Mastrolilli21TALG, BharathiM22, BharathiM21} to circumvent the bit complexity issues of \(\sos\) for semilattice and dual discriminator polymorphisms, are unnecessary. Specifically, \(\sos\), when applied directly to \(\mathcal{P}\) as input, achieves low bit complexity for any fixed~\(d\) (refer to \cref{sect:csp_literature} for a more detailed discussion). This result appears to support and extend the hypothesis that CNF formulas do not exhibit a bit complexity issue, an open question posed by Hakoniemi~\cite{Hakoniemi21} (see also \cref{sect:open problem}).

%%%%%%%%%%%%%%%%%%%%%%%%%%%%%%%%%%%%%%%%%%%%%%%%%%%%%%%%%%%%%%%%%%%%%%%%%%%%%%%%%%%%%%%%%%%%%%
\subsubsection*{$\sos$ (approximately) simulates $\PC$}\label{sect:PC_criterion_introduction}
%%%%%%%%%%%%%%%%%%%%%%%%%%%%%%%%%%%%%%%%%%%%%%%%%%%%%%%%%%%%%%%%%%%%%%%%%%%%%%%%%%%%%%%%%%%%%%

As a main contribution, we address the existing knowledge gap regarding the relationship between $\sos$ and $\PC$ in the general finite domain setting.
Our main result shows that $\sos$ can simulate $\PC$ derivations in this setting with an arbitrarily small error. Essentially, if $\PC$ can derive the equation $``p=0"$, then, for any arbitrary $\varepsilon>0$, $\sos$ can prove the statements $``p+\varepsilon\geq 0"$ and $``-p+\varepsilon \geq 0"$ with only a polynomial increase in size.
While this result serves as a main technique for proving the $\PC$ criterion, as outlined in \cref{sect:outline_PC_proof}, it is also valuable on its own, as we present below.
Our result builds upon and generalizes the simulation result of Berkholz~\cite{berkholz18} for Boolean variables to the broader context of general finite domains.

Berkholz~\cite{berkholz18} related different approaches for proving the unsatisfiability of a system of real polynomial equations. Over Boolean variables, he showed that SoS simulates $\PC$ refutations: any \(\PC\) refutation of degree \(d\) can be converted into an \(\sos\) refutation of degree \(2d\), with only a polynomial increase in size.

In the non-Boolean setting, there are systems of equations that are easier to refute for $\PC$ than for SoS \cite{GrigorievV01}. Grigoriev and Vorobjov \cite{GrigorievV01} show that the simulation of $\PC$ by SoS does not hold in the non-Boolean case, namely when the Boolean axioms $x_j^2-x_j=0$ are omitted. For example, the so-called telescopic system of equations, $\{ yx_1=1, x_1^2=x_2, \ldots, x_{n-1}^2=x_n, x_n=0\}$, has a $\PC$ refutation of degree $n$, but it requires exponential refutation degree in SoS \cite{GrigorievV01}. It is worth noting that a similar (although much weaker than the one present in \cref{th:sos_sim_PC_introduction}) generalization of Berkholz's result was considered in \cite{Sokolov20}, when the variables take the values $\pm 1$, and in \cite{PartFTT21}, for a variation of $\PC$ endowed with a ``radical rule" and a ``sum-of-squares rule".

Whether $\sos$ could simulate $\PC$ in the general finite domain setting, where variables can take values from any finite set, has remained an open question, despite known limitations in specific non-Boolean cases. 

In this work, we answer this open question and complement the results in~\cite{GrigorievV01,berkholz18} by establishing the following theorem. In summary, we first derive the following lemma.

\begin{lemma}\label{th:sos_sim_PC_introduction}
    Let $\mathcal{P}$ be a system of polynomial equations over a finite domain $D$ with $|D| = k$. Assume that $``r=0"$ has a $\PC$ derivation of degree $d$ and size $S$ from $\mathcal{P}$. Then $`` -r^2\geq 0"$ has an $\sos$ proof of degree $2(d+k-1)$ with coefficients of size $poly(k,S)$.
\end{lemma}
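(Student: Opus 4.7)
The plan is to proceed by induction on the length $L$ of the PC derivation $(r_1,\ldots,r_L = r)$, constructing for every $\ell\in [L]$ an explicit SoS identity
\[
-r_\ell^2 \;=\; \sum_t s_{t,\ell}^2 \;+\; \sum_m h_{m,\ell}\, p_m
\]
with $p_m \in \mathcal{P}\cup\{D_k(x_j):j\in [n]\}$, while tracking both the degree and the coefficient bit-complexity. This generalises Berkholz's Boolean simulation argument to the finite-domain setting; the extra slack of $2(k-1)$ in the stated degree bound comfortably accommodates the worst-case degree $2\deg(D_k(x_j))=2k$ contributed by a domain polynomial used as an axiom.

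For axioms (base case), if $r_\ell = p \in \mathcal{P}$ or $r_\ell = D_k(x_j)$, I take the trivial identity $-r_\ell^2 = (-r_\ell)\cdot r_\ell$, which is an SoS proof (with empty square-sum) of degree $2\deg(r_\ell)\leq 2k$. For the linear-combination rule $r_\ell = a r_i + b r_j$, the key algebraic identity is
\[
-(a r_i + b r_j)^2 \;=\; 2a^2\,(-r_i^2) \;+\; 2b^2\,(-r_j^2) \;+\; (a r_i - b r_j)^2,
\]
valid for all $a,b\in\mathbb{R}$, which combines the inductive SoS identities for $-r_i^2$ and $-r_j^2$ together with the added square $(a r_i - b r_j)^2$ into an SoS identity for $-r_\ell^2$. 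Each intermediate term of the form $2c^2 \sigma^2$ is rewritten as $(c\sigma)^2 + (c\sigma)^2$, so that the coefficients of the square pieces remain rational (no irrational $\sqrt{\cdot}$ factors). For the variable-multiplication rule $r_\ell = x_j r_i$, I multiply the inductive SoS identity for $-r_i^2$ through by $x_j^2$ to obtain
\[
-(x_j r_i)^2 \;=\; \sum_t (x_j s_{t,i})^2 \;+\; \sum_m (x_j^2 h_{m,i})\, p_m,
\]
an SoS identity whose degree increases by exactly $2$, matching the increase in $\deg(r_\ell)$.

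The main obstacle will be controlling the coefficient bit-complexity, since the linear-combination identity multiplies the underlying SoS coefficients by $2a^2$ and $2b^2$, a blow-up that could in principle compound exponentially across the $L$ derivation steps. The resolution is a careful bookkeeping argument: each scalar $a$ appearing in the derivation has bit-complexity at most $S$ (the total derivation size), and each of the at most $L\leq S$ steps multiplies coefficient magnitudes by a factor of bit-complexity $O(S)$, so the overall bit-complexity grows to at most $O(L\cdot S) = poly(S)$. Combined with the degree analysis—where each variable-multiplication step's increase of $2$ is matched by the increase in $\deg(r_\ell)$, and the base-case contribution of $2k$ is absorbed by the $2(k-1)$ slack—this yields the desired SoS proof of $-r^2 \geq 0$ of degree at most $2(d+k-1)$ and coefficients of size $poly(k,S)$.
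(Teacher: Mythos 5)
Your induction scheme, the base cases, and the treatment of the linear-combination rule via the identity $-(ar_i+br_j)^2 = (ar_i-br_j)^2 - 2a^2r_i^2 - 2b^2r_j^2$ all match the paper's proof (which in turn follows Berkholz). The genuine gap is in the variable-multiplication case, and it is exactly the step the paper singles out as the place where the Boolean argument does not transfer. Multiplying the inductive SoS identity for $-r_i^2$ through by $x_j^2$ raises the degree of the \emph{certificate} by $2$, but the inductive bound on that certificate is the uniform $2(d+k-1)$ (with $d$ the maximum degree over the whole derivation), not $2(\deg(r_i)+k-1)$. Your claim that the increase of $2$ is ``matched by the increase in $\deg(r_\ell)$'' conflates the degree of the derived line with the degree of its SoS proof. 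If you instead try to run the induction with the finer invariant $2(\deg(r_\ell)+k-1)$, the linear-combination case breaks: PC's power comes precisely from cancellations where $\deg(ar_i+br_j)$ drops far below $\max(\deg r_i,\deg r_j)$, yet the combined SoS certificate still has degree governed by the inputs. Concretely, a derivation that alternates ``cancel down to degree $1$, then multiply back up to degree $d$'' forces your construction's degree to grow by roughly $2(d-1)$ per cycle, so the final certificate degree is linear in the derivation length $L$ rather than bounded by $2(d+k-1)$. This defeats the point of the lemma.

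The paper avoids this by never multiplying the full inductive identity by $x_j^2$. Instead it reduces the goal $-x_j^2r_H^2\geq 0$ to proving $-(x_j-t)^2 r_H^2 \geq 0$ (plus copies of the hypothesis scaled by constants and a proof of $t-x_j\geq 0$ multiplied by $2tr_H^2$), where $t$ is an explicit bound on the domain. The proof of $-(x_j-t)^2r_H^2\geq 0$ is obtained by exhibiting a constant $C$ such that the univariate polynomial $4D_{2k}(x_j)-(x_j-t)^2+C$ is globally nonnegative, writing it as a rational sum of squares, and multiplying that \emph{univariate} identity by $r_H^2$; since $\deg(r_H)\leq d-1$ and the univariate pieces have degree at most $2k$, everything stays within degree $2(d+k-1)$. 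Some such device is needed; without it your argument does not establish the stated degree bound.
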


The overall structure of the proof partially mirrors that in \cite{berkholz18}, but with notable differences. In particular, new ideas and techniques are introduced in the simulation of the multiplication rule of \(\PC\).

Furthermore, note that the result holds for the particular case of refutations, i.e. when $r=1$. Indeed, if there exists a $\PC$ refutation of $\mathcal{P}$, i.e., a derivation of $1=0$, then \cref{th:sos_sim_PC_introduction} implies that there exists an $\sos$ refutation $``-1 \geq 0"$ with only polynomial increasing.

Then, employing \cref{th:sos_sim_PC_introduction}, we prove the following result.

\begin{theorem}\label{th:sospc_introduction}
    $\sos$ \emph{approximates} $\PC$ with degree linear in the domain size $k$ over general finite domains. That is, if there exists a \(\PC\) derivation of \( ``r=0" \) with degree \( d \) and size \( S \), then for every \( \varepsilon > 0 \), we have \(\sos\) proofs of \( ``r+\varepsilon \geq 0" \) and \( ``-r+\varepsilon \geq 0" \) with degree \( O(d+k) \) and coefficients bounded by \( 2^{\text{poly}(k, S, \lg \frac{1}{\varepsilon})} \).
\end{theorem}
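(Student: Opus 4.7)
The plan is to derive \cref{th:sospc_introduction} as a short corollary of \cref{th:sos_sim_PC_introduction} by exploiting the elementary ``completing the square'' identity
\[
  4\varepsilon\,(r+\varepsilon) \;=\; (r+2\varepsilon)^{2} \;-\; r^{2}.
\]
Since \cref{th:sos_sim_PC_introduction} already yields an $\sos$ proof of $-r^{2}\geq 0$, the right-hand side decomposes immediately as a sum of squares plus an ideal term, giving the desired $\sos$ proof of $r+\varepsilon\geq 0$.

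Concretely, I would first invoke \cref{th:sos_sim_PC_introduction} to obtain an identity
\[
  -r^{2} \;=\; \sum_{i} s_{i}^{2} \;+\; \sum_{j} h_{j}\,p_{j},
\]
of degree at most $2(d+k-1)$ and with coefficient magnitudes bounded by $2^{\mathrm{poly}(k,S)}$, where $p_{1},\dots,p_{m}$ are the polynomials of $\mathcal{P}$. Dividing by $4\varepsilon$ and adding $\tfrac{1}{4\varepsilon}(r+2\varepsilon)^{2}$ to both sides produces
\[
  r+\varepsilon \;=\; \frac{1}{4\varepsilon}(r+2\varepsilon)^{2} \;+\; \sum_{i} \frac{1}{4\varepsilon}\, s_{i}^{2} \;+\; \sum_{j} \frac{h_{j}}{4\varepsilon}\, p_{j},
\]
which is an $\sos$ proof of $r+\varepsilon\geq 0$ from $\mathcal{P}$ (any positive-scalar multiple of a square is itself a square over $\mathbb{R}$). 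The symmetric identity $4\varepsilon\,(-r+\varepsilon) = (r-2\varepsilon)^{2} - r^{2}$ yields the corresponding $\sos$ proof of $-r+\varepsilon\geq 0$ by the same recipe.

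It then remains to bound the parameters. The degree of each new certificate is $\max\{2\deg r,\; 2(d+k-1)\} = O(d+k)$, since every polynomial appearing in the $\PC$ derivation (in particular $r$) has degree at most $d$. For the coefficients, the contribution from $\tfrac{1}{4\varepsilon}(r\pm 2\varepsilon)^{2}$ is bounded by $2^{\mathrm{poly}(S)}/\varepsilon$, while the squares $s_{i}$ and ideal multipliers $h_{j}$ produced by \cref{th:sos_sim_PC_introduction} are rescaled by $1/(4\varepsilon)$ and therefore have magnitude at most $2^{\mathrm{poly}(k,S)}/\varepsilon$; both fit inside $2^{\mathrm{poly}(k,S,\lg(1/\varepsilon))}$, as required.

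The main obstacle in the overall argument is not in this theorem but in \cref{th:sos_sim_PC_introduction} itself, whose proof requires a new simulation of the $\PC$ multiplication rule over general finite domains; once that lemma is in hand, the present theorem follows from the short algebraic manipulation above, and the bit-complexity dependence on $1/\varepsilon$ is inherited entirely from the single scalar division by $4\varepsilon$.
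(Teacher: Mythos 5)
Your proposal is correct and follows essentially the same route as the paper: the paper also derives the theorem from \cref{th:sos_sim_PC_introduction} together with the ``square-root'' identity $p+\varepsilon=\bigl(\sqrt{\varepsilon}+\tfrac{1}{2\sqrt{\varepsilon}}p\bigr)^2-\tfrac{1}{4\varepsilon}p^2$, which is exactly your completing-the-square step (the $\alpha=2$ case of \cref{prop:alpha_powers}). The only difference is presentational: the paper packages the composition through the chain $\mathcal{P}\cup\{r\}\lesssim_{\sos}\mathcal{P}\cup\{r^2\}\lesssim_{\sos}\mathcal{P}$ and the transitivity lemma, whereas you substitute the certificate for $-r^2\geq 0$ directly, which works here because the multiplier of $r^2$ is just the constant $-\tfrac{1}{4\varepsilon}$.
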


\subsubsection*{Outline of the techniques for proving the $\PC$ criterion}\label{sect:outline_PC_proof}

Below we outline the main techniques that will be used in the proof of the PC criterion. 

\begin{enumerate}
    \item \textbf{$\sos_\varepsilon$ criterion} (\cref{sect:sos_criterion_subsection}) We begin by introducing a general criterion, called the $\sos_\varepsilon$ criterion, which ensures that Property (\textsc P) is satisfied and, consequently, that $\sos$ can be automated. This criterion is a natural generalization of the $\Nsatz$ criterion~\cite{raghavendra_weitz2017} (see \cref{th:Nsatz_crit}), and serves as a foundational tool for presenting our main contributions. The key distinction lies in the notion of approximate completeness: the $\sos_\varepsilon$ criterion requires $\sos_\varepsilon$ completeness, a relaxed condition than the one required by \cref{th:Nsatz_crit}, as discussed in \cref{sect:sos-criterion}.
Informally, a system $\mathcal{P}$ is $\sos_\varepsilon$ complete if for every $q \in \I_d(S)$ and every $\varepsilon > 0$, there exist an $\sos$ proof of the inequality $``q + \varepsilon \geq 0"$ using bounded coefficients (see Section~\ref{sect:SOS_completeness} for the precise definition).
    \item \textbf{$\sos$ approximability of polynomial systems}(\cref{sect:SOS_completeness}) As previously mentioned, the main difference between the $\Nsatz$ and $\sos_\varepsilon$ criteria lies in their respective notions of completeness. Therefore, a main challenge in applying the $\sos_\varepsilon$ criterion is proving that a given system $\mathcal{P}$ is $\sos_\varepsilon$ complete. To this end, in \cref{sect:SOS_completeness} we present the notion of $\sos$-approximation ($\lesssim$) between polynomial systems defining the same zero set, which turns out to be a powerful tool for showing $\sos_\varepsilon$ completeness, and applying the $\sos_\varepsilon$ criterion.
The key advantage of $\sos$-approximation is that it allows the inheritance of $\sos_\varepsilon$ completeness between systems: under mild conditions, if $\mathcal{P}$ is $\sos_\varepsilon$ complete and $\mathcal{P} \lesssim \mathcal{Q}$, then $\mathcal{Q}$ is also $\sos_\varepsilon$ complete.
    \item {\textbf{SoS approximately simulates $\PC$}} (\cref{sect:proof_sos_sim_PC}) The final tool we use to establish the PC criterion is the simulation of the $\PC$ derivation system by $\sos$. As previously emphasized, this simulation, presented formally in Lemma~\ref{th:sos_sim_PC_introduction} and Theorem~\ref{th:sospc_introduction}, is one of the main contributions of this work and can be appreciated independently. However, it also plays a crucial role in proving another major result: the PC criterion (Theorem~\ref{th:PC_criterion_introduction}).
The proof strategy proceeds as follows. Under the theorem’s hypotheses, the truncated Gröbner basis $\mathcal{G}$ is easily shown to be $\Nsatz$-complete and hence $\sos_\varepsilon$ complete. Using our simulation results, we then establish the following chain of $\sos$-approximations:
   $$\mathcal{G}\lesssim \mathcal{G}^2 \lesssim \mathcal{P}.$$
Finally, using the properties of $\sos$-approximability, we conclude that $\mathcal{P}$ is also $\sos_\varepsilon$ complete. 
\end{enumerate}

In the following sections, we explore these three concepts in more detail. Then, in \cref{sect:pc_crit}, we combine these techniques to prove the $\PC$ criterion (\cref{th:PC_criterion_introduction}).

%%%%%%%%%%%%%%%%%%%%%%%%%%%%%%%%%%%%%%%%%%%%
\subsection{Structure of the article}
%%%%%%%%%%%%%%%%%%%%%%%%%%%%%%%%%%%%%%%%%%%%
In Section \ref{sect:sos-criterion}, we give a full exposition of the $\sos_\varepsilon$ criterion. Subsequently, we develop several tools to facilitate its application. In Section \ref{sect:PC_criterion}, we focus on the case of polynomial systems over finite domains, where we establish a connection to the Polynomial Calculus ($\PC$) proof system and derive a weaker version of the $\sos_\varepsilon$, called the $\PC$ criterion. The latter will be used for the separation in the radical ideal case. \cref{sect:applications,sect:semilattice-proof,sect:dual-proof} are devoted to construct a class of examples arising from Constraint Satisfaction Problems, demonstrating a separation between our new criterion and the $\Nsatz$ criterion. Section \ref{sect:applications} begins with an overview of relevant background and related results, followed by a description of our proof strategy and main results. In Sections \ref{sect:semilattice-proof} and \ref{sect:dual-proof}, we present the detailed proofs. 
Finally, in Section \ref{sect:open problem}, we finish with concluding remarks and discuss potential research directions.

\section{Preliminaries}\label{sect:prelim}
Consider a set of variables $\{x_1,\ldots,x_n\}$ and denote the vector space of polynomials in $x$ up to a fixed degree $d = O(1)$ as $\mathbb{R}[x_1, \dots, x_n]_d$. We denote as $\mathbf{v}_d$ being the column vector whose entries are the elements of the usual monomial basis of $\mathbb{R}[x_1, \dots, x_n]_d$ and, if $\alpha \in \mathbb{R}^n$, $\mathbf{v}_d(\alpha)$ is the vector of reals whose entries are the entries of $\mathbf{v}_d$ evaluated at $\alpha$. It follows that for any polynomial $u(x) \in \mathbb{R}[x_1, \dots, x_n]_d$, it holds that $u(x) = u^{T} \mathbf{v}_d$ for some $u \in \mathbb{R}^n$. 
We will consider systems $\mathcal{P} = \{p_1 = 0,\ldots,p_m = 0\}$ of polynomial equations and an "input" polynomial $r$ of degree at most~$d$, with the (mild) assumption that the bit complexity needed to represent $\mathcal{P}$ and $r$ is polynomial in $n$. The string $l$ representing polynomials $r, p_1, \dots, p_m$ will be the input for our certification problems and this last assumption allows us to reduce any complexity reasoning to $n$ instead of the length of $l$. We will sometimes refer to $\mathcal{P}$ as a set of \emph{constraints} or \emph{axioms}.

Next, we need to define the measures of norm and bit complexity for different objects. For the first measure, consider a polynomial $p = \sum_\alpha c_\alpha x^\alpha$, we define $\| p \|_{\infty} = \max_\alpha |c_{\alpha}|$. Similarly, for a set of polynomial we have $\| \mathcal{P} \|_{\infty} = \max_{p \in \mathcal{P}} \| p \|_{\infty}$. Throughout this paper, we will assume that the set $S$ of common zeros of $\mathcal P$ is \textit{finite} and that  $\| S \| := \max_{\alpha \in S} \| \alpha \| < 2^{poly(n^d)}$. These assumptions are very general and are met in many different contexts. For the second measure, consider a polynomial $p$ (or a polynomial system $\mathcal{P}$). We define the \emph{bit complexity} of $p$ (or $\mathcal{P}$) as the minimum length of a bit-string representing $p$ (or $\mathcal{P}$) when the rational numbers are represented with their reduced fractions written in binary (see e.g.~\cite{Hakoniemi21}). As noted above, the bit complexity of $\mathcal{P}$ is assumed polynomial in $n$.

\subsection*{Explicit Archimedeanity}

We recall the notion of explicit Archimedeanity of polynomial systems. This property plays a crucial role in the context of computations of $\sos$ proofs \cite{JoszH16,Laurent2009}. The Archimedean property in algebraic optimization requires all variables to be bounded within some compact set. The \emph{explicitly Archimedean} condition strengthens this by demanding that boundedness of variables is efficiently certifiable via $\sos$ proofs. We give the following formal definition.

\begin{definition}[Explicitly Archimedean System]
    A system of polynomials $\mathcal{P}$ is said to be \emph{explicitly Archimedean} if one of the following equivalent conditions holds.
    \begin{itemize}
        \item \label{def:exp_arch} For every degree $k$ polynomial $p\in \mathbb{R}[x_1, \dots, x_n]$, there exists $0<N_p \leq 2^{poly(n^{\max\{k,d\}}, \lg \|p\|_{\infty})}$ such that there exists a $\sos$ proof of $``N_p-p\geq 0"$ from $\mathcal{P}$ of degree $O(k)$ and with coefficients bounded by $2^{poly(n^{\max\{k,d\}}, \lg \|p\|_{\infty})}$.
        \item \label{prop:bounded_variables_archimedeanity} For every $i \in [n]$ there exists $0 < N_{x_i} \leq 2^{poly(n^d)}$ such that there exist $\sos$ proofs $``N_{x_i} - x_i \geq 0"$ and $``N_{x_i} + x_i \geq 0"$ from $\mathcal{P}$ of degree $O(d)$ and with coefficients bounded by $2^{poly(n^d)}$.
        \item For every $i \in [n]$ there exists $0 < N_{x_i^2} \leq 2^{poly(n^d)}$ such that there exist $\sos$ proofs $``N_{x_i^2} - x_i^2 \geq 0"$ from $\mathcal{P}$ of degree $O(d)$ and with coefficients bounded by $2^{poly(n^d)}$.
    \end{itemize}
\end{definition}

The assumption of explicit Archimedeanity is met in numerous natural cases. This is the case for Boolean systems, i.e. systems that contain the Boolean constraints $x_i^2 - x_i = 0$ for every variable, where it suffices to set $N_{x_i} = 1$ for $i \in [n]$. Furthermore, every system with variables constrained over a finite domain $D$ is explicitly Archimedean, as shown in \cref{lemma-new-t}.

%%%%%%%%%%%%%%%%%%%%%%%%%%%%%%%%%%
\subsection*{Ideals and varieties}
%%%%%%%%%%%%%%%%%%%%%%%%%%%%%%%%%%
Let us recall here the notions of (polynomial) ideals, (algebraic) varieties and some of their properties (see e.g. \cite{Cox}).
Consider a set of polynomials $\mathcal{P} = \{p_1,\ldots,p_m\} \subseteq \mathbb{R}[x_1, \dots, x_n]$. 
\begin{definition}
    The \textit{algebraic variety generated by $\mathcal{P}$} is defined as
    \begin{align*}
        S:= \Variety{\mathcal{P}} = \{x \in \mathbb{R}^n \, | \, p_i(x) = 0, \, \forall i \in [m] \}.
    \end{align*}
\end{definition}
We also define the following two sets.
\begin{definition}
        \begin{align*}
        \langle \mathcal{P} \rangle &:= \{q \in \mathbb{R}[x_1, \dots, x_n] \ | \ q = \sum_i h_i p_i, \ h_i \in \mathbb{R}[x_1, \dots, x_n]\}, \\
        \I(S) &:= \{ q \in \mathbb{R}[x_1, \dots, x_n] \ | \ q(\alpha) = 0, \ \forall \alpha \in S\},
    \end{align*}
    as the \emph{ideal generated by $\mathcal{P}$}, the former, and the \emph{(vanishing) ideal generated by set $S$}, the latter. We refer to a \emph{$d$-truncated ideal} when we consider $I_d := I \cap \mathbb{R}[x_1, \dots, x_n]_d$, where $I \subseteq \mathbb{R}[x_1, \dots, x_n]$ is a polynomial ideal.
\end{definition}

\begin{definition}\label{def:radical}
    We say that an ideal $\I$ is \emph{radical} if $p^m \in \I$ for some $m \in \mathbb{N}$ implies $p \in \I$.
\end{definition}

%%%%%%%%%%%%%%%%%%%%%%%%%%%%%%%%%%%
\subsection*{\GB bases}
%%%%%%%%%%%%%%%%%%%%%%%%%%%%%%%%%%%

We give here a very brief introduction on the notion of $\GB$ basis. For a complete exposition, we refer the reader to \cite{Cox}.

We first establish an order on the polynomial ring $\mathbb{R}[x_1, \ldots, x_n]$. Given a monomial $x^{\alpha} = x_1^{\alpha_1} \dots x_n^{\alpha_n}$, this can be unambiguously associated to the $n$-tuple $\alpha = (\alpha_1, \ldots, \alpha_n) \in \mathbb{N}^n$.

\begin{definition}\label{def:lex and grlex} Let $\alpha =(\alpha_1,\ldots,\alpha_n),\beta=(\beta_1,\ldots,\beta_n)\in \mathbb{N}^n$ and $|\alpha| = \sum_{i=1}^n\alpha_i$, $|\beta| = ~\sum_{i=1}^n\beta_i$.
  \begin{enumerate}[(i)]
      \item Lexicographic order (\lexns): We say $\alpha>_\lex \beta$ if, in the vector difference $\alpha -\beta \in \Zz^n$, the left most nonzero entry is positive. 
      \item Graded lexicographic order (\grlexns): We say $\alpha>_\grlex \beta$ if $|\alpha| >|\beta|$, or $|\alpha| =|\beta|$ and $\alpha>_\lex \beta$.
  \end{enumerate}
\end{definition}

Throughout this paper we will always assume that $\mathbb{R}[x_1,\ldots,x_n]$ is ordered according to the \textit{graded lexicographic order} \grlexns.

One potential approach to solving the $\IMP$ is through polynomial division. The idea is that, given a polynomial $r$ and a set of polynomials $\mathcal{P}$, if the remainder of the division of $r$ by $\mathcal{P}$ is zero, then $r$ belongs to the ideal $\langle \mathcal{P} \rangle$. However, it is well-known that polynomial division is generally not well-defined. Specifically, the remainder resulting from the division of $r$ by $\mathcal{P}$ can vary depending on the order in which the polynomials in $\mathcal{P}$ are used for division.

To fix this issue, a special set of generators was introduced in \cite{BuchbergerThesis}.
\begin{definition}[\GB Basis]
    Let $\mathcal{G} = \{g_1, \ldots, g_s\}$ be a set of polynomials. Consider an ideal $\I \subseteq \mathbb{R}[x_1, \ldots, x_n]$ such that $\I = \langle g_1, \ldots, g_s \rangle$, and consider $r \in \mathbb{R}[x_1, \ldots, x_n]$. We say that $\mathcal{G}$ is a \emph{\GB basis of $\I$} if the following property holds
    \begin{align*}
        r \in \I \iff r|_{\mathcal{G}} = 0,
    \end{align*}
where $r|_{\mathcal{G}}$ is the remainder of the polynomial division of $r$ by $\mathcal{G}$.
\end{definition}
Moreover, we will be mainly interested in solving problems of the form $\IMP_d$, i.e. when $r$ has degree at most $d$. Because $\mathbb{R}[x_1,\ldots,x_n]$ is ordered according to the \grlex order, the only polynomials in $\mathcal{G}$ that can divide $r$ are those with degree $d$ or lower. Consequently, we give the following definition.

\begin{definition}\label{def:dTruncated GB}
  Let $\mathcal{G}$ be a \GB basis of an ideal in $\I \in \mathbb{R}[x_1,\dots,x_n]$, the \emph{d-truncated \GB basis} $\mathcal{G}_d$ of $\I$ is defined as
  \begin{equation}\label{eq:Gd}
      \mathcal{G}_d := \mathcal{G} \cap \mathbb{R}[x_1,\dots,x_n]_d.
  \end{equation}
\end{definition}

By the definition of \GB basis, we immediately conclude that $\IMP_d$ can be solved when the $d$-truncated \GB basis is available. Specifically, we have
\begin{align*}
    r \in \I_d \iff r|_{\mathcal{G}_d} = 0.
\end{align*}
Furthermore, if $\mathcal{G}_d$ can be calculated in polynomial time, then the $\IMP_d$ can be solved in polynomial time (see also \cref{sect:PC_bit}).

\subsection*{Polynomial Calculus over general finite domains}

In this paper we consider $\PC$ over a general finite domain $D$, which is an immediate generalization of the classical $\PC$ over the Boolean domain \cite{CleggEI96}, i.e. when the Boolean constraints $x_j^2 - x_j = 0$ belong to the set of constraints. Let $D = \{\rho_1, \rho_2, \ldots, \rho_k\}$ be a finite domain. For every variable $x_j$ where $j \in [n]$, to enforce $x_j$ to assume values in $D$, we include the following univariate \emph{domain polynomials} in $\mathcal{P}$
\begin{align*}
    D_{k}(x_j)&=(x_j-\rho_{1})(x_j-\rho_{2})\cdots (x_j-\rho_{k}) \qquad j \in [n].
\end{align*}
$\PC$ over a general finite domain is a proof system that consists of the following derivation rules for polynomial equations $(f = 0), (g = 0) \in  \mathcal{P}$, domain polynomial equations $(D_{k}(x_j)=0)$, variable $x_j$, and numbers $a, b \in \mathbb{R}$
%%%%%%%%%%%%%%%%%%%%%%%%%%%%%%%
\begin{equation}\label{eq:PCrules}
    \frac{}{f=0} \qquad \frac{}{D_{k}(x_j)=0} \qquad \frac{f=0 \qquad g=0}{a f + b g =0} \qquad \frac{f=0}{x_jf=0} 
\end{equation}
%%%%%%%%%%%%%%%%%%%%%%%%%%%%%%

\begin{definition}\label{def:pc_derivation}
    A \emph{$\PC$ derivation} (or \emph{$\PC$ proof}) of $``r=0"$ from $\mathcal{P}$ is a sequence $(r_1=0,\ldots,r_L=0)$ of polynomial equations iteratively derived by using \eqref{eq:PCrules} with $r=r_L$. The \emph{size} of a derivation is the sum of the sizes of the binary encoding of the polynomials in the derivation and the \emph{degree} is the maximum degree of the polynomials in the derivation. A \emph{PC refutation} is a derivation of $``1=0"$.
\end{definition}

%%%%%%%%%%%%%%%%%%%%%%%%%%%%%%%%%%%%%%%%%%%%
\section[SoS epsilon criterion]{$\sos_\varepsilon$ criterion}\label{sect:sos-criterion}
%%%%%%%%%%%%%%%%%%%%%%%%%%%%%%%%%%%%%%%%%%%%
As outlined in \cref{sect:introduction}, we will examine \emph{sufficient} conditions on a polynomial system \(\mathcal{P}\) for ensuring that the following property holds.
%%%%%%%
\begin{itemize}[label=$(\rm{P})$]
    \item \label{prob:1} Assume there exists an $\sos$ proof of $``r \geq 0"$ from $\mathcal{P}$ of degree $2d$. Then, for every $\varepsilon>0$, there also exists an $\sos$ proof of $``r +\varepsilon \geq 0"$ from $\mathcal{P}$ with degree $O(d)$ and coefficients bounded by $2^{poly(n^d,\lg \frac{1}{\varepsilon})}$.
\end{itemize}
%%%%%%%%%%
% We recall that for finding the \sos\ proof of nonnegativity of a polynomial $r$, one must solve a semidefinite program (SDP). However, the feasibility of an SDP can be solved by the Ellipsoid Algorithm (or using interior-point methods) only up to an additive error (see \cite{odonnell2017} for a further discussion). Thus, we focus our attention to polynomial systems satisfying (\ref{prob:1}).
%However, this algorithm only tests for near-feasibility, naturally leading to the addition of an additive error (see \cite{odonnell2017} for a further discussion).
% We will see that the explicit inclusion of the additive error $\varepsilon$ for formulating the decision problem \ref{prob:1}, allows us to leverage the power of the \sos\ proof system, in particular in comparison to the \Nsatz\ proof system.

In this section we formulate the \emph{$\sos_{\varepsilon}$ criterion}, a set of sufficient conditions that guarantee that property (\textsc{P}) holds (see \cref{sect:sos_criterion_subsection}, in particular \cref{th:SoS_Criterion}). The $\sos_{\varepsilon}$ criterion has two requirements: $\delta$-spectrality and $\SOSe$ness. We then proceed to give general settings and techniques to verify that the requirements are satisfied (see \cref{sect:delta_spectrality} and \cref{sect:SOS_completeness}). Finally, we discuss the separation between the \Nsatz\ criterion and the \sos\ criterion (see \cref{sect:separation_Nsatz_SOS,sect:pc_crit,sect:applications}).

\subsection[SoS epsilon criterion]{$\sos_{\varepsilon}$ criterion}\label{sect:sos_criterion_subsection}
%%%%%%%%%%%%%%%%%%%%%%%%%%%%%%%%%%%%%%%%%%%%%
% \textcolor{red}{We will assume that $|S|$ is finite. Maybe we do it before? Also we assume that $\|S\|<2^{poly(n)}$.}
Recall we are assuming that $S$ is finite and that $\| S \| < 2^{poly(n)}$. The \emph{moment} matrix is defined as follows.
\begin{align}
M = M_{S,d} := \mathbb{E}_{\alpha \in S} [\mathbf{v}_d (\alpha) \mathbf{v}_d^{\sf{T}}(\alpha)] = \frac{1}{|S|}\sum_{\alpha\in S}\mathbf{v}_d (\alpha) \mathbf{v}_d^{\sf{T}}(\alpha),
\end{align}
where the expectation is over the uniform distribution over $S$. Note that $M$ is positive semidefinite, i.e. it is a real symmetric matrix with nonnegative eigenvalues. Let $\lambda_1$, $\lambda_2$, $\dots, \lambda_{\binom{n+d}{d}} $ be the eigenvalues of $M_{s,d}$ with corresponding eigenvectors $u_1, \dots, u_{\binom{n+d}{d}} $ forming an orthonormal basis for $\mathbb{R}^{\binom{n+d}{d}}$. Let $U$ be the matrix where the columns are the eigenvectors $u_1,\ldots,u_{\binom{n+d}{d}}$. 
Now, we define 
\begin{align}\label{Pi}
% \begin{split}
%         \Pi^+ &:= \sum_{i \text{ s.t. } \lambda_i >0} u_i u_i^{\sf{T}}, \\
%         \Pi^0 &:= \sum_{i \text{ s.t. } \lambda_i =0} u_i u_i^{\sf{T}}.
% \end{split} 
    \Pi^+ := \sum_{i \text{ s.t. } \lambda_i >0} u_i u_i^{\sf{T}}, \qquad \Pi^0 := \sum_{i \text{ s.t. } \lambda_i =0} u_i u_i^{\sf{T}}. 
\end{align}
    Then, we have 
    \begin{equation*}
        I = U^{\sf{T}} U \qquad \text{and} \qquad I = \Pi^+ + \Pi^0.
    \end{equation*}
We have the following lemma.
\begin{lemma}\label{th:eigen-to-ideal}
Let $u$ be an eigenvector for the zero eigenvalue $\lambda=0$ of $M$. Then, we have $u^{\sf{T}}\mathbf{v}_d\in \I_d(S)$.
\end{lemma}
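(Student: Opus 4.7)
The plan is to translate the eigenvector condition $Mu = 0$ into the vanishing of a polynomial on $S$. Concretely, I would first observe that since $M$ is positive semidefinite, $Mu = 0$ is equivalent to $u^{\sf T} M u = 0$. Using the definition of the moment matrix, I expand
\[
u^{\sf T} M u \;=\; \frac{1}{|S|} \sum_{\alpha \in S} u^{\sf T} \mathbf{v}_d(\alpha)\, \mathbf{v}_d^{\sf T}(\alpha)\, u \;=\; \frac{1}{|S|} \sum_{\alpha \in S} \bigl(u^{\sf T} \mathbf{v}_d(\alpha)\bigr)^2.
\]

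Next I would exploit the fact that this is a sum of nonnegative real terms summing to zero: each term must vanish individually, giving $u^{\sf T} \mathbf{v}_d(\alpha) = 0$ for every $\alpha \in S$. Then I would consider the polynomial $p(x) := u^{\sf T}\mathbf{v}_d$, which lies in $\mathbb{R}[x_1,\dots,x_n]_d$ because $\mathbf{v}_d$ enumerates the standard monomial basis of that space. The evaluation $p(\alpha) = u^{\sf T}\mathbf{v}_d(\alpha) = 0$ for all $\alpha \in S$ means precisely that $p \in \I(S)$, and combined with $\deg p \leq d$ this yields $p \in \I_d(S)$, as required.

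There is no substantive obstacle here; the proof is a one-line unpacking of the definitions of positive semidefiniteness, the moment matrix, and the truncated vanishing ideal. The only thing to be slightly careful about is ensuring that $u^{\sf T}\mathbf{v}_d$ is genuinely a polynomial of degree at most $d$ (which is immediate from the construction of $\mathbf{v}_d$) and that the equivalence $Mu=0 \Leftrightarrow u^{\sf T}Mu=0$ uses positive semidefiniteness (which holds since $M$ is an average of rank-one PSD matrices $\mathbf{v}_d(\alpha)\mathbf{v}_d^{\sf T}(\alpha)$).
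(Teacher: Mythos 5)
Your proof is correct and follows essentially the same route as the paper: expand $u^{\sf T} M u = 0$ into the sum $\frac{1}{|S|}\sum_{\alpha\in S}(u^{\sf T}\mathbf{v}_d(\alpha))^2$ and conclude that each term vanishes, so $u^{\sf T}\mathbf{v}_d$ vanishes on $S$. The only (harmless) extra step is invoking positive semidefiniteness for $Mu=0\Leftrightarrow u^{\sf T}Mu=0$; for the direction actually needed, $Mu=0$ gives $u^{\sf T}Mu=0$ trivially.
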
 
\begin{proof}
    By the assumptions, we have
    \begin{equation*}
        0=u^{\sf{T}} M u = u^{\sf{T}}\left(\frac{1}{|S|}\sum_{\alpha\in S}\mathbf{v}_d (\alpha) \mathbf{v}_d^{\sf{T}}(\alpha)\right)u=\frac{1}{|S|}\sum_{\alpha\in S}(u^{\sf{T}}\mathbf{v}_d(\alpha))^2.
    \end{equation*}
Since all terms on the right-hand side are nonnegative, all are equal to zero. That is, the polynomial $u^{\sf{T}} \mathbf{v}_d$ vanishes at all points in $S$.   
\end{proof}
  \ignore{  where the first equality follows from the definition of orthogonality. To see the second, consider a vector $v$. Since $u_1, \ldots, u_n$ form an orthonormal basis, then there exist $\alpha_1,\ldots, \alpha_n \in \mathbb{R}^n$ such that
    \begin{equation*}
        v = \alpha_1 u_1 + \ldots + \alpha_n u_n.
    \end{equation*}
    Thus, 
    \begin{align*}
        (\Pi^+ + \Pi^0)v &= \sum_i u_i u_i^{\sf{T}} v \\
        &= u_i u_i^{\sf{T}} (\alpha_1 u_1 + \ldots + \alpha_n u_n) = \sum_i \alpha_i u_i = v.
    \end{align*}

We will focus just on discrete sets. We observe that the eigenvectors for the eigenvalue $\lambda=0$ define polynomials that vanish on the algebraic variety. (\textcolor{red}{Do we really need the second part?)}
\begin{lemma}\label{th:eigen-to-ideal}
    Let $u_i(x) := \langle u_i, \mathbf{v}_d \rangle$, then
    \begin{itemize}
        \item For all $i$ s.t. $\lambda_i = 0$, we have $u_i(x) \in I(S)$;
        \item For all $i$ s.t. $\lambda_i > 0$, we have $u_i(x) \notin I(S)$.
    \end{itemize}
\end{lemma}

\begin{proof}
    Firstly observe that:
    \begin{equation*}
        u_i^{\sf{T}} M u_i = \mathbb{E}_{\alpha \in S} [u_i^{\sf{T}} \mathbf{v}_d(\alpha) \mathbf{v}_d^{\sf{T}}(\alpha) u_i] = \mathbb{E}_{\alpha \in S} [u_i(\alpha)^2] = \frac{1}{|S|} \sum_{\alpha \in S} u_i(\alpha)^2,
    \end{equation*}
    where the equalities follow from the definitions of $M$ and of $u_i(x)$.\\
    Consider the two cases:
    \begin{itemize}
        \item Assume $i$ s.t. $\lambda_i = 0$. Since $u_i$ is a zero eigenvector, then
            \begin{equation*}
                u_i^{\sf{T}} M u_i = 0.
            \end{equation*}
            Since for every $\alpha \in S$ we have $u_i(\alpha)^2 \geq 0$, it follows that $u_i(\alpha) = 0$;
        \item Assume $i$ s.t. $\lambda_i > 0$. Since $u_i$ is \emph{not} a zero eigenvector, then
            \begin{equation*}
                u_i^{\sf{T}} M u_i = \lambda_i > 0.
            \end{equation*}
            Since for every $\alpha \in S$ we have $u_i(\alpha)^2 \geq 0$, it follows that there exists $\beta \in S$ s.t. $u_i(\beta) > 0$.
    \end{itemize}
\end{proof}

\begin{corol}
    If $\langle \mathcal{P} \rangle$ is radical, then also
    \begin{itemize}
        \item For all $i$ s.t. $\lambda_i = 0$, we have $u_i(x) \in \langle \mathcal{P} \rangle$.
        \item For all $i$ s.t. $\lambda_i > 0$, we have $u_i(x) \notin \langle \mathcal{P} \rangle$.
    \end{itemize}
\end{corol}
}

\begin{definition}[$\delta$-spectrality and $\sos_\varepsilon$-completeness]\label{def:delta+soscomplete}
    Let $\mathcal{P}=\{p_1=0, \dots p_m=0\}$ be polynomial system with variety $S = \Variety{\mathcal{P}}$, and let $\delta \in \mathbb{R}_{>0}$. 
    \begin{enumerate}
        \item We say that \emph{$S$ is $\delta$-spectrally rich up to degree $d$} if every nonzero eigenvalue of $M$ is at least~$\delta$.
        \item We say that \emph{$\mathcal{P}$ is $\sos_\varepsilon$-$d$-complete over $S$} (or simply $\sos_\varepsilon$-complete) if, for every polynomial in the $2d$-truncated vanishing ideal $q \in \I_{2d}(S)$ and every $\varepsilon > 0$, there exists an $\sos$ proof of $``q + \varepsilon \geq 0"$ of degree $O(d)$ from $\mathcal{P}$ with absolute value of the coefficients bounded by $2^{poly(n^d, \, \lg \| q \|_{\infty}, \lg \frac{1}{\varepsilon})}$.
    \end{enumerate}
\end{definition}

Let $r$ be a polynomial. Suppose that $``r \geq 0"$ admits an $\sos$ proof $\Pi$ from $\mathcal{P}$. Although $\Pi$ may, in principle, have coefficients with magnitude of the order $2^{2^n}$ (see \cite{odonnell2017,raghavendra_weitz2017}), we show in the next result that $r$ can be decomposed as a sum-of-squares component plus an ``ideal part" component, both having bounded coefficients. In general, the ideal part does not necessarily take the form $\sum h_i p_i$ as in \cref{def:SOS_proof}. Nevertheless, this result allows us to reduce the discussion to focus on the ideal part of the decomposition. The following lemma, essentially from Raghavendra and Weitz~\cite{raghavendra_weitz2017}, is presented here separately as we use it to extend their result.

\begin{lemma}\label{th:SOS_decomposition}
    Consider the a polynomial system $\mathcal{P} = \{p_1 = 0, \ldots, p_m = 0\}$ with finite variety $S = \Variety{\mathcal{P}}$ such that $\| S \| \leq 2^{poly(n^d)}$. Assume that $S$ is $\delta$-spectrally rich up to degree $d$. \\
    Let $r$ be a polynomial nonnegative on $S$ with coefficients bounded by $2^{poly(n^d)}$. If there exists an $\sos$ proof of $``r \geq 0"$ from $\mathcal{P}$ with degree at most $2d$
    \begin{equation}\label{original-SOS}
        r = \sum_{i=1}^{t_0} q_i^2 + \sum_{i=1}^m h_i p_i,
    \end{equation}
    for some $s_i, h_i \in \mathbb{R}[x_1, \ldots, x_n]$. Then, we have
    \begin{equation}\label{eqn:SoS_decomposition}
        r = \sum_{i=1}^{t_1} s_i^2 + p,
    \end{equation}
    for some polynomials $s_i$ of degree at most $d$, some $p\in \I_{2d}(S)$ and with all the coefficients on the right-hand side of \eqref{eqn:SoS_decomposition} bounded by $2^{poly(n^d, \lg \frac{1}{\delta})}$.
\end{lemma}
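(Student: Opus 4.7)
The plan is to decompose the coefficient vectors of the $q_i$ according to the spectral splitting $I = \Pi^+ + \Pi^0$ of the moment matrix $M = M_{S,d}$, absorb the kernel component into the ideal term, and then use $\delta$-spectrality to control the positive component.

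\textbf{Setup and decomposition.} Since $\deg q_i \le d$ for every $i$, I would write $q_i = u_i^\top \mathbf{v}_d$ for some $u_i \in \mathbb{R}^{\binom{n+d}{d}}$, and split $u_i = u_i^+ + u_i^0$ via $u_i^+ := \Pi^+ u_i$, $u_i^0 := \Pi^0 u_i$. Define $\tilde q_i := (u_i^+)^\top \mathbf{v}_d$ and $r_i := (u_i^0)^\top \mathbf{v}_d$; by \cref{th:eigen-to-ideal} each $r_i \in \I_d(S)$. Expanding $q_i^2 = \tilde q_i^2 + 2 \tilde q_i r_i + r_i^2$ then shows that $q_i^2 - \tilde q_i^2 \in \I_{2d}(S)$, and each $h_i p_i$ lies in $\I_{2d}(S)$ because $p_i$ vanishes on $S$ and $\deg(h_i p_i)\le 2d$. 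Combining everything with \eqref{original-SOS} yields $r = \sum_i \tilde q_i^2 + p$ for some $p \in \I_{2d}(S)$, which is the shape required by \eqref{eqn:SoS_decomposition}.

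\textbf{Bounding the $\tilde q_i$ via $\delta$-spectrality.} To control coefficients, I would evaluate on $S$: since $p$ vanishes there and $r \ge 0$ on $S$, for every $\alpha \in S$ we have $\sum_i \tilde q_i(\alpha)^2 = r(\alpha)\ge 0$. Averaging over the uniform distribution on $S$,
\[
\sum_i (u_i^+)^\top M\, u_i^+ \;=\; \mathbb{E}_{\alpha \in S}\bigl[\,r(\alpha)\,\bigr] \;\le\; 2^{\mathrm{poly}(n^d)},
\]
where the bound uses $\|r\|_\infty \le 2^{\mathrm{poly}(n^d)}$, $\|S\| \le 2^{\mathrm{poly}(n^d)}$, $\deg r \le 2d$, and the crude estimate $|r(\alpha)| \le \binom{n+2d}{2d}\,\|r\|_\infty\,\|S\|^{2d}$. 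Because each $u_i^+$ lies in the span of eigenvectors of $M$ with eigenvalues $\ge \delta$, the left side is at least $\delta \sum_i \|u_i^+\|_2^2$, whence $\sum_i \|u_i^+\|_2^2 \le 2^{\mathrm{poly}(n^d,\,\lg 1/\delta)}$ and in particular every $\tilde q_i$ has coefficients bounded by the same quantity.

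\textbf{Trimming and wrapping up.} To obtain a decomposition with $t_1 \le \binom{n+d}{d}$ squares while preserving the coefficient bound, I would take a Cholesky factorization of the PSD matrix $Q := \sum_i u_i^+ (u_i^+)^\top$, producing $w_1, \dots, w_{t_1}$ with $\sum_j w_j w_j^\top = Q$ and $\|w_j\|_2^2 \le \mathrm{tr}(Q) = \sum_i \|u_i^+\|_2^2 \le 2^{\mathrm{poly}(n^d,\,\lg 1/\delta)}$. Setting $s_j := w_j^\top \mathbf{v}_d$ yields $\sum_j s_j^2 = \sum_i \tilde q_i^2$ with each $s_j$ of degree at most $d$ and with the required coefficient bound; the bound on the coefficients of $p = r - \sum_j s_j^2$ then follows by the triangle inequality together with the hypothesis on $\|r\|_\infty$. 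The one genuinely non-trivial step is the passage from the averaged bound on $r$ to a per-vector bound on $\|u_i^+\|_2$: this is exactly where $\delta$-spectrality is invoked, and the remainder of the argument is essentially bookkeeping.
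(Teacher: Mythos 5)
Your proposal is correct and follows essentially the same route as the paper: you project the coefficient vectors onto the positive eigenspace of the moment matrix, absorb the kernel components and the $h_ip_i$ terms into $\I_{2d}(S)$, and bound the trace of $Q=\sum_i u_i^+(u_i^+)^{\sf T}$ (which is exactly the paper's $\Pi^+C\Pi^+$) via $\mathbb{E}_{\alpha\in S}[r(\alpha)]\ge \delta\,\mathrm{tr}(Q)$. The only difference is cosmetic — per-vector versus Gram-matrix bookkeeping — and your final Cholesky/trimming step coincides with the paper's re-expression of the square part as $\langle \Pi^+C\Pi^+,\mathbf{v}_d\mathbf{v}_d^{\sf T}\rangle$.
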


\begin{proof}
    First, we have that
    \begin{equation}
        \sum_{i=1}^{t_0} q_i^2 = \langle C, \mathbf{v}_d \mathbf{v}_d^{\sf{T}} \rangle,
    \end{equation}
    for some positive semidefinite matrix $C$.
    
    Recall the matrices $\Pi^0$ and $\Pi^+$ from Equations (\ref{Pi}). Observe that $\Pi^0 \mathbf{v}_d \mathbf{v}_d^{\sf{T}} = \sum_{\lambda_i = 0} (u_i^{\sf{T}} \mathbf{v}_d ) u_i \mathbf{v}_d^{\sf{T}}$. With this in mind, we can decompose the matrix $\mathbf{v}_d \mathbf{v}_d^{\sf{T}}$ into its projections as follows.
    \begin{align*}
        \langle C, \mathbf{v}_d \mathbf{v}_d^{\sf{T}} \rangle &= \langle C, (\Pi^0 + \Pi^+) \mathbf{v}_d \mathbf{v}_d^{\sf{T}} (\Pi^0 + \Pi^+) \rangle \\
        &= \langle C, \Pi^+ \mathbf{v}_d \mathbf{v}_d^{\sf{T}} \Pi^+ \rangle + \sum_{\lambda_i = 0} {u_i}^{\sf{T}} \mathbf{v}_d \langle C, \Pi^+ \mathbf{v}_d {u_i}^{\sf{T}} + {u_i} \mathbf{v}_d^{\sf{T}} \Pi^+ + {u_i} \mathbf{v}_d^{\sf{T}} \Pi^0 \rangle\\
        & = \langle \Pi^+ C \Pi^+, \mathbf{v}_d \mathbf{v}_d^{\sf{T}} \rangle + P,
    \end{align*}
    where we have set $P := \sum_{\lambda_i = 0} {u_i}^{\sf{T}} \mathbf{v}_d \langle C, \Pi^+ \mathbf{v}_d {u_i}^{\sf{T}} + {u_i} \mathbf{v}_d^{\sf{T}} \Pi^+ + {u_i} \mathbf{v}_d^{\sf{T}} \Pi^0 \rangle \in \I_{2d}(S)$. The polynomial $\langle \Pi^+ C \Pi^+, \mathbf{v}_d \mathbf{v}_d^{\sf{T}} \rangle$ is a sum of squares as the matrix $C':= \Pi^+ C \Pi^+$ is positive semidefinite. We write $\sum_{i=1}^{t_1}s_i^2 = \langle  C', \mathbf{v}_d \mathbf{v}_d^{\sf{T}} \rangle $. We now observe that the entries of $C'$ are bounded, thus showing that the coefficients of $s_i$ are bounded. For this, take the expected value of both sides of \eqref{original-SOS} and note that
    \begin{align*}
        poly(\| r \|_{\infty}, \| S \|)=\mathbb{E}_{\alpha \in S} [r(\alpha)] &= \langle C', M \rangle = \langle U^{\sf{T}} C' U, \Lambda \rangle = \sum_i u_i^{\sf{T}} C' u_i \lambda_i\\
        &\geq tr(U^{\sf{T}} C' U) \delta,
    \end{align*}
    where $M = U^{\sf{T}} \Lambda U$ is the spectral decomposition of $M$, $\Lambda = diag(\lambda_1, \lambda_2, \dots, \lambda_{\binom{n+d}{d}})$ is the diagonal matrix of eigenvalues of $M$, and we used that the $(i,j)$-th element of $U^{\sf{T}} C' U$ is $u_i^{\sf{T}} C' u_j$. In addition, we have $tr(U^{\sf{T}} C' U) = \langle U^{\sf{T}} C' U, \mathrm{I} \rangle = \langle C', U^{\sf{T}} U \rangle =  \langle C', \mathrm{I} \rangle = tr(C')$, where $\mathrm{I}$ is the identity matrix. Thus
    \begin{align*}
        poly(\| r \|_{\infty}, \| S \|) = \mathbb{E}_{\alpha \in S} [r(\alpha)] \geq tr(U^{\sf{T}} C' U) \delta = tr(C') \delta.
    \end{align*}
    It follows that we can give a polynomial upper bound to the size of $C'$. Indeed, for every entry of $C'$, we have
    \begin{equation*}
        |C'_{ij}| \leq tr(C') \leq \frac{\mathbb{E}_{\alpha \in S} [r(\alpha)]}{\delta} \leq \frac{2^{poly(n^d)}}{\delta} = 2^{poly(n^d, \lg \frac{1}{\delta})}.
    \end{equation*}
    Finally, observe that we have
    $$r=\sum_{i=1}^{t_1}s_i^2 + P + \sum_{i=1}^mh_ip_i.$$
    We define $p:= P + \sum_{i=1}^mh_ip_i$, and we observe that $p\in \I_{2d}(S)$.  We conclude that the coefficients of $p$ are bounded by $2^{poly(n^d, \lg \frac{1}{\delta})}$, since $p=r-\sum_{i=1}^{t_1}s_i^2$.
\end{proof}

Note that if $\mathcal{P}$ is $\Nsatz$ $d$-complete, then the identity $p=\sum h_i p_i$ for the ``ideal part" can be computed efficiently by the $\Nsatz$ proof system, i.e. with degree at most $O(d)$ and coefficients bounded by $2^{poly(n^d, \lg \frac{1}{\delta})}$. This is the idea behind the $\Nsatz$ criterion. However, this is sufficient but not a necessary condition (see \cite{raghavendra_weitz2017} for a further discussion on the limitations of the $\Nsatz$ criterion).

We next present a new criterion called \textit{$\sos_{\varepsilon}$ criterion}. In essence, the $\sos_{\varepsilon}$ criterion requires that any degree $2d$ polynomial from the ideal part can be $\sos$ proven efficiently to be nonnegative (up to an additive error $\varepsilon$).  This replaces the requirement that there exists $\Nsatz$ proofs of the the ideal part in the $\Nsatz$ criterion. Since $\sos$ is stronger than $\Nsatz$ as a proof system, it follows that the $\sos_{\varepsilon}$ criterion extends and generalizes the $\Nsatz$ criterion. In the following, we will provide natural examples of separation between the two criteria (see \cref{sect:separation_Nsatz_SOS}).

\begin{theorem}[$\sos_{\varepsilon}$ criterion]\label{th:SoS_Criterion}
    Consider a polynomial system $\mathcal{P}=\{p_1 = 0, \dots, p_m = 0\}$ with finite variety $S = \Variety{\mathcal{P}}$ such that $\| S \| \leq 2^{poly(n^d)}$.
    Assume that (see \cref{def:delta+soscomplete})
    \begin{itemize}
        \item [1)] $S$ is $\delta$-spectrally rich up to degree $d$, and
        \item [2)] $\mathcal{P}$ is $\sos_\varepsilon$-complete over $S$.
    \end{itemize}
    Let $r$ be a polynomial. If $``r \geq 0"$ has a degree $2d$ $\sos$ proof
    \begin{equation*}
        r = \sum_{i=1}^{t_0} \sigma_i^2 + \sum_{i=1}^m h_i p_i,
    \end{equation*}
    then, for every $\varepsilon>0$, there exists an $\sos$ proof of $``r+\varepsilon \geq 0"$ of degree $O(d)$
    \begin{equation}\label{eqn:SOS_criterion}
        r +\varepsilon = \sum_{i=1}^{t} \tilde{\sigma}_i^2 + \sum_{i=1}^m \tilde{h}_i p_i,
    \end{equation}
    such that the coefficients of every polynomial appearing in the proof are bounded by $2^{poly(n^d, \lg \frac{1}{\delta}, \lg \frac{1}{\varepsilon})}$.
\end{theorem}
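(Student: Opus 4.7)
The plan is to combine \cref{th:SOS_decomposition} (the SoS decomposition) with the $\sos_\varepsilon$-completeness hypothesis. The starting point is the assumed---but potentially extremely large-coefficient---SoS proof of $``r \geq 0"$ of degree $2d$, and the target is a controlled-coefficient SoS proof of the slightly relaxed statement $``r + \varepsilon \geq 0"$ of degree $O(d)$. The overall strategy is a clean two-step reduction: first peel off the SoS part with bounded bit complexity, then use completeness to rewrite the remaining ideal part as a bounded-coefficient SoS proof.

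First, I would apply \cref{th:SOS_decomposition} directly to the hypothesized degree-$2d$ SoS proof. This yields a representation $r = \sum_{i=1}^{t_1} s_i^2 + p$, where each $s_i$ has degree at most $d$, the polynomial $p$ lies in the truncated vanishing ideal $\I_{2d}(S)$, and---crucially---all coefficients on the right-hand side are bounded by $2^{\text{poly}(n^d, \lg \frac{1}{\delta})}$. This step does the analytic heavy lifting: by projecting the Gram matrix of the original SoS proof against the range/kernel decomposition of the moment matrix $M$, one replaces the potentially exponentially large Gram matrix by a matrix of polynomially controlled size (here is where the $\delta$-spectral assumption is consumed), absorbing the rest into a bounded-coefficient polynomial $p$ vanishing on $S$.

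Second, I would apply the $\sos_\varepsilon$-completeness hypothesis (part 2 of \cref{def:delta+soscomplete}) to the polynomial $p$ obtained in step one. Since $p \in \I_{2d}(S)$ has coefficients bounded by $2^{\text{poly}(n^d, \lg \frac{1}{\delta})}$, for every $\varepsilon > 0$ the hypothesis supplies an SoS proof of $``p + \varepsilon \geq 0"$ from $\mathcal{P}$ of the form
\[
p + \varepsilon = \sum_{j=1}^{t_2} \tau_j^2 + \sum_{i=1}^{m} \tilde{h}_i p_i,
\]
of degree $O(d)$ with coefficients bounded by $2^{\text{poly}(n^d, \lg \|p\|_\infty, \lg \frac{1}{\varepsilon})}$. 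The key bookkeeping observation is that this bound is only logarithmic in $\|p\|_\infty$; substituting $\lg \|p\|_\infty \leq \text{poly}(n^d, \lg \frac{1}{\delta})$ therefore does not produce a doubly-nested polynomial blow-up, yielding the uniform coefficient bound $2^{\text{poly}(n^d, \lg \frac{1}{\delta}, \lg \frac{1}{\varepsilon})}$.

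Finally, adding the identity from step one to the identity from step two gives
\[
r + \varepsilon = \sum_{i=1}^{t_1} s_i^2 + \sum_{j=1}^{t_2} \tau_j^2 + \sum_{i=1}^{m} \tilde{h}_i p_i,
\]
which is an SoS proof of $``r + \varepsilon \geq 0"$ of the desired form (Equation~\eqref{eqn:SOS_criterion}), of degree $O(d)$, with all coefficients bounded as required. There is no genuine obstacle in this final theorem: all the substance of the argument is contained in \cref{th:SOS_decomposition} and in the definition of $\sos_\varepsilon$-completeness, so the proof amounts to verifying compatibility of the parameters. The main ``difficulty'', if any, lies in having set up the correct notion of $\sos_\varepsilon$-completeness---in particular making the dependence on $\|p\|_\infty$ logarithmic---so that the second step composes cleanly with the first without losing polynomial bit complexity.
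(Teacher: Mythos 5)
Your proposal is correct and follows exactly the paper's own argument: apply \cref{th:SOS_decomposition} to extract a bounded-coefficient decomposition $r = \sum_i s_i^2 + p$ with $p \in \I_{2d}(S)$, then invoke $\sos_\varepsilon$-completeness on $p$ and add the two identities. Your extra remark on why the logarithmic dependence on $\|p\|_\infty$ keeps the final bound of the form $2^{poly(n^d, \lg \frac{1}{\delta}, \lg \frac{1}{\varepsilon})}$ is a correct and welcome clarification of a point the paper leaves implicit.
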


\begin{proof}
    By Lemma \ref{th:SOS_decomposition}, as $S$ is $\delta$-spectrally rich, we can rewrite the proof as 
     \begin{equation*}
            r = \sum_{i=1}^{t_1} s_i^2 + p,
        \end{equation*}
      where the coefficients of $s_i$, and $p$ are bounded by $2^{poly(n^d, \lg\frac{1}{\delta})}$, and $p\in \I_{2d}(S)$. Let $\varepsilon > 0$. Since $\mathcal{P}$ is $\sos_\varepsilon$-complete over $S$, there exists a $\sos$ proof of degree $O(d)$
      \begin{equation*}
          p+ \varepsilon = \sum_{i=1}^{t_2}s_i'^{2} + \sum_{i=1}^{m}h_i' p_i
      \end{equation*}
      with coefficients bounded by $2^{poly(n^d , \lg \frac{1}{\varepsilon})}$. By combining these two proofs we obtain the desired result.
\end{proof}

\begin{corollary}
    Suppose $\mathcal{P}$ satisfies the assumptions of the $\sos_\varepsilon$ criterion with $\frac{1}{\delta}=2^{poly(n^d)}$. Assume, moreover, that $\mathcal{P}$ is explicitly Archimedean. If $``r\geq 0"$ has an $\sos$ proof of degree $2d$ from $\mathcal{P}$, then $``r + \varepsilon \geq 0"$ has an $\sos$ proof of degree $O(d)$ from $\mathcal{P}$ that can be computed in time $poly(n^d, \lg \frac{1}{\varepsilon})$, up to any additive error $\varepsilon$.
\end{corollary}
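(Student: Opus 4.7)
The plan is to reduce the search for the $\sos$ proof to a semidefinite feasibility problem of size $n^{O(d)}$ and then to invoke the ellipsoid method. The $\sos_\varepsilon$ criterion supplies the existential guarantee on bit complexity, while explicit Archimedeanity supplies the regularity needed for the SDP solver to run in polynomial time.

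First, I would apply \cref{th:SoS_Criterion} with target precision $\varepsilon/2$. Under the hypothesis $1/\delta = 2^{poly(n^d)}$, this produces an $\sos$ proof of ``$r + \varepsilon/2 \geq 0$'' from $\mathcal{P}$ of degree $D = O(d)$ with all coefficients bounded by some $B = 2^{poly(n^d,\,\lg(1/\varepsilon))}$. Next, I would encode the search for a degree-$D$ $\sos$ proof of ``$r + \varepsilon \geq 0$'' as the natural SDP whose unknowns are a PSD Gram matrix $C$ of dimension $\binom{n+\lfloor D/2\rfloor}{\lfloor D/2\rfloor}$ together with coefficient vectors for the multipliers $h_1,\ldots,h_m$, with equality constraints matching the monomial coefficients in the identity
\[
r+\varepsilon \;=\; \langle C, \mathbf{v}_{\lfloor D/2\rfloor}\mathbf{v}_{\lfloor D/2\rfloor}^{\sf T}\rangle \;+\; \sum_{i=1}^m h_i p_i.
\]
The first step guarantees this SDP has a solution of norm at most $poly(B)$, so intersecting it with the ball of radius $2B$ yields a bounded nonempty feasible region.

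Finally, I would solve this bounded SDP via the ellipsoid method. Each iteration uses eigenvalue decomposition as an approximate separation oracle for the PSD constraint; standard complexity analyses of the ellipsoid method for SDPs (see e.g.\ the discussion in \cite{odonnell2017,raghavendra_weitz2017}) then yield termination in $poly(n^d,\lg(1/\varepsilon))$ iterations, since $\lg B = poly(n^d,\lg(1/\varepsilon))$. The approximate rational solution produced by the solver can be rounded to an exact $\sos$ certificate of ``$r + \varepsilon \geq 0$'' by absorbing the residual numerical error into the slack of $\varepsilon/2$ gained from having solved the first step at precision $\varepsilon/2$ rather than $\varepsilon$.

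The main obstacle is justifying that the bounded SDP is amenable to the ellipsoid method: the coefficient-matching equalities tend to yield feasible sets with empty relative interior, and numerical errors in the separation oracle must not destroy the certificate upon rounding. This is precisely where explicit Archimedeanity is essential: it guarantees zero duality gap for the Lasserre SDP~\cite{JoszH16}, and the bounded-coefficient $\sos$ proofs of ``$N - x^\alpha \geq 0$'' that it supplies can be used as slack terms to produce a strictly feasible, well-conditioned reformulation while preserving the bounded-coefficient property from the first step. Carefully threading these perturbations so that the rounded output still certifies exactly ``$r + \varepsilon \geq 0$'' is the key technical point that I expect to require most care.
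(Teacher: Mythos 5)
Your proposal is correct and follows exactly the argument the paper intends: the paper states this corollary without proof, treating it as the standard consequence of \cref{th:SoS_Criterion} (existence of a degree-$O(d)$ proof with coefficients bounded by $2^{poly(n^d,\lg\frac{1}{\varepsilon})}$) combined with the well-known fact that a bounded-coefficient feasible point lets the ellipsoid method solve the corresponding SDP in polynomial time, with explicit Archimedeanity supplying the no-duality-gap/regularity condition (as the authors note in the introduction, citing \cite{JoszH16}) and the $\varepsilon/2$ slack absorbing the numerical error. You have simply written out the details the authors left implicit; no gap.
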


%%%%%%%%%%%%%%%%%%%%%%%%%%%%%%%%%%%
\subsection[Delta-spectrality]{$\delta$-spectrality}\label{sect:delta_spectrality}
%%%%%%%%%%%%%%%%%%%%%%%%%%%%%%%%%%%%%

The $\delta$-spectrality (\cref{def:delta+soscomplete}) hypothesis in \cref{th:SoS_Criterion} is, to some extent, a mild hypothesis. It is satisfied by many interesting instances where the variety $S$ is discrete. For example, it is satisfied by combinatorial problems having varieties contained in the Boolean hypercube $\{0,1\}^n$. To see this, we first state a lemma for integer-valued matrices.

\begin{lemma}[\cite{raghavendra_weitz2017}]\label{th:spectrality_integer_matrix}
    Let $M \in \mathbb{S}^{N \times N}$ be an integer matrix with $|M_{ij}| \leq B$ for all $i,j \in [N]$. The smallest non-zero eigenvalue of $M$ is at least $(BN)^{-N}$. 
\end{lemma}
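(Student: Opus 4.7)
The plan is to analyze the characteristic polynomial $\chi_M(\lambda) = \det(\lambda I - M)$. Since $M$ has integer entries, $\chi_M$ is a monic polynomial of degree $N$ with integer coefficients. Let $k$ denote the multiplicity of $0$ as a root of $\chi_M$ (equivalently, the dimension of $\ker M$), and factor
$$\chi_M(\lambda) = \lambda^k \cdot q(\lambda),$$
where $q \in \mathbb{Z}[\lambda]$ has degree $N-k$ and $q(0) \neq 0$. The nonzero eigenvalues of $M$ are exactly the roots of $q$, and the product of these roots equals $\pm q(0) \in \mathbb{Z} \setminus \{0\}$, so $\prod_{\lambda \neq 0} |\lambda| \geq 1$, where the product ranges over the nonzero eigenvalues of $M$ counted with multiplicity.

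Next I would upper bound each individual eigenvalue. By the Gershgorin circle theorem, every eigenvalue $\lambda$ of $M$ satisfies
$$|\lambda| \leq \max_i \sum_{j=1}^N |M_{ij}| \leq BN,$$
since each entry is bounded by $B$ and there are $N$ columns. (One could equivalently use $\|M\|_2 \leq \|M\|_F \leq BN$.) Combining the two observations, if $\lambda^\star$ denotes a nonzero eigenvalue of smallest absolute value, then
$$1 \leq \prod_{\lambda \neq 0} |\lambda| \leq |\lambda^\star| \cdot (BN)^{N-k-1} \leq |\lambda^\star| \cdot (BN)^{N-1},$$
which yields $|\lambda^\star| \geq (BN)^{-(N-1)} \geq (BN)^{-N}$, as claimed.

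I do not anticipate any real obstacle here: the argument is a classical application of the integrality of the characteristic polynomial combined with a crude eigenvalue upper bound. The only mild subtlety is ensuring the uniform upper bound on $|\lambda|$ is strong enough, and Gershgorin suffices comfortably; sharper bounds would only improve constants and not change the stated form $(BN)^{-N}$.
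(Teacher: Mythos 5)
Your proof is correct, and it is exactly the argument behind the cited result of Raghavendra and Weitz (the paper itself states this lemma with a citation and gives no proof): the product of the nonzero eigenvalues is, up to sign, the lowest-order nonzero coefficient of the integer characteristic polynomial, hence at least $1$ in absolute value, while each eigenvalue is at most $BN$ by Gershgorin (or the Frobenius norm). The only implicit points worth noting are that the claim is vacuous when $M=0$ and that $BN\geq 1$ whenever $M\neq 0$ (needed for the final weakening to $(BN)^{-N}$), both of which are immediate.
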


By observing that $|S| \cdot M_{S,d}$ is a $O(n^d) \times O(n^d)$ integer-valued matrix, we immediately get $\delta$-spectrality over integer-valued varieties.

\begin{corollary}[\cite{raghavendra_weitz2017}]
    Let $\mathcal{P}$ be a polynomial system such that $S \subseteq \mathbb{Z}^n$ such that $\| S \| < 2^{poly(n^d)}$. Then $S$ is $\delta$-spectrally rich with $\frac{1}{\delta} = 2^{poly(n^d)}$.
\end{corollary}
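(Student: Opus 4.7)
The plan is to apply \cref{th:spectrality_integer_matrix} to the scaled moment matrix $|S| \cdot M_{S,d}$, exploiting the fact that $S \subseteq \mathbb{Z}^n$ forces this scaled matrix to have integer entries. Recall
$$|S| \cdot M_{S,d} = \sum_{\alpha \in S} \mathbf{v}_d(\alpha) \mathbf{v}_d^{\sf{T}}(\alpha),$$
so each entry is a sum over $\alpha \in S$ of monomial evaluations $\alpha^{\beta+\gamma}$ with $|\beta|, |\gamma| \le d$. Since every $\alpha \in S \subseteq \mathbb{Z}^n$ is an integer vector, each such product is an integer, and hence so is each entry of $|S| \cdot M_{S,d}$.

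Next I would estimate the two quantities needed to invoke \cref{th:spectrality_integer_matrix}: the dimension $N$ of the matrix and the entrywise bound $B$. The dimension is $N = \binom{n+d}{d} = O(n^d)$. For the entrywise bound, using $\|S\| \le 2^{poly(n^d)}$, each monomial evaluation satisfies $|\alpha^{\beta+\gamma}| \le \|\alpha\|_{\infty}^{2d} \le \|S\|^{2d}$. Furthermore, since $S$ consists of integer points inside the box $[-\|S\|,\|S\|]^n$, we have the crude bound $|S| \le (2\|S\|+1)^n = 2^{poly(n^d)}$. Therefore each entry of $|S| \cdot M_{S,d}$ is bounded in absolute value by
$$B \;\le\; |S| \cdot \|S\|^{2d} \;\le\; 2^{poly(n^d)}.$$

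Now \cref{th:spectrality_integer_matrix} yields that the smallest nonzero eigenvalue of $|S| \cdot M_{S,d}$ is at least $(BN)^{-N} \ge 2^{-poly(n^d)}$. Dividing by the scalar $|S|$ (which only rescales eigenvalues), the smallest nonzero eigenvalue of $M_{S,d}$ itself is at least
$$\frac{(BN)^{-N}}{|S|} \;\ge\; 2^{-poly(n^d)},$$
which is precisely the $\delta$-spectral richness statement with $\tfrac{1}{\delta} = 2^{poly(n^d)}$.

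No step here is really an obstacle: the reasoning is a direct bookkeeping application of the integer-matrix lemma, and the only mild subtlety is to remember to rescale by $|S|$ at the end and to note that $|S|$ itself is at most $2^{poly(n^d)}$ under the assumed bound on $\|S\|$, so that dividing does not destroy the polynomial bound in the exponent.
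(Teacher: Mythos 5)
Your proposal is correct and follows exactly the argument the paper intends: the paper simply observes that $|S|\cdot M_{S,d}$ is an $O(n^d)\times O(n^d)$ integer matrix and invokes \cref{th:spectrality_integer_matrix}. Your write-up just fills in the routine bookkeeping (integrality of the entries, the bounds on $B$, $N$, and $|S|$, and the final rescaling by $|S|$), all of which is accurate.
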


Another wide class of polynomial problems for which $\delta$-spectrality is easily satisfied are the polynomial systems $\mathcal{P}$ with \textit{variables constrained over a finite domain $D$}. Assuming $D =\{\rho_1, \dots, \rho_k\} \subseteq \mathbb{Q}$ with constant $k=O(1)$, these systems are described as containing the domain polynomials $(x_i - \rho_1)(x_i - \rho_2) \cdot \dots \cdot (x_i - \rho_k)$ for each variables $x_i$.

\begin{corollary}\label{cor:rich-finite}
    Let $\mathcal{P}$ be a polynomial system with variables constrained over a finite domain $D \subseteq \mathbb{Q}$, i.e. $S \subseteq D^n$. Then $S$ is $\delta$-spectrally rich (up to degree $d$) for some $\frac{1}{\delta} = 2^{poly(n^d)}$ 
\end{corollary}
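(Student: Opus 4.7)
The plan is to reduce to the integer-valued case already handled by the previous corollary (combined with \cref{th:spectrality_integer_matrix}). The key observation is that if $D=\{\rho_1,\ldots,\rho_k\}\subseteq \mathbb{Q}$ is a finite set of rationals whose bit complexity is polynomial in $n$, then there exists a common denominator $c\in\mathbb{N}$ with $c\rho_i\in\mathbb{Z}$ for all $i\in[k]$ and $\lg c = \mathrm{poly}(n)$. I will use $c$ to clear denominators inside the moment matrix, apply \cref{th:spectrality_integer_matrix}, and then rescale back to $M_{S,d}$.

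Concretely, first I would consider the rescaled matrix
\[
N := c^{2d}\cdot |S|\cdot M_{S,d} = c^{2d}\sum_{\alpha\in S}\mathbf{v}_d(\alpha)\mathbf{v}_d^{\sf T}(\alpha).
\]
Each entry of $N$ has the form $c^{2d}\sum_{\alpha\in S}\alpha^{\beta}$ for some multi-index $\beta$ with $|\beta|\leq 2d$. Because $c\alpha_j\in\mathbb{Z}$ for every $j$, we have $c^{|\beta|}\alpha^{\beta}\in\mathbb{Z}$, and hence $c^{2d}\alpha^{\beta}=c^{2d-|\beta|}\cdot c^{|\beta|}\alpha^{\beta}\in\mathbb{Z}$. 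This shows $N$ is an integer matrix. Its size is $\binom{n+d}{d}=O(n^d)$, and each of its entries is bounded in absolute value by
\[
c^{2d}\cdot |S|\cdot \|S\|^{2d} \;\leq\; 2^{\mathrm{poly}(n^d)},
\]
using $|S|\leq k^n$, $\|S\|\leq 2^{\mathrm{poly}(n^d)}$, and $c\leq 2^{\mathrm{poly}(n)}$.

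Second, I would invoke \cref{th:spectrality_integer_matrix} on $N$: its smallest nonzero eigenvalue is at least $(BN_0)^{-N_0}=2^{-\mathrm{poly}(n^d)}$, where $B=2^{\mathrm{poly}(n^d)}$ is the entrywise bound and $N_0=\binom{n+d}{d}$. Since $M_{S,d}= (c^{2d}|S|)^{-1} N$ is obtained from $N$ by a positive scalar scaling, every nonzero eigenvalue of $M_{S,d}$ equals the corresponding eigenvalue of $N$ divided by $c^{2d}|S|\leq 2^{\mathrm{poly}(n^d)}$. Therefore the smallest nonzero eigenvalue of $M_{S,d}$ is at least $\delta=2^{-\mathrm{poly}(n^d)}$, which is the desired conclusion.

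There is essentially no hard step here: the proof is a routine adaptation of the previous corollary. The only minor subtlety is ensuring that clearing denominators once, by the uniform factor $c^{2d}$ (rather than a per-entry factor $c^{|\beta|}$), still produces an integer matrix with polynomially bounded bit size; the inequality $c^{2d-|\beta|}\in\mathbb{Z}$ used above makes this work uniformly across all entries of $\mathbf{v}_d(\alpha)\mathbf{v}_d^{\sf T}(\alpha)$.
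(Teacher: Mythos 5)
Your proof is correct and follows essentially the same route as the paper: clear denominators to turn $|S|\cdot M_{S,d}$ into an integer matrix of size $O(n^d)$ with entries bounded by $2^{poly(n^d)}$, apply \cref{th:spectrality_integer_matrix}, and rescale. The only difference is that you spell out the denominator-clearing factor $c^{2d}$ explicitly (the paper's stated factor is written more loosely), which is a welcome clarification but not a different argument.
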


\begin{proof}
    Note that
    \begin{equation*}
       \Pi_{i=1}^k \rho_i^d \cdot |S| \cdot M_{S,d}
    \end{equation*}
    is an integer matrix with values bounded by $2^{poly(n^d)}$. The result follows from \cref{th:spectrality_integer_matrix}.
\end{proof}

%%%%%%%%%%%%%%%%%%%%%%%%%%%%%%%%%%%%%%%%%%%%
\subsection[SoS epsilon completeness]{$\sos_\varepsilon$-completeness}\label{sect:SOS_completeness}
%%%%%%%%%%%%%%%%%%%%%%%%%%%%%%%%%%%%%%%%%%
In this section we develop tools for showing that a polynomial system is $\SOSe$.
We will consider multiple polynomial systems $\mathcal{Q}$ preserving geometric and bit complexity properties of $\mathcal{P}$, namely
\begin{enumerate}[label=A\arabic*.]
    \item \label{assumption_1} \textit{Same zero set:} $S = \Variety{\mathcal{P}} = \Variety{\mathcal{Q}}$.
    \item \label{assumption_2} \textit{Same degree order:} $\deg(q) = O(d)$, $ \forall q \in \mathcal{Q}$, where $d$ is the maximum degree of the polynomials in~$\mathcal{P}$.
    \item \label{assumption_3} \textit{Polynomial bit complexity:} the bit complexity for representing $\mathcal{Q}$ is polynomial in $n$. Note that this implies that the cardinality of $\mathcal{Q}$ is polynomially bounded, i.e. $|\mathcal{Q}| = poly(n)$, and that all coefficients of the polynomials in $\mathcal{Q}$ are bounded by $2^{poly(n^d)}$.
\end{enumerate}
%%%%%%%%%%%%%%%%%%%%%%%%%%%%%%%%%%%
\paragraph{$\sos$-approximability.}
%%%%%%%%%%%%%%%%%%%%%%%%%%%%%%%%%%%%
We define the relation of \emph{$\sos$-approximability} between polynomial systems with the same zero set. This relation arises by considering $\sos$ proofs of approximate objective polynomials $p + \varepsilon$. Therefore, it cannot be simulated by the $\Nsatz$ proof system. We will see that $\sos$-approximability is a powerful tool for showing that a polynomial system $\mathcal{P}$ is $\SOSe$.
%%%%%%%%%%%%%%%%%%%
\begin{definition}[$\sos$ approximation]\label{def:sos_approx}
     Let $\mathcal{P}=\{p_1 = 0,p_2=0, \dots, p_m=0\}$ and $\mathcal{P}'=\{p_1'=0, p_2'=0, \dots, p_l'=0\}$ be two polynomial systems such that $\Variety{\mathcal{P}} = \Variety{\mathcal{P}'}$. We say that \emph{$\mathcal{P}'$ $\sos$-approximates $\mathcal{P}$}, and it is denoted by $\mathcal{P} \lesssim_{SoS} \mathcal{P}'$, if for every $p \in \mathcal{P}$ and every $\varepsilon >0$ there exist $\sos$ proofs
    \begin{align}
      \begin{split}
    ``p + \varepsilon \geq 0&"  \text{ from }\mathcal{P}' \text{ and}\\
    ``-p + \varepsilon \geq 0&" \text{ from } \mathcal{P}'     
      \end{split}
    \end{align} 
    with degree $O(d)$ and with coefficients bounded by $2^{poly(n^d, \lg \frac{1}{\varepsilon})}$.
\end{definition}

Next, we introduce a property that will prove valuable throughout the rest of this section. Roughly speaking, we will show that, under the assumption of explicit Archimedeanity, if $\sos$ can (approximately) prove $``p=0"$ in a precise sense, then it can (approximately) prove the product $``gp \geq 0"$ for any polynomial $g$.

\begin{lemma}\label{lemma-pg-e}
    Let $\mathcal{P}$ be an explicitly Archimedean polynomial system. Let $p \in \mathbb{R}[x_1,\ldots,x_n]$ be a polynomial of degree (at most) $2d$ with coefficient norm bounded by $2^{poly(n^d)}$.
    Assume that, for every $\varepsilon>0$, we have $\sos$ proofs of degree $2d$ from $\mathcal{P}$ of
    \begin{align}
      \begin{split}
          ``p + \varepsilon &\geq 0", \text{ and of} \\
          ``-p + \varepsilon &\geq 0",      
      \end{split}
    \end{align} 
    with coefficients bounded by $2^{poly(n^d, \lg \frac{1}{\varepsilon})}$. Then, for every $\varepsilon>0$ and every $g \in \mathbb{R}[x_1 \dots, x_n]$ with $\deg(g) = O(d)$ and $\| g \|_{\infty} < 2^{poly(n^d)}$, there exists an $\sos$ proof from $\mathcal{P}$ of  
    $$`` pg + \varepsilon \geq 0",$$
    of degree $O(d)$ with coefficients bounded by $2^{poly(n^d, \lg \frac{1}{\varepsilon})}$.
\end{lemma}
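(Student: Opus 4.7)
The key idea is to use explicit Archimedeanity to sandwich $g$ between two $\sos$\nobreakdash-certifiably nonnegative polynomials, and then multiply these against the hypothesized approximate proofs of $\pm p + \varepsilon' \geq 0$ to cancel the unwanted $N_g p$ term. The plan proceeds in three steps.

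First, by explicit Archimedeanity applied to $g$, there exists $N_g \leq 2^{poly(n^d)}$ such that we have $\sos$ proofs of $``N_g + g \geq 0"$ and $``N_g - g \geq 0"$ from $\mathcal{P}$ of degree $O(d)$ with coefficients bounded by $2^{poly(n^d)}$. Second, for a parameter $\varepsilon' > 0$ to be chosen, invoke the hypothesis to obtain $\sos$ proofs of $``p + \varepsilon' \geq 0"$ and $``-p + \varepsilon' \geq 0"$ from $\mathcal{P}$ of degree $2d$ with coefficients bounded by $2^{poly(n^d, \lg \frac{1}{\varepsilon'})}$. Third, take the products (each a product of two $\sos$\nobreakdash-certified nonnegative polynomials, hence itself an $\sos$\nobreakdash-nonnegative expression modulo $\mathcal{P}$) and sum them:
\begin{align*}
    (N_g + g)(p + \varepsilon') &= \ N_g p + g p + N_g \varepsilon' + g \varepsilon', \\
    (N_g - g)(-p + \varepsilon') &= -N_g p + g p + N_g \varepsilon' - g \varepsilon'.
\end{align*}
Adding the two yields $2(g p + N_g \varepsilon') \geq 0$ as an $\sos$ identity modulo $\mathcal{P}$. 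Now choose $\varepsilon' = \varepsilon / N_g$ to conclude $``g p + \varepsilon \geq 0"$.

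The degree of the resulting $\sos$ proof is bounded by $\deg(N_g \pm g) + \deg(p \pm \varepsilon') = O(d) + 2d = O(d)$, as required. For the coefficients, multiplying two $\sos$ proofs with coefficient bounds $2^{poly(n^d)}$ and $2^{poly(n^d, \lg \frac{1}{\varepsilon'})}$ respectively produces coefficients bounded by $2^{poly(n^d, \lg \frac{1}{\varepsilon'})}$ (recall that when $A = \sum \sigma_i^2 + \sum h_i p_i$ and $B = \sum \tau_j^2 + \sum k_j p_j$, then $AB = \sum (\sigma_i \tau_j)^2 + \sum p_j \sigma_i^2 k_j + \sum p_i h_i B$, which is again an $\sos$ proof from $\mathcal{P}$). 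Since $\lg \frac{1}{\varepsilon'} = \lg N_g + \lg \frac{1}{\varepsilon} = poly(n^d) + \lg \frac{1}{\varepsilon}$, the final coefficient bound remains $2^{poly(n^d, \lg \frac{1}{\varepsilon})}$.

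I do not anticipate a serious obstacle: the only point requiring slight care is verifying that the product of two $\sos$ proofs from $\mathcal{P}$ is itself an $\sos$ proof from $\mathcal{P}$ of controlled degree and bit complexity, which is a routine manipulation of the proof syntax. The conceptual heart of the argument is the Archimedean shift $g \mapsto (N_g \pm g)$, which converts the two-sided approximation of $p$ into a one-sided approximation of $gp$ via a clever cancellation of the $N_g p$ terms upon addition.
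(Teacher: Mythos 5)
Your proposal is correct, and it arrives at the result by a route that is structurally parallel to, but not identical with, the paper's. The paper uses the polarization identity
\begin{equation*}
    pg + \tfrac{\varepsilon'}{2}(g^2+1) \;=\; (p+\varepsilon')\left(\tfrac{g+1}{2}\right)^2 + (-p+\varepsilon')\left(\tfrac{g-1}{2}\right)^2,
\end{equation*}
in which the multipliers of $\pm p + \varepsilon'$ are explicit squares; multiplying an $\sos$ proof by a single square is then trivially again an $\sos$ proof, and explicit Archimedeanity enters only afterwards, to replace the residual term $\frac{\varepsilon'}{2}(g^2+1)$ by the constant $\frac{\varepsilon'}{2}N_{g^2+1}$. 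You instead take the multipliers to be $N_g \pm g$, certified nonnegative by Archimedeanity applied to $\pm g$ directly, so your residual is already the constant $2N_g\varepsilon'$ and no second Archimedean step is needed; the price is that you must multiply two full $\sos$ proofs (rather than a proof by a square), which you correctly justify with the identity $AB = \sum(\sigma_i\tau_j)^2 + \sum p_l(\sum_i\sigma_i^2)k_l + \sum p_k h_k B$, and you must check its degree and bit-complexity, which you do. The two uses of Archimedeanity thus sit in different places, but both arguments are sound and yield the same quantitative bounds; the paper's version is marginally lighter on proof-syntax bookkeeping, while yours keeps the error term cleaner.
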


\begin{proof}
Since $\mathcal{P}$ is explicitly Archimedean, there exists a number $0 < N_{g^2+1} < 2^{poly(n^d)}$ such that there exists a proof $\Pi$ of
\begin{align}\label{Ng}
    ``N_{g^2+1} - (g^2+1) \geq 0"
\end{align}
of degree $O(d)$ and coefficients bounded by $2^{poly(n^d)}$. Let $\varepsilon>0$. We set $\varepsilon':= 2\varepsilon/N_{g^2+1}$.  Observe that the following identity holds:
    \begin{equation}\label{identity-fg}
        p g + \frac{\varepsilon'}{2}(g^2 + 1) = (p+\varepsilon')\left( \frac{g+1}{2} \right)^2 + (- p+\varepsilon' )\left( \frac{g-1}{2} \right)^2.
    \end{equation}
After combining the last identity with the proof $\Pi$ in (\ref{Ng}) (multiplied by $\frac{\varepsilon'}{2}$), we obtain  
    \begin{equation}\label{QM-e}
        pg + \frac{\varepsilon'}{2}N_{g^2+1} = \frac{\varepsilon'}{2}\Pi + (p+\varepsilon' )\left( \frac{g+1}{2} \right)^2 + (-p+\varepsilon')\left( \frac{g-1}{2} \right)^2.
    \end{equation}
 By hypothesis, there exist $\sos$ proofs  $``p + \varepsilon' \geq 0"$ and  
      $``-p + \varepsilon' \geq 0"$ of degree $2d$ and coeffcients bounded by $2^{poly(n^d,\lg\frac{1}{\varepsilon'})}$. By the definition of $\varepsilon'$ this is also bounded by $2^{poly(n^d,\lg\frac{1}{\varepsilon})}$. Then, these two proofs combined with Equation~(\ref{QM-e}) give an $\sos$ proof of $``pg+\varepsilon\geq 0"$ of degree $O(d)$, with coefficients bounded by $2^{poly(n^d, \lg\frac{1}{\varepsilon})}$, as desired.
\end{proof}

With the notions of $\sos$-approximability and \cref{lemma-pg-e} at hand, we begin by demonstrating an interesting property of the relation $\lesssim_{\sos}$.

%%%%%%%%%%%%%%%%%%%%%%%%%%%%%%%%%%%%%%%%%%%%%%%%%%%%%%%%%%%%%%%%%%%%%%%%%
\begin{lemma}[Transitivity]\label{th:transitivity_of_approximations}
    Let $\mathcal{P}_1, \mathcal{P}_2$ and $\mathcal{P}_3$ be three systems of polynomials with zero set $S$. Assume that $\mathcal{P}_3$ is explicitly Archimedean. If $\mathcal{P}_1 \lesssim_{SoS} \mathcal{P}_2$ and $\mathcal{P}_2 \lesssim_{SoS} \mathcal{P}_3$, then $\mathcal{P}_1 \lesssim_{SoS} \mathcal{P}_3$.
\end{lemma}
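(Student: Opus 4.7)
The plan is to chain the two $\sos$-approximation hypotheses by re-expanding, inside an $\sos$ proof witnessing $\mathcal{P}_1 \lesssim_{SoS} \mathcal{P}_2$, each ideal-part term $h_{p_2} p_2$ into an $\sos$ proof coming from $\mathcal{P}_2 \lesssim_{SoS} \mathcal{P}_3$. The key bridge will be \cref{lemma-pg-e}: because $\mathcal{P}_3$ is explicitly Archimedean, the approximate $\sos$ proofs of $\pm p_2$ from $\mathcal{P}_3$ can be multiplied by an arbitrary bounded polynomial while keeping coefficients polynomially controlled.

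Fix $p_1 \in \mathcal{P}_1$ and $\varepsilon > 0$; I will produce an $\sos$ proof of $``p_1 + \varepsilon \geq 0"$ from $\mathcal{P}_3$, the case $``-p_1 + \varepsilon \geq 0"$ being symmetric. First I would invoke $\mathcal{P}_1 \lesssim_{SoS} \mathcal{P}_2$ at precision $\varepsilon/2$ to obtain
$$p_1 + \tfrac{\varepsilon}{2} = \sigma_0 + \sum_{p_2 \in \mathcal{P}_2} h_{p_2}\, p_2,$$
where $\sigma_0$ is $\sos$, every polynomial has degree $O(d)$, and all coefficients (in particular each $\|h_{p_2}\|_\infty$) are bounded by $2^{\mathrm{poly}(n^d,\, \lg 1/\varepsilon)}$. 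Second, for each $p_2 \in \mathcal{P}_2$ separately, I would apply \cref{lemma-pg-e} with the polynomial ``$p$'' taken to be $p_2$ and the multiplier ``$g$'' taken to be $h_{p_2}$: the hypotheses of the lemma are supplied directly by $\mathcal{P}_2 \lesssim_{SoS} \mathcal{P}_3$ together with the explicit Archimedeanity of $\mathcal{P}_3$. Setting $\varepsilon'' := \varepsilon/(2|\mathcal{P}_2|)$, this yields an $\sos$ proof from $\mathcal{P}_3$
$$h_{p_2}\, p_2 + \varepsilon'' = \tau_{p_2} + \sum_{p_3 \in \mathcal{P}_3} g_{p_2,p_3}\, p_3,$$
with $\tau_{p_2}$ an $\sos$, degree $O(d)$, and all coefficients bounded by $2^{\mathrm{poly}(n^d,\, \lg 1/\varepsilon)}$.

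Finally I would sum this identity over all $p_2 \in \mathcal{P}_2$ and substitute into the first, obtaining
$$p_1 + \varepsilon = \Big(\sigma_0 + \sum_{p_2}\tau_{p_2}\Big) + \sum_{p_3 \in \mathcal{P}_3} \Big(\sum_{p_2} g_{p_2,p_3}\Big) p_3,$$
which is the desired $\sos$ proof from $\mathcal{P}_3$ of $``p_1 + \varepsilon \geq 0"$. Since $|\mathcal{P}_2| = \mathrm{poly}(n)$ by the standing polynomial bit-complexity assumption, the $|\mathcal{P}_2|\cdot \varepsilon''$ contributions telescope correctly to $\varepsilon/2$, the total degree remains $O(d)$, and all coefficient bounds persist up to polynomial overhead.

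The main obstacle I anticipate is the interplay between the multiplier norm and $\varepsilon$: in \cref{lemma-pg-e} the multiplier $g$ is stated with $\|g\|_\infty < 2^{\mathrm{poly}(n^d)}$, whereas here $\|h_{p_2}\|_\infty$ is only bounded by $2^{\mathrm{poly}(n^d,\, \lg 1/\varepsilon)}$, which already depends on $\varepsilon$. A careful re-reading of the proof of \cref{lemma-pg-e} shows this is benign: the Archimedean quantity $N_{g^2+1}$ enters only through its logarithm, scaling like $\mathrm{poly}(n^d,\, \lg \|g\|_\infty) = \mathrm{poly}(n^d,\, \lg 1/\varepsilon)$, so the bookkeeping still closes within the required $2^{\mathrm{poly}(n^d,\, \lg 1/\varepsilon)}$ target.
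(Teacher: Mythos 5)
Your proof is correct and follows essentially the same route as the paper's: expand $p_1+\varepsilon/2$ via $\mathcal{P}_1\lesssim_{SoS}\mathcal{P}_2$, then convert each ideal term $h_{p_2}p_2$ into an approximate $\sos$ proof from $\mathcal{P}_3$ using \cref{lemma-pg-e} with error budget $\varepsilon/(2m_2)$, and sum using $|\mathcal{P}_2|=poly(n)$. Your closing remark about the multiplier norm $\|h_{p_2}\|_\infty \le 2^{poly(n^d,\lg 1/\varepsilon)}$ versus the $2^{poly(n^d)}$ hypothesis in \cref{lemma-pg-e} is a genuine subtlety that the paper's own proof passes over silently, and your resolution of it is sound.
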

%%%%%%%%%%%%%%%%%%%%%%%%%%%%%%%%%%%%%%%%%%%%%%%%%%%%%%%%%%%%%%%%%%%%%%%%%%%
\begin{proof}
     Let $\mathcal{P}_1=\{p_1, \dots, p_{m_1}\}$, $\mathcal{P}_2=\{q_1, \dots, q_{m_2}\}$ and let $\mathcal{P}_3=\{r_1, \dots, r_{m_3}\}$ and $\varepsilon>0$. For showing that $\mathcal{P}_1 \lesssim_{SoS} \mathcal{P}_3$, we have to show that there exists an $\sos$ proof of $``p_i+\varepsilon\geq 0"$, and $``-p_i+\varepsilon\geq0"$ (for $i\in [m_1])$ from $\mathcal{P}_3$ of degree $O(d)$, with coefficients bounded by $2^{poly(n^d,\lg\frac{1}{\varepsilon})}$. We will show this for $``p_1+\varepsilon\geq 0"$. The other polynomials are shown similarly. Since $\mathcal{P}_1 \lesssim_{SoS} \mathcal{P}_2$, there exists an $\sos$ proof
    \begin{align}\label{p_1}
        p_1 + \frac{\varepsilon}{2} = \sum_{i} s_i^2 + \sum_{j=1}^{m_2}q_jh_j,    
    \end{align}
   of degree $O(d)$, and coefficients bounded by $2^{poly(n^d, \lg \frac{1}{\varepsilon})}$. Since $\mathcal{P}_2 \lesssim_{SoS} \mathcal{P}_3$, for every $\varepsilon'>0$ and $j\in[m_2]$, we have $\sos$ proofs
   \begin{align}\label{q_i}
   \begin{split}
    ``q_j + \varepsilon' \geq 0&"  \text{ from }\mathcal{P}_3 \text{  and}\\
    ``-q_j + \varepsilon' \geq 0&" \text{ from } \mathcal{P}_3      
   \end{split}
   \end{align}
    of degree $O(d)$, and coefficients bounded by $2^{poly(n^d, \lg \frac{1}{\varepsilon})}$. Then, by Lemma \ref{lemma-pg-e} we obtain that, for all $j\in [m_2]$ and all $h_j$ of degree $O(d)$, there exists an $\sos$ proof $``q_jh_j + \frac{\varepsilon}{2m_2}\geq 0"$ from $\mathcal{P}_3$ of degree $O(d)$, and coefficients bounded by $2^{poly(n^d, \lg \frac{2m_2}{\varepsilon})}$. However, recall that by \ref{assumption_3} we have that $m_2 = poly(n)$ and thus the coefficients are bounded by $2^{poly(n^d, \lg \frac{1}{\varepsilon})}$. By summing up these proofs for all $j\in [m_2]$, we obtain an $\sos$ proof of
    \begin{align}\label{qh}
        ``\sum_{j=1}^{m_2}q_jh_j +\frac{\varepsilon}{2} \geq 0" \text{ from }\mathcal{P}_3
    \end{align}
   of degree $O(d)$, and coefficients bounded by $2^{poly(n^d, \lg \frac{1}{\varepsilon})}$. Finally, we combine the proofs in (\ref{qh}) and in (\ref{p_1}) and obtain an $\sos$ proof
    \begin{align}\label{p_1-final}
        ``p_1 + \varepsilon \geq 0" \text{ from }\mathcal{P}_3
    \end{align}
    of degree $O(d)$, and coefficients bounded by $2^{poly(n^d, \lg \frac{1}{\varepsilon})}$.  
\end{proof}

Next we present a few relevant examples for which it is possible to show $\sos$-approximability. The first example focuses on the powers of polynomial systems. Namely, let $\mathcal{P} = \{p_1 = 0, \ldots, p_m = 0\}$ be a polynomial system of equations. We define the \emph{$\alpha$-power of $\mathcal{P}$} as the polynomial system $\mathcal{P}^{\alpha} = \{p_1^{\alpha_1} = 0, p_2^{\alpha_2} = 0, \ldots, p_m^{\alpha_m} =0\}$, where $\alpha$ is a multi-index $\alpha = (\alpha_1,\alpha_2, \ldots, \alpha_m) \in \mathbb{N}^m$. The next result shows that $\alpha$-powers of a polynomial system approximate the set itself.

\begin{proposition}\label{prop:alpha_powers}
    Let $\alpha\in \mathbb{N}^n$, with $|\alpha|=O(d)$. Then, $\mathcal P\lesssim_{SoS}\mathcal{P}^\alpha$.
\end{proposition}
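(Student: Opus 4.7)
By definition of $\lesssim_{\sos}$, the task reduces to producing, for each $p_i \in \mathcal P$ and each $\varepsilon > 0$, $\sos$ proofs of $``p_i + \varepsilon \geq 0"$ and $``-p_i + \varepsilon \geq 0"$ from $\mathcal P^\alpha$ of degree $O(d)$ with coefficients bounded by $2^{\mathrm{poly}(n^d, \lg \frac{1}{\varepsilon})}$. Setting $k := \alpha_i$, the relevant axiom available in $\mathcal P^\alpha$ is $p_i^k = 0$, so it suffices to exhibit an explicit univariate algebraic identity of the form
\[
t + \varepsilon \;=\; \sigma(t)^2 + t^{k}\, h(t),
\]
with $\sigma, h \in \mathbb R[t]$ of controlled degree and norm, and then to substitute $t = p_i$; the case $-p_i + \varepsilon \geq 0$ will follow from the analogous identity for $-t + \varepsilon$.

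My plan is to obtain $\sigma(t)$ as the truncation at order $k-1$ of the Taylor expansion of the algebraic function $\sqrt{\varepsilon + t}$ about $t = 0$, namely
\[
\sigma(t) \;:=\; \sum_{n=0}^{k-1} c_n\, t^n, \qquad c_n = \sqrt{\varepsilon}\, \binom{1/2}{n}\, \varepsilon^{-n}.
\]
The key observation is that, since $(\sqrt{\varepsilon + t})^{\,2} = \varepsilon + t$ holds as formal power series, the polynomial $\sigma(t)^2 - (\varepsilon + t)$ has its coefficients of degrees $0, 1, \ldots, k-1$ all equal to zero, and is therefore of the form $t^k\, h(t)$ for some $h(t) \in \mathbb R[t]$ of degree at most $k-2$. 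Substituting $t \mapsto p_i$ yields the desired identity, whose $\sigma(p_i)^2$ part is a single square and whose ideal part $h(p_i)\, p_i^{k}$ is derived directly from the axiom $p_i^{\alpha_i} = 0$ of $\mathcal P^\alpha$. The symmetric proof of $-p_i + \varepsilon \geq 0$ uses the Taylor expansion of $\sqrt{\varepsilon - t}$ in place of $\sqrt{\varepsilon + t}$, with an identical structure.

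For the bit complexity and degree: since $\bigl|\binom{1/2}{n}\bigr|$ is uniformly bounded in $n$ (it decays as $n^{-3/2}$), one obtains $|c_n| \leq C\, \varepsilon^{1/2 - n}$, and analogously $|h_j| \leq \mathrm{poly}(k)\cdot \varepsilon^{-O(k)}$ by expanding the product $\sigma(t)^2$. Since $k \leq |\alpha| = O(d)$, all coefficients fit within $2^{\mathrm{poly}(d, \lg \frac{1}{\varepsilon})} \subseteq 2^{\mathrm{poly}(n^d, \lg \frac{1}{\varepsilon})}$, and the total degree of the proof is $2(k-1)\deg(p_i) = O(d)$ under the standing hypothesis A2 applied to $\mathcal P^\alpha$ (which forces $\alpha_i \deg(p_i) = O(d)$). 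The main delicate point, and the conceptual heart of the argument, is the divisibility $\sigma(t)^2 - (\varepsilon + t) = t^k h(t)$: this requires no convergence analysis since $\sigma$ is a finite polynomial, but it must be verified by matching coefficients up to order $k-1$, relying precisely on the choice of $c_n$ as the Taylor coefficients of $\sqrt{\varepsilon + t}$. The rest is routine bookkeeping of norms and degrees.
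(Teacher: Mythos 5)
Your proof is correct, but it takes a genuinely different route from the paper's. The paper proves $p_1+\varepsilon\geq 0$ from $p_1^{\alpha_1}=0$ by an iterated-squaring telescope: it produces $\ell=\lceil \lg \alpha_1\rceil$ explicit squares of the form $\bigl(\sqrt{\varepsilon/\ell}-(\cdot)p_1^{2^i}\bigr)^2$ whose cross terms cancel in a chain, leaving a residue proportional to $p_1^{2^\ell}$, which is then absorbed into the ideal part as $p_1^{2^\ell-\alpha_1}\cdot p_1^{\alpha_1}$ since $2^\ell\geq\alpha_1$. You instead take the degree-$(k-1)$ truncation $\sigma$ of the binomial series of $\sqrt{\varepsilon+t}$, observe that $\sigma(t)^2-(\varepsilon+t)$ vanishes to order $k$ at $t=0$ (a correct and easily verified coefficient-matching argument), and substitute $t=p_i$ to land \emph{exactly} on the axiom $p_i^{\alpha_i}$ with a single square $\sigma(p_i)^2$. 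Both certificates have comparable degree (about $2\alpha_i\deg p_i$) and comparable coefficient growth ($\varepsilon^{-O(\alpha_i)}$ in each case, since the paper's exponents $c_i=2c_{i-1}+1$ also reach $\Theta(2^\ell)=\Theta(\alpha_i)$), and both rely on the same implicit standing assumption that $\deg(p_i^{\alpha_i})=O(d)$ (your appeal to A2 for $\mathcal{P}^\alpha$ mirrors what the paper tacitly uses). What your approach buys is a canonical, single-square certificate of minimal vanishing order; what the paper's buys is that it needs only elementary completing-the-square algebra with $O(\lg\alpha_i)$ squares rather than the binomial series. Your treatment of the $-p_i+\varepsilon$ case via $\sqrt{\varepsilon-t}$ is sound and is no more of a sketch than the paper's own proof, which also handles only the $p_1+\varepsilon$ case explicitly.
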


\begin{proof}
    Let $\mathcal{P} = \{p_1 = 0, \dots, p_m=0\}$ and $\varepsilon > 0$. We have to show that there exists an $\sos$ proof of $``p_i + \varepsilon \geq 0"$ and of $``-p_i + \varepsilon \geq 0"$ (for $i \in [m]$) from $\mathcal{P}^{\alpha}$ of degree $O(d)$, with coefficients bounded by $2^{poly(n^d, \lg \frac{1}{\varepsilon})}$. We start by showing this for $``p_1 + \varepsilon \geq 0$.

    Let $\ell = \lceil \lg \alpha_1 \rceil$ so that $2^{\ell} > \alpha_1$. Then there exists an $\sos$ proof of $``p_1 + \varepsilon \geq 0"$ from $p_1^{\alpha_1}$ of degree $O(d)$ and coefficients bounded by $2^{poly(n^d, \lg \frac{1}{\varepsilon})}$. Indeed,
    \begin{align*}
        p_1 + \varepsilon = \left(\sqrt{\frac{\varepsilon}{\ell}} + \frac{1}{2\sqrt{\frac{\varepsilon}{\ell}}} p_1 \right)^2 + \sum_{i=1}^{\ell-1} \left( \sqrt{\frac{\varepsilon}{\ell}} - \left( \frac{1}{2 \sqrt{\frac{\varepsilon}{\ell}}} \right)^{c_i} p_1^{2^i} \right)^2 - \left( \frac{1}{2 \sqrt{\frac{\varepsilon}{\ell}}} \right)^{2c_{\ell-1}} p_1^{2^\ell},
    \end{align*}
    where
    \begin{equation*}
        c_i = 
        \begin{cases}
            1 & i = 0, \\
            3 & i = 1, \\
            2c_{i-1} + 1 & otherwise.
        \end{cases}
    \end{equation*}
    % Given any $\alpha_1 \in \mathbb{N}$, the result follows immediately by considering the SoS proof of $``p_1 + \varepsilon \geq 0"$ for $\ell = 2^{\lceil \lg \alpha_1 \rceil}$.
\end{proof}

In the second example, we show that when the polynomials in a system of polynomials are multiplied by positively shifted sums-of-squares, approximation is possible. More precisely, let $\mathcal{P}=\{p_1 = 0, \dots, p_m = 0\}$ be a polynomial system. Let $g\in \mathbb{R}[x_1, \dots, x_n]$ be a polynomial of the form
\begin{align}\label{sos+p}
    g = \sum_{i=1}^t q_i^2 + c,
\end{align}
for some constant $c > 0$. We consider the polynomial system $\mathcal{P}'=\{gp_1=0, \dots, p_m=0\}$. Clearly, $\mathcal{P}$ and $\mathcal{P}'$ have the same variety. We also make the assumptions \ref{assumption_2} and \ref{assumption_3} for set $\mathcal{P}'$. We have the following result.
\begin{proposition}
    Let $\mathcal{P}$ and $\mathcal{P}'$ as defined above. Then, we have $\mathcal{P} \lesssim_{SoS} \mathcal{P}'$.
\end{proposition}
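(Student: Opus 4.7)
The plan is to verify \cref{def:sos_approx} axiom-by-axiom. For the polynomials $p_i$ with $i \geq 2$, we have $p_i \in \mathcal{P}'$, so $\pm p_i + \varepsilon = (\sqrt{\varepsilon})^2 + (\pm 1)\, p_i$ is already an $\sos$ proof of $``\pm p_i + \varepsilon \geq 0"$ from $\mathcal{P}'$ of degree $2$ and with the required coefficient bound. The entire content of the lemma is therefore the case $p = p_1$, where the only axiom of $\mathcal{P}'$ that ``sees'' $p_1$ is $gp_1 = 0$.

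My two-step strategy for $p_1$ is: (i) derive an $\sos$ identity from $\mathcal{P}'$ that certifies $``-p_1^2 \geq 0"$; (ii) complete a square to pass from $-p_1^2 \geq 0$ to $\pm p_1 + \varepsilon \geq 0$. For step (i), I multiply the axiom $gp_1=0$ by $p_1$ and expand $g=\sum_i q_i^2 + c$ to obtain the polynomial identity
\begin{align*}
    p_1 \cdot (gp_1) \;=\; g p_1^2 \;=\; \sum_i (q_i p_1)^2 \,+\, c\, p_1^2,
\end{align*}
which rearranges (using $c > 0$) to
\begin{align*}
    -p_1^2 \;=\; \frac{1}{c}\sum_i (q_i p_1)^2 \;+\; \left(-\tfrac{1}{c}\, p_1\right)\!\cdot (gp_1).
\end{align*}
This is precisely an $\sos$ proof of $``-p_1^2 \geq 0"$ from $\mathcal{P}'$, with $\sos$ part $\tfrac{1}{c}\sum_i(q_i p_1)^2$ and ideal multiplier $-\tfrac{1}{c}p_1$ applied to the axiom $gp_1$. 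For step (ii), I invoke the elementary identity
\begin{align*}
    \pm p_1 + \varepsilon \;=\; \left( \sqrt{\varepsilon} \pm \tfrac{p_1}{2\sqrt{\varepsilon}} \right)^{\!2} \;-\; \tfrac{1}{4\varepsilon}\, p_1^2,
\end{align*}
and substitute the expression for $-p_1^2$ obtained in step (i). This yields an $\sos$ proof of $``\pm p_1 + \varepsilon \geq 0"$ from $\mathcal{P}'$.

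Finally, I check the degree and coefficient bounds. Every polynomial appearing in the proof has degree at most $2\,\max(\deg g,\deg p_1) = O(d)$ by assumption \textbf{A2}. The coefficients carry factors $\tfrac{1}{c}$, $\tfrac{1}{\varepsilon}$, and norms of $g, p_1, q_i$; by assumption \textbf{A3} these norms are bounded by $2^{poly(n^d)}$ and the positive constant $c$ of $g$ has bit size $poly(n)$, so $\tfrac{1}{c} \leq 2^{poly(n)}$. Combining these estimates gives the required coefficient bound $2^{poly(n^d,\, \lg \tfrac{1}{\varepsilon})}$.

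I do not expect a real obstacle: the one nonobvious ingredient is the ``squared axiom'' identity $g p_1^2 = p_1 \cdot (gp_1)$, which turns the strictly positive shift $c > 0$ in $g = \sum q_i^2 + c$ into an exact $\sos$ certificate of $-p_1^2 \geq 0$. Once this is observed, the remainder is a direct quadratic completion, and the coefficient bookkeeping is routine.
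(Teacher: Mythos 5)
Your proof is correct, and it reaches the conclusion by a genuinely different (and more modular) route than the paper. The paper verifies the claim by exhibiting a single closed-form identity, e.g.
\begin{align*}
    p_1 + \varepsilon \;=\; \frac{1}{4c^2 \varepsilon} \left[(\sigma p_1 - 2\varepsilon c)^2 + c\, \sigma p_1^2 + (-\sigma p_1 + 4\varepsilon c)\,p_1 g \right], \qquad \sigma := \sum_i q_i^2,
\end{align*}
which one checks by expansion. You instead factor the argument through the \emph{exact} (no $\varepsilon$) certificate $-p_1^2 = \tfrac{1}{c}\sum_i (q_i p_1)^2 - \tfrac{1}{c}\, p_1\cdot(gp_1)$, and then apply the quadratic completion $\pm p_1 + \varepsilon = \bigl( \sqrt{\varepsilon} \pm \tfrac{p_1}{2\sqrt{\varepsilon}} \bigr)^{2} - \tfrac{1}{4\varepsilon} p_1^2$. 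Both identities are valid $\sos$ proofs from $\mathcal{P}'$ of degree $O(d)$ with the stated coefficient bounds, and both rely on the same implicit convention that the representation $g = \sum_i q_i^2 + c$ itself has $poly(n)$ bit complexity (your appeal to \textbf{A3} for $\tfrac{1}{c}$ and $\|q_i\|_\infty$ is no stronger than what the paper's own identity needs, since it too carries $\tfrac{1}{4c^2\varepsilon}$ and $\sigma$). What your decomposition buys is conceptual clarity: the intermediate statement $``-p_1^2 \geq 0"$ is exactly the output form of \cref{th:sos_sim_PC_1}, and your step (ii) is the $\ell=1$ instance of the telescoping identity in \cref{prop:alpha_powers}; so your proof makes visible that this proposition is really ``$\mathcal{P}'$ certifies $p_1^2 \in \langle\mathcal{P}'\rangle$ exactly, plus the $\alpha$-power trick,'' the same mechanism used later in \cref{th:sospc}. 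Two trivial remarks: the degree bound is $\deg g + 2\deg p_1$ rather than $2\max(\deg g, \deg p_1)$, but this is still $O(d)$ by \textbf{A2}; and your observation that $p_i \in \mathcal{P}'$ for $i \geq 2$ handles those axioms even without the $\varepsilon$ shift, matching the paper's closing sentence.
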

\begin{proof}
Define $\sigma := \sum_{i=1}^t q_i^2$. Observe that the following identities hold: 
\begin{align*}
    p_1 + \varepsilon &= \frac{1}{4c^2 \varepsilon} \left[(\sigma p - 2\varepsilon c)^2 + c \sigma p^2 + (-\sigma p + 4\varepsilon c)pg \right], \\
    -p_1 + \varepsilon &= \frac{1}{4c^2 \varepsilon} \left[(\sigma p + 2\varepsilon c)^2 + c \sigma p^2 + (-\sigma p - 4\varepsilon c)pg \right].
\end{align*}
Therefore we have $\sos$ proofs of $``p_1 + \varepsilon \geq 0"$ and $``-p_1 + \varepsilon \geq 0"$ from $\mathcal{P}'$ of degree $O(d)$ and coefficients bounded by $2^{poly(n^d, \lg \frac{1}{\varepsilon})}$. For $i=2, \dots, m$ the polynomials $p_i+\varepsilon$ and $-p_i+\varepsilon$ are already $\sos$ proof from $\mathcal{P}'$.
\end{proof}

%%%%%%%%%%%%%%%%%%%%%%%%%%%%%%%%%%%%%%%%%%%%%%%%%%%%%%%%%%%%%%%
\paragraph{Showing $\SOSe$ness.}

The main consequence of the concept of $\sos$-approximability is that it allows for the inheritance of $\SOSe$ness among different polynomial systems.

%%%%%%%%%%%%%%%%%%%%%%%%%%%%%%%%%%%%%%%%%%%%%%%%%%%%%%
\begin{theorem}\label{th:richness_inheritance}
    Let $\mathcal{P}_1$ and $\mathcal{P}_2$ be polynomials systems with zero set $S$. Assume that $\mathcal P_1$ is $\SOSe$ and that $\mathcal{P}_2$ is explicitly Archimedean. If  $\mathcal{P}_1\lesssim_{SoS} \mathcal{P}_2$, then $\mathcal{P}_2$ is $\SOSe$. 
\end{theorem}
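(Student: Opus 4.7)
The plan is to mimic the strategy used in the transitivity lemma (\cref{th:transitivity_of_approximations}): take the $\sos$ proofs provided by $\sos_\varepsilon$-completeness of $\mathcal{P}_1$ and rewrite their ``ideal parts'' as certificates from $\mathcal{P}_2$ using the $\sos$-approximation $\mathcal{P}_1 \lesssim_{SoS} \mathcal{P}_2$, with the multiplication step carried out via \cref{lemma-pg-e}. Fix $q \in \I_{2d}(S)$ and $\varepsilon > 0$. Since $\mathcal{P}_1$ is $\sos_\varepsilon$-complete, there is an $\sos$ proof
\[
q + \tfrac{\varepsilon}{2} \;=\; \sum_i s_i^2 + \sum_{j=1}^{m_1} h_j p_j, \qquad p_j \in \mathcal{P}_1,
\]
of degree $O(d)$ whose coefficients (in particular those of the $h_j$) are bounded by $2^{poly(n^d,\,\lg \|q\|_\infty,\,\lg(1/\varepsilon))}$.

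The next step converts each $h_j p_j$ into an approximately nonnegative certificate from $\mathcal{P}_2$. Because $\mathcal{P}_1 \lesssim_{SoS} \mathcal{P}_2$, for every $\delta > 0$ we have $\sos$ proofs of $p_j + \delta \geq 0$ and $-p_j + \delta \geq 0$ from $\mathcal{P}_2$ of degree $O(d)$ and coefficients bounded by $2^{poly(n^d,\,\lg(1/\delta))}$. This is precisely the hypothesis required by \cref{lemma-pg-e}, which applies because $\mathcal{P}_2$ is explicitly Archimedean. Applied with $g = h_j$ (whose degree is $O(d)$ and whose coefficients are bounded by $2^{poly(n^d,\,\lg\|q\|_\infty,\,\lg(1/\varepsilon))}$), the lemma yields, for every $\varepsilon'>0$, an $\sos$ proof of ``$h_j p_j + \varepsilon' \geq 0$'' from $\mathcal{P}_2$ of degree $O(d)$ and coefficients bounded by $2^{poly(n^d,\,\lg\|q\|_\infty,\,\lg(1/\varepsilon'))}$.

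Setting $\varepsilon' = \varepsilon/(2 m_1)$ (and using assumption \ref{assumption_3}, which ensures $m_1 = poly(n)$, so $\lg(1/\varepsilon') = poly(n, \lg(1/\varepsilon))$), and summing the $m_1$ proofs, we obtain an $\sos$ proof of
\[
\sum_{j=1}^{m_1} h_j p_j + \tfrac{\varepsilon}{2} \;\geq\; 0
\]
from $\mathcal{P}_2$ with the required degree and coefficient bounds. Adding $\sum_i s_i^2$ (already a sum of squares) to both sides gives an $\sos$ proof of ``$q + \varepsilon \geq 0$'' from $\mathcal{P}_2$ of degree $O(d)$ and coefficients bounded by $2^{poly(n^d,\,\lg\|q\|_\infty,\,\lg(1/\varepsilon))}$, which is exactly the definition of $\sos_\varepsilon$-completeness for $\mathcal{P}_2$.

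The only delicate step is the bookkeeping of the bit complexity when invoking \cref{lemma-pg-e}: we must verify that each $h_j$ satisfies the norm/degree hypotheses of that lemma, and that summing $m_1 = poly(n)$ such certificates (with the shift $\varepsilon/(2 m_1)$) keeps the final coefficients within $2^{poly(n^d,\,\lg\|q\|_\infty,\,\lg(1/\varepsilon))}$. Both points follow directly from the assumptions \ref{assumption_2}--\ref{assumption_3} on $\mathcal{P}_1$ and the bound on the $\sos_\varepsilon$ proof produced at the first step, so no further work is needed beyond carefully tracking the parameters.
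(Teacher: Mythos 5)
Your proposal is correct and follows essentially the same route as the paper's proof: decompose $q+\varepsilon/2$ via the $\sos_\varepsilon$-completeness of $\mathcal{P}_1$, convert each product $h_j p_j$ into an approximate certificate from $\mathcal{P}_2$ using the approximation relation together with \cref{lemma-pg-e}, and sum using $m_1 = poly(n)$. The bookkeeping you flag is handled identically in the paper, so nothing further is needed.
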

%%%%%%%%%%%%%%%%%%%%%%%%%%%%%%%%%%%%%%%%%%%%%%%%%%%%%
\begin{proof}
Let $\mathcal{P}_1=\{p_1 =0, \cdots, p_{m_1}=0\}$ and $\mathcal{P}_2=\{q_1=0, \cdots, q_{m_2}=0\}$ be the two polynomials systems, let $\varepsilon>0$ be a real number and consider $p\in \I_{2d}(S)$. Since $\mathcal{P}_1$ is $\SOSe$, then there exists an $\sos$ proof 
\begin{align}\label{her-1}
    p+\frac{\varepsilon}{2} = \sum_{i=1}^{t} s_i^2 + \sum_{i=1}^{m_1}h_ip_i
\end{align}
of degree $O(d)$ and coefficients bounded by $2^{poly(n^d, \lg \frac{1}{\varepsilon})}$.
Since $\mathcal{P}_1\lesssim_{SoS} \mathcal{P}_2$ we have, for all $i\in [m_1]$, $\sos$ proofs of
\begin{align*}
``p_i+\frac{\varepsilon}{2m_1} \geq 0&" \text{ from } \mathcal{P}_2 \text{ and of} \\
``-p_i+\frac{\varepsilon}{2m_1} \geq 0&" \text{ from } \mathcal{P}_2
\end{align*}
of degree $O(d)$ and coefficients bounded by $2^{poly(n^d, \lg \frac{2m_1}{\varepsilon})}$. By Lemma \ref{lemma-pg-e}, for all $i\in [m_1]$ there exist an $\sos$ proof of $``p_ih_i + \frac{\varepsilon}{2m_1}\geq 0"$ from $\mathcal{P}_2$ of degree $O(d)$ and coefficients bounded by $2^{poly(n^d, \lg\frac{2m_1}{\varepsilon})}$. However, recall that by \ref{assumption_3} we have that $m_1 = poly(n)$ and thus the coefficients are bounded by $2^{poly(n^d, \lg \frac{1}{\varepsilon})}$. By summing up these proofs we obtain the $\sos$ proof
\begin{align}\label{her-2}
    ``\sum_{i=1}^{m_1}h_ip_i + \frac{\varepsilon}{2} \geq 0" \text{ from } \mathcal{P}_2
\end{align}
of degree $O(d)$ and coefficients bounded by $2^{poly(n^d, \lg\frac{1}{\varepsilon})}$. Finally, combining the proofs in (\ref{her-1}) and (\ref{her-2}), we obtain an $\sos$ proof of 
$$``p+\varepsilon \geq 0" \text{ from } \mathcal{P}_2$$
of degree $O(d)$ and coefficients bounded by $2^{poly(n^d, \lg\frac{1}{\varepsilon})}$.
\end{proof}

\begin{corollary}\label{cor:completeness_inheritance}
    Let $\mathcal{P}_1, \dots, \mathcal{P}_k$ be polynomial systems for some integer $k = O(1)$. Assume that $\mathcal{P}_1 \lesssim_{SoS} \dots \lesssim_{SoS} \mathcal{P}_k$. If $\mathcal{P}_1$ is $\sos_{\varepsilon}$-complete and $\mathcal{P}_k$ is explicitly Archimedean, then $\mathcal{P}_k$ is $\sos_{\varepsilon}$-complete.
\end{corollary}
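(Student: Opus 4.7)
The plan is to reduce the chain to a single $\sos$-approximation $\mathcal{P}_1 \lesssim_{\sos} \mathcal{P}_k$ and then invoke \cref{th:richness_inheritance} to transfer $\sos_\varepsilon$-completeness from $\mathcal{P}_1$ to $\mathcal{P}_k$. The one nontrivial point is that we only have explicit Archimedeanity for the endpoint $\mathcal{P}_k$ and not for the intermediate systems $\mathcal{P}_2, \ldots, \mathcal{P}_{k-1}$, so we must collapse the chain without requiring those intermediate Archimedeanity hypotheses.

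To perform this collapse I would use \cref{th:transitivity_of_approximations}, but applied in the specific direction that keeps $\mathcal{P}_k$ as the rightmost (``absorbing'') system at every step. The key observation is that Lemma \ref{th:transitivity_of_approximations} requires only the \emph{third} system of the transitive triple to be explicitly Archimedean. Concretely, I would prove by reverse induction on $j$, from $j = k-1$ down to $j = 1$, the statement $\mathcal{P}_j \lesssim_{\sos} \mathcal{P}_k$. The base case $j = k-1$ is immediate from the hypothesized chain. For the inductive step, assuming $\mathcal{P}_{j+1} \lesssim_{\sos} \mathcal{P}_k$, I combine this with the chain link $\mathcal{P}_j \lesssim_{\sos} \mathcal{P}_{j+1}$ and apply \cref{th:transitivity_of_approximations} to the triple $(\mathcal{P}_j, \mathcal{P}_{j+1}, \mathcal{P}_k)$, where the required Archimedeanity is exactly the one we have on $\mathcal{P}_k$. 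Since $k = O(1)$, this amounts to constantly many applications, and we land at $j=1$ with $\mathcal{P}_1 \lesssim_{\sos} \mathcal{P}_k$.

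With $\mathcal{P}_1$ being $\sos_\varepsilon$-complete, $\mathcal{P}_k$ being explicitly Archimedean, and $\mathcal{P}_1 \lesssim_{\sos} \mathcal{P}_k$ now established, \cref{th:richness_inheritance} applies verbatim and delivers the $\sos_\varepsilon$-completeness of $\mathcal{P}_k$. I do not foresee a real obstacle: the argument is essentially bookkeeping once one notices that transitivity should be chained so that $\mathcal{P}_k$ always plays the role of the third system. The only mild subtlety is to verify that the bit-complexity and degree bounds compose cleanly through the finitely many transitivity applications, but this follows from $k=O(1)$ together with the observation that the estimates $2^{poly(n^d, \lg \frac{1}{\varepsilon})}$ and $O(d)$ are closed under constantly many compositions of the form supplied by \cref{th:transitivity_of_approximations} and \cref{th:richness_inheritance}.
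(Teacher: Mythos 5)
Your proof is correct and is essentially the intended derivation (the paper states this corollary without proof, as an immediate consequence of \cref{th:transitivity_of_approximations} and \cref{th:richness_inheritance}). You correctly identify and handle the one subtle point — that only the rightmost system $\mathcal{P}_k$ is assumed explicitly Archimedean, so the transitivity lemma must be chained with $\mathcal{P}_k$ always in the third position — and the bookkeeping over $k=O(1)$ compositions is fine.
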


% \begin{proposition}
%     Let $\mathcal{P}_1$ and $\mathcal{P}_2$ be polynomials systems for which assumptions \ref{assumption_1}, \ref{assumption_2} and \ref{assumption_3} hold. Assume that $\mathcal{P}_1$ is $\SOSe$ and that $\mathcal{P}_2$ is explicitly Archimedean. If there exists a multi-index $\alpha$ such that $\mathcal{P}_1^\alpha \lesssim_{\sos} \mathcal{P}_2$, then $\mathcal{P}_2$ is $\SOSe$.
% \end{proposition}
% \begin{proof}
%     By \cref{prop:alpha_powers} we have that $\mathcal{P}_1 \lesssim_{\sos} \mathcal{P}_1^{\alpha}$, therefore $\mathcal{P} \lesssim_{\sos} \mathcal{P}_1^{\alpha} \lesssim_{\sos} \mathcal{P}_2$. The result follows by \cref{cor:completeness_inheritance}, since $\mathcal{P}_2$ is explicitly Archimedean.
% \end{proof}

It follows that the problem of showing that an explicitly Archimedean system $\mathcal{P}$ is $\SOSe$ can be reduced to identifying $\sos_\varepsilon$-complete polynomial systems $\mathcal{Q}$ such that $\mathcal{Q} \lesssim_{SoS} \mathcal{P}$. Interestingly, this can be achieved in various instances.

A broad class of such reductions arises from Gröbner basis theory. We recall that \GB bases completely characterize polynomial ideals. Specifically, for polynomial rings ordered by the \emph{\grlexns} order, every polynomial in the $2d$-truncated ideal $q \in \I_{2d}$ has remainder 0 when reduced by the set $\mathcal{G}_{2d}$ of elements of degree at most $2d$ of a \GB basis $\mathcal{G}$ (see e.g.~\cite{Cox}). Therefore, we have the following result.

\begin{lemma}\label{th:Grobner_basis_SOS_completeness}
    Let $\mathcal{P}$ be a polynomial system with $S = \Variety{\mathcal{P}}$. Let $\mathcal{G}_{2d}$ be a $2d$-truncated \GB basis of $I(S)$ according to the $\grlexns$ order. Assume that $\| \mathcal{G}_{2d} \|_{\infty} \leq 2^{poly(n^d)}$. Then $\mathcal{G}_{2d}$ is $\sos_{\varepsilon}$-complete.
\end{lemma}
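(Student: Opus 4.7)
The plan is to exploit the defining property of Gröbner bases under \grlexns\ in order to produce, for every $q\in\I_{2d}(S)$, a Nullstellensatz-style identity of the form $q=\sum_i h_i g_i$ of degree $O(d)$ and polynomially bounded coefficients, and then to upgrade this identity to an $\sos$ proof of $``q+\varepsilon\geq 0"$ by trivially writing $\varepsilon=(\sqrt{\varepsilon})^2$. The statement of $\SOSe$ness will then follow immediately.

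More concretely, first I would run the (multivariate) polynomial division algorithm of $q$ by the ordered tuple $\mathcal{G}_{2d}=(g_1,\ldots,g_s)$. Since $q\in \I_{2d}(S)\subseteq \I(S)$ and $\mathcal{G}$ is a Gröbner basis of $\I(S)$, the remainder is $0$, so we obtain an identity
\begin{equation*}
    q=\sum_{i=1}^s h_i\, g_i.
\end{equation*}
The key observation is that, because we are using the \grlexns\ order and $\deg q\leq 2d$, the division algorithm only ever selects a divisor $g_i$ whose leading monomial divides a monomial of degree $\leq 2d$; in particular no intermediate polynomial ever exceeds degree $2d$, and $\deg(h_i g_i)\leq 2d$ for all $i$. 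Adding the trivial square $\varepsilon=(\sqrt{\varepsilon})^2$ then yields the $\sos$ identity
\begin{equation*}
    q+\varepsilon \;=\; (\sqrt{\varepsilon})^{2}+\sum_{i=1}^s h_i\, g_i,
\end{equation*}
which is a degree-$O(d)$ $\sos$ proof of $``q+\varepsilon\geq 0"$ from $\mathcal{G}_{2d}$.

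It remains to bound the bit complexity of the $h_i$. I would argue this by tracking the standard division algorithm: at each step one subtracts a term of the form $(c/\LC(g_i))\,x^{\alpha}\,g_i$ from the current polynomial, where $c$ is the coefficient of some currently-leading monomial. Since there are at most $\binom{n+2d}{2d}=n^{O(d)}$ monomials of degree $\leq 2d$, the total number of reduction steps is polynomial in $n^d$. Each step multiplies the current coefficient magnitudes by at most $\|\mathcal{G}_{2d}\|_\infty/|\LC(g_i)|\leq 2^{poly(n^d)}$, so after $n^{O(d)}$ steps every coefficient appearing in any $h_i$ is bounded by $2^{poly(n^d,\lg\|q\|_\infty)}$. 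Together with the trivial $(\sqrt{\varepsilon})^2$ term, whose single coefficient is $\sqrt{\varepsilon}$, this yields the required bound $2^{poly(n^d,\lg\|q\|_\infty,\lg\frac{1}{\varepsilon})}$ demanded by $\SOSe$ness.

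The main technical obstacle I anticipate is precisely this coefficient bookkeeping: one must verify that the leading coefficients of the $g_i\in \mathcal{G}_{2d}$ are bounded away from zero by a quantity of the form $2^{-poly(n^d)}$, so that division by $\LC(g_i)$ does not introduce denominators of exponential bit size, and that the number of division steps can be genuinely controlled rather than just bounded by the (potentially exponential) pigeonhole argument on intermediate polynomials. Both points follow because $\|\mathcal{G}_{2d}\|_\infty\leq 2^{poly(n^d)}$ forces the rational entries of each $g_i$ to have polynomial bit size (so their reciprocals do too), and because under \grlexns\ the leading monomial strictly decreases at each reduction step in a well-order on a set of size $n^{O(d)}$. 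Once this is verified, the $\sos$ identity displayed above establishes $\sos_\varepsilon$-$d$-completeness of $\mathcal{G}_{2d}$ over $S$.
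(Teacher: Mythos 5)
Your proposal is correct and follows essentially the same route as the paper: divide $q$ by $\mathcal{G}_{2d}$, use the Gröbner-basis property to get zero remainder, invoke the \grlexns\ order to bound the degree of each $h_g g$ by $2d$, and bound the coefficients by tracking the division algorithm (the paper delegates this bookkeeping to its appendix on the complexity of polynomial division). The only cosmetic difference is that you explicitly add the $(\sqrt{\varepsilon})^2$ term, whereas the paper simply notes that the exact identity is already an $\sos$ proof of $``q\geq 0"$.
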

%%%%%%%%%%%%%%%%%%%%%%%%%
\begin{proof}
    Let $q \in \I_{2d}(S)$ and $\varepsilon > 0$. By assumption we have that
    \begin{align*}
        q = \sum_{g \in \mathcal{G}_{2d}} h_g g,
    \end{align*}
    which is an $\sos$ proof of $``q \geq 0"$ from $\mathcal{G}_{2d}$.
    
    Moreover, all the polynomials $h_g g$ are quotients arising from the polynomial division $q|_{\mathcal{G}_{2d}}$. Therefore, $\deg(h_g g) \leq 2d$ since $\mathbb{R}[x_1, \dots, x_n]$ is equipped with the $\grlexns$ order and the coefficients are bounded by $2^{poly(n^d)}$ (see \cref{sect:PC_bit}).
\end{proof}

Finally, we obtain a method for checking whether a system $\mathcal{P}$ is $\SOSe$.

% \begin{corollary}
%     Let $\mathcal{P}$ be an explicitly Archimedean polynomial system with $S = \Variety{\mathcal{P}}$ and let $\mathcal{G}_{2d}$ be a $2d$-truncated \GB basis of $I_{2d}(S)$ according to the $\grlexns$ order. Assume that $\| \mathcal{G}_{2d} \| \leq 2^{poly(n^d)}$. If $\mathcal{G}_{2d} \lesssim_{SoS} \mathcal{P}$, then $\mathcal{P}$ is $\sos_{\varepsilon}$-complete.
% \end{corollary}

\begin{corollary}\label{cor:powers-complete}
    Let $\mathcal{P}$ be an explicitly Archimedean polynomial system and let $\mathcal{G}_{2d}$ be a $2d$-truncated \GB basis of $\I(\Variety{\mathcal{P}})$ according to the $\grlexns$ order. Assume that $\| \mathcal{G}_{2d} \|_{\infty} \leq 2^{poly(n^d)}$. If there exists a multi-index $\alpha$ with $|\alpha| = O(d)$ such that $\mathcal{G}^\alpha \lesssim_{\sos} \mathcal{P}$, then $\mathcal{P}$ is $\SOSe$.
\end{corollary}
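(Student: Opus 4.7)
The plan is a short three-step chase combining the Gröbner basis $\sos_\varepsilon$-completeness (\cref{th:Grobner_basis_SOS_completeness}), the $\alpha$-power approximation (\cref{prop:alpha_powers}), and the inheritance principle (\cref{cor:completeness_inheritance}). Concretely, I would exhibit the approximation chain
\[
\mathcal{G}_{2d} \;\lesssim_{\sos}\; \mathcal{G}_{2d}^{\alpha} \;\lesssim_{\sos}\; \mathcal{P},
\]
transfer the $\sos_\varepsilon$-completeness of the leftmost system across the chain, and conclude.

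First I would verify the two approximations. The first, $\mathcal{G}_{2d} \lesssim_{\sos} \mathcal{G}_{2d}^{\alpha}$, is an immediate instance of \cref{prop:alpha_powers}, which applies since $|\alpha| = O(d)$ by hypothesis. The second, $\mathcal{G}_{2d}^{\alpha} \lesssim_{\sos} \mathcal{P}$, is the corollary's assumption (interpreting $\mathcal{G}^{\alpha}$ as the $\alpha$-power of the truncated basis). Along the way I would record that all three systems share the common variety $S = \Variety{\mathcal{P}}$: $\mathcal{G}_{2d}$ generates $\I(S)$ up to degree $2d$ and hence has variety $S$; replacing each $g_i$ by $g_i^{\alpha_i}$ preserves the real zero set; and equal varieties are built into \cref{def:sos_approx}. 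Next, I would apply \cref{th:Grobner_basis_SOS_completeness} to the base of the chain: together with $\| \mathcal{G}_{2d} \|_\infty \leq 2^{poly(n^d)}$, this lemma yields that $\mathcal{G}_{2d}$ is $\sos_\varepsilon$-complete over $S$.

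Finally, I would invoke \cref{cor:completeness_inheritance} on the chain of three systems (so $k=3 = O(1)$). Its hypotheses are met: $\mathcal{G}_{2d}$ is $\sos_\varepsilon$-complete by the previous step, and $\mathcal{P}$ is explicitly Archimedean by assumption. The corollary then produces the desired conclusion that $\mathcal{P}$ is $\sos_\varepsilon$-complete. The main conceptual point, rather than a technical obstacle, is recognizing that the two given ingredients (control of the ideal via a bounded truncated Gröbner basis, and a single externally supplied $\sos$-approximation $\mathcal{G}_{2d}^{\alpha} \lesssim_{\sos} \mathcal{P}$) can be glued together by the already established $\alpha$-power approximation; once the chain is set up, the inheritance corollary absorbs the explicit Archimedeanity requirement automatically, and the proof reduces to a one-line application of previously built machinery.
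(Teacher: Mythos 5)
Your proposal matches the paper's proof exactly: the same chain $\mathcal{G}_{2d} \lesssim_{\sos} \mathcal{G}_{2d}^{\alpha} \lesssim_{\sos} \mathcal{P}$ via \cref{prop:alpha_powers}, the same appeal to \cref{th:Grobner_basis_SOS_completeness} for $\sos_\varepsilon$-completeness of $\mathcal{G}_{2d}$, and the same conclusion via \cref{cor:completeness_inheritance} using explicit Archimedeanity of $\mathcal{P}$. The extra remarks on the common variety are harmless bookkeeping that the paper leaves implicit.
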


\begin{proof}
    By \cref{prop:alpha_powers} we have that $\mathcal{G}_{2d} \lesssim_{\sos} \mathcal{G}_{2d}^{\alpha}$, therefore $\mathcal{G}_{2d} \lesssim_{\sos} \mathcal{G}_{2d}^{\alpha} \lesssim_{\sos} \mathcal{P}$. The result follows by \cref{cor:completeness_inheritance}, since $\mathcal{P}$ is explicitly Archimedean and since $\mathcal{G}_{2d}$ is $\SOSe$ by \cref{th:Grobner_basis_SOS_completeness}.
\end{proof}

Furthermore, the relation $\lesssim_{SoS}$ induces an order structure over the set of explicitly Archimedean polynomials systems with the same variety.

\begin{proposition}
    Consider the set
    \begin{equation*}
        \mathfrak{P}_{S} = \{ \mathcal{P} \, | \, \text{$\mathcal{P}$ is explicitly Archimedean with $\Variety{\mathcal{P}} = S$}\}.
    \end{equation*}
    Then $\lesssim_{SoS}$ is a \emph{preorder} (i.e. a reflexive and transitive relation) of the set $\mathfrak{P}_S$.
\end{proposition}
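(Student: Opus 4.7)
The plan is to verify the two defining properties of a preorder separately. Transitivity has essentially already been established: \cref{th:transitivity_of_approximations} shows that $\mathcal{P}_1 \lesssim_{\sos} \mathcal{P}_2$ together with $\mathcal{P}_2 \lesssim_{\sos} \mathcal{P}_3$ implies $\mathcal{P}_1 \lesssim_{\sos} \mathcal{P}_3$, under the assumption that $\mathcal{P}_3$ is explicitly Archimedean. Since by the very definition of $\mathfrak{P}_S$ every element is explicitly Archimedean, this hypothesis is automatic for any $\mathcal{P}_3 \in \mathfrak{P}_S$, and the transitive property holds on $\mathfrak{P}_S$ by direct citation.

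For reflexivity, the goal is to show $\mathcal{P} \lesssim_{\sos} \mathcal{P}$ for every $\mathcal{P} \in \mathfrak{P}_S$. Unpacking \cref{def:sos_approx}, I need, for each $p \in \mathcal{P}$ and each $\varepsilon > 0$, $\sos$ proofs of $``p + \varepsilon \geq 0"$ and $``-p + \varepsilon \geq 0"$ from $\mathcal{P}$ of degree $O(d)$ with coefficients bounded by $2^{\mathrm{poly}(n^d, \lg \frac{1}{\varepsilon})}$. The plan is to use the one-line identities
\[
p + \varepsilon \;=\; (\sqrt{\varepsilon})^2 + 1 \cdot p, \qquad -p + \varepsilon \;=\; (\sqrt{\varepsilon})^2 + (-1) \cdot p,
\]
which are valid $\sos$ proofs in the sense of \cref{def:SOS_proof}: the sum-of-squares part is the constant polynomial $\sqrt{\varepsilon}$, and the ideal part uses the single axiom $p \in \mathcal{P}$ with polynomial multiplier $\pm 1$. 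The degrees match $\deg(p) = O(d)$, and the coefficients appearing, namely $\sqrt{\varepsilon}$, $\pm 1$, and the coefficients of $p$ itself (polynomially bounded by the standing bit-complexity assumption on $\mathcal{P}$), are all comfortably within the prescribed bound since $\sqrt{\varepsilon} \leq \max\{1, 2^{|\lg \frac{1}{\varepsilon}|}\}$.

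No serious obstacle is anticipated: reflexivity is immediate from a one-line algebraic identity, and transitivity reduces to invoking \cref{th:transitivity_of_approximations}. It is worth observing that reflexivity in fact holds for arbitrary polynomial systems, independently of the Archimedean property; the restriction to $\mathfrak{P}_S$ becomes essential only when composing approximations via transitivity, which is precisely where the explicit Archimedeanity hypothesis of \cref{th:transitivity_of_approximations} is invoked.
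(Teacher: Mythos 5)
Your proof is correct. The paper states this proposition without proof (treating it as immediate), and your argument — reflexivity via the identity $p+\varepsilon=(\sqrt{\varepsilon})^2+1\cdot p$, which is a degree-$O(d)$ $\sos$ proof with trivially bounded coefficients, and transitivity by citing \cref{th:transitivity_of_approximations} with the explicit-Archimedeanity hypothesis supplied automatically by membership in $\mathfrak{P}_S$ — is exactly the intended one.
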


%%%%%%%%%%%%%%%%%%%%%%%%%%%%%%%%%%%%%%%%%%%%%%%%%%%%%%%%%%%%%%%%%%%%%%%%%%%%%%%%%%%
\subsection[Separation between Nsatz and SoS]{Separation between $\Nsatz$ and $\sos$ criteria}\label{sect:separation_Nsatz_SOS}
%%%%%%%%%%%%%%%%%%%%%%%%%%%%%%%%%%%%%%%%%%%%%%%%%%%%%%%%%%%%%%%%%%%%%%%%%%%%%%%%%%%%%

We now highlight some differences distinguishing the two notions of completeness. We distinguish between two fundamental cases: non-radical and radical ideals (see \cref{def:radical}).

\textit{Non-radical ideals}. When the ideal generated by the input polynomials is not radical, the $\Nsatz$ criterion is inherently weak and does not apply, as the $d$-completeness property cannot be satisfied. In contrast, below we show that our criterion is more robust and it may be effective even for non-radical ideals. 
In the following, we show a concrete example of an application of the $\sos_{\varepsilon}$ criterion for polynomial systems with non-radical ideals.

\begin{example}\label{ex:separation2}
    Let $\mathcal{P} = \{ x_1^2 + x_2^2 + \ldots + x_n^2 =0\} $. We show first that $\mathcal{P}$ is explicitly Archimedean. It suffices to show that $\mathcal{P}$ has bounded variables by \cref{prop:bounded_variables_archimedeanity}. Indeed, let $i \in [n]$ and consider a variable $x_i$. We have that 
    \begin{align*}
        \frac{1}{4} - x_i &= \left(\frac{1}{2} - x_i\right)^2 + x_2^2 + \dots + x_n^2 - (x_1^2 + x_2^2 + \dots + x_n^2), \\
        \frac{1}{4} + x_i &= \left(\frac{1}{2} + x_i\right)^2 + x_2^2 + \dots + x_n^2 - (x_1^2 + x_2^2 + \dots + x_n^2).
    \end{align*}
    Hence, $\mathcal{P}$ is explicitly Archimedean.
    
     Observe now that $\Variety{\mathcal{P}} = \{(0,0 \ldots, 0)\}$. Therefore, the reduced \GB Basis for $I(\Variety{\mathcal{P}})$ is given by $\mathcal{G} = \{ x_1, x_2, \ldots, x_n \}$. Next we observe that $I(\Variety{\mathcal{P}})$ is not radical. Indeed, that there are no $\Nsatz$ proofs of polynomials $x_i$ (for every $i \in [n]$) from $\mathcal{P}$ since the polynomial $h(x_1^2 + \dots + x_n^2)$ is the zero polynomial or it has degree at least 2, for every $h\in \mathbb{R}[x_1, \dots, x_n]$. Hence the $\Nsatz$ criterion cannot be applied to $\mathcal{P}$. However, it is still possible to find $\sos$ proofs of $``-x_i^2 \geq 0"$ and $``x_i^2 \geq 0"$ from $\mathcal{P}$. Thus, by definition, $\mathcal{G}^2 = \{x_1^2, \dots, x_n^2\} \lesssim_{SoS} \mathcal{P}$. Also, by \cref{prop:alpha_powers}, we have  $\mathcal{G} \lesssim_{SoS} \mathcal{G}^2$ and by \cref{th:Grobner_basis_SOS_completeness} we have that $\mathcal{G}$ is $\SOSe$. Thus by \cref{cor:completeness_inheritance} we have that $\mathcal{P}$ is $\SOSe$.

    Lastly, we note that the moment matrix is
    \begin{align*}
        (M_{\Variety{\mathcal{P}},d})_{ij} = \begin{cases}
            1 & \text{for } (i,j) = (1,1), \\
            0 & \text{otherwise}, \\
        \end{cases}
    \end{align*}
    thus $\Variety{\mathcal{P}}$ is $1$-spectrally rich. Therefore, the $\sos_{\varepsilon}$ criterion of \cref{th:SoS_Criterion} applies, i.e. for every polynomial $r$ with a degree $2d$ $\sos$ proof from $\mathcal{P}$, there exists also a proof of $``r + \varepsilon \geq 0"$ of degree $O(d)$ and coefficients bounded by $2^ {poly(n^d,\lg \frac{1}{\varepsilon})}$ for any additive error $\varepsilon > 0$.
\end{example}

\textit{Radical ideals}. The previous separation example show advantages of the $\sos_{\varepsilon}$ criterion over the $\Nsatz$ criterion. But what happens in the case of radical ideals? 

For problems with a finite domain, the ideal is radical, and it is well known that the $\Nsatz$ proof system is complete for sufficiently large degrees. 
However, in general, a linear lower bound on the degree $O(n)$ is unavoidable, in the sense that there are instances of systems $\mathcal{P}$ and polynomials $r$ such that for proving that $``r=0"$ from $\mathcal{P}$ by the $\Nsatz$ proof system there is a lower bound $\Omega(n)$ on the degree \cite{Buss96}. Thus, if the degree is bounded by a constant $d$, both the $\Nsatz$ and the $\sos$ proof systems are again incomplete, even though we are in a radical setting. We address whether a separation can be established between the \(\sos_{\varepsilon}\) criterion and the \(\Nsatz\) criterion in this context. Specifically, we ask whether there exists a polynomial system \(\mathcal{P}\) and a polynomial \(r\) such that any \(\Nsatz\) proof of \(``r = 0"\) from \(\mathcal{P}\) necessarily has a degree that depends non-constantly on \(n\), while, for every additive error \(\varepsilon > 0\), \(\sos\) proofs of \(``r + \varepsilon \geq 0"\) and \(``-r + \varepsilon \geq 0"\) from \(\mathcal{P}\) can be achieved with degree \(O(d)\) and coefficients bounded by \(2^{poly(n^d)}\). We affirmatively answer this question.

To do so, in \cref{sect:applications} we will examine two natural families of problems, which have played a crucial role in the theory of $\CSP(\Gamma)$. In these cases, the ideal is radical because the variables take values from a finite domain. For these families, we show a strict separation between the two criteria.

\emph{Role of $\varepsilon$ in the criteria.} We highlight that our criterion asks for an {\em approximated} proof of the nonnegativity of the elements in the truncated ideal $I_{2d}(S)$. This seemingly subtle difference has a significant impact on the application of the criterion.
Indeed, \cref{ex:separation2} also shows that even if for some $q\in \I_{2d}(S)$ there is no $\sos$ proof for $``q\geq 0"$, there may be one for $``q+\varepsilon\geq 0"$ satisfying the criterion conditions. This shows that not only replacing the proof system with stronger one ($\Nsatz$ with $\sos$) plays a role, but also extending it to an approximate form. 
We further note that allowing this approximation in the condition has an impact in the result of the criterion. Whereas the $\Nsatz$ criterion shows the existence of an $\sos$ proof of $``r\geq 0"$ with bounded coefficients, the  $\sos_{\varepsilon}$ criterion guarantees the existence of a proof of $``r+\varepsilon\geq 0"$ with bounded coefficients. However, following the discussion in the introduction, this second property is enough to guarantee the polynomial-time computability (up to arbitrary precision) and essentially does not compromise the quality of the computed solutions. 

%%%%%%%%%%%%%%%%%%%%%%%%%%%%%%%%%%%%%%%%%%%%%%%%%%%%%%%%%%%%%%%%%%%%%%%%%%%
\subsection{The semialgebraic case}\label{sect:semialgebraic_sos_criterion}

We have seen the $\sos_{\varepsilon}$ criterion in the algebraic case with finite $S$. As noted, this setting is very general and covers a wide range of combinatorial problems. Moreover, all the separations we present in this paper are in this case.
Nonetheless, it is not hard to generalize the $\sos_{\varepsilon}$ criterion to the case where $S$ is \textit{infinite} and there are inequality constraints.
Indeed, Raghavendra and Weitz \cite{raghavendra_weitz2017} originally formulated the $\Nsatz$ criterion in this more general setting. For completeness of exposition, we proceed to formulate the $\sos_{\varepsilon}$ criterion in its full generality. The proof in the semialgebraic case, is similar, mutatis-mutandis, to the proof of the $\sos_{\varepsilon}$ criterion (see \cref{sect:sos_criterion_subsection}).

We begin by defining the $\sos$ proof system in this setting.

\begin{definition}
    Let $\mathcal{P} = \{p_1=0,\ldots,p_m=0\}$ be a set of polynomial equality constraints and $\mathcal{Q} = \{q_1 \geq 0, \ldots, q_\ell \geq 0\}$ be a set of polynomial inequality constraints. Consider a polynomial $r\in \mathbb{R}[x_1, \dots, x_n]$. An $\sos$ proof of $``r\geq 0"$ (over $S$) from $(\mathcal{P}, \mathcal{Q})$ is an identity of the form 
    \begin{align*}
        r = \sum_{i=1}^{t_0} s_i^2 + \sum_{i=1}^\ell \left( \sum_{j=1}^{t_i} \lambda_j^2 \right) q_i + \sum_{i=1}^m h_i p_i,
    \end{align*}
%    \begin{equation}\label{eqn:SOS_proof}
%        r = \sum_{i=1}^{t_0} s_i^2 + \sum_{i=1}^m h_i p_i,
%    \end{equation}
    where $s_i, \lambda_j, h_i\in \mathbb{R}[x_1, \ldots, x_n]$. Moreover, we say that the above $\sos$ proof 
    has \emph{degree} at most $d$ if $\max \{\deg(s_i^2), \deg(\lambda_j^2 q_i), \deg(h_ip_i)\} \leq d$.
\end{definition}

Next, we "adjust" various definitions to the semialgebraic case.

\begin{definition}
    Let $\mathcal{P} = \{p_1=0,\ldots,p_m=0\}$ be a set of polynomial equality constraints and $\mathcal{Q} = \{q_1 \geq 0, \ldots, q_\ell \geq 0\}$ be a set of polynomial inequality constraints. We define as
    \begin{align*}
        S = \{x \in \mathbb{R}^n \ | \ p_1(x) = \ldots = p_m(x) =0, \ q_1(x), \ldots, q_\ell(x) \geq 0\}
    \end{align*}
    as the \emph{feasibility set} (or \emph{zero set}) of $(\mathcal{P}, \mathcal{Q})$. Moreover, we recall that the moment matrix is defined as $M = M_{S,d} = \mathbb{E}_{\alpha \in S}[\mathbf{v}_d(\alpha) \mathbf{v}_d^{\sf{T}}(\alpha)]$, where the expectation is taken over the uniform distribution over $S$.
\end{definition}

Lastly, we introduce a new notion to relates to the set of inequality constraints $\mathcal{Q}$

\begin{definition}
    We say that $S$ is $\mu$-robust for $\mathcal{Q}$ if for all $q \in \mathcal{Q}$ and all $\alpha \in S$, it holds that $q(\alpha) > \mu$. 
\end{definition}

We are ready to state the $\sos_{\varepsilon}$ in the semialgebraic case.

\begin{theorem}[$\sos_{\varepsilon}$ criterion]\label{th:SoS_Criterion_semialgebraic}
    Let $\mathcal{P} = \{p_1=0,\ldots,p_m=0\}$ be a set of polynomial equality constraints and $\mathcal{Q} = \{q_1 \geq 0, \ldots, q_\ell \geq 0\}$ be a set of polynomial inequality constraints, with feasibility set $S$ such that $\| S \| \leq 2^{poly(n^d)}$.
    Assume that
    \begin{itemize}
        \item [1)] $S$ is $\delta$-spectrally rich up to degree $d$, 
        \item [2)] $\mathcal{P}$ is $\sos_\varepsilon$-complete over $S$,
        \item [3)] $S$ is $\mu$-robust for $\mathcal{Q}$.
    \end{itemize}
    Let $r$ be a polynomial. If $``r \geq 0"$ has a degree $2d$ $\sos$ proof from $(\mathcal{P}, \mathcal{Q})$
    % \begin{align*}
    %     r = \sum_{i=1}^{t_0} s_i^2 + \sum_{i=1}^\ell \left( \sum_{j=1}^{t_i} \lambda_j^2 \right) q_i + \sum_{i=1}^m h_i p_i,
    % \end{align*}
    then, for every $\varepsilon>0$, there exists an $\sos$ proof of $``r+\varepsilon \geq 0"$ of degree $O(d)$ from $(\mathcal{P}, \mathcal{Q})$
    % \begin{equation}\label{eqn:SOS_criterion}
    %     r +\varepsilon = \sum_{i=1}^{\tau_0} \tilde{s}_i^2 + \sum_{i=1}^\ell \left( \sum_{j=1}^{\tau_i} \tilde{\lambda}_j^2 \right) q_i + \sum_{i=1}^m \tilde{h}_i p_i,
    % \end{equation}
    such that the coefficients of every polynomial appearing in the proof are bounded by $2^{poly(n^d, \lg \frac{1}{\delta}, \lg \frac{1}{\mu} \lg \frac{1}{\varepsilon})}$.
\end{theorem}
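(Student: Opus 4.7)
The plan is to mirror the proof of Theorem~\ref{th:SoS_Criterion} in the algebraic case, first generalizing the decomposition of Lemma~\ref{th:SOS_decomposition} so that it absorbs the inequality terms, and then applying $\sos_\varepsilon$-completeness of $\mathcal{P}$ to the residual ideal part. Concretely, starting from the given degree-$2d$ proof
$$r = \sum_i s_i^2 + \sum_{j=1}^{\ell} \sigma_j q_j + \sum_{i=1}^m h_i p_i, \quad \text{with } \sigma_j = \sum_k \lambda_{j,k}^2,$$
the first step aims at producing a new identity of the same shape
$$r = \sum_i \tilde{s}_i^2 + \sum_{j=1}^{\ell} \tilde{\sigma}_j q_j + P,$$
where $P\in \I_{2d}(S)$ and every polynomial appearing has coefficients bounded by $2^{poly(n^d,\lg\frac{1}{\delta},\lg\frac{1}{\mu})}$.

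To do so, I would apply the projection argument of Lemma~\ref{th:SOS_decomposition} separately to the SoS part $\sum_i s_i^2 = \langle C_0, \mathbf{v}_d \mathbf{v}_d^{\sf{T}}\rangle$ and to each factor $\sigma_j = \langle C_j, \mathbf{v}_{d_j} \mathbf{v}_{d_j}^{\sf{T}}\rangle$, with $C_0, C_j \succeq 0$ and $d_j$ chosen so that $\deg(\sigma_j q_j)\leq 2d$. For $\sum_i s_i^2$ the algebraic argument and the bound $2^{poly(n^d,\lg\frac{1}{\delta})}$ both transfer verbatim. For each $\sigma_j q_j$ the natural replacement of the moment matrix is the \emph{weighted} moment matrix $M^{(j)} := \mathbb{E}_{\alpha \in S}[q_j(\alpha)\, \mathbf{v}_{d_j}(\alpha) \mathbf{v}_{d_j}^{\sf{T}}(\alpha)]$. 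Using $\mu$-robustness one checks that $M^{(j)} \succeq \mu\, M_{S,d_j}$, so by the same nonnegativity argument as in Lemma~\ref{th:eigen-to-ideal}, the kernel of $M^{(j)}$ consists exactly of vectors $u$ for which $u^{\sf{T}}\mathbf{v}_{d_j} \in \I_{d_j}(S)$, and every nonzero eigenvalue of $M^{(j)}$ is at least $\mu\delta$. Letting $\Pi_j^+, \Pi_j^0$ denote the projections onto its positive eigenspace and kernel, the cross-term manipulation used in Lemma~\ref{th:SOS_decomposition} yields
$$\sigma_j q_j \;=\; \langle \Pi_j^+ C_j \Pi_j^+,\ \mathbf{v}_{d_j} \mathbf{v}_{d_j}^{\sf{T}}\rangle\cdot q_j \;+\; P_j,$$
where $\tilde{\sigma}_j := \langle \Pi_j^+ C_j \Pi_j^+, \mathbf{v}_{d_j} \mathbf{v}_{d_j}^{\sf{T}}\rangle$ is a sum of squares and $P_j \in \I_{2d}(S)$ (each cross term factors through a polynomial in $\I_{d_j}(S)$ multiplied by $q_j$).

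The coefficient bound on $\tilde{\sigma}_j$ is obtained by taking expectations over $S$ in the original proof: since every $s_i^2$ and every $\sigma_i q_i$ is nonnegative on $S$ and the terms $h_i p_i$ vanish, one gets $\mathbb{E}_{\alpha\in S}[r(\alpha)] \geq \langle C_j, M^{(j)} \rangle \geq \mu\delta \cdot tr(\Pi_j^+ C_j \Pi_j^+)$, which combined with $\|r\|_\infty \leq 2^{poly(n^d)}$ yields the desired entrywise bound $2^{poly(n^d,\lg\frac{1}{\delta},\lg\frac{1}{\mu})}$ on $\Pi_j^+ C_j \Pi_j^+$. Summing the decompositions and lumping the cross terms together with $\sum_i h_i p_i$ into a single $P \in \I_{2d}(S)$ of bounded coefficient norm completes step one. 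Step two then simply invokes $\sos_\varepsilon$-completeness of $\mathcal{P}$ to obtain an $\sos$ proof of $``P + \varepsilon \geq 0"$ from $\mathcal{P}$ of degree $O(d)$ and coefficients bounded by $2^{poly(n^d,\lg\frac{1}{\varepsilon})}$; combining it with the above decomposition yields the required $\sos$ proof of $``r + \varepsilon \geq 0"$ from $(\mathcal{P},\mathcal{Q})$.

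The main obstacle is step one, and in particular showing that each weighted moment matrix $M^{(j)}$ is spectrally well-behaved: this is precisely where $\mu$-robustness intervenes and the extra factor $\lg\frac{1}{\mu}$ enters the final bound. A minor technical point is that $\delta$-spectrality is stated for $M_{S,d}$ while the argument uses $M_{S,d_j}$ for various $d_j \leq d$; one either strengthens the hypothesis to request $\delta$-spectrality up to each intermediate degree, or re-derives it by applying Lemma~\ref{th:spectrality_integer_matrix} at degree $d_j$. Once this spectral control is in place, the remainder of the argument is essentially a bookkeeping exercise that parallels the algebraic proof.
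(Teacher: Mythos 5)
Your proposal is correct and fills in exactly the ``mutatis mutandis'' adaptation that the paper only sketches by reference to the algebraic case: introducing the weighted moment matrices $M^{(j)} = \mathbb{E}_{\alpha\in S}[q_j(\alpha)\mathbf{v}_{d_j}(\alpha)\mathbf{v}_{d_j}^{\sf T}(\alpha)]$, deriving $M^{(j)}\succeq \mu M_{S,d_j}$ (and equality of kernels) from $\mu$-robustness so that the nonzero eigenvalues are at least $\mu\delta$, and then running the projection and trace arguments of \cref{th:SOS_decomposition} on each $\sigma_j q_j$ before invoking $\sos_\varepsilon$-completeness on the residual ideal part is precisely the intended route, and each step checks out. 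The only caveat is the one you already flag — spectral richness is needed at the intermediate degrees $d_j\leq d$, which does not follow from degree-$d$ spectrality by interlacing alone but is recovered in the paper's settings by applying \cref{th:spectrality_integer_matrix} at degree $d_j$ — so your fix is the right one.
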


% \begin{proof}[Proof Sketch.]
%     Denote
%     \begin{align*}
%         \sum_{i=1}^{t_0} s_i^2 = \langle C, \mathbf{v}_d \mathbf{v}_d^{\sf{T}} \rangle \quad \text{and} \quad
%         \sum_{i=1}^{t_i} \lambda_i^2 = \langle D_i, \mathbf{v}_d \mathbf{v}_d^{\sf{T}} \rangle
%     \end{align*}
%     for appropriate PSD's $C, D_i$ with $i \in [\ell]$.

%     Reasoning as in the proof of \cref{th:SOS_decomposition}, there exist PSD's matrices $C', D'_i$ such that
%     \begin{align}
%         poly(\| r \|, \| S \|) \geq tr(C')\delta + \sum_{i=1}^{\ell} tr(D_i')q(\alpha) \geq tr(C')\delta + \sum_{i=1}^{\ell} tr(D_i')\mu,
%     \end{align}
%     where the last inequality follows from the $\mu$-robustness assumption. Therefore, we obtain the identity
%     \begin{align*}
%         r = \sum_i \bar{s}_i^2 + \sum_{i=1}^{\ell}\left(\sum_j \bar{\lambda_j}^2 \right)q_i + p,
%     \end{align*}
%     for some polynomials $\bar{s}_i, \bar{\lambda}_j$ of degree at most $d$, some $p \in \I_{2d}(S)$ and with all the coefficients on the right-hand side bounded by $2^{poly(n^d, \lg \frac{1}{\delta}, \lg{1}{\mu}}$. The results follows by the same proof of \cref{th:SoS_Criterion}.
% \end{proof}

%%%%%%%%%%%%%%%%%%%%%%%%%%%%%%%%%%%
\section{\sos\ and \PC\ for polynomials over finite domains}\label{sect:PC_criterion}

This section provides a complete exposition of the main technical results concerning the automatability of degree-$d$ $\sos$ proofs. It presents two primary results: the $\PC$ criterion, a sufficient condition for automatability based on the $\PC$ proof system, and the approximate simulation of $\PC$ by $\sos$ over finite domains. The simulation result is used in the proof of the $\PC$ criterion and may be of independent interest.

We begin by formally defining polynomial systems over finite domains. These are systems of polynomials where variables are restricted to take values over finite sets. Following this, we present \cref{th:sospc}, which states that $\sos$ approximately simulates $\PC$ in this finite domain setting. This result builds upon on \cref{th:sos_sim_PC_1}; its relation to prior work and its role in proving the main criterion are discussed. Finally, the section concludes with the presentation of the $\PC$ criterion (\cref{th:PC_criterion}). 

\subsection{Finite domains systems}
Consider a system of real polynomial equations

\begin{equation}\label{eq:Psystem}
\F=\{f_1=0,\ldots,f_m=0\}    
\end{equation}
over variables $x_1,\ldots,x_n$.
We consider the general case of polynomial equations over a finite domain $D$ of even size $2k$, with $k\in \N$, namely every variable can take a value from among $2k$ given rational values $\rho_{1},\rho_{2},\ldots,\rho_{2k}$. If the domain has an odd number of (distinct) elements, repeat an element so the resulting domain has an even number of (not distinct) elements.
We define the univariate rational \emph{domain polynomials} $D_{2k}(x_j)$ of degree $2k$ for each variable $x_j$ as follows: 
\begin{align}
D_{2k}(x_j)&=(x_j-\rho_{1})(x_j-\rho_{2})\cdots (x_j-\rho_{2k}) \qquad j\in[n], \textrm{ or equivalently,}    \label{eq:domain_f_roots}\\
D_{2k}(x_j)&=x_j^{2k}+\alpha_{2k-1}x_j^{2k-1}+\dots + \alpha_1 x_j + \alpha_0 \qquad j\in[n], \label{eq:domain_f}   
\end{align}
where the correspondence between $\{\alpha_0,\ldots,\alpha_{2k-1}\}$ and $\{\rho_1,\ldots,\rho_{2k}\}$ is given by the well-known Vieta's formulas.
To enforce finite domain variables, the axioms
\begin{equation}\label{eq:domain}
D_{2k}(x_j)=0 \qquad j\in[n],    
\end{equation}
are included in the proof systems. 
Hence every variable $x_j$ can take $2k$ possible values which are the roots of \cref{eq:domain}. We denote by $D=\{\rho_1, \rho_2, \dots, \rho_{2k}\}$ the set of domain values, i.e. $D_{2k}(v)=0$ if and only if $v\in D$, and we set
\begin{equation}\label{eq:D}
    \Do=\{D_{2k}(x_j)=0\mid j\in[n]\}.
\end{equation}
For example, to enforce Boolean variables, the axioms $x^2_j - x_j=0$ are always included in the proof systems and in this case $D=\{0,1\}$. 

In summary, we will consider polynomial systems of equations of the following form.

\begin{definition}
    Let $D$ be a finite domain. A \emph{polynomial system over a finite domain $D$} is defined as a set $\mathcal{P}$ of the form
    \begin{equation}\label{eq:axiomsF}
        \mathcal{P} = \F \cup \Do =\{f_1=0,\ldots,f_m=0\} \cup \{D_{2k}(x_j)=0\mid j\in[n]\}.
    \end{equation}
\end{definition}

Recall that we are considering rational values for $\rho_{1},\rho_{2},\ldots,\rho_{2k}$ in \eqref{eq:domain_f_roots}. It follows that $\alpha_{2k-1}, \ldots,$ $\alpha_0$ are also rational. Let $\beta$ be the the minimum number of bits needed to encode each of the numbers $\alpha_{2k-1}, \ldots,$ $\alpha_0,$  and the values $\rho_1, \rho_2, \dots, \rho_{2k}$ when the rational coefficients are represented with their reduced fractions written in binary.
%We define $\alpha=2^\beta$. Then, we have 
%\begin{equation}\label{eq:alpha_bound}
%|\alpha_i| \leq \alpha \quad \forall i\in \{0,\ldots,2k-1\}.
%\end{equation}
The \emph{polynomial domain size} is the bit length of the binary encoding of $D_{2k}(x)$, namely $O(k\beta)$. %Note that the presence of \eqref{eq:domain} in the derivation rules \eqref{eq:PCrules} ensures that every variable can only take values which are the roots of \cref{eq:domain}. 

%%%%%%%%%%%%%%%%%%%%%%%%%%%%%%%%
%\textbf{Polynomial Calculus.}
%%%%%%%%%%%%%%%%%%%%%%%%%%%%%%%%
%In the following we consider \emph{Polynomial Calculus} (PC) over a general finite domain, which is an immediate generalization of the classical Polynomial Calculus over Boolean domain \cite{CleggEI96}. 
%PC over a general finite domain is a proof system that consists of the following derivation rules for polynomial equations $(f_i = 0) \in  \F$, polynomials $f, g$, variables $x_j$, and
%numbers $a, b \in \mathbb{R}$
%%%%%%%%%%%%%%%%%%%%%%%%%%%%%%%
%\begin{align}\label{eq:PCrules}
%    \frac{}{f_i=0} \qquad \frac{}{D_{2k}(x_j)=0} \qquad \frac{f=0 \qquad g=0}{a f + b g =0} \qquad \frac{f=0}{x_jf=0} 
%\end{align}
%%%%%%%%%%%%%%%%%%%%%%%%%%%%%%

Now, we recall the following result claiming that every globally nonnegative univariate polynomial has an $\sos$ decomposition with well structured coefficients.

\begin{lemma}\label{lemma-rational-sq}\cite[Section 4, Thm 23]{magron-schwei}
    Let $p\in \mathbb{R}[x]$ be a univariate polynomial of degree $2d$ with rational coefficients, such that each of them can be encoded with $\tau$ bits. Assume that $p(x)\geq 0$ for all $x\in \mathbb{R}$. Then, we have 
    $$p=\sum_{i=1}^{2d+3}a_iq_i^2,$$
    for some nonnegative rational constants $a_i$ and some polynomials $q_i$ of degree $d$ with rational coefficients. All coefficients in this representation can be encoded with ${O}(d^3 +d^2\tau)$ bits.
\end{lemma}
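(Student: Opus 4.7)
The plan is to combine the classical fact that every univariate polynomial $p \in \mathbb{R}[x]$ with $p \geq 0$ on $\mathbb{R}$ admits a decomposition $p = f^2 + g^2$ for some $f, g \in \mathbb{R}[x]$ of degree at most $d$, with a quantitative rational approximation argument that controls bit complexity. The real decomposition follows from the complex factorization of $p$: since $p \geq 0$ on $\mathbb{R}$, its real roots have even multiplicity and its non-real roots come in complex-conjugate pairs, so $p(x) = c \prod_i (x - r_i)^{2 m_i} \prod_j ((x - \alpha_j)^2 + \beta_j^2)^{n_j}$ with $c \geq 0$. Applying the Brahmagupta-Fibonacci identity $(a^2 + b^2)(c^2 + d^2) = (ac - bd)^2 + (ad + bc)^2$ iteratively to the complex factors, and absorbing the $(x-r_i)^{2 m_i}$ into either of the two squares, yields $p = f^2 + g^2$. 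A priori $f$ and $g$ have real (possibly irrational) coefficients.

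The second step is to round $f$ and $g$ coefficient-wise to rational polynomials $\tilde f, \tilde g$ of degree at most $d$ with bit complexity $O(d^2 + d\tau)$ per coefficient, where the required precision is dictated by Mahler-type root separation bounds applied to $p$. Writing $e := p - \tilde f^2 - \tilde g^2$, one obtains a rational polynomial of degree at most $2d$ whose coefficients have controlled magnitude. To absorb $e$ into weighted squares, I would use a fixed family of globally nonnegative ``basis squares'' of degree at most $2d$, for instance the $d+1$ monomial squares $\{x^{2i}\}_{i=0}^{d}$, together with an $O(d)$-size correction family such as $\{(1\pm x^i)^2\}_{i}$, and express $e$ as a nonnegative-coefficient combination of these by exploiting that the coefficients of $e$ are small in absolute value relative to the leading term. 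Adding these to $\tilde f^2 + \tilde g^2$ recovers $p$ exactly with the claimed count $2d+3$ of weighted squares and nonnegative rational weights $a_i$.

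The main obstacle will be establishing the bit-complexity bound $O(d^3 + d^2 \tau)$. This requires a careful propagation of precision through two steps: first, showing that the real coefficients of $f, g$ arising from the Brahmagupta-Fibonacci composition can be replaced by rationals with precision $O(d^2 + d\tau)$ bits without destroying nonnegativity of the residual structure, using Mahler's root-separation estimate (the minimum root gap of $p$ is bounded below by $d^{-O(d)} \|p\|_\infty^{-O(d)}$), and second, verifying that the absorption of $e$ can indeed be realized with rational weights of bit size $O(d^3 + d^2 \tau)$. The cubic factor in $d$ arises from multiplying $O(d)$ coefficients by the $O(d^2)$-bit precision needed per coefficient; the linear dependence on $\tau$ in the quadratic term reflects how the input bit complexity enters via $\|p\|_\infty$ in the root-separation bound. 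Once both steps are carried out with these quantitative estimates, the bound in the lemma follows by taking the maximum bit length over all $a_i$ and all coefficients of the $q_i$.
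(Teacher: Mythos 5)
The paper does not actually prove this lemma; it is quoted verbatim from Magron and Schweighofer, so the only comparison available is with the strategy of that cited work, which your sketch partially mirrors (approximate $f^2+g^2$ decomposition from the complex factorization, followed by rounding and absorption of the error into a fixed family of about $2d+1$ auxiliary squares, which is where the count $2d+3$ comes from).

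There is, however, a genuine gap in your absorption step. You propose to write $e := p - \tilde f^2 - \tilde g^2$ as a nonnegative-coefficient combination of the fixed squares $\{x^{2i}\}_i \cup \{(1\pm x^i)^2\}_i$, justified only by the coefficients of $e$ being small. But any such combination $\sum a_i q_i^2$ with $a_i\ge 0$ is globally nonnegative, so this requires $e\ge 0$ on all of $\mathbb{R}$ --- and nothing in your construction guarantees that. Smallness of coefficients does not help: the constant $-2^{-N}$ has tiny coefficients and is not a weighted sum of squares. Concretely, take $p(x)=x^2$ (nonnegative, with a real root); if rounding produces $\tilde f = x+2^{-N}$, $\tilde g=0$, then $e = -2^{-N+1}x - 2^{-2N}$ satisfies $e(0)<0$, so the absorption step fails. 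Moreover, within your proposed family the even-degree coefficients of the combination can only be \emph{increased} (each $(1\pm x^i)^2$ contributes $+b_i$ to the coefficients of $1$ and $x^{2i}$, and the $a_i x^{2i}$ contribute nonnegatively), so negative even coefficients of $e$ cannot be matched at all. The standard repair --- and the one used in the cited algorithm --- is to reverse the order of operations: first subtract a positive-definite buffer $\varepsilon\sum_{i=0}^{d}x^{2i}$ from $p$, compute the approximate decomposition of $p-\varepsilon\sum x^{2i}$, and choose the rounding precision so that the buffer dominates the error coefficientwise; the odd-degree error terms are then absorbed via $cx^{2i+1}=\tfrac{c}{2}\bigl((x^i\pm x^{i+1})^2-x^{2i}-x^{2i+2}\bigr)$ with the negative corrections eaten by the buffer. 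This in turn requires $p>0$ (so that $\inf p/\sum x^{2i}>0$); for merely nonnegative $p$ with real roots one must first split off the real-root part (e.g.\ via a square-free factorization over $\mathbb{Q}$) before applying the perturbation, and this case analysis is a nontrivial part of the quantitative bound $O(d^3+d^2\tau)$ that your sketch does not address.
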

The following result demonstrates that, for any polynomial system \(\mathcal{P}\) over a finite domain \(D\), polynomial-size \(\sos\) proofs can be constructed to establish that the variables are bounded.
%%%%%%%%%%%%%%%%%%%%%%%%%%%%%%%%%%%
\begin{lemma}\label{lemma-new-t}
    There exists a positive rational number $t>\max_{i\in [2k]}\{2,|\rho_i|\}$ that can be encoded with ${O}(k\beta)$ bits such that there exist $\sos$ proofs of degree $2k$
    \begin{align}
        t-x = \sum_{i=1}^{2k+3}a_iq_i^2 - D_{2k}(x), \\
        t+x = \sum_{i=1}^{2k+3}\tilde{a_i}\tilde{q}_i^2 - D_{2k}(x),
    \end{align}
 for some rational constants $a_i$ and some polynomials $q_i$ with rational coefficients. All coefficients in this representations can be encoded with $poly(k,\beta)$ bits.
\end{lemma}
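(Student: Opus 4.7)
The plan is to reduce the claim to an application of Lemma \ref{lemma-rational-sq} by exhibiting a concrete $t$ that makes the univariate polynomial $p(x) := D_{2k}(x) - x + t$ (and, symmetrically, $D_{2k}(x) + x + t$) globally nonnegative on $\mathbb{R}$, and then verifying that the bit complexity of its coefficients remains under control.

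Set $R := \max_{i\in[2k]}\{2, |\rho_i|\}$ and take
\[
t \; := \; (2R)^{2k} + R + 1.
\]
Since each $\rho_i$ is encoded with $\beta$ bits, $R \leq 2^{\beta}$ and hence $t$ is encoded with $O(k\beta)$ bits, which gives the claimed size for $t$. It remains to check that $p(x) \geq 0$ for every $x \in \mathbb{R}$. I would split into three ranges. For $|x|\leq R$ the bound $|D_{2k}(x)| \leq \prod_i(|x|+|\rho_i|)\leq (2R)^{2k}$ yields $p(x) \geq t - (2R)^{2k} - R = 1 > 0$. For $x \geq R$ every factor $x-\rho_i$ is nonnegative, so $D_{2k}(x) \geq (x-R)^{2k}$; distinguishing $x\in[R,R+1]$ (where $p(x) \geq -x+t \geq t-R-1 > 0$) from $x\geq R+1$ (where $(x-R)^{2k}\geq x-R \geq x$ as soon as $t \geq R$, so $p(x)\geq 0$) closes this case. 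For $x \leq -R$ every factor $x-\rho_i$ is nonpositive and there are $2k$ of them, so $D_{2k}(x)\geq 0$, hence $p(x)\geq -x+t\geq R+t>0$.

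Once $p\geq 0$ on $\mathbb{R}$ is established, I would invoke Lemma \ref{lemma-rational-sq} with degree parameter $d=k$ and bit parameter $\tau = O(k\beta)$ (the coefficients of $p$ are those of $D_{2k}$, each of bit length $O(\beta)$, except for the coefficient of $x$ and the constant term, whose bit length is dominated by that of $t$, namely $O(k\beta)$). The lemma gives
\[
p(x) \; = \; \sum_{i=1}^{2k+3} a_i q_i^2,
\]
with nonnegative rationals $a_i$ and polynomials $q_i\in\mathbb{Q}[x]$ of degree at most $k$, all coefficients encodable with $O(k^3 + k^2\cdot k\beta) = \mathrm{poly}(k,\beta)$ bits. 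Rewriting this identity as $t - x = \sum_i a_i q_i^2 - D_{2k}(x)$ gives the first claimed identity; since $\deg(q_i^2) \leq 2k$ and $D_{2k}(x)$ appears multiplied by the constant $-1$, this is an $\sos$ proof of degree $2k$ from the axiom $D_{2k}(x) = 0$. The identity for $t+x$ is obtained by running the same argument on $\tilde p(x) := D_{2k}(x) + x + t$, which is nonnegative on $\mathbb{R}$ by an entirely analogous case analysis.

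The only mildly delicate part is the case analysis for establishing $p(x) \geq 0$, and in particular the regime $x \geq R$: one must be a bit careful to get a uniform bound that uses only $(x-R)^{2k}$ as a lower bound on $D_{2k}(x)$ and the chosen additive constant, without needing any information about the specific roots $\rho_i$ beyond the bound $|\rho_i|\leq R$. Everything else is bookkeeping of bit sizes and a direct application of Lemma \ref{lemma-rational-sq}.
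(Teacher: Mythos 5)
Your proposal is correct and follows essentially the same route as the paper: exhibit an explicit rational $t$ of bit length $O(k\beta)$ making $D_{2k}(x)\mp x+t$ globally nonnegative via a case analysis on ranges of $x$, then invoke Lemma \ref{lemma-rational-sq}. The only blemish is the typo ``$x-R\geq x$'' in the regime $x\geq R+1$, which should read ``$x-R\geq x-t$ as soon as $t\geq R$''; the intended chain $(x-R)^{2k}\geq x-R\geq x-t$ does yield $p(x)\geq 0$ as claimed.
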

\begin{proof}
    We show only the existence of the first $\sos$ proof, as the second follows with a similar argument. We consider the polynomial $p(x):= D_{2k}(x)-x$. We will find a lower bound for $\min_{x\in \mathbb{R}} p(x)$. Let $\rho := \max_{i\in [2k]}|\rho_i|$ and let $a := \max\{2, \rho\}$. Consider the function $f(x) = x^{2k} - x$. Observe that $f$ is monotonically increasing for every $x \geq 1$ and that $f(a) = a^{2k} - a \geq \rho$, thus $f(x) - \rho = x^{2k} - (\rho + x) \geq 0$ for all $x \geq a$. Moreover, $D_{2k}(\rho + x) \geq x^{2k}$ for all $x \geq a$. Hence, $D_{2k}(\rho + x)-(\rho+x) \geq x^{2k} - (\rho+x) \geq 0$ for all $x\geq a$. On the other hand, since $D_{2k}$ has even degree and positive leading coefficient, it follows immediately that for every $y \geq 0$, we have that $D_{2k}(-\rho-y)-(-\rho-y)\geq 0$. Therefore, $p(x)$ can just take negative values in the interval $[-\rho,\rho+a] \subseteq [-2a, 2a]$. Also, for $x\in [-2a,2a]$ we have that $|x-\rho_i|<3a$ for all $i\in[2k]$. Hence, we have $|D_{2k}(x)|\leq (3a)^{2k}$, and thus for $x\in [-2a,2a]$ we have that
    $$p(x)= D_{2k}(x)- x \geq -((3a)^{2k}+2a)=:-t.$$
    Observe that $t$ can be encoded with $\mathcal{O}(k\beta)$ bits.

    Now, the polynomial $D_{2k}(x)-x+t$ is globally nonnegative. Also, it has rational coefficients that can be encoded with $O(k\beta)$ bits. Then, by Lemma \ref{lemma-rational-sq}, we obtain that 
    $$D_{2k}(x)-x+t = \sum_{i=1}^{2k+3}a_iq_i^2 $$
    for some constants $a_i$ and polynomials $q_i$ that can be encoded in $O(k^3 + k^3\beta)$ bits.
\end{proof}

The results of \cref{lemma-new-t} and \cref{prop:bounded_variables_archimedeanity} allow us to conclude that any polynomial system over a finite domain is explicitly Archimedean.
\begin{proposition}\label{prop:finite-archi}
    Let $\mathcal{P}$ be a polynomial system over a finite domain $D$. Then, $\mathcal{P}$ is explicitly Archimedean.
\end{proposition}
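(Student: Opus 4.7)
The plan is to directly verify the second equivalent formulation of explicit Archimedeanity from \cref{def:exp_arch}, namely that for every variable $x_i$ there exists $0 < N_{x_i} \leq 2^{poly(n^d)}$ admitting $\sos$ proofs of $``N_{x_i} - x_i \geq 0"$ and $``N_{x_i} + x_i \geq 0"$ from $\mathcal{P}$ with degree $O(d)$ and coefficients bounded by $2^{poly(n^d)}$. Since these two inequalities are univariate and depend only on $x_i$, and since $D_{2k}(x_i) = 0 \in \mathcal{P}$ for every $i \in [n]$, the work has essentially already been done in \cref{lemma-new-t}.

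First, I would apply \cref{lemma-new-t} to each variable $x_i$, $i \in [n]$, producing a single positive rational $t$ (independent of $i$, depending only on the domain $D$) together with the two identities
\begin{align*}
    t - x_i &= \sum_{j=1}^{2k+3} a_j q_j(x_i)^2 - D_{2k}(x_i), \\
    t + x_i &= \sum_{j=1}^{2k+3} \tilde a_j \tilde q_j(x_i)^2 - D_{2k}(x_i).
\end{align*}
Recalling that $D_{2k}(x_i) = 0$ is an axiom of $\mathcal{P}$, each identity is by definition an $\sos$ proof from $\mathcal{P}$ of $``t - x_i \geq 0"$ and $``t + x_i \geq 0"$, respectively.

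Second, I would verify the required bounds. By \cref{lemma-new-t}, both $t$ and all coefficients appearing in the $\sos$ proofs are encodable with $poly(k,\beta)$ bits, and both proofs have degree $2k$. Since the polynomial domain size $O(k\beta)$ is part of the bit complexity of $\mathcal{P}$ (which is assumed polynomial in $n$), we have $k,\beta = poly(n)$, hence $t \leq 2^{poly(n)} \leq 2^{poly(n^d)}$ and all coefficients are bounded by $2^{poly(n^d)}$. Because $D_{2k}(x_i) \in \mathcal{P}$ has degree $2k$, the maximum degree $d$ of polynomials in $\mathcal{P}$ satisfies $d \geq 2k$, so degree $2k$ is indeed $O(d)$. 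Setting $N_{x_i} := t$ for every $i \in [n]$ therefore satisfies condition (2) of \cref{def:exp_arch}, which concludes the proof.

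There is no substantive obstacle here: the content of the proposition is simply the packaging of \cref{lemma-new-t} into the language of \cref{def:exp_arch}. The only point deserving care is matching the quantitative parameters ($k$, $\beta$, $d$, $n$) so that the bit-length and degree bounds stated in \cref{lemma-new-t} translate correctly into the $2^{poly(n^d)}$-type bounds required by explicit Archimedeanity; the inequality $d \geq 2k$ from the presence of the domain axioms is what makes this translation immediate.
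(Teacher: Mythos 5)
Your proposal is correct and follows exactly the route the paper intends: the paper justifies this proposition with a single sentence invoking \cref{lemma-new-t} together with the bounded-variables formulation of explicit Archimedeanity, which is precisely what you do. Your additional bookkeeping (translating the $poly(k,\beta)$ bit bounds into $2^{poly(n^d)}$ via the assumption that the bit complexity of $\mathcal{P}$ is polynomial in $n$, and noting that degree $2k$ is $O(d)$) is a welcome, if routine, elaboration of what the paper leaves implicit.
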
 

%\begin{definition}
%    A \emph{PC derivation} of $f=0$ from $\F$ is a sequence $(r_1=0,\ldots,r_L=0)$ of polynomial equations iteratively derived by using \eqref{eq:PCrules} with $f=r_L=0$. The \emph{size} of a derivation is the sum of the sizes of the binary encoding of the polynomial in the derivation and the \emph{degree} is the maximum degree of the polynomials in the derivation. A \emph{PC refutation} is a derivation of $-1=0$.
%\end{definition}
%%%%%%%%%%%%%%%%%%%%%%%%%%%

%%%%%%%%%%%%%%%%%%%%%%%%%%%%%%%%%%%%%%%%%%%%%%%%%%
\subsection{Approximate simulation of \PC\ by \sos}\label{sect:proof_sos_sim_PC}
%%%%%%%%%%%%%%%%%%%%%%%%%%%%%%%%%%%%%%%%%%%%%%%%%%

Berkholz~\cite{berkholz18} related different approaches for proving the unsatisfiability of a system of real polynomial equations. Over Boolean variables, he showed that SoS simulates $\PC$ refutation: any \(\PC\) refutation of degree \(d\) can be converted into an \(\sos\) refutation of degree \(2d\), with only a polynomial increase in size.
In the non-Boolean setting, there are systems of equations that are easier to refute for $\PC$ than for SoS \cite{GrigorievV01}. Grigoriev and Vorobjov \cite{GrigorievV01} show that the simulation of $\PC$ by SoS does not hold in the non-Boolean case, namely when the Boolean axioms $x_j^2-x_j=0$ are omitted. For example, the so-called telescopic system of equations, $\{ yx_1=1, x_1^2=x_2, \ldots, x_{n-1}^2=x_n, x_n=0\}$, has a $\PC$ refutation of degree $n$, but it requires exponential refutation degree in SoS \cite{GrigorievV01}.
However, it is not known if SoS can simulate $\PC$ in the (non-Boolean) general domain setting, namely when variables can take values from a general finite set of values. 
In this section, we address the existing knowledge gap by extending Berkholz's result to general domains. This complements the results in~\cite{GrigorievV01,berkholz18}.
%
%Our result shows that the simulation is still possible if we add domain polynomials of arbitrarily large (but fixed) size, and its complexity depends on the size of the domain. As already observed, without domain polynomials SoS might require exponential degree whereas $\PC$ polynomial degree. It follows that the dependence on the polynomial domain size on the simulation complexity in our results is unavoidable.}
%
%The following is the main result of this section.
%%%%%%%%%%%%%%%%%%%%%%%%%%%
%\subsection{Our Result}
%%%%%%%%%%%%%%%%%%%%%%%%%%%
%\begin{theorem}
%Let $I$ be the ideal generated by $\F\cup \Do$, where $\F$ is a system of polynomial equations over the reals, and $\Do$ is the set of domain polynomials that forces variables to take integer values from a finite domain of size $2k$ (see \eqref{eq:D}).
% If $p\in I$ has a $\PC$ proof of degree $d$ and size $S$, then it has a SoS proof of degree at most $\db$ and size $poly(\ds)$. 
%\end{theorem}
%%%%%%%%%%%%%%%%%%%%%%%%%%%%%%%%%%%%%%%%%%%%
%\subsection{Proof of \cref{th:sos_sim_PC}}\label{sect:proof_sos_sim_PC}
%%%%%%%%%%%%%%%%%%%%%%%%%%%%%%%%%%%%%%%%%%%%
Recall that we are considering a polynomial systems of the form 
\begin{equation}
    \F=\{f_1=0,\ldots,f_m=0\} \cup \Do=\{D_{2k}(x_j)=0\mid j\in[n]\}
\end{equation}
over variables $x_1,\ldots,x_n$. We prove the following \cref{th:sos_sim_PC_1}, which is a generalization of (\cite[Lemma 1]{berkholz18}). The overall structure of the proof partially follows from the one in \cite{berkholz18} but with some significant differences that will be emphasized below in the proof (see Case~4).

%
%%%%%%%%%%%%%%%%%%%%%%%%%%%%%%%%%%%%%%%%%%%%%%%%%%%%%%%%%%%%%%%%%%%%%%%%%%%%%%%%%%%%%%%%%%%%%

%%%%%%%%%%%%%%%%%%%%%%%%%%%%%%%%%%%%%%%%%%%%%%%%%%%%%%%%%%%%%%%%%%%%%%%%%%%%%%%%%%%%%%%%%%%%%%
%

\begin{lemma}\label{th:sos_sim_PC_1}
    Let $(r_1, r_2, \dots r_L)$ be a \PC\ derivation from $\mathcal{F}\cup \mathcal{D}$ of degree $d$ and size $S$. Then, for every $H\leq L$ there exists an $\sos$ proof of $-(r_H)^2$ of degree $2(d+k-1)$ of the following form
    \begin{equation}\label{eq:sosstep}
         \sum_{i=1}^m (-a_i f_i) f_i + \sum_{j=1}^{n} q_j D_{2k}(x_j) +\sum_{j=1}^{m_1} c_{j}(p_{j})^2=-(r_H)^2,
    \end{equation}
    such that:
    \begin{itemize}
        \item $m_1$ is constant bounded by $O(k, H)$,
        \item $a_i, c_j$ are nonnegative constants, and $q_j$ (for $j\in [n])$, $p_j$ (for $j\in [m_1]$) are polynomials,
        
        \item all coefficients in the proof can be encoded with $poly(k, \beta, S)$ bits.
    \end{itemize}
\end{lemma}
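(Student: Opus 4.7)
The plan is to proceed by strong induction on $H$ and treat each of the four $\PC$ rules separately. The two axiom cases are immediate: if $r_H = f_i$, taking $a_i = 1$ and every other constant zero gives $(-1\cdot f_i)\, f_i = -f_i^2 = -(r_H)^2$; if $r_H = D_{2k}(x_j)$, taking $q_j = -D_{2k}(x_j)$ and every other constant zero gives $q_j\, D_{2k}(x_j) = -(r_H)^2$. The resulting degrees (at most $2d$ and $4k$, respectively) are bounded by $2(d+k-1)$ whenever the domain axiom appears in the derivation (so that $d \geq 2k$); otherwise the second base case is vacuous.

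For the addition rule $r_H = a r_i + b r_j$, the plan is to use the identity
\begin{equation*}
-(a r_i + b r_j)^2 \;=\; (a r_i - b r_j)^2 + 2 a^2 (-r_i^2) + 2 b^2 (-r_j^2),
\end{equation*}
combining the inductively given proofs of $-r_i^2$ and $-r_j^2$ with the nonnegative weights $2a^2$ and $2b^2$ and appending the extra square $(a r_i - b r_j)^2$ to the sum-of-squares part. This preserves the prescribed form and adds a single new square to $m_1$.

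The multiplication rule $r_H = x_j r_i$ is the main technical obstacle. The naive strategy of multiplying the inductive proof of $-r_i^2$ by $x_j^2$ fails, because each ideal term $-a_\ell f_\ell^2$ would become $-a_\ell(x_j f_\ell)^2$, whose coefficient in front of $f_\ell^2$ is the non-constant polynomial $a_\ell x_j^2$ and no longer fits the required form. The plan is to apply the reduction at the level of the whole factor $r_i^2$ rather than distributing $x_j^2$ over the individual axiom squares. Concretely, I will pick a positive rational $t$ as in \cref{lemma-new-t} (so $t > \max_\ell|\rho_\ell|$) and choose a positive rational $h$ large enough that
\begin{equation*}
\sigma(x_j) \; := \; t^2 - x_j^2 + h\, D_{2k}(x_j)
\end{equation*}
is globally nonnegative; such an $h$ exists because the leading term of $\sigma$ is $h\, x_j^{2k}$ with $h>0$. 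The polynomial $\sigma$ has degree $2k$ and rational coefficients of $\mathrm{poly}(k,\beta)$ bit size, so \cref{lemma-rational-sq} furnishes a rational SoS representation $\sigma = \sum_u b_u g_u^2$ with $\deg g_u \leq k$, $b_u \geq 0$, and coefficients of $\mathrm{poly}(k,\beta)$ bit size. Plugging the identity $x_j^2 = t^2 - \sigma(x_j) + h\, D_{2k}(x_j)$ into $-(x_j r_i)^2 = -x_j^2 r_i^2$ then gives the key identity
\begin{equation*}
-(x_j r_i)^2 \;=\; t^2\,(-r_i^2) \;+\; \sum_u b_u\,(g_u\, r_i)^2 \;+\; \bigl(-h\, r_i^2\bigr)\, D_{2k}(x_j).
\end{equation*}
Substituting the induction hypothesis into the first summand (and absorbing the scalar $t^2$) yields a proof in the prescribed form: each $a_\ell$ is rescaled to $t^2 a_\ell$, each domain multiplier $q'_\ell$ is rescaled to $t^2 q'_\ell$ (with the extra contribution $-h\, r_i^2$ absorbed into the multiplier of $D_{2k}(x_j)$), and the sum-of-squares part is enlarged by the new squares $(g_u r_i)^2$ with weights $b_u$. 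The critical degree checks $\deg((g_u r_i)^2) \leq 2(k + \deg r_i) \leq 2(d+k-1)$ and $\deg\bigl(r_i^2\, D_{2k}(x_j)\bigr) \leq 2\deg r_i + 2k \leq 2(d+k-1)$ both use $\deg r_i \leq d - 1$.

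The remaining quantitative claims follow by routine bookkeeping across the induction: each multiplication step adds only $O(k)$ new squares (at most the $2k+3$ provided by \cref{lemma-rational-sq}), giving $m_1 = O(k H)$; existing coefficients are multiplied at each step by constants of $\mathrm{poly}(k,\beta)$ bit size, and only polynomially many new coefficients are introduced per step, so after all $L$ steps every coefficient is encoded in $\mathrm{poly}(k,\beta,S)$ bits; and every polynomial appearing in the proof respects the degree bound $2(d+k-1)$ by the checks above. The main conceptual obstacle is thus isolated to identifying the multiplication identity; once that identity is in place, the rest of the argument mirrors Berkholz's Boolean construction from~\cite{berkholz18}.
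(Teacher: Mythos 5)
Your overall architecture is the same as the paper's: induction over the derivation, the identical treatment of the two axiom cases and of the linear-combination rule, and for the multiplication rule the same key idea of trading $x_j^2$ for ``a constant, minus a globally nonnegative univariate polynomial, plus a multiple of $D_{2k}(x_j)$'' so that the constant part can be cancelled against rescaled copies of the inductive proof of $-r_i^2\geq 0$. The degree checks and the bookkeeping for $m_1$ and the bit sizes are also fine.

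There is, however, a genuine gap in the one step that carries all the weight: the claim that a \emph{large enough} $h>0$ makes $\sigma(x_j)=t^2-x_j^2+h\,D_{2k}(x_j)$ globally nonnegative, justified by the positivity of the leading coefficient. A positive even-degree leading term only gives that $\sigma$ is bounded below and tends to $+\infty$; it does not give nonnegativity. Worse, increasing $h$ actively hurts: $D_{2k}$ has $2k$ real roots in $(-t,t)$ and alternates sign between consecutive roots, so it is strictly negative on some subintervals of $(-t,t)$, and there $h\,D_{2k}(x_j)\to-\infty$ as $h\to\infty$ while $t^2-x_j^2$ stays bounded. So the admissible $h$ (if any) lie in a bounded window $[h_{\min},h_{\max}]$ — a lower bound forced by the region $|x_j|>t$ where $t^2-x_j^2<0$, and an upper bound forced by the intervals where $D_{2k}<0$ — and nonemptiness of that window is exactly what needs to be proved. (For the specific, very large $t$ of \cref{lemma-new-t} one can check that, say, $h=4$ works, but that requires an explicit quantitative estimate, not the leading-term argument.) The paper avoids this entirely by fixing the coefficient of $D_{2k}$ once and for all (it uses $4D_{2k}(x)-(x-t)^2$), explicitly lower-bounding the minimum of that polynomial by $-C$ with $C$ of $poly(k,\beta)$ bit size, adding the constant $C$ to obtain a globally nonnegative polynomial to feed into \cref{lemma-rational-sq}, and then cancelling the resulting $-Cr_i^2$ term with $C$ times the inductive proof. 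Your scheme is repairable in the same way — keep $h$ fixed and enlarge the additive constant from $t^2$ to $t^2+C$ (equivalently, add $C\,(-r_i^2)$ via the induction hypothesis) — but as written the existence of $h$ is asserted for the wrong reason and the step would fail if one literally took $h$ large.
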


\begin{proof} 
The proof is by induction on $H$. We make a case analysis on the 4 derivation rules \eqref{eq:PCrules} which, respectively, correspond to the following 4 cases. Assume the claim holds for all $H<L$ and we prove that an $\sos$ proof of the form \cref{eq:sosstep} exists for $r_L$. Recall that $d$ is the degree of the $\PC$ proof.

\textbf{Case 1:} If $r_L$ is an axiom from $\F$ (see \eqref{eq:axiomsF}), namely $r_L=f_i$ for some $i\in [m]$.
Then a $\sos$ proof of $-(r_L)^2$ is obtained by setting $a_i=1$, and the other coefficients and polynomials equal to zero. This $\sos$ proof requires degree at most $2d$.

\textbf{Case 2:} If $r_L=D_{2k}(x_j)$ for some $j\in [n]$. Then a $\sos$ proof of $-(r_L)^2$ is obtained by setting $q_L=-D_{2k}(x_j)$, and all other coefficients  and polynomials to zero. This $\sos$ proof requires degree at most $2d$ (note that the $\PC$ proof degree of this case is $2k\leq d$).

\textbf{Case 3:} If $r_L= a\cdot r_{H_1} + b\cdot r_{H_2}$ for some $H_1,H_2<L$, where $r_{H_1}, r_{H_2}$ are previously derived polynomials and $a,b\in \mathbb{R}$. This case and the corresponding analysis are the same as in \cite{berkholz18}. We report the proof for the sake of completeness. By assumption, we have an $\sos$ proof of ``$-r_{H_1}^2\geq 0"$ and of ``$-r_{H_2}^2\geq 0"$ as in (\ref{eq:sosstep}) of degree $2(d+k-1)$ and rational coefficients of the right bit size.  Let $p_L:=ar_{H_1} - br_{H_2}$. Then, using the following identity  
$$-r_L^2 = p_L^2 -2a^2{r_{H_1}}^2 -2b^2{r_{H_2}}^2,$$
we obtain the desired proof of ``$-r_L^2\geq 0$" of degree $2(d+k-1)$ with coefficients of the claimed size.

%%%%%%%%%%%%%%%%%
\textbf{Case 4:} If $r_L=x_j r_H$ for some $H<L$ and $j\in [n]$, where $r_H$ is a previously derived polynomial.
By the induction hypothesis, there is a $\sos$ proof of $-r_H^2$ of degree $\db$ of the form given by \eqref{eq:sosstep}.
%, namely the following holds: [WE DON'T NEED THIS EQ ANYMORE. I suggest to change this to the equation after dropping letters. To discuss more on this while finalising paper.]
    % \begin{align}\label{eq:sosstep0}
    %     \sum_{i=1}^m (-a_i f_i) f_i + \sum_{h=1}^H b_h q_h D_{2k}(x_{j_h}) +\sum_{h=1}^H\sum_{i=1}^k c_{h,i}(p_{h,i})^2=-(r_H)^2.
    % \end{align}
The goal is to transform this proof into a proof of $``-(x_j r_H)^2\geq 0"$. Note that multiplying everything by $x_j^2$ does not work since this would increase the degree of the proof to $\db+2$. Instead we want to simulate the multiplication rule in $\PC$ by maintaining the total degree bounded by $\db$. Here is where our approach differs significantly from the one in \cite{berkholz18}, thus improving and extending the result.

Let $t$ be as in Lemma \ref{lemma-new-t}, we observe that it suffices to find an $\sos$ proof of 
\begin{align}\label{aux-1}
``-(x_j-t)^2r_H^2 \geq 0 \ "
\end{align}
as in (\ref{eq:sosstep}) (of degree $2(d+k-1)$ with the claimed bit size). Indeed, by assumption we have a proof of $``-r_H^2\geq 0"$ as in (\ref{eq:sosstep}), and therefore we have a proof of 
\begin{align}\label{aux-2}
``-t^2r_H^2\geq 0 \ "
\end{align}
of degree $2(d+k-1)$ with coefficients of size bit size $poly(k,\beta,S)$. Additionally, by Lemma~\ref{lemma-new-t}, we have an $\sos$ proof of 
\begin{align}
    t - x_j = \sum_{i=1}^{2k+3}a_iq_i^2 - D_{2k}(x_j) 
\end{align}
of degree $2k$ with coefficients of bit size $poly(k,\beta)$. Then, by multiplying this proof by $2t r_H^2$, we obtain an $\sos$ proof 
\begin{align} \label{aux-3} 
   2t^2r_H^2- 2t x_j r_H^2  = \sum_{i=1}^{2k+3}2ta_i(r_Hq_i)^2  - 2tr_H^2D_{2k}(x_j)  
\end{align}
of degree $2(d+k-1)$ with coefficients of bit size $poly(k,\beta, S)$ (recall that $r_H$ has degree at most $d-1$, as $x_jr_H$ was obtained in the $\PC$ derivation of degree $d$). Finally, summing up the proofs in (\ref{aux-1}), (\ref{aux-2}) and (\ref{aux-3}) we obtain an $\sos$ proof of $``-x_j^2r_H^2\geq0"$ as desired.
\\

The rest of the proof is devoted to finding an $\sos$ proof of (\ref{aux-1}). For convenience of notation, let us  use $x$ to denote $x_j$ and $r$ to denote $r_{H}$.    
%For convenience of notation, let us drop the letter ${L'}$ from $c_{{L'},i}$, so $c_i$ denote $c_{{L'},i}$. In addition let us use $x$ to denote $x_j$ and $r$ to denote $r_{L'}$. 
It follows that our goal is to obtain a $\sos$ proof of $``-((x-t)r)^2\geq 0"$ starting from one as given by \eqref{eq:sosstep} for $``-r^2\geq 0"$. 

Now, consider the following univariate polynomial 
$$p:= 4D_{2k}(x) - (x-t)^2.$$
We think of $D_{2k}(x)$ written as follows:
\begin{align}
D_{2k}(x)&=2(x-t+t-\rho_{1})(x-t+t-\rho_{2})\cdots (x-t+t-\rho_{2k}).     \label{eq:domain_f_rootsM}
%\end{equation}
%\begin{align}\label{eq:domain_f_roots}
%D_{2k}(x_j)=\underbrace{(x_j-\rho_{1})(x_j-\rho_{2})\cdots (x_j-\rho_{2k})}_{\text{domain polynomial}} \qquad j\in[n].    
%\end{align}
%
%The above polynomial can be equivalently written as follows:
%\begin{equation}\label{eq:domain_fM}
%D_{2k}(x)&=(x-t)^{2k}+\beta_{2k-1}(x-t)^{2k-1}+\dots + \beta_1 (x-t) + \beta_0 , \label{eq:domain_fm}
%D_{2k}(x_j)=\underbrace{x_j^{2k}+\alpha_{2k-1}x_j^{2k-1}+\dots + \alpha_1 x_j + \alpha_0}_{\text{domain polynomial}} \qquad j\in[n],    
\end{align}

Recall that, for  our choice of $t$,  we have $t\geq \max\{2,|\rho_i|\}$. Now, we will lower bound the minimum of the polynomial $p$ over $\mathbb{R}$. We prove that for $x\geq t +1$ (i.e., $x-t \geq 1$) we have $p(x) \geq 0$. Notice that if $x-t \geq 1$, then $4D_{2k}(x) \geq 4(x-t)^{2k}$ since each factor in (\ref{eq:domain_f_rootsM}) is positive and at least $(x-t)$. Then, $p(x)\geq 4(x-t)^{2k}-(x-t)^2 = 4(x-t)^2((x-t)^{2k-2}-1)\geq 0$, where the last inequality holds as $x-t \geq 1$. Now, we show that $p(x)\geq 0$ for $x\leq -3t$. Let $x=-3t-a$, for some $a\geq 0$. Then, we have that $x-\rho_i\leq-2t-a$ for $i\in [2k]$, and thus $4D_{2k}(x)\geq 4(2t+a)^{2k} = (4t+2a)^2(2t+a)^{2k-2}\geq (4t+a)^2 = (x-t)^2$. This implies that $p$ can take negative values just in the interval $x\in [-3t, t+1]$. Also, for $x\in [-3t, t+1]$, we have $|x-\rho_i|\leq 4t$, and thus we have
\begin{align} 
\min_{x\in \mathbb{R}} p(x) \geq - (4(4t)^{2k} + 16t^2)=:-C.
\end{align}

It is easy to observe that $C$ is rational and can be encoded with $O(k\beta)$ bits. Then, the univariate polynomial $4D_{2k}(x) - (x-t)^2 + C$ is globally nonnegative. Therefore, by Lemma \ref{lemma-rational-sq}, it can be written as 
\begin{align}\label{eq-p_i}
    4D_{2k}(x) - (x-t)^2 + C = \sum_{i=1}^{2k+3}a_ip_i^2,
\end{align}
where $a_i$ is a constant, $p_i$ (for $i\in [2k+3])$ is a polynomial of degree $k$ and each coefficients in the representation can be encoded in $poly(k,\beta)$. Now, we multiply Equation (\ref{eq-p_i}) by $r^2$ and obtain 

\begin{align}\label{eq-proof}
    - r^2(x-t)^2 = \sum_{i=1}^{2k+3}a_i(rp_i)^2 -  4r^2 D_{2k}(x) -Cr^2.
\end{align}
Since $\deg(r) \leq d-1$, it follows that $-4r^2 D_{2k}(x)$ and all the terms in the sum have degree at most $2(d+k-1)$. On the other hand, by assumption, there is a proof of ``$-r^2\geq 0$" as in (\ref{eq:sosstep}) of degree $2(d+k-1)$ and coefficients of bit size $poly(k,\beta,S)$.  We substitute $-Cr^2$ in (\ref{eq-proof}) with this proof multiplied by $C$ (recall that $C$ is a positive constant that can be encoded with $O(k\beta)$ bits). This yields the desired proof with the claimed bit complexity.
\end{proof}

\begin{remark}[Finite domains of odd size]
    The result of Lemma~\ref{th:sos_sim_PC_1} extends to domains with an odd number of elements. Specifically, if the domain D has size $|D| = |{\rho_1, \rho_2, \ldots, \rho_{2k - 1}}| = 2k - 1$, the same conclusion follows. To show this, we employ the same proof strategy as used in Cases 1, 2, and 3. However, in Case 4, instead of the domain polynomials defined as
    \begin{align*}
        D_{2k-1}(x_i) = (x_i - \rho_1)\cdots(x_i - \rho_{2k-1}),
    \end{align*}
    one can consider a modified set of polynomials, denoted by $\tilde{D}$, where each polynomial is defined as
    \begin{align*}
        \tilde{D}_{2k-1}(x_i) = (x_i - \rho_1)^2(x_i - \rho_2)\cdots(x_i - \rho_{2k-1}).
    \end{align*}
    Here, one root is repeated to ensure an even degree for the polynomials in $\tilde{D}$. With this adjustment, the same arguments apply, yielding an SoS proof from $\mathcal{F} \cup \mathcal{D} \cup \tilde{\mathcal{D}}$, and hence from $\mathcal{F} \cup \mathcal{D}$, of degree $2(d + k - 1)$.
\end{remark}

\begin{remark}
   In the previous result, we highlighted the dependence of the coefficients present in the $\sos$ proof on the parameter $\beta$. Recall that $\beta$ corresponds to the number of bits needed to encode the coefficients of the polynomials $D_{2k}(x)$ and its roots $\rho_1, \dots, \rho_{2k}$. This parameter can be eliminated and implicitly linked to the size of the $\PC$ proof $S$. Specifically, it suffices to assume that the $\PC$ proof begins by deriving all the polynomials $D_{2k}(x_j)$ for $j \in [n]$.
\end{remark}

While \cref{th:sos_sim_PC_1} immediately establishes a simulation of $\PC$ by $\sos$ as \emph{refutation} systems, it remains unclear whether $\sos$ can also simulate $\PC$ as a \emph{derivation} system. Specifically, the existence of an $\sos$ proof of $``-r^2 \geq 0"$ does not immediately guarantee the existence of an $\sos$ proof of $``\pm r \geq 0"$. Further, it is not hard to find polynomial systems for which the latter does not hold. Consider the simple polynomial system $\{ x^2\}$ from which it is trivial to derive $``-x^2 \geq 0"$. However, there are no $\sos$ proofs of $``\pm x \geq 0"$ from such premise.

Interestingly, by the use of $\sos$ approximability techniques developed in \cref{sect:SOS_completeness}, we are able to work around and resolve this issue. Provided that, given a statement derived by $\PC$ $``r = 0"$, an (arbitrarily) small approximation $\varepsilon$ of the statement is allowed, the simulation holds. That is, there exist $\sos$ proofs of $`` r + \varepsilon \geq 0"$ and of $``- r + \varepsilon \geq 0"$.

\begin{theorem}\label{th:sospc}
    $\sos$ \emph{approximates} $\PC$ with degree linear in the domain size $k$ over general finite domains. That is, if there exists a \(\PC\) derivation of \( ``r=0" \) with degree \( d \) and size \( S \), then for every \( \varepsilon > 0 \), we have \(\sos\) proofs of \( ``r+\varepsilon \geq 0" \) and \( ``-r+\varepsilon \geq 0" \) with degree \( O(d+k) \) and coefficients bounded by \( 2^{\text{poly}(k, S, \lg \frac{1}{\varepsilon})} \).
\end{theorem}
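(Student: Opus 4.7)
The plan is to derive \cref{th:sospc} essentially as a corollary of \cref{th:sos_sim_PC_1}, by using a single elementary algebraic identity that converts an $\sos$ proof of ``$-r^2 \geq 0$'' into approximate $\sos$ proofs of ``$\pm r + \varepsilon \geq 0$''. Since \cref{th:sos_sim_PC_1} already carries all of the structural and combinatorial work of simulating a $\PC$ derivation inside $\sos$, the remaining argument is essentially a bookkeeping exercise.

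The starting point is to apply \cref{th:sos_sim_PC_1} to the given $\PC$ derivation $(r_1,\ldots,r_L)$ with $r_L = r$. This produces an identity of the form
\begin{equation*}
    -r^2 \;=\; \sum_{i=1}^m (-a_i f_i)\, f_i + \sum_{j=1}^n q_j\, D_{2k}(x_j) + \sum_{j=1}^{m_1} c_j\, p_j^2,
\end{equation*}
which is an $\sos$ proof of ``$-r^2 \geq 0$'' from $\F \cup \Do$ of degree $2(d+k-1)$ and coefficients of bit size $\mathrm{poly}(k,\beta,S)$ (the parameter $\beta$ being absorbed into $S$ by the remark following the lemma).

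Next I would exploit the elementary identity
\begin{equation*}
    r + \varepsilon \;=\; \frac{1}{4\varepsilon}\bigl((r + 2\varepsilon)^2 - r^2\bigr),
\end{equation*}
which follows by direct expansion since $(r+2\varepsilon)^2 - r^2 = 4\varepsilon r + 4\varepsilon^2$. Substituting the $\sos$ expression for $-r^2$ from the previous step yields
\begin{equation*}
    r + \varepsilon \;=\; \tfrac{1}{4\varepsilon}(r+2\varepsilon)^2 + \sum_{j=1}^{m_1}\tfrac{c_j}{4\varepsilon} p_j^2 + \sum_{i=1}^m \tfrac{-a_i f_i}{4\varepsilon}\, f_i + \sum_{j=1}^n \tfrac{q_j}{4\varepsilon}\, D_{2k}(x_j),
\end{equation*}
which is manifestly an $\sos$ proof of ``$r + \varepsilon \geq 0$'' from $\F \cup \Do$. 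The symmetric identity $-r + \varepsilon = \tfrac{1}{4\varepsilon}\bigl((-r+2\varepsilon)^2 - r^2\bigr)$ produces in exactly the same way an $\sos$ proof of ``$-r + \varepsilon \geq 0$''.

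Finally, I would check the quantitative bounds. Since $\deg(r) \leq d$, the term $(r\pm 2\varepsilon)^2$ has degree at most $2d$, while the contributions inherited from \cref{th:sos_sim_PC_1} have degree $2(d+k-1)$, so the overall degree is $O(d+k)$. The scalar factor $1/(4\varepsilon)$ contributes an additive $O(\lg \tfrac{1}{\varepsilon})$ to the bit size of every coefficient, and combined with the $2^{\mathrm{poly}(k,S)}$ bound from \cref{th:sos_sim_PC_1} this gives the claimed bound $2^{\mathrm{poly}(k,S,\lg \tfrac{1}{\varepsilon})}$. There is no substantial obstacle at this stage: the whole content of the theorem lies in \cref{th:sos_sim_PC_1}, and the passage from ``$-r^2 \geq 0$'' to ``$\pm r + \varepsilon \geq 0$'' is simply the standard ``approximation identity'' that trades an $\varepsilon$-slack for the ability to linearize a squared statement.
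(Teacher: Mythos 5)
Your proof is correct, and it reaches the conclusion by a more direct route than the paper. The paper also starts from \cref{th:sos_sim_PC_1}, but then packages everything in the $\sos$-approximability framework: it observes $\mathcal{P}\cup\{r^2\}\lesssim_{\sos}\mathcal{P}$ (from the lemma), $\mathcal{P}\cup\{r\}\lesssim_{\sos}\mathcal{P}\cup\{r^2\}$ (from \cref{prop:alpha_powers}), and concludes by transitivity (\cref{th:transitivity_of_approximations}), which in turn invokes \cref{lemma-pg-e} and the explicit Archimedeanity of $\mathcal{P}$ (\cref{prop:finite-archi}). Your identity $r+\varepsilon=\tfrac{1}{4\varepsilon}\bigl((r+2\varepsilon)^2-r^2\bigr)$ is exactly the $\alpha=2$ instance of the identity hidden inside \cref{prop:alpha_powers}, and because the multiplier of $-r^2$ in it is a \emph{nonnegative constant} rather than a general polynomial, you can substitute the certificate for ``$-r^2\geq 0$'' verbatim and never need the transitivity lemma or the Archimedean product argument of \cref{lemma-pg-e}. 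What the paper's framework buys is reusability (the same chain of $\lesssim_{\sos}$ steps is reused in the proof of the $\PC$ criterion); what your version buys is a self-contained, fully explicit certificate with transparent degree and bit-size accounting. Both the degree bound $2(d+k-1)=O(d+k)$ and the coefficient bound $2^{\mathrm{poly}(k,S,\lg\frac{1}{\varepsilon})}$ check out.
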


\begin{proof}
    Assume there is a \PC\ derivation of $``r=0"$ from $\mathcal{P}$. Then the polynomial systems $\mathcal{P}$, $\mathcal{P}\cup\{r\}$ and $\mathcal{P}\cup\{r^2\}$ have the same zero set. Moreover, by \cref{th:sos_sim_PC_1}, it follows that $\mathcal{P}\cup \{r^2\} \lesssim_{\sos} \mathcal{P}$. Further, by \cref{prop:alpha_powers}, it follows that $\mathcal{P}\cup \{r\} \lesssim_{\sos}\mathcal{P}\cup \{r^2\}$. Thus, by the transitivity of \cref{th:transitivity_of_approximations}, we have that $\mathcal{P}\cup \{r\} \lesssim_{\sos} \mathcal{P}$, i.e. there exist $\sos$ proofs of $``r+\varepsilon\geq 0"$ and $``-r+\varepsilon\geq 0"$ from $\mathcal{P}$ of degree $O(2(d+k-1))$ and coefficients bounded by $2^{poly(k, S, \lg \frac{1}{\varepsilon})}$.
\end{proof}

%%%%%%%%%%%%%%%%%%%%%%%%%%%%%%%%%%%%%%%%%%%%%%%%%%%%%%%%%%%%%%%%
\subsection[PC criterion]{$\PC$ criterion}\label{sect:pc_crit}

In this section we show that, within the context of finite domains, \cref{th:sos_sim_PC_1} can be combined with the $\sos_{\varepsilon}$ criterion to formulate a new criterion, called \emph{\PC\ criterion}, based on the \PC\ proof system. 
While, in general, \PC\ is weaker than \sos\ as a proof system, it naturally connects to the theory of \GB basis, in particular to Buchberger's algorithm for their computation (see \cite{BuchbergerThesis}).
As we will see in \cref{sect:applications}, this connection enables the application of the \PC\ criterion to certain families of problems arising from $\CSP$s, for which the \Nsatz\ criterion is not satisfied.

\begin{theorem}[PC criterion]\label{th:PC_criterion}
    Let $\mathcal{P} = \{p_1 = 0, \dots, p_m = 0\}$ polynomial system over a finite domain $D$ of $2k$ rational values, let $S=\Variety{\mathcal{P}}$ be its variety and let $r\in\mathbb{R}[x_1, \dots, x_n]$ be a polynomial nonnegative over $S$. Assume there exists an $\sos$ proof of $``r\geq 0"$ from $\mathcal{P}$ of degree $2d$
  \begin{equation*}
        r = \sum_{i=1}^{t_0} \sigma_i^2 + \sum_{i=1}^m h_i p_i.
    \end{equation*}
   Let $\mathcal{G}_{2d}$ be a $2d$-truncated \GB basis of $I(S)$ according to the $\grlexns$ order such that $\|\mathcal{G}_{2d}\|_{\infty}\leq 2^{poly(n^d)}$. Assume that, for every $g\in \mathcal{G}_{2d}$, there exist a $\PC$ derivation of $g$ from $\mathcal{P}$ of size $poly(n^d)$ and degree $O(d)$. Then, for every $\varepsilon>0$, the polynomial $``r+\varepsilon \geq 0"$ has a degree-$O(d)$ $\sos$ proof
    \begin{equation*}
        r +\varepsilon = \sum_{i=1}^{t} \tilde{\sigma}_i^2 + \sum_{i=1}^m \tilde{h}_i p_i,
    \end{equation*}
    where the coefficients of every polynomial appearing in the proof are bounded by $2^{poly(n^d, \lg \frac{1}{\varepsilon})}$.   
\end{theorem}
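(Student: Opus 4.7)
The plan is to apply the $\sos_\varepsilon$ criterion (\cref{th:SoS_Criterion}) to $\mathcal{P}$. To do this, I need to verify the two hypotheses of that theorem: $\delta$-spectral richness of $S$ and $\sos_\varepsilon$-completeness of $\mathcal{P}$. Spectral richness is immediate: since $\mathcal{P}$ is a system over a finite rational domain, $S \subseteq D^n$, so by \cref{cor:rich-finite}, $S$ is $\delta$-spectrally rich up to degree $d$ with $\tfrac{1}{\delta} = 2^{poly(n^d)}$. Also, by \cref{prop:finite-archi}, $\mathcal{P}$ is explicitly Archimedean. Thus the entire proof reduces to establishing $\sos_\varepsilon$-completeness of $\mathcal{P}$, after which \cref{th:SoS_Criterion} produces the desired $\sos$ proof of $``r+\varepsilon\geq 0"$ with the claimed degree and coefficient bounds.

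To establish $\sos_\varepsilon$-completeness of $\mathcal{P}$, I would follow the strategy outlined in \cref{sect:outline_PC_proof} and build the chain of $\sos$-approximations
\[
\mathcal{G}_{2d} \;\lesssim_{\sos}\; \mathcal{G}_{2d}^{\,2} \;\lesssim_{\sos}\; \mathcal{P},
\]
where $\mathcal{G}_{2d}^{\,2} := \{g^2 : g \in \mathcal{G}_{2d}\}$. The first approximation is a direct application of \cref{prop:alpha_powers} with the multi-index $\alpha = (2,2,\ldots,2)$. The second approximation is the key step: for each $g \in \mathcal{G}_{2d}$, I must exhibit $\sos$ proofs of both $``g^2 + \varepsilon \geq 0"$ and $``-g^2 + \varepsilon \geq 0"$ from $\mathcal{P}$ with degree $O(d)$ and coefficients bounded by $2^{poly(n^d,\lg \tfrac{1}{\varepsilon})}$. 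The first inequality is trivial since $g^2$ is itself a square. For the second, I invoke the hypothesis that each $g$ has a $\PC$ derivation from $\mathcal{P}$ of degree $O(d)$ and size $poly(n^d)$, and apply \cref{th:sos_sim_PC_1} to obtain an $\sos$ proof of $``-g^2 \geq 0"$ from $\mathcal{P}$ of degree $O(d+k)=O(d)$ (recall $k$ is constant) with coefficient bit size $poly(n^d)$. Adding $\varepsilon$ trivially gives the desired approximation.

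Having established the chain, I combine \cref{th:Grobner_basis_SOS_completeness} (which tells us $\mathcal{G}_{2d}$ is $\sos_\varepsilon$-complete, using the bound $\|\mathcal{G}_{2d}\|_\infty \leq 2^{poly(n^d)}$) with the transitivity of $\lesssim_{\sos}$ from \cref{th:transitivity_of_approximations} (applicable because $\mathcal{P}$ is explicitly Archimedean), yielding $\mathcal{G}_{2d} \lesssim_{\sos} \mathcal{P}$. Finally, the inheritance result \cref{th:richness_inheritance} transfers $\sos_\varepsilon$-completeness from $\mathcal{G}_{2d}$ to $\mathcal{P}$. In fact, this whole argument is the content of \cref{cor:powers-complete} applied with $\alpha=(2,\ldots,2)$, provided one first checks the hypothesis $\mathcal{G}_{2d}^{\,2} \lesssim_{\sos} \mathcal{P}$.

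The main technical obstacle is the simulation step $\mathcal{G}_{2d}^{\,2} \lesssim_{\sos} \mathcal{P}$, which relies essentially on \cref{th:sos_sim_PC_1}. One subtle point worth verifying is that \cref{th:sos_sim_PC_1} produces an $\sos$ proof of $``-g^2 \geq 0"$ (not of $``-g \geq 0"$ directly, which would in general fail, as discussed after that lemma), and that this combined with the trivial $g^2 \geq 0$ is exactly what the definition of $\mathcal{G}_{2d}^{\,2}\lesssim_{\sos}\mathcal{P}$ requires, rather than the stronger condition $\mathcal{G}_{2d}\lesssim_{\sos}\mathcal{P}$ directly. This is precisely why the intermediate system $\mathcal{G}_{2d}^{\,2}$ is introduced in the chain. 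Once this subtlety is handled, the degree and bit-complexity bookkeeping through the transitivity and inheritance lemmas is routine, and the final conclusion follows from one application of \cref{th:SoS_Criterion}.
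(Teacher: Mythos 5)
Your argument for the case $S\neq\emptyset$ is exactly the paper's: verify $\delta$-spectral richness via \cref{cor:rich-finite} and explicit Archimedeanity via \cref{prop:finite-archi}, establish the chain $\mathcal{G}_{2d}\lesssim_{\sos}\mathcal{G}_{2d}^{\,2}\lesssim_{\sos}\mathcal{P}$ using \cref{prop:alpha_powers} and \cref{th:sos_sim_PC_1}, deduce $\sos_\varepsilon$-completeness of $\mathcal{P}$ through \cref{th:Grobner_basis_SOS_completeness}, \cref{th:transitivity_of_approximations} and \cref{th:richness_inheritance} (packaged as \cref{cor:powers-complete}), and finish with \cref{th:SoS_Criterion}. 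You also correctly identify the one subtlety the paper emphasizes, namely that \cref{th:sos_sim_PC_1} only yields $``-g^2\geq 0"$, which is why $\mathcal{G}_{2d}^{\,2}$ must be interposed.

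There is, however, one genuine gap: you never treat the case $S=\emptyset$, and your argument does not apply there. When the variety is empty, the moment matrix $M_{S,d}=\frac{1}{|S|}\sum_{\alpha\in S}\mathbf{v}_d(\alpha)\mathbf{v}_d^{\sf T}(\alpha)$ is undefined, so $\delta$-spectral richness and \cref{cor:rich-finite} make no sense, and the proof of \cref{th:SoS_Criterion} (via \cref{th:SOS_decomposition}, which evaluates $\mathbb{E}_{\alpha\in S}[r(\alpha)]$) breaks down. This is not a negligible corner case: $S=\emptyset$ is precisely the refutation setting, which is one of the main applications of the theorem (e.g.\ \cref{th:BW}). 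The paper handles it as a separate first case: there $\mathcal{G}_{2d}=\{1\}$, the hypothesis gives a $\PC$ derivation of $``1=0"$ of size $poly(n^d)$ and degree $O(d)$, and \cref{th:sos_sim_PC_1} directly produces an $\sos$ proof of $``-1\geq 0"$ from $\mathcal{P}$ with degree $O(d)$ and coefficients bounded by $2^{poly(n^d)}$, from which the conclusion follows without any appeal to spectral richness. Adding this case split (or, alternatively, arguing in the empty case that $r\in\I_{2d}(S)=\mathbb{R}[x_1,\dots,x_n]_{2d}$ so that $\sos_\varepsilon$-completeness applies to $r$ itself, bypassing \cref{th:SoS_Criterion}) would complete your proof.
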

\begin{proof}
    We divide the proof into two cases depending whether $S$ is empty or not: 
    \begin{enumerate}
        \item If $S = \emptyset$, then $\mathcal{G}_{2d} = \{ 1 \}$, which corresponds to the case of refutations. By assumption, there exist a $\PC$ derivation of $``1=0"$ from $\mathcal{P}$ of size $poly(n^d)$ and degree $O(d)$. Then, by \cref{th:sos_sim_PC_1}, we have a proof of $``-1\geq 0"$ from $\mathcal{P}$ of degree $O(d)$, and coefficients bounded by $2^{poly(n^d)}$, as desired.
        \item If $S\neq \emptyset$, then we apply \cref{th:SoS_Criterion} ($\sos_\varepsilon$ criterion). First, by Corollary \ref{cor:rich-finite}, $S$ is $\delta$-spectrally rich up to degree $d$ for some $\delta>2^{-poly(n^d)}$. It remains to prove that $\mathcal{P}$ is $\sos_{\varepsilon}$-complete over $S$. By assumption, there are $\PC$ derivations for all elements in $\mathcal G_{2d}$ of size $poly(n^d)$. Then, by \cref{th:sos_sim_PC_1}, for all $g\in \mathcal{G}_{2d}$, we have an $\sos$ proof of $``-g^2\geq 0"$ from $\mathcal{P}$ of degree $O(d)$, and coefficients bounded by $2^{poly(n^d)}$. Clearly, we also have a proof of $``g^2\geq 0"$ of degree $O(d)$ and coefficients bounded by $2^{poly(n^d)}$. Then, we have that $\mathcal{G}^{(2, \dots, 2)}\lesssim_{\sos} \mathcal{P}$. Moreover, by Proposition \ref{prop:finite-archi}, $\mathcal{P}$ is explicitly Archimedean. Then, by Corollary \ref{cor:powers-complete}, we obtain that $\mathcal{P}$ is $SOS_{\varepsilon}$-complete, completing the proof.
        \end{enumerate}
\end{proof}

\section{Strong Separation for certain Constraint Satisfaction Problems}\label{sect:applications}
%%%%%%%%%%%%%%%%%%%%%%%%%%%%%%%%%%%%%%%%%%%%%%%%%%%%%%%%%
 In what follows, we establish \cref{th:semilattice+dualdiscr} by demonstrating and utilizing the ability of \(\sos\) to approximate a dynamic proof system, such as \PC\ (see \cref{th:sospc}).
 In light of \cref{th:PC_criterion}, it is sufficient to show that $\PC$ can solve in polynomial time $\IMP_d(\Gamma)$ when $\Gamma$ is a finite constraint language closed under a semilattice polymorphism (see \cref{th:semilattice}), and in the case it is closed under a dual-discriminator polymorphism (see \cref{th:dual_discriminator}). 
 The degree lower bound for \Nsatz\ given in \cite{BussP98}, along with the results of this section, and \cref{th:PC_criterion} gives the claimed separation among the $\sos_{\varepsilon}$ and $\Nsatz$ criteria. 

The structure of the following sections is outlined as follows. The literature review, along with essential background and notation, is presented in \cref{sect:csp_literature} and \cref{sect:CSP_IMP_background}, respectively. The proofs of \cref{th:semilattice} and \cref{th:dual_discriminator} are provided in \cref{sect:semilattice-proof} and \cref{sect:dual-proof}, respectively.
%%%%%%%%%%%%%%%%%%%%%%%%%%%%%
\subsection{Related results} \label{sect:csp_literature}
%%%%%%%%%%%%%%%%%%%%%%%%%%%%%
In \cite{Mastrolilli21TALG, BharathiM21}, Mastrolilli and Bharathi initiated a systematic study of the IMP$_d$ tractability for combinatorial ideals arising from Constraint Satisfaction Problems $\CSP(\Gamma)$ in which the type of constraints is restricted to relations from a set $\Gamma$ over the Boolean domain.
Note that $\CSP(\Gamma)$ is just the special case of not-$\IMP_0(\Gamma)$ with $r=1$.
The main results of \cite{Mastrolilli21TALG, BharathiM21} identified the borderline of tractability of $\IMP_d(\Gamma)$ for languages $\Gamma$ over the Boolean domain. By using \GB bases techniques, they expanded Schaefer's dichotomy theorem \cite{Schaefer78} which classifies all CSPs of the form $\CSP(\Gamma)$ over the Boolean domain to be either in P or NP-complete. Recently, Bulatov and Rafiey \cite{BulatovRSTOC22, BulatovRSTACS22} continued this line of research by extending \cite{Mastrolilli21TALG, BharathiM21} beyond Boolean domains in several ways. %Despite these results, the study of CSP-related $\IMP$s is in its infancy.

With the aim of expanding the class of $\IMP_d(\Gamma)$s tractable by $\PC$, we observe that some of the algorithms that are considered in \cite{BulatovRSTOC22, BulatovRSTACS22, Mastrolilli21TALG, BharathiM21} for solving the $\IMP_d(\Gamma)$ are known to not being simulable by $\PC$ and by $\sos$.  For example, when $\Gamma$ is closed under the minority polymorphism, in \cite{BharathiM21} it is shown that the membership proof for $\IMP_d(\Gamma)$ can be computed in $n^{O(d)}$ time for any $d\in \mathbb{N}$.
Note that \textsc{3Lin(2)} is a special case of this class of problems. However,  linear (thereby, sharp) lower bounds on degrees for $\sos$ refutations are known \cite{GRIGORIEV2001613} for \textsc{3Lin(2)}.
It follows that bounded degree $\sos$ and $\PC$ over the reals cannot simulate the algorithm in \cite{BharathiM21}. 
The approach in \cite{BharathiM21} has been generalized by \cite{BulatovRSTOC22} by showing that constructing a $d$-truncated \GB Basis for an ideal $\I$ is reducible to solving $\chi\IMP_d$ for the ideal $\I$ (see \cite{BulatovRSTOC22} for details). With this reduction at hand, they designed a general algorithmic approach,
inspired by the famous FGLM algorithm \cite{FAUGERE1993329} and the conversion algorithm in \cite{BharathiM21}, to construct $d$-truncated \GB Basis for many combinatorial ideals, in particular, combinatorial ideals arising from languages invariant under a semilattice, or the dual-discriminator, or languages expressible as linear equations over $GF(p)$. In light of the impossibility result for the particular case of \textsc{3Lin(2)} discussed earlier, the general approach presented by \cite{BulatovRSTOC22}, which also works for \textsc{3Lin(2)}, cannot in general be simulated by $\PC$.

In \cref{sect:applications}, we complement the aforementioned impossibility result with some positive results. 
More precisely, we show that $\PC$ is powerful enough to solve $\IMP_d(\Gamma)$ when $\Gamma$ is closed under a semilattice polymorphism or the dual discriminator. 
As a result of the aforementioned considerations, our approach differs fundamentally from the general methodology employed in \cite{BulatovRSTOC22} (see also the discussion in \cref{rm:semilattice}).
%
%By the $\PC$ criterion (\cref{th:PC_criterion}), our results yield a separation between the $\sos$-criterion and the $\Nsatz$-criterion even for radical ideals and complement the separation results for non-radical ideals. We state our main results as follows.

Furthermore, strategies in \cite{BharathiM21,BulatovRSTOC22,BulatovARXIV21, Mastrolilli21TALG} to address the problem of \sos\ bit complexity involve replacing the original input polynomial constraints \(\mathcal{P}\) (see \cref{def:SOS_proof}) with a new set of polynomials $\mathcal{P}^{(d)}$ that satisfies the \(\Nsatz\) criterion, and generally depends on the \sos\ degree \(d\). This set $\mathcal{P}^{(d)}$ is computed externally (by an algorithm specifically designed for this purpose), serving as the input for \(\sos\) in place of $\mathcal{P}$. For example, in the semilattice case, if $\mathcal{P}$ consists of $m$ polynomials, the set $\mathcal{P}^{(d)}$, used in \cite{BulatovRSTOC22,Mastrolilli21TALG}, is generated by a specific algorithm and has a size of \( m^{O(d)} \); that is, $\mathcal{P}^{(d)}$ depends on \( d \) and grows exponentially with the \(\sos\) degree \( d \). This preprocessing step ensures that \(\sos\) retains ``low`' bit complexity, but only if $\mathcal{P}$ is substituted with $\mathcal{P}^{(d)}$. Essentially, the approach utilized in \cite{BharathiM21,BulatovRSTOC22,BulatovARXIV21, Mastrolilli21TALG} is to apply the \(\Nsatz\) criterion without enhancing or extending it, with the goal of replacing the initial input polynomial system with a new one that is computed externally and satisfies the \Nsatz\ criterion.
Our results demonstrate that all preprocessing steps employed in \cite{BharathiM21,BulatovRSTOC22,Mastrolilli21TALG} are unnecessary, as \(\sos\) achieves low bit complexity for any fixed \( d \) when \(\mathcal{P}\) is provided directly as input.

%%%%%%%%%%%%%%%%%%%%%%%%%%%%
\subsection[Background and notation for CSP's]{Background and notation for \(\CSP(\Gamma)\)}\label{sect:CSP_IMP_background}
%%%%%%%%%%%%%%%%%%%%%%%%%%%%
In this section we give the basic definitions and results that we will need later.
We refer to~\cite{barto_et_al:DFU:2017:6959,2017dfu7,Chen09, Mastrolilli21TALG, BulatovRSTOC22} for more details.

%(For a gentle introduction to IMPs, CSPs and polymorphisms, we point the reader to the Preliminaries in \cite{BharathiSIAM} which will suffice for this paper.)
Let $D$ denote a finite set called the  \textbf{\emph{domain}}.
By a $k$-ary \textbf{\emph{relation}} $R$ on a domain $D$ we mean a subset of the $k$-th cartesian power $D^k$; $k$ is said to be the \textbf{\emph{arity}} of the relation. We often use relations and (affine) varieties interchangeably since both are subsets of $D^k$ (we will not refer to varieties from universal algebra in this paper). A \textbf{\emph{constraint language}} $\Gamma$ over $D$ is a set of relations over $D$. A constraint language is \textbf{\emph{finite}} if it contains finitely many relations, and is \textbf{\emph{Boolean}} if it is over the two-element domain $\{0,1\}$. %In this paper, $D$ is the 3-element domain $\{0,1,2\}$.

A \emph{\textbf{constraint}} over a constraint language $\Gamma$ is an expression of the form $R(x_{i_1},\ldots, x_{i_k})$ where $R$ is a relation of arity $k$ contained in $\Gamma$, and $x_{i_1},\ldots, x_{i_k}$ are variables that belong to the variable set $X$. A constraint is satisfied by a mapping $\phi$ defined on the variables if $(\phi(x_{i_1}),\ldots, \phi(x_{i_k}))\in R$.

\begin{definition}\label{def:csp}
 The \emph{(nonuniform) \textsc{Constraint Satisfaction Problem} ($\CSP$)} associated with language $\Gamma$ over $D$ is the problem $\CSP(\Gamma)$ in which: an instance is a triple $\Cc=(X,D,C)$ where $X=\{x_1,\ldots,x_n\}$ is a set of $n$ variables and $C$  is a set of constraints over $\Gamma$ with variables from $X$. The goal is to decide whether or not there exists a solution, i.e. a mapping $\phi: X\rightarrow D$ satisfying all of the constraints. We will use $Sol(\Cc)\subseteq D^n$ to denote the set of solutions of $\Cc$.
\end{definition}
Moreover, we follow the algebraic approach to Schaefer's dichotomy result \cite{Schaefer78} formulated by Jeavons \cite{JEAVONS1998185} where each class of CSPs that are polynomial time solvable is associated with a polymorphism.
Recall that a polymorphism of a constraint language $\Gamma$ over a set $D$ is a multi-ary operation on $D$ that can be viewed as a multidimensional symmetry of relations from $\Gamma$ (see e.g.~\cite{barto_et_al:DFU:2017:6959}).
%%%%%%%%%%%%%%%%%%%%%%%%%%%%%%%%%%%%%%%%
\begin{definition}\label{def:polymorph}
An operation $f:D^m \rightarrow D$ is a \textbf{polymorphism} of a relation $R\subseteq D^k$ if for any choice of $m$ tuples $(t_{11},\dots,t_{1k}),\dots,(t_{m1},\dots,t_{mk})$ from $R$ (allowing repetitions), it holds that the tuple obtained from these $m$ tuples by applying $f$ coordinate-wise, $(f(t_{11},\dots,t_{m1}),\dots,f(t_{1k},\dots,t_{mk}))$, is in $R$. We also say that $f$ \textbf{preserves} $R$, or that $R$ is \textbf{invariant} or \textbf{closed} with respect to $f$. A polymorphism of a constraint language $\Gamma$ is an operation that is a polymorphism of every $R\in \Gamma$. By $\Pol(\Gamma)$ we denote the set of all polymorphisms of $\Gamma$. 
\end{definition}

%%%%%%%%%%%%%%%%%%%%%%%%%%%%%%%%%%%%%%%%%%%%%%%%%%%%%%%%%%%%%%%%%%%%%%%%%%%%%%%%%%%%%%%%%%%%%%%%%%%%%%%%%%%%%%%%%%%%%%%
\subsubsection[The ideal membership problem of a constraint language IMP(Gamma)]{The ideal membership problem of a constraint language $IMP(\Gamma)$}\label{sec:IMPdef}
%%%%%%%%%%%%%%%%%%%%%%%%%%%%%%%%%%%%%%%%%%%%%%%%%%%%%%%%%%%%%%%%%%%%%%%%%%%%%%%%%%%%%%%%%%%%%%%%%%%%%%%%%%%%%%%%%%%%%%%

The polynomial \textsc{Ideal Membership Problem} (\IMP) is the following computational task.
Let $\Field[x_1, \ldots, x_n]$ be the ring of polynomials over the field $\Field$ and indeterminates $\{x_1,\ldots, x_n\}$ ordered according to the \grlex order (see \cref{sect:prelim}).
Given $f_0,f_1,\ldots,f_r\in \Field[x_1, \ldots, x_n]$ we want to decide if $f_0\in \I= \GIdeal{f_1,\ldots, f_r}$, where $\I$ is the ideal generated by $F=\{f_1,\ldots , f_r\}$.
If the ideal $\I$ corresponds to a $\CSP$ instance we can be specific on its structure. 
Here, we explain how to construct an ideal corresponding to a given CSP($\Gamma$) instance $\Cc$ by following \cite{Mastrolilli21TALG}. Constraints are in essence varieties (see e.g.~\cite{vandongenPhd,JeffersonJGD13}). 
%%%%%%%%%%%%%%%%%%%%
\begin{definition}\label{def:combinatorial_ideal}
For any given $\CSP(\Gamma)$ instance $\Cc=(X,D,C)$, the \textbf{\emph{combinatorial ideal}} 
\begin{align}\label{eq:comb_Id}
    \I_\Cc=\langle f_{R_1}(X_{R_1}),\ldots,f_{R_\ell}(X_{R_\ell}),f_D(x_1),\ldots,f_D(x_n)\rangle
\end{align} 
is defined as the vanishing ideal of the set $Sol(\Cc)$ and it is constructed as follows.
\begin{itemize}
    \item For every $x_i\in X$ the ideal $I_\Cc$ contains a domain polynomial $f_D(x_i)$ whose zeroes are precisely the elements of the domain $D$.
    \item For every constraint $R_j(X_{R_j})\in C$, where $X_{R_j}$ is a tuple of variables from $X$, the ideal $\I_\Cc$ contains a polynomial $f_{R_j}(X_{R_j})$ such that for $X_{R_j}\in D^{|X_{R_j}|}$ it holds $f_{R_j}(X_{R_j})=0$ if and only if $R_j(X_{R_j})$ is true.
\end{itemize}   
\end{definition}
  See \cite{Mastrolilli21TALG} for more details and properties. 
\begin{definition}\label{def:IMP}
 The {\emph{\textsc{Ideal Membership Problem}}} associated with language $\Gamma$ is the problem $\IMP(\Gamma)$ in which
 the input consists of a polynomial $f\in \Field[x_1, \ldots, x_n]$ and a $\CSP(\Gamma)$ instance $\Cc=(X,D,C)$ where $D\subset \Field$. The goal is to decide whether $f$ lies in the combinatorial ideal~$I_\Cc$. We use $\IMP_d(\Gamma)$ to denote $\IMP(\Gamma)$ when the input polynomial $f$ has degree at most $d$.
\end{definition}

Ideal membership testing can be performed by means of \GB bases. Indeed, if we can compute the $d$-truncated \GB basis $\mathcal{G}_d$ of $\I_\Cc$ in $n^{poly(n^d)}$ time, then we can solve $\IMP_d(\Gamma)$ in polynomial time (see \cref{sect:prelim}).

As in the case of the CSP, polymorphisms of $\Gamma$ are what determines the complexity of $\IMP_d(\Gamma)$ (see \cite{Mastrolilli21TALG,BharathiM21, BulatovRSTOC22}). 
 
%%%%%%%%%%%%%%%%%%%%%%%%%%%%%%%%%%%%%%%%%%%

%%%%%%%%%%%%%%%%%%%%%%%%%%%%%%%%%%%%%%%%%%%%%%%
\subsection{Polynomial Calculus and semilattice polymorphism}\label{sect:PCsemilattice}
%%%%%%%%%%%%%%%%%%%%%%%%%%%%%%%%%%%%%%%%%%%%%%

We consider the complexity of $\IMP_d(\Gamma)$ for constraint languages $\Gamma$ closed under a \emph{semilattice} operation $\psi$ (either meet or join). 
There are two kinds of semilattice operations (see e.g.~\cite{Davey_Priestley_2002}). A \emph{join}-semilattice, also known as an \emph{upper}-semilattice, refers to a partially ordered set that possesses a \emph{join} (or least upper bound) for every nonempty finite subset. Conversely, a \emph{meet}-semilattice, or \emph{lower}-semilattice, is a partially ordered set characterized by having a \emph{meet} (or greatest lower bound) for any nonempty finite subset. 
Algebraically, semilattices can be defined {as pairs $\mathcal{D} = (D,\phi)$, where $D$ is a domain and $\phi$ is the semilattice operation \textit{join} or \textit{meet}. Note that both operations are associative, commutative and idempotent binary operations.} 

%In \cite{Mastrolilli21TALG,BulatovRSTOC22} it is shown that if the constraint language $\Gamma$ is closed under a semilattice polymorphism, then the proofs of membership for IMP$_d(\Gamma)$ can be computed in polynomial time for any fixed $d$. However, it is not known if such proofs can be efficiently obtained by $\PC$. This led to the development of tailored methods (see \cite[Th.~6.5]{BulatovRSTOC22} and \cite{Mastrolilli21TALG}). 
In the following, we show that standard $\PC$ is $d$-complete and efficient for constraint languages that are closed under a semilattice polymorphism. Our result greatly simplifies known approaches \cite{BulatovRSTOC22, Mastrolilli21TALG} and unifies them into one simple
PC-based algorithm. Further details explaining the substantial differences with what is already known are given and discussed in \cref{rm:semilattice}. 
Our main technical result is as follows.
%%%%%%%%%%%%%%%%%%%%%%%%%%%%%%%%%%%%%%%%%%%%%%%%%%
\begin{theorem}\label{th:semilattice}
Let $\Gamma$ be a finite constraint language over a domain $D$. Consider an instance $\Cc$ of $\CSP(\Gamma)$. If $\Gamma$ is closed under a semilattice polymorphism, then $O(d)$-degree \PC\  can compute in $n^{O(d)}$ time the reduced $d$-truncated \GB basis $\mathcal{G}_d$ (in \grlex order) of the combinatorial ideal $I_\Cc$, for any degree $d \in \mathbb{N}$ and where $n$ is the number of variables.
\end{theorem}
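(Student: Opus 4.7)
The plan is to show that the reduced $d$-truncated \GB basis $\mathcal{G}_d$ can be derived from the combinatorial ideal generators entirely inside \PC, with all intermediate polynomials of degree $O(d)$ and total derivation size $n^{O(d)}$.

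First, I would use the semilattice polymorphism $\psi$ to obtain a compact description of $\I_d(\mathrm{Sol}(\Cc))$ in terms of projections. Because $\mathrm{Sol}(\Cc)$ is $\psi$-closed, every projection $\pi_Y(\mathrm{Sol}(\Cc))$ onto a subset $Y\subseteq X$ is also $\psi$-closed, and is therefore a sub-semilattice of $D^{|Y|}$. The $d$-truncated vanishing ideal $\I_d(\mathrm{Sol}(\Cc))$ is spanned as a vector space by the indicator polynomials of forbidden tuples $t\in D^{|Y|}\setminus \pi_Y(\mathrm{Sol}(\Cc))$ ranging over $|Y|\le d$, of which there are at most $n^{O(d)}|D|^{O(d)}$. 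Hence it suffices to exhibit, for every such $(Y,t)$, a \PC\ derivation of the corresponding indicator polynomial of degree $O(d)$ and polynomial size, from which $\mathcal{G}_d$ is then obtained by normalizing in $\grlexns$ order.

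Second, I would simulate the iterative computation of each projection $\pi_Y(\mathrm{Sol}(\Cc))$ inside \PC. The semilattice operation $\psi:D^2\to D$ interpolates to a bivariate polynomial $p_\psi(x,y)$ whose degree depends only on $|D|$, so applying $\psi$ coordinate-wise to two partial solutions supported on a common set of size at most $d$ can be realized through a constant number of linear combinations and variable multiplications, i.e.\ as an $O(d)$-degree \PC\ step. Starting from the constraint polynomials $f_{R_j}(X_{R_j})$ and the domain polynomials $f_D(x_i)$, I would iteratively enumerate the polynomial equations certifying each element of $\pi_Y(\mathrm{Sol}(\Cc))$ for $|Y|\le d$: extend known partial solutions to one more variable using compatible constraint tuples, then close under $\psi$ on shared supports. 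Because $\psi$ is associative, commutative and idempotent, and the number of candidate partial assignments is $n^{O(d)}|D|^{O(d)}$, the process terminates after polynomially many \PC\ steps. The indicator polynomial of any forbidden $(Y,t)$ then follows by a single finite-domain interpolation (a linear combination of the derived partial-solution certificates), and the final reduction to the reduced \GB basis in $\grlexns$ order is a sequence of polynomial divisions, each of which is itself a composition of \PC\ multiplications and linear combinations.

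The main obstacle is keeping the degree bounded by $O(d)$ throughout the iteration. Naively combining two partial solutions with nearly disjoint supports of size close to $d$ would push the combined support above $d$ and hence the derivation degree above $O(d)$. The key point is that, by virtue of semilattice closure, every projection $\pi_Y(\mathrm{Sol}(\Cc))$ with $|Y|\le d$ is already determined by $\psi$-combinations of partial solutions supported \emph{inside} $Y$, so the support never exceeds $|Y|$; this is the precise form in which the bounded-width character of semilattice $\CSP(\Gamma)$s manifests itself inside \PC. The bit-complexity bound is then inherited, since every intermediate coefficient is a linear combination of the original coefficients with rational multipliers coming from the fixed finite-domain interpolation of $\psi$ and from the finitely many reduction steps, all of which remain bounded by $2^{\mathrm{poly}(n^d)}$.
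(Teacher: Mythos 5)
Your proposal has genuine gaps; the central one is the claim that $\I_{d}(Sol(\Cc))$ is spanned by indicator polynomials of forbidden tuples $t\in D^{|Y|}\setminus\pi_Y(Sol(\Cc))$ with $|Y|\le d$. This is false in general: a degree-$d$ polynomial in the truncated vanishing ideal can be supported on all $n$ variables (e.g.\ a linear polynomial $\sum_i x_i - c$ vanishing on a suitable $S$), so the truncated ideal is not a linear span of ``local'' certificates on supports of size at most $d$; moreover the indicator of a tuple on support $Y$ has degree $|Y|(|D|-1)$, not $|Y|$, so even the degree bookkeeping does not close. Relatedly, your second step describes a combinatorial closure procedure on \emph{partial solutions} (extend, then close under $\psi$), which is a local-consistency algorithm on the solution set; translating each such combinatorial step into a legitimate \PC\ inference on polynomial identities, with degree $O(d)$ and controlled coefficients, is exactly the content that needs to be proven and is missing. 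Finally, ``normalizing in \grlexns\ order'' to obtain the \emph{reduced} truncated \GB basis is not a sequence of polynomial divisions: computing a \GB basis requires S-polynomial processing, and controlling its degree and bit complexity inside \PC\ is the crux of the theorem, not a routine postprocessing step.

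For comparison, the paper proceeds quite differently. It first reduces the general domain to the Boolean case via an explicit bijective binary encoding $\mu:D\to\{0,1\}^{|D|-1}$ that sends the semilattice operation to $\Min$/$\Max$ and keeps the two solution sets in one-to-one correspondence; this is what later allows a degree-$O(d)$ Boolean \PC\ proof to be pulled back to a degree-$O(d)$ \PC\ proof over $D$ (via the interpolating polynomials $\p$ and $\q_j$ and a step-by-step translation modulo the domain ideal). In the Boolean case it does not try to span the truncated ideal; it invokes the known structural fact that the reduced \GB basis consists of 2-term polynomials $f=p-q$, and for each such $f$ it fixes a minimal non-vanishing partial assignment, observes that the resulting augmented system is an infeasible Horn instance with a low-degree \PC\ refutation, and multiplies that refutation by the monomial $p$ (or $q$) to derive $p(x_j-1)$ and $q(x_i-1)$ in \PC; a telescoping identity then yields $f$ itself. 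If you want to salvage your route, you would need either this structural description of $\mathcal{G}_d$ or an explicit degree-controlled simulation of Buchberger's algorithm by \PC, neither of which your argument currently supplies.
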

%%%%%%%%%%%%%%%%%%%%%%%%%%%%%%%%%%%%%%%%%%%%%%%%%%%%%
\begin{proof}
    See \cref{sect:semilattice-proof}.
\end{proof}
%
%When a polynomial-sized truncated basis \GB (in grlex order) $\mathcal{G}_d$ is given, then it is easy to argue that $\IMP_d$ can be solved in polynomial time (see \cref{sec:IMPdef} and \cref{sect:PC_bit} for further details).
%
\cref{th:semilattice} in conjunction with \cref{th:PC_criterion} implies \cref{th:semilattice+dualdiscr} for semilattice structures.
%

%%%%%%%%%%%%%%%%%%%%%%%%%%%%%%%%%%%%%%%%%%%%%%%
\subsection{Polynomial Calculus and dual discriminator polymorphism}\label{sect:PCdual}
%%%%%%%%%%%%%%%%%%%%%%%%%%%%%%%%%%%%%%%%%%%%%%
We consider the complexity of $\IMP_d(\Gamma)$ for constraint languages where the dual discriminator operation is a polymorphism of $\Gamma$. 
The dual discriminator is a well-known majority operation \cite{Jeavons:1997:CPC,barto_et_al:DFU:2017:6959} and is often used as a starting point in many $\CSP$ related classifications \cite{barto_et_al:DFU:2017:6959}. For a finite domain $D$, a ternary operation $f$ is called a majority operation if  $f(a,a,b)=f(a,b,a)=f(b,a,a)=a$ for all $a,b\in D$. The \emph{dual discriminator} $\nabla$ on a domain $D$, is a majority operation such that $\nabla(a,b,c)=a$ for pairwise distinct $a,b,c\in D$.

In \cite{BulatovRSTOC22} it is shown that $\IMP_d(\Gamma)$ is solvable in polynomial time for any fixed $d$. The work in \cite{BharathiM21} complements the result in \cite{BulatovRSTOC22} by proving that the \textit{full} (as opposed to \textit{truncated}) \GB basis in graded lexicographic order can be computed in polynomial time and with bounded degree, thus proving polynomial time efficiency for solving the general $\IMP(\Gamma)$ (see also \cref{sect:PC_bit}).

In the following, we again show the power of the $\PC$ by demonstrating that the ad hoc algorithm presented in \cite{BharathiM21} is simulable by $\PC$. 
This greatly simplifies previous algorithms \cite{BulatovRSTOC22, BharathiM21, BharathiM22} and provides another family of problems for which the $\sos_{\varepsilon}$ criterion is provably stronger than the $\Nsatz$ criterion.
Our main technical result is as follows.
%%%%%%%%%%%%%%%%%%%%%%%%%%%%%%%%%%%%%%%%%%%%%%%%%%%%%%
\begin{theorem}\label{th:dual_discriminator}
    Let $\Gamma$ be a finite constraint language over a domain $D$. Consider an instance $\Cc$ of $\CSP(\Gamma)$. If $\Gamma$ is closed under a dual discriminator polymorphism, then \PC\ can compute in $n^{O(1)}$ time the reduced \GB basis $\mathcal{G}$ (in \grlex order) of the combinatorial ideal $I_\Cc$, where $n$ is the number of variables and $|D|=O(1)$.
\end{theorem}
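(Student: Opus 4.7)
The plan is to simulate, step by step within $\PC$, the polynomial-time algorithm of \cite{BharathiM21} that constructs the reduced \GB basis $\mathcal{G}$ of $I_\Cc$ for dual-discriminator-closed languages. First, I recall the structural characterization of relations preserved by the dual discriminator $\nabla$: every such relation $R$ is \emph{2-decomposable}, i.e., $R$ equals the intersection of (the cylindrifications of) its binary projections, and each binary projection has a restricted ``permutation-plus-unary'' form. Consequently, the reduced \GB basis $\mathcal{G}$ of $I_\Cc$ in $\grlexns$ order contains only (i) the domain polynomials $f_D(x_i)$ for each variable, and (ii) bivariate polynomials $p_{ij}(x_i,x_j)$ encoding the pairwise projections obtained from path-consistency propagation. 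All polynomials in $\mathcal{G}$ have degree $O(|D|)=O(1)$ and constant bit complexity, so the target output itself is of size $O(n^2)$.

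Next, I realize the algorithm of \cite{BharathiM21} as three phases of $\PC$ derivations. In the first phase, from each input constraint polynomial $f_R(x_{i_1},\ldots,x_{i_k})$, $\PC$ derives the polynomial encoding each binary projection $\pi_{i_s,i_t}(R)$. This is done by multiplying $f_R$ (using the multiplication rule) with products of Lagrange indicator monomials $\chi_a(x_{i_u})=\prod_{b\in D\setminus\{a\}}(x_{i_u}-b)/(a-b)$ on the auxiliary variables $x_{i_u}$ with $u\notin\{s,t\}$, and then summing by the addition rule over $a\in D^{k-2}$; because $|D|=O(1)$ and the domain polynomials force $x_{i_u}\in D$, each projection is derived by a polynomial-size, constant-degree $\PC$ proof. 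In the second phase, $\PC$ enforces path consistency: for every triple $(x_i,x_j,x_k)$ with binary polynomials $p_{ij}$ and $p_{jk}$ already derived, $\PC$ derives the tightened polynomial $p_{ik}$ encoding the relational composition $\pi_{i,k}(p_{ij}\wedge p_{jk})$ by applying the same interpolation technique to the middle variable $x_j$. Since each pair of variables can be tightened only $|D|^2=O(1)$ times, the closure stabilizes after $O(n^3)$ rounds and triggers at most $O(n^3)$ bounded-degree derivations. In the third phase, I invoke the correctness of the algorithm in \cite{BharathiM21} to conclude that the union of the domain polynomials with the path-consistent binary polynomials, after one final round of mutual reduction (itself expressible by $\PC$ multiplication and addition), coincides with the reduced \GB basis $\mathcal{G}$.

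The main obstacle is turning the set-theoretic projection and composition operations on relations into explicit $\PC$ derivations of bounded degree and polynomial size. The key technical tool is the Lagrange interpolation identity that rewrites any constraint polynomial $f$ modulo the domain ideal as $\sum_{\alpha\in D^k}\left(\prod_{s} \chi_{\alpha_s}(x_{i_s})\right) f(\alpha)$, which converts projection, composition, and restriction-to-satisfying-tuples into $\PC$-derivable linear combinations of explicit products. Since $|D|=O(1)$, all Lagrange monomials and intermediate bivariate polynomials stay at constant degree, ensuring that the whole simulation produces $\mathcal{G}$ in $n^{O(1)}$ time and size, as required.
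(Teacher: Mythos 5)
Your overall strategy (reduce to binary constraints via 2-decomposability, then simulate the known polynomial-time algorithm inside $\PC$) matches the paper's, and your Phase 1 (deriving binary projections by Lagrange interpolation over a constant-size domain) is essentially the paper's \cref{rmrk:binary_constraints_bounded_derivations}. However, there is a genuine gap at the end: you assert that after path-consistency propagation plus ``one final round of mutual reduction'' the resulting collection of domain polynomials and bivariate polynomials \emph{is} the reduced \GB basis, invoking the correctness of \cite{BharathiM21}. But the union of \GB bases of the individual binary ideals is not in general a \GB basis of their sum, and establishing that it is here is precisely the nontrivial content of the theorem — it cannot be outsourced to the cited algorithm, since the claim being proved is that $\PC$ itself can carry out (a version of) that algorithm. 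Concretely, two things are missing. First, Cooper's classification splits the binary constraints into \emph{permutation}, \emph{complete}, and \emph{two-fan} types (your ``permutation-plus-unary'' description misses the two-fan case, whose relation $(\{a\}\times D_j)\cup(D_i\times\{b\})$ forces genuinely quadratic generators $(x_i-a)(x_j-b)$ that must be handled separately). Second, the permutation constraints chain together: variables linked by a chain of bijections must be related by derived interpolating polynomials, and the paper's key structural observation — that there are at most $|D|!$ distinct bijections, so all but constantly many chained variables are \emph{linearly} related and Buchberger's algorithm need only be simulated on $O(1)$ representative variables — is absent from your argument.

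The paper's proof then spends most of its effort on the interaction between the two parts (its Cases 1--4 in the combination step): a partial domain restriction or a two-fan factor touching a variable $x_i$ in a permutation chain must be propagated, via the derived bijections, to every other variable in that chain before the union $(\bigcup_p\mathcal{G}_p)\cup G$ can be certified as a \GB basis (and even then this certification rests on a dedicated lemma from \cite{BharathiM21,BharathiM25}, not on generic reduction). Your ``mutual reduction'' phrase points in the right direction but does not supply these derivations, nor the bounded-degree $\PC$ primitives (the paper's Derivation Schemes 1--5) needed to intersect partial domains, compose bijections, and transport constraints along chains. To repair the proposal you would need to (i) treat two-fan constraints explicitly, (ii) exploit the $|D|!$ bound to reduce the permutation-chain ideals to constantly many variables, and (iii) give the explicit cross-propagation derivations before appealing to a \GB-certification lemma for the combined generating set.
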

\begin{proof}
    See \cref{sect:dual-proof}.
\end{proof}
%%%%%%%%%%%%%%%%%%%%%%%%%%%%%%%%%%%%%%%%%%%%%%%%%%%%%%%
\cref{th:dual_discriminator}, along with \cref{th:PC_criterion}, implies \cref{th:semilattice+dualdiscr} for dual discriminator structures. 

%%%%%%%%%%%%%%%%%%%%%%%%%%%%%%%%%%%%%%%%%%%%%%%%%%%%%%%%%%%%%%%%%%%%%%%%%%%
\section{Proof of \cref{th:semilattice} }\label{sect:semilattice-proof}
%%%%%%%%%%%%%%%%%%%%%%%%%%%%%%%%%%%%%%%%%%%%%%%%%%%%%%%%%%%%%%%%%%%%%%%%%%
We consider the complexity of $\IMP_d(\Gamma)$ for constraint languages $\Gamma$ where $\Pol(\Gamma)$ (see \cref{def:polymorph}) includes a \emph{semilattice} operation $\psi$ (either meet or join). 
%\paragraph{Semilattice operations.} 
There are two kinds of semilattice operations (see e.g.~\cite{Davey_Priestley_2002}). A \emph{join}-semilattice, also known as an \emph{upper}-semilattice, refers to a partially ordered set that possesses a \emph{join} (or least upper bound) for every nonempty finite subset. Conversely, a \emph{meet}-semilattice, or \emph{lower}-semilattice, is a partially ordered set characterized by having a \emph{meet} (or greatest lower bound) for any nonempty finite subset. %It's noteworthy that every join-semilattice becomes a meet-semilattice when considering the inverse order, and vice versa.
Algebraically, semilattices can be defined {as pairs $\mathcal{D} = (D,\phi)$, where $D$ is a domain and $\phi$ is the semilattice operation \textit{join} or \textit{meet}. Note that both operations are associative, commutative and idempotent binary operations.} 
In mathematics, the symbol for the join (meet) operation in a semilattice is often denoted by the symbol $\vee$ ($\wedge$).
Any such operation induces a partial order ($\preceq$) (and its corresponding inverse order) in which the result of the operation for any two elements represents the least upper bound (or greatest lower bound) of those elements in relation to the established partial order.

% We refer to \cite{barto_et_al:DFU:2017:6959} for additional details.

The input to $\IMP_d(\Gamma)$ consists of any given set of polynomials that defines the combinatorial ideal $\I_\Cc$ (see \cref{def:combinatorial_ideal}) corresponding to a semilattice closed language $\Gamma$: 
\begin{align}\label{eqn:polynomial_constraints_CSPs}
    f_{R_1}(X_{R_1}),\ldots,f_{R_\ell}(X_{R_\ell}),f_D(x_1),\ldots,f_D(x_n).
\end{align} We want to show that $\PC$ is capable of computing the $d$-truncated \GB basis (in \grlex order) in polynomial time for any fixed $d$.

%%%%%%%%%%%%%%%%%%%%%%%%%%%%%%%%%%%%%%%%%%%%%%%%%%%%%%%%%%
\paragraph{\cref{th:semilattice} proof outline.}\label{sect:semilattice_structure}
%%%%%%%%%%%%%%%%%%%%%%%%%%%%%%%%%%%%%%%%%%%%%%%%%%%%%%%%%%
%
Schematically, \cref{th:semilattice} is proven by the following arguments:
\begin{enumerate}[(i)]
    \item First we prove \cref{th:semilattice} for the Boolean case, where the domain $D = \{0,1\}$.
    That is, we show that bounded-degree $\PC$ computes the $d$-truncated \GB basis for the Boolean domain. The known \cite{Mastrolilli21TALG} algorithm to efficiently compute the $d$-truncated \GB basis consists of ``guessing'' the truncated \GB basis in polynomial time. Here, the main technical difficulty is that this guessing ``trick'' is not immediately simulable in an efficient way by \PC. We show that the latter is possible. The algorithm in \cite{Mastrolilli21TALG} essentially reduces the \IMP\ for a given polynomial $f$ in the $2d$-truncated \GB basis to the (contrapositive) problem of checking whether ``non-vanishing assignments" of variables for $f$ belong to the variety. In this work, we are able to $\PC$-derive $f$ by polynomially formulating ``non-vanishing assignments" into an infeasible system of Horn-type polynomials. We then combine algebraic and logical reasoning, leading us to efficient \PC\ refutations of the new system by means of simulation of refutation proofs. By accurately using the \PC\ ability for refutation, we can then retrieve a \PC\ derivation of $f$. This technique may be of independent interest.
    \item We reduce the general case (with arbitrary finite domain $D$) to the Boolean case. The reduction is achieved by encoding the domain $D$ using strings over $\{0,1\}$. The encoding is given by a novel bijective map that preserves the semilattice structure. The strength of our bijection is that it ensures a one-to-one correspondence between the solution spaces of the original $\CSP$ over $D$ and the reduced Boolean problem, which allows us to reduce the (search version of) $\IMP_d(\Gamma)$ to the (search version of) $\IMP_{O(d)}(\Gamma^{01})$, where $\Gamma^{01}$ is a Boolean constraint language derived from $\Gamma$. Crucially, the preservation of the semilattice structure ensures that $\IMP_{O(d)}(\Gamma^{01})$ remains solvable in polynomial time by $\PC$.
 
\end{enumerate}

More details on the second point are given below.

\begin{enumerate}
    \item Show that any instance $\Cc = (X,D,C)$ of $\CSP(\Gamma)$ is reducible to an instance $C^{01}$ of $\CSP(\Gamma^{01})$, where $\Gamma^{01}$ is a finite constraint language over $\{0,1\}$ and so that there exists a $\phi \in \Pol(\Gamma^{01})$ that is a semilattice ($\Min$ or $\Max$ polymorphisms) (see \cref{sect:mapping2bool}). The idea is that we can "encode" $\Cc$ in binary and that the encoding function is invertible (see \cref{sect:binencoding}).
    \item Show $\IMP_d(\Gamma)$ is reducible to $\IMP_{\uval d}(\Gamma^{01})$, where $\Gamma^{01}$ is a constraint language over the Boolean domain $\{0,1\}$ and there exists a semilattice $\phi \in \Pol(\Gamma^{01})$ ($\Min$ or $\Max$ polymorphism) (see \cref{sect:2Bool}). In addition, the reduction ensures that the varieties in the two different domains are in one-to-one correspondence.
    \item By \cref{th:2semilattice}, we can solve $\IMP_{\uval d}(\Gamma^{01})$ by bounded-degree $\PC$.
    \item Our reduction guarantees that we can recover a bounded-degree $\PC$ proof in the finite domain from the bounded-degree $\PC$ proof over the Boolean domain (see \cref{sect:mappingBack}). More precisely, we show how $\PC$ proofs of degree $\uval d$ in the Boolean domain translate into the $\PC$ proofs of degree $O(d)$ in the finite domain, thus proving \cref{th:semilattice}.
\end{enumerate}

\begin{remark}\label{rm:semilattice}
We emphasize that a reduction to the Boolean domain case has also been used to prove that the decision version of $\IMP_d(\Gamma)$ is tractable for such constraint languages $\Gamma$ (see \cite[Th. 5.10]{BulatovRSTOC22} for more details). However the mapping used in \cite{BulatovRSTOC22} is by the means of pp-interpretability (see \cite{DonaPapert1964}), and it is not guaranteed that one can recover proofs for the finite domain under this reduction. In particular, the very first obstacle is given by the fact the mapping $\pi$ in the definition of pp-interpretability \cite[Def.~3.12]{BulatovRSTOC22} is not guaranteed to be a bijection. In fact, this difficulty of transforming the Boolean case proof to the finite general domain case led to the development of a specific method \cite[Th.~6.5]{BulatovRSTOC22} for the search version of the problem. Moreover, as previously noted at the beginning of \cref{sect:applications}, it is far from evident that it can be simulated by $\PC$. In the following, we show that the standard bounded-degree $\PC$ approach is sufficient. This simplifies known approaches and unifies them into one simple $\PC$-based approach.
\end{remark}

%%%%%%%%%%%
\subsection[Min/Max polymorphisms]{$\Min$/$\Max$ polymorphisms}
%%%%%%%%%%%%%%%%%%%%%%%%%%%%%%%%%%%%%%%%%%

%Studying CSPs in which the type of constraints is restricted has been a very fruitful research direction. For a fixed set of relations Γ over a finite set D, also called a constraint language, CSP(Γ)
%denotes the CSP restricted to instances with constraints from Γ, 

%Let IMP(Γ) be the IMP restricted to ideals produced by instances from CSP(Γ). As was observed above, the complement of CSP(Γ) reduces to IMP(Γ),
%and so IMP(Γ) is coNP-complete whenever CSP(Γ) is NP-hard. 

%%%%%%%%%%%%%%%%%%%%%%%%%%
Two important classes of polymorphisms that played a fundamental role in the celebrated dichotomy theorem by Schaefer~\cite{Schaefer78} are the $\Min/\Max$ polymorphisms. In fact, $\Min$ ($\Max$) is a polymorphism of the (dual) problem \textsc{Horn-SAT} \cite{JEAVONS_TRACTABLE_CONSTRAINTS}. A Boolean language $\Gamma$ is invariant under semilattice operations given by (component-wise) the $\Max$ operation (logical \emph{OR}) or the $\Min$ operation (logical \emph{AND}). The semilattice polymorphism is a well-known generalization of $\Min/\Max$ polymorphisms for the general finite domain.
In this section we show that $\PC$ is $d$-complete and efficient for polynomial systems that are closed with respect to the $\Min$ polymorphism (a similar proof holds for $\Max$). 

In the remainder of this section we focus on system of polynomials $C_1,\ldots, C_m$ that are $\Min$ closed, namely each polynomial constraint $C_i$, along with domain polynomials, has solutions that are closed with respect to the $\Min$ polymorphism. These polynomials are defined in \cref{def:combinatorial_ideal}.

%
%In the remainder of the section we will make use of the following observation. Since we restrict ourselves to constraints whose arities are up to constant $O(1)$, each constraint $C$ can be mapped to a polynomial $f_C$ that interpolates the solutions of the constraint. These polynomials along with the domain polynomials have the same set of feasible solutions as the CSP instance.

%%%%%%%%%%%%%%%%%%%%%%%%%%%%%%%%%%%%%
\subsubsection[Min polymorphism]{$\Min$ polymorphism}
%%%%%%%%%%%%%%%%%%%%%%%%%%%%%%%%%%%%%

We will make use of the following definition from \cite{Mastrolilli21TALG}.
%%%%%%%%%%%%%%%%%%%%%%%%%%%%%%%%%%%%%%%%%%%%%%%%%%%%%
\begin{definition}\label{def:2terms}
  For a given set $X=\{x_1,\ldots,x_n\}$ of variables and for any set $S\subseteq [n]$ possibly empty, $\alpha\in \{0, \pm 1\}$, let a \textbf{\emph{term}} be defined as \footnote{The empty product has the value 1.}
  \begin{align*}
    \tau^+(S)&\mydef \alpha\prod_{i\in S}x_i, \quad \textsc{*positive term*} \\
    \tau^-(S)&\mydef\alpha\prod_{i\in S}(x_i-1). \quad \textsc{*negative term*}
  \end{align*}
  For $S_1,S_2\subseteq [n]$ and $i\in [n]$, let a \textbf{\emph{2-terms polynomial}} be a polynomial that is the sum of two terms or it is $\pm(x_i^2-x_i)$.
  We say that a set $G$ of polynomials is \textbf{\emph{2-terms structured}} if each polynomial from $G$ is a 2-terms polynomial.

  We further distinguish between the following special 2-terms polynomials:
  \begin{align*}
    \T^+ & \mydef \{\tau^+(S_1)+\tau^+(S_2)\mid S_1,S_2\subseteq[n]\} \cup\{\pm(x_i^2-x_i)\mid i\in [n]\},  \quad \textsc{*positive 2-terms*}\\
    \T^- & \mydef \{\tau^-(S_1)+\tau^-(S_2)\mid S_1,S_2\subseteq[n]\} \cup\{\pm(x_i^2-x_i)\mid i\in [n]\}. \quad \textsc{*negative 2-terms*}
  \end{align*}
  \end{definition}
  %%%%%%%%%%%%%%%%%%%%%%%%%%%%%%%%%%%%%%%%%%%%

%%%%%%%%%%%%%%%%%%%%%
\begin{remark}
$\Gamma$ is a finite language. It follows that each given polynomial constraint $C_i$ (as defined in \cref{def:combinatorial_ideal}) has a constant number of feasible solutions.
Therefore \PC\ can efficiently derive any polynomial vanishing over the solutions of $C_i$, and in particular it can derive any vanishing $2$-term polynomial.
\end{remark}
%%%%%%%%%%%%%%%%%%%%%

%%%%%%%%%%%%%%%%%%%%%%%%%%%%
\begin{lemma}\label{th:Min}
If $\Min\in\Pol(\Gamma)$ then $\PC$ can compute the truncated reduced \GB basis $\mathcal{G}_d$ in $n^{O(d)}$ time (where $n$ is the number of variables), for any degree $d\in \mathbb{N}$.
\end{lemma}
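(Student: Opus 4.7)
The plan is to show that PC can produce, in $n^{O(d)}$ time and degree $O(d)$, a derivation of every polynomial in the reduced $d$-truncated Gr\"obner basis $\mathcal{G}_d$. We proceed in four steps.

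\textbf{Step 1 (structural characterization).} We invoke the result from \cite{Mastrolilli21TALG} that when $\Min \in \Pol(\Gamma)$ over the Boolean domain, every element of $\mathcal{G}_d$ is either $\pm(x_i^2-x_i)$ or a negative 2-terms polynomial, so it belongs to $\T^-$ (see \cref{def:2terms}). Since $\tau^-(S)$ is determined by a subset $S\subseteq[n]$ with $|S|\le d$, the set of candidate basis elements of degree at most $d$ has size $n^{O(d)}$. It therefore suffices to decide ideal membership for each candidate and, in the positive case, to produce a PC derivation.

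\textbf{Step 2 (encoding non-vanishing sets).} Fix a candidate $f=\tau^-(S_1)+\tau^-(S_2)$. Since $\tau^-(S)$ equals $\alpha(-1)^{|S|}$ when $x_i=0$ for all $i\in S$ and is zero otherwise, a Boolean assignment at which $f$ is nonvanishing must set to zero all variables in $S_1$ or all variables in $S_2$ (the cancellation case is ruled out by reducedness of the basis). Encoding each such condition as an additional set of unary axioms $\{x_i=0:i\in S_j\}$ gives two augmented polynomial systems $\Cc^+_j$, $j\in\{1,2\}$, whose varieties together coincide with the non-vanishing Boolean assignments of $f$.

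\textbf{Step 3 (efficient refutation of the augmented systems).} If $f\in \I_\Cc$ then each $\Cc^+_j$ has empty variety. The crucial observation is that adding axioms $x_i=0$ preserves invariance under $\Min$, so each $\Cc^+_j$ is still closed under the semilattice polymorphism; its unsatisfiability is then witnessed by a unit-propagation style Horn-SAT refutation. Such refutations are known to be efficiently simulable by PC, yielding a PC refutation of $\Cc^+_j$ of degree $O(d)$ and size $n^{O(d)}$.

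\textbf{Step 4 (refutation-to-derivation transformation).} Each PC refutation of $\Cc^+_j$ provides an identity of the form $1=\sum_\ell h_\ell p_\ell + \sum_{i\in S_j} g_i\, x_i$ of degree $O(d)$. Multiplying through by $\tau^-(S_j)=\alpha_j\prod_{i\in S_j}(x_i-1)$ and using $x_i(x_i-1)\equiv 0$ modulo the Boolean axioms, the terms $g_i x_i \tau^-(S_j)$ collapse into the ideal generated by $\{x_i^2-x_i\}$, so we obtain a PC derivation $\tau^-(S_j)\equiv \sum_\ell \tilde h_\ell p_\ell\pmod{\I_\Cc}$ of degree $O(d)$ and size $n^{O(d)}$. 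Combining the two identities for $j=1,2$ with the appropriate signs yields a PC derivation of $f$ itself within the same degree and size bounds. Running this procedure on all $n^{O(d)}$ candidates produces every element of $\mathcal{G}_d$.

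The main obstacle is Step 4: converting an efficient refutation of the augmented system into a derivation of $f$ whose degree stays $O(d)$. This is where the special form of the extra axioms matters—because they are literally variables, the rescaling by $\tau^-(S_j)$ adds only the degree of $f$, and the Boolean axioms keep all intermediate polynomials multilinear, preventing any size or degree blow-up. The reduction of the general finite-domain case to this Boolean case, via the bijective binary encoding outlined in the roadmap preceding the lemma, is then carried out separately and preserves both the semilattice structure and the $O(d)$-degree of PC proofs.
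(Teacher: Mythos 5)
Your overall skeleton --- the structural characterization from \cite{Mastrolilli21TALG}, refuting augmented systems via Horn-style resolution simulated in \PC, and multiplying the refutation through by a term so that the added unary axioms are absorbed into the Boolean ideal --- matches the paper's strategy. But Steps 2--4 contain a genuine gap in how the non-vanishing assignments are encoded and how $f$ is reassembled. Fixing only the variables of one term (your two systems $\Cc^+_1,\Cc^+_2$) does not force $f\neq 0$: on the variety the other term can exactly cancel the first, so $\Cc^+_j$ need not be unsatisfiable even when $f\in\I_\Cc$. (Take $f=(x_1-1)-(x_2-1)$ with $x_1=x_2$ on every solution: adding $x_1=0$ leaves a nonempty variety, and reducedness of the basis does not rule this out.) Symptomatically, if your Step 4 went through it would place each term $\tau^-(S_j)$ \emph{separately} in the ideal, which is false in general --- $x_1-x_2\in\I_\Cc$ does not give $x_1\in\I_\Cc$.

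The correct augmented systems must fix all variables of one term so that it is nonzero \emph{and} one variable of the other term so that that term is zero; this is the paper's notion of a \emph{minimal} non-vanishing partial assignment, and it yields $|P|+|Q|$ unsatisfiable systems rather than two. The refutation-times-term trick then derives the boundary polynomials $p\cdot(x_j-1)$ for $j\in Q$ and $q\cdot(x_i-1)$ for $i\in P$ (not the bare terms), and $f=p-q$ is recovered from these by an explicit telescoping identity (Lemma~\ref{th:f_sos}), with a separate treatment of the common monomial factor $\gcd(p,q)$, which your plan also omits. A cosmetic point: for $\Min$ the reduced basis has \emph{positive} 2-term structure ($\T^+$), negative for $\Max$; you have the convention reversed, which propagates into your choice of zero/one assignments.
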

%%%%%%%%%%%%%%%%%%%%%%%%%%
\begin{proof}
We know from \cite{Mastrolilli21TALG} that the reduced \GB basis $\mathcal{G}$ has a positive 2-term structure (see \cref{def:2terms}). \footnote{Negative 2-term structure if $\Max\in\Pol(\Gamma)$.} Let $P,Q\subseteq [n]$.  Consider the following positive 2-term polynomial~$f$:
\begin{equation*}
f = p +\alpha q, \;
\text{where } p = \prod_{i\in P} x_i,\; q = \prod_{j\in Q} x_j \text{ and } \alpha \in \{0,\pm 1\}.
\end{equation*}
Assume $\alpha = -1$. The other cases are similar. The claim follows by showing that if $f$ belongs to $\mathcal{G}_d$ then there is a degree-$O(d)$ $\PC$ proof of $f=0$ with $n^{O(d)}$ size. 

Let $X_f$ be the set of variables appearing in $f$. Consider a subset $Y\subseteq X_f$ and a mapping $\phi: Y\rightarrow \{0,1\}$. We say that $(Y,\phi)$ is a \emph{non-vanishing partial assignment} of $f$ if there exists no assignment of the variables in $X_f\setminus Y$ that makes $f$ equal to zero while {$\phi(x_i)$ is assigned to $x_i$}, for $i\in Y$; moreover, $(Y,\phi)$ is \emph{minimal} with respect to set inclusion if by removing any variable $x_j$ from $Y$ there is an assignment of the variables in $X_f\setminus (Y\setminus\{x_j\})$ that makes $f$ equal to zero while {$\phi(x_i)$ is assigned to $x_i$} for $i\in Y\setminus\{x_j\}$.

A minimal non-vanishing partial assignment {$\psi$} of $f$ implies that either $p=1$ and $q=0$, or $q=1$ and $p=0$. {In the former case, $\psi:P\cup\{j\}\rightarrow \{0,1\}$ such that $\psi(x_i)=1$ for every $i\in P$, and $\psi(x_j)=0$ for some $j \in Q$ is such an assignment.
In the latter, $\psi:Q\cup\{i\}\rightarrow \{0,1\}$ is of the form $\psi(x_j)=1$ for every $j\in Q$, and $\psi(x_i)=0$ for some $i \in P$ is such an assignment.}

The claim of \cref{th:Min} follows by \cref{th:MiNVA_sos} and \cref{th:f_sos}.
\end{proof}
\begin{lemma}\label{th:MiNVA_sos}
If $f=p-q$ belongs to $\mathcal{G}_d$ then the following polynomials 
%
%\begin{equation}\label{MINVA1}
%\sout{p_i = p(x_i-1)\quad \forall i\in Q, \quad
%q_j = q(x_j-1)\quad \forall j\in P}
%\end{equation}
\begin{equation}\label{MINVA1}
    {p_j = p(x_j-1) \quad \forall j \in Q, \quad q_i = q(x_i-1) \quad \forall i \in P}
\end{equation}
%\begin{align}
%p_i &= p(x_i-1)\qquad \text{for every } i\in Q, \label{MINVA1}\\
%q_j &= q(x_j-1)\qquad \text{for every } j\in P, \label{MINVA2}
%\end{align}
%
belong to the combinatorial ideal $\I_\Cc$ and there is a $O(d)$-PC proof of this fact.
\end{lemma}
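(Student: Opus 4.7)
The set-theoretic claim that $p_j, q_i \in \I_\Cc$ follows immediately from $f \in \I_\Cc$ together with the Boolean axioms. If $\alpha \in Sol(\Cc)$ satisfied $p_j(\alpha) \neq 0$, then $p(\alpha)=1$ and $\alpha_j=0$; but $f(\alpha)=0$ forces $q(\alpha)=p(\alpha)=1$, which (since $j \in Q$) requires $\alpha_j=1$, a contradiction. The argument for $q_i$ is symmetric. I would open the proof with this short verification, so the rest of the work can focus on producing the PC derivation.

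The substantive part is constructing an $O(d)$-degree PC derivation of $p_j$ (the case of $q_i$ being symmetric). My plan is to take the refutation route anticipated in the outline of \cref{sect:semilattice_structure}: form the augmented polynomial system
\[
\Sigma_j \;:=\; \{\text{generators of } \I_\Cc\} \;\cup\; \{\,x_i - 1 : i \in P\,\} \;\cup\; \{\,x_j\,\},
\]
whose variety is empty by the set-theoretic argument above. A PC refutation of $\Sigma_j$ of degree $O(d)$ and size $n^{O(d)}$ then lifts to the desired derivation: writing the refutation identity $1 = \sum_\ell h_\ell g_\ell + \sum_{i \in P} h'_i(x_i - 1) + h'' x_j$ and multiplying through by $p\,(x_j-1)$, every auxiliary term collapses modulo Boolean axioms---the factor $p(x_i-1)$ is $0$ modulo $x_i^2-x_i$ because $x_i$ already appears in $p$, and $x_j(x_j-1) = x_j^2 - x_j \in \I_\Cc$---leaving a PC derivation of $p(x_j-1) = 0$ from $\I_\Cc$ alone, of degree at most the refutation degree plus $|P|+1 = O(d)$.

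Building the bounded-degree refutation of $\Sigma_j$ is where the $\Min$-closure hypothesis is used: componentwise $\Min$ of any hypothetical zero of $\Sigma_j$ with an actual CSP solution yields a solution with $x_i = 1$ for all $i \in P$ and $x_j = 0$, contradicting $f = p - q \in \I_\Cc$. To simulate this inside PC, I would use propagation restricted to the constant-size set $P \cup \{j\}$ (recalling $|P|+|Q| \leq 2d$). Since $\Gamma$ is finite, each relation of $\Gamma$ contributes a constant number of 2-term polynomials in the sense of \cref{def:2terms} vanishing on its projection onto these variables, and each such polynomial has a constant-size, constant-degree PC derivation from the CSP constraints in which it appears. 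Iteratively chaining these local derivations---using that the projected solution set is still $\Min$-closed---drives $\Sigma_j$ to $1 = 0$ in degree $O(d)$ and size $n^{O(d)}$. The main obstacle I anticipate is precisely this simulation: $\Min$-closure makes the infeasibility of $\Sigma_j$ combinatorially transparent, but PC has only linear-combination and variable-multiplication rules, with no native case analysis, so encoding the propagation requires careful bookkeeping; the finiteness of $\Gamma$ and the bound $\deg f \leq d$ are both essential for keeping the degree and size under control, and form the technical heart of the proof.
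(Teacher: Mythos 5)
Your overall skeleton matches the paper's: verify set-theoretically that $p_j$ and $q_i$ vanish on $Sol(\Cc)$, augment the system with the minimal non-vanishing partial assignment $\{x_i - 1 : i \in P\} \cup \{x_j\}$, refute the augmented (infeasible) system in \PC, and lift the refutation by multiplying through by $p_j$ so that the added axioms collapse into multiples of the Boolean axioms $x_i^2-x_i$ and $x_j^2-x_j$. The lifting step is exactly the paper's argument and is sound (the paper performs it line by line on the \PC\ derivation rather than on a static Nullstellensatz identity, but both work).

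The gap is in how you build the bounded-degree refutation of $\Sigma_j$, which you yourself identify as the technical heart. You propose ``propagation restricted to the constant-size set $P \cup \{j\}$'', using 2-term polynomials vanishing on the projections of individual constraints onto those variables. This cannot work in general: the infeasibility of $\Sigma_j$ is witnessed by an implication chain that may pass through arbitrarily many variables outside $P \cup \{j\}$ (for instance, constraints encoding $x_{i} \Rightarrow x_{k_1} \Rightarrow \cdots \Rightarrow x_{k_m} \Rightarrow x_j$ with $i \in P$ and $m$ growing with $n$; each individual constraint projects onto $P \cup \{j\}$ as the full Boolean square, so those projections carry no information). The paper's argument is different: by $\Min$-closure the augmented instance is an infeasible (dual-)Horn system over \emph{all} $n$ variables, and such systems admit \PC\ refutations of degree bounded by the (constant, since $\Gamma$ is finite) degree of the clause polynomials via unit propagation (\cref{sect:ref_deg}); it is this global, unit-propagation refutation that gets multiplied by $p_j$. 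Relatedly, your justification of infeasibility --- taking the componentwise $\Min$ of a hypothetical zero of $\Sigma_j$ with an actual solution --- is off, since that operation need not preserve $x_i=1$ for $i\in P$; infeasibility follows directly from $f = p - q \in \I_\Cc$, because any zero of $\Sigma_j$ would give $p=1$ and $q=0$, hence $f=1\neq 0$. $\Min$-closure is needed for the \emph{efficiency} of the refutation (the Horn structure), not for its existence.
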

%%%%%%%%%%%%%%%%%%%%%%%%%%%%%%%%%%%%%%%%%%%%%%%%%%%%%%%%%%%%%%%%%%%%%%%%%%%%%%%%%%%%%%%%%%%%%%%%%%
\begin{proof}
Note that if the instance $\Cc$ has no solution, i.e. $\Variety{\I_\Cc}=\emptyset$, then the claim is vacuously true. If $P=\emptyset$ or $Q=\emptyset$ then again the claim is vacuously true. So in the following we will assume, w.l.o.g., that $\Variety{\I_\Cc}$, $P$ and $Q$ are not empty sets.

Since $f=p-q$ belongs to $\mathcal{G}_d$, it follows that the polynomials in \eqref{MINVA1} must vanish at every solution from $\Variety{\I_\Cc}$. In fact, if there {was} a solution $s$ from $\Variety{\I_\Cc}$ that would make at least one of the polynomials in \eqref{MINVA1} nonzero, then $f(s)\not=0$ contradicting our hypothesis.

Now consider any minimal non-vanishing partial assignment $\phi_j:P\cup\{j\}\rightarrow \{0,1\}$ such that
\begin{equation}\label{eq:MINVA_SOL}
    \phi_j(x_i)=1 \text{ for } i\in P\text{, and  } {\phi_j(x_j) = 0} \text{ for some }j \in Q. 
\end{equation}
Note that the argument works symmetrically if we exchange $P$ with $Q$. If we set variables {$x_j$} according to {$\phi_j(x_k)$} for every $k \in P \cup \{j\}$, then the set of feasible solutions becomes empty, since every feasible solution makes $f$ vanishing by assumption. It follows that there is no feasible solution that satisfies \eqref{eq:MINVA_SOL}. This new CSP instance, i.e., $\Cc$ augmented with the polynomials $x_i - 1 = 0$ for $i \in P$ and $x_j = 0$ for some $j \in Q$ arising from \cref{eq:MINVA_SOL}, is unsatisfiable.
In particular, $p_j$ together with the assignments in \cref{eq:MINVA_SOL} can be interpreted as an infeasible Horn formula. It is well known that Horn clauses admit an efficient refutation by resolution, and thus a degree-$d$ $\PC$ refutation (see \cref{sect:ref_deg} and \cite{FlemingKothariPitassi19}). Suppose the refutation is given by the sequence of polynomials $(r_1=0,\dots,r_L=0)$ of degree at most $d$ where $r_L=1$. Multiplying each polynomial in this sequence by $p_j$, we can prove that the augmented system has a $\PC$ proof of $p_j$ of degree at most $2d$. Interestingly, the same proof is also valid in the original system: if some $r_k$ is an assignment corresponding to \cref{eq:MINVA_SOL} i.e., $r_k = x_k - \phi_j(x_k)$, then $p_j r_k = (x_k - \phi_j(x_k))p_j$. If $k = i \in P$ then $p_jr_k=(x_i-1)p_j$ which is a multiple of the domain polynomial $(x_i-1)x_i$. If $k = j \in Q$ then $p_jr_k=(x_j)p_j$ which is a multiple of the domain polynomial $x_j(x_j-1)$.
A similar proof can be obtained for $q_i$ and the claim follows.
\end{proof}
%%%%%%%%%%%%%%%%%%%%%%%%%%%%%%%%%%%%%%%%%%%%%%%%%%%%%%%%%%%%%%%%%%%%%%%%%%%%%%%%%%%%%%%%%%%%%%

%%%%%%%%%%%%%%%%%%%%%%%%%%%%%%
\begin{lemma}\label{th:f_sos}
%If $p_i\in\I_\Cc$ and $q_j\in\I_\Cc$, for all $i\in Q\setminus P$ and $j\in P\setminus Q$, and they admit $2d$-SoS proofs, then 
If $f=p-q$ belongs to $\mathcal{G}_d$, then $f=0$ admits a degree-$O(d)$ $\PC$ proof.
\end{lemma}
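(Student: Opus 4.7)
The plan is to use Lemma~\ref{th:MiNVA_sos} as the main black box. From that lemma, for each $j \in Q$ we have a degree-$O(d)$ $\PC$ derivation of $p_j = p(x_j - 1) = 0$, and for each $i \in P$ a degree-$O(d)$ $\PC$ derivation of $q_i = q(x_i - 1) = 0$. I will combine these axioms via a telescoping identity to derive $p - pq = 0$ and $q - pq = 0$, and then subtract to conclude $f = p - q = 0$.

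Concretely, enumerate $Q = \{j_1, \ldots, j_k\}$ and use the elementary telescoping identity
\begin{equation*}
1 - x_{j_1} x_{j_2} \cdots x_{j_k} \;=\; \sum_{l=1}^{k} x_{j_1} \cdots x_{j_{l-1}} (1 - x_{j_l}),
\end{equation*}
which, after multiplying by $p$, becomes the polynomial identity
\begin{equation*}
p - p q \;=\; -\sum_{l=1}^{k} x_{j_1} \cdots x_{j_{l-1}} \cdot p(x_{j_l} - 1) \;=\; -\sum_{l=1}^{k} x_{j_1} \cdots x_{j_{l-1}} \cdot p_{j_l}.
\end{equation*}
In $\PC$, starting from each derivation of $p_{j_l} = 0$, I would apply the multiplication-by-a-variable rule $l - 1$ times (once for each of $x_{j_1}, \ldots, x_{j_{l-1}}$) and then combine the resulting equations linearly. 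This yields a $\PC$ derivation of $p - p q = 0$. Symmetrically, enumerating $P = \{i_1, \ldots, i_{k'}\}$ and applying the same argument with roles of $p$ and $q$ swapped gives a $\PC$ derivation of $q - p q = 0$. Subtracting the two derivations via the linear combination rule yields $p - q = f$, and hence a $\PC$ derivation of $f = 0$.

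For the degree bookkeeping: since $f \in \mathcal{G}_d$, both monomials satisfy $\deg(p), \deg(q) \leq d$, so $|P|, |Q| \leq d$. Each intermediate polynomial $x_{j_1} \cdots x_{j_{l-1}} \cdot p_{j_l}$ has degree $(l-1) + \deg(p) + 1 \leq |P| + |Q| \leq 2d$, and analogously for the other telescoping sum, so the derivation of $p - pq = 0$ (resp.\ $q - pq = 0$) stays within degree $2d$. Adding this to the $O(d)$ degree already incurred in deriving the $p_j$ and $q_i$ via Lemma~\ref{th:MiNVA_sos} keeps the total degree in $O(d)$, and the number of axioms combined is at most $|P| + |Q| = O(d)$, keeping the size $n^{O(d)}$.

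The main obstacle is not conceptual but lies in making sure the telescoping is carried out cleanly when $P$ or $Q$ is empty (edge cases $p = 1$ or $q = 1$), and in confirming that multiplying by the prefix monomials $x_{j_1}\cdots x_{j_{l-1}}$ via the single-variable multiplication rule of $\PC$ does not break the degree budget. Both are handled by the bound $|P| + |Q| \leq 2d$; the remaining cases $\alpha \in \{0, +1\}$ mentioned in the proof of Lemma~\ref{th:Min} are treated analogously, with the same telescoping identity, so the argument extends uniformly.
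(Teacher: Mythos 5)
Your proposal is correct and follows essentially the same route as the paper's proof: invoke Lemma~\ref{th:MiNVA_sos} to obtain $\PC$ derivations of the polynomials $p_j$ and $q_i$, then combine them through a telescoping identity to derive $p - pq$ and $q - pq$, and subtract. The only cosmetic difference is that the paper writes the telescoping sum with suffix monomials $x_{j_{l+1}}\cdots x_{j_{|Q|}}$ (and explicitly separates the coprime case from the case $f = m(p-q)$ with a common factor $m$), whereas you use prefix monomials and handle the general case uniformly; the degree accounting via $|P|+|Q| \le 2d$ matches.
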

\begin{proof}
    We first consider the case of $f=p-q$ where $\gcd(p,q)=1$, that is, $p$ and $q$ have no variables in common. Without loss of generality, we have
\begin{equation*}
    p = x_{i_1}x_{i_2}\cdots x_{i_{|P|}} \textrm{ and } q = x_{j_1}x_{j_2}\cdots x_{j_{|Q|}}.
\end{equation*}
%where $i_1>i_2>\cdots >i_{|P|}$ and $j_1>j_2>\cdots >j_{|Q|}$. \nota{Does this order even matter?} 
Let 
$p_{j_k} = p(x_{j_k}-1) \;\;\forall k\in [|Q|] \textrm{ and } q_{i_k} = q(x_{i_k}-1) \;\;\forall k\in [|P|].$
Then,
\begin{equation*}
    - p_{j_{|Q|}} - p_{j_{|Q|-1}}(x_{j_{|Q|}}) - p_{j_{|Q|-2}}(x_{j_{|Q|-1}})(x_{j_{|Q|}}) - \cdots - p_{j_{1}}(x_{j_2})\cdots(x_{j_{|Q|-1}})(x_{j_{|Q|}}) = p - pq.
\end{equation*}
Similarly, 
\begin{equation*}
    q_{i_{|P|}} + q_{i_{|P|-1}}(x_{i_{|P|}}) + q_{i_{|P|-2}}(x_{i_{|P|-1}})(x_{i_{|P|}}) + \cdots + q_{i_{1}}(x_{i_2})\cdots(x_{i_{|P|-1}})(x_{i_{|P|}}) = -q + pq.
\end{equation*}
Adding the two equations above, for particular polynomials $h_{j_k}$, $h_{i_k}$  we have
\begin{equation*}
    p - q = \sum_{k\in [|Q|]} h_{j_{k}}p_{j_{k}} + \sum_{k \in [|P|]} h_{i_{k}}q_{i_{k}}.
\end{equation*}
Note that, since $\deg(f) \leq d$, then $|P|,|Q| \leq d+1$. Therefore, $p-q$ admits a degree-$O(d)$ \PC\ proof.

On the other hand, suppose $f=m(p-q)$ where $\gcd(p,q)=1$, and $m$ is some multilinear monomial (in the previous case, $m=1$). Continuing the notations for $p$ and $q$, the new minimal non-vanishing partial assignments imply
\begin{equation*}
    p_{j_k} = mp(x_{j_k}-1) \;\;\forall k\in [|Q|] \textrm{ and } q_{i_k} = mq(x_{i_k}-1) \;\;\forall k\in [|P|].
\end{equation*}
Then we have
\begin{equation*}
    m(p - q) = \sum_{k\in [|Q|]} h_{j_{k}}p_{j_{k}} + \sum_{k \in [|P|]} h_{i_{k}}q_{i_{k}},
\end{equation*}
which proves that $m(p-q)$ admits a degree-$O(d)$ \PC\ proof.
\end{proof}
%%%%%%%%%%%%%%%%%%%%%%%%%%%%%%
%%%%%%%%%%%%%%
By symmetric arguments we obtain the following.
\begin{lemma}\label{th:Max}
If $\Max\in\Pol(\Gamma)$ then $\PC$ can calculate the reduced truncated basis \GB $\mathcal{G}_d$ in $n^{O(d)}$ time (where $n$ is the number of variables), for any degree $d\in \mathbb{N}$.
\end{lemma}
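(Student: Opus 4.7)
The plan is to mirror the proof of Lemma~\ref{th:Min} step by step, with positive 2-term polynomials replaced by negative ones. By \cite{Mastrolilli21TALG}, when $\Max\in\Pol(\Gamma)$ the reduced \GB basis $\mathcal{G}$ of $I_\Cc$ has the negative 2-term structure from Definition~\ref{def:2terms}, so a typical element $f\in\mathcal{G}_d$ takes the form $f = m(p+\alpha q)$ where $p=\prod_{i\in P}(x_i-1)$, $q=\prod_{j\in Q}(x_j-1)$, $m$ is the multilinear monomial supported on the common variables of the two negative terms, and $\alpha\in\{0,\pm1\}$. The underlying duality is the affine substitution $x_i \leftrightarrow 1-x_i$, which interchanges $\Max$- with $\Min$-closure and maps negative terms to positive ones (up to sign); this is what makes the two arguments genuinely symmetric. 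A direct mimicking is preferable to applying this substitution at the level of polynomials, since the substitution does not preserve the \grlex\ leading monomials and would therefore not produce the reduced basis in $x$ directly.

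Focusing on the representative case $\alpha=-1$, the first step is to identify the minimal non-vanishing partial assignments of $f$. A negative term $\prod_{i\in P}(x_i-1)$ is nonzero at a Boolean point precisely when every $x_i$ with $i\in P$ takes value $0$. Hence such an assignment has the form $\phi(x_i)=0$ for all $i\in P$ together with $\phi(x_j)=1$ for some single $j\in Q$, or the symmetric variant with $P$ and $Q$ swapped. The auxiliary polynomials then become $\tilde p_j := p\cdot x_j$ for $j\in Q$ and $\tilde q_i := q\cdot x_i$ for $i\in P$; by the same case analysis as in Lemma~\ref{th:MiNVA_sos}, each vanishes on $\Variety{I_\Cc}$. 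The CSP instance obtained by forcing a minimal non-vanishing assignment is infeasible; moreover, since $\Gamma$ is $\Max$-closed, the resulting polynomial system encodes a dual-Horn formula, which admits constant-width resolution refutations and hence degree-$O(d)$ \PC\ refutations of size $n^{O(d)}$. Multiplying this refutation by $\tilde p_j$ (resp.\ $\tilde q_i$) and absorbing each instantiated literal using the domain polynomial $x_k(x_k-1)$, exactly as in Lemma~\ref{th:MiNVA_sos}, yields the desired \PC\ derivations of the auxiliary polynomials.

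Finally, I would combine these derivations through a telescoping identity analogous to that of Lemma~\ref{th:f_sos}. The natural identity is
\begin{equation*}
    p \;=\; \sum_{k} \tilde p_{j_k} \prod_{l<k}(1-x_{j_l}) \;+\; (-1)^{|Q|}\, p q,
\end{equation*}
which follows from the standard expansion of $1-\prod_l(1-x_{j_l})$, together with the symmetric counterpart for $q$. Subtracting the two identities and multiplying by $m$ expresses $f$ as a combination of the $\tilde p_{j_k}$ and $\tilde q_{i_k}$ with coefficients of degree $O(d)$. The main obstacle I anticipate is the sign bookkeeping introduced by the factors $(x_i-1)$: when $|P|$ and $|Q|$ have opposite parity, the telescoping leaves a residual proportional to $pq$ that does not cancel. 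In that case one argues separately that $p$ and $q$ each already vanish on $\Variety{I_\Cc}$ (since $p(v)\neq 0$ would force $q(v)$ to take a parity-forbidden value), reducing the derivation of $f$ to that of two single-term vanishing polynomials handled by the same \PC-simulates-dual-Horn argument. Throughout, the degree stays $O(d)$ and the total \PC\ size stays $n^{O(d)}$, as required.
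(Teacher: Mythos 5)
Your proof is correct and follows exactly the route the paper intends: the paper dispatches Lemma~\ref{th:Max} with the single line ``by symmetric arguments,'' and your argument is precisely the symmetric mirror of Lemmas~\ref{th:Min}, \ref{th:MiNVA_sos} and \ref{th:f_sos} (negative terms, swapped $0/1$ roles in the non-vanishing assignments, dual-Horn refutations, and the telescoping identity for $1-\prod_l(1-x_{j_l})$). Your explicit handling of the parity mismatch between $|P|$ and $|Q|$, where the residual $pq$ term does not cancel and one must instead derive $p$ and $q$ separately, is a genuine detail the paper leaves implicit, and your resolution of it is sound.
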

\begin{corollary}\label{th:2semilattice}
    Let $\Gamma$ be a finite constraint language over $\{0, 1\}$ that is closed under  a 2-element semilattice operation polymorphism. Then $\PC$ can compute the reduced truncated \GB basis $\mathcal{G}_d$ in $n^{O(d)}$ time (where $n$ is the number of variables), for any degree $d\in \mathbb{N}$.
\end{corollary}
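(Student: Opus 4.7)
The plan is to reduce Corollary~\ref{th:2semilattice} directly to the already-proved cases of Lemmas~\ref{th:Min} and~\ref{th:Max} by classifying the possible semilattice operations on $\{0,1\}$. Recall that a semilattice operation is, by definition, a binary operation that is associative, commutative, and idempotent. On the two-element domain $\{0,1\}$ this classification is trivial: such an operation $\phi\colon\{0,1\}^2\to\{0,1\}$ must satisfy $\phi(0,0)=0$ and $\phi(1,1)=1$ by idempotence, and must satisfy $\phi(0,1)=\phi(1,0)$ by commutativity, so $\phi$ is completely determined by the single value $\phi(0,1)\in\{0,1\}$. Associativity can then be checked on the constant number of triples. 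This yields exactly two semilattice operations on $\{0,1\}$, namely $\Min$ (logical AND, obtained when $\phi(0,1)=0$) and $\Max$ (logical OR, obtained when $\phi(0,1)=1$).

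Having completed the classification, I would split into two cases according to which of the two operations is a polymorphism of $\Gamma$. If $\Min\in\Pol(\Gamma)$, then Lemma~\ref{th:Min} immediately gives a $\PC$ algorithm of running time $n^{O(d)}$ that computes the reduced $d$-truncated \GB basis $\mathcal{G}_d$ of the combinatorial ideal $\I_{\mathcal{C}}$ in \grlex order, for every $d\in\mathbb{N}$. Symmetrically, if $\Max\in\Pol(\Gamma)$, Lemma~\ref{th:Max} gives the same conclusion. Since by the classification above at least one of these two cases must apply, the corollary follows.

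There is no genuine obstacle here: the entire content of the corollary is packaged in Lemmas~\ref{th:Min} and~\ref{th:Max}, and the only thing needed is the (routine) observation that a 2-element semilattice is either $\Min$ or $\Max$. The one mild subtlety I would be careful about is making the identification between ``a semilattice operation on $\{0,1\}$'' and ``$\Min$ or $\Max$'' explicit, since the preceding text uses the abstract semilattice/partial-order language in the general-domain setting, whereas Lemmas~\ref{th:Min} and~\ref{th:Max} are phrased directly in terms of the Boolean operations $\Min$ and $\Max$. Once this identification is made, the proof is a one-line appeal to the appropriate lemma.
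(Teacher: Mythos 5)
Your proposal is correct and matches the paper's (implicit) argument exactly: the corollary is stated immediately after Lemmas~\ref{th:Min} and~\ref{th:Max} precisely because the only two semilattice operations on $\{0,1\}$ are $\Min$ and $\Max$, so one of the two lemmas applies. Your explicit classification of the 2-element semilattice operations is a reasonable way to make this identification precise.
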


%%%%%%%%%%%%%%%%%%%%%%%%%%%%%%%%%%%%%%%%%%%%%%%%%%%%%%%%%
\subsection{Generalizing to finite domain semilattice}
%%%%%%%%%%%%%%%%%%%%%%%%%%%%%%%%%%%%%%%%%%%%%%%%%%%%%%%%%
In the following, we generalize \cref{th:2semilattice} to constraint languages over finite domains that are closed under a semilattice polymorphism and obtain the proof of \cref{th:semilattice}. We begin by recalling the definition of semilattice operations. We then present the main arguments used to prove \cref{th:semilattice}, followed by their details.

%%%%%%%%%%%%%%%%%%%%%%%%%%%%%%%%%%%
\subsubsection{Binary encoding}\label{sect:binencoding}
%%%%%%%%%%%%%%%%%%%%%%%%%%%%%%%%%%%
{Let} $\mathcal{D} = (D,\psi)$ {be} a semilattice{, with} $\psi$ {being} a semilattice polymorphism (see \cref{def:polymorph}) of the constraint language $\Gamma$.
Then, it is known that $\mathcal{D}$ can be encoded in binary form in such a way that it is a subalgebra of $\mathcal{B}^k$ {for some $k \in \mathbb{N}$}, where $\mathcal{B} = (\{0, 1\}, \phi)$ is a 2-element semilattice \cite{DonaPapert1964}.
In the following we show how to encode the elements of $D$ in binary in a proper form such that (i) the just mentioned property \cite{DonaPapert1964} is satisfied, and (ii) it will allow us to recover proofs {over the finite domain $D$} from the Boolean domain, a property that is not guaranteed by the approach considered in \cite{BulatovRSTOC22}.
The encoding $\mu$ is very ``natural'' and it is described in the proof of the following lemma.% (see also \cite{lattice89}).
\begin{lemma}\label{th:bin_encoding}
    Let $\mathcal{D} = (D,\psi)$ be a finite semilattice where $\psi$ is a meet-semilattice (join-semilattice) operation. Then there is a mapping $\mu:D\rightarrow \{0,1\}^{\uval}$ such that $\Min$ ($\Max$) is a polymorphism of {the Boolean relation} $D^{01}=\{\mu(d_1),\ldots,\mu(d_{|D|})\}$ and $\mu:D\rightarrow D^{01}$ is bijective.
\end{lemma}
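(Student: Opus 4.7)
The plan is to encode each element of $D$ by the indicator of its principal downset (for meet-semilattices) or by the complement of its principal upset indicator (for join-semilattices), with one universally-constant coordinate dropped to bring the dimension down to exactly $|D|-1$.

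First I would focus on the meet-semilattice case, where $\psi$ is the meet operation $\wedge$. Recall that the induced partial order is $a \preceq b$ iff $a \wedge b = a$, making $(D, \preceq)$ a finite meet-semilattice with a unique bottom element $\bot := \bigwedge_{d \in D} d$. Define $\mu: D \to \{0,1\}^{D \setminus \{\bot\}}$ by $\mu(d)_e := 1$ if $e \preceq d$ and $0$ otherwise. The key identity to verify is
\[ \mu(a \wedge b)_e \;=\; [e \preceq a \wedge b] \;=\; [e \preceq a]\cdot[e \preceq b] \;=\; \min(\mu(a)_e, \mu(b)_e), \]
which uses the defining property of a meet-semilattice that $e$ is a common lower bound of $\{a,b\}$ iff $e \preceq a \wedge b$. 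This shows $\mu(a \wedge b) = \Min(\mu(a), \mu(b))$ coordinate-wise, so $\Min$ preserves $\mu(D) = D^{01}$ and is therefore a polymorphism of the Boolean relation $D^{01}$.

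Injectivity of $\mu$ is immediate, since distinct elements of a poset have distinct principal downsets ($d$ is the unique maximum of $\downarrow d$); dropping the $\bot$-coordinate, which is always $1$ because $\bot \preceq d$ for every $d$, does not affect injectivity. Consequently $\mu: D \to D^{01}$ is a bijection, with $|D^{01}| = |D|$ and each image vector living in $\{0,1\}^{|D|-1}$, as required.

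The join-semilattice case is handled by dualizing: using the top element $\top := \bigvee_{d \in D} d$, define $\mu(d)_e := 1$ if $d \not\preceq e$ and $0$ otherwise, for $e \in D \setminus \{\top\}$. A symmetric calculation, relying on $a \vee b \preceq e \Leftrightarrow a \preceq e \text{ and } b \preceq e$, yields $\mu(a \vee b) = \Max(\mu(a), \mu(b))$, so $\Max$ is a polymorphism of $D^{01}$; injectivity follows analogously from the uniqueness of the principal upset representation, and the $\top$-coordinate is uniformly $0$ so can be safely dropped. The main subtlety I anticipate is selecting the correct encoding (principal downset vs.\ complement of principal upset) so that the target Boolean operation $\Min$ or $\Max$ matches the semilattice operation $\wedge$ or $\vee$; once that match is fixed, the remaining verifications reduce to the defining correspondence between lower bounds and the meet (resp.\ upper bounds and the join), and no further obstacle arises.
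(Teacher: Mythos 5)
Your proposal is correct and follows essentially the same route as the paper: encode each element by the indicator vector of its principal downset, drop the coordinate of the bottom element (which is constantly $1$), get bijectivity from the distinctness of principal downsets, and get closure under $\Min$ from the identity $[e \preceq a \wedge b] = [e \preceq a]\cdot[e \preceq b]$. Your explicit treatment of the join case via complemented upset indicators is a clean spelling-out of the dualization the paper only sketches, and your coordinate-wise justification of closure is in fact tidier than the paper's.
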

\begin{proof}
    Assume that $\mathcal{D}$ is a meet-semilattice. Note that every join-semilattice is a meet-semilattice in the inverse order and vice versa, so the construction that we describe below can be easily adapted for join-semilattice. Let $D=\{d_1,\ldots,d_{|D|}\}$. Let us start by encoding every element $d_i$ of $D$ by using $|D|$ bits $(b_{i1},\ldots,b_{i{|D|}})$ such that the $j$-th bit $b_{ij}$ is 1 if and only if $d_{j}\preceq d_i$ (recall any semilattice operation, meet or join, induces a partial order $\preceq$), and $0$ otherwise. An easy argument will show that we can remove one of the $|D|$ bits of the proposed encoding and still retain the same properties.  
    
    %The above binary encoding $\mu$ maps every element of $D$ to one element from $\{0,1\}^{|D|}$. Let $D^{01}=\{\mu(d_1),\ldots,\mu(d_{|D|})\}$. It is easy to observe that each element in $D^{01}$ is mapped to from at most one element of the domain, namely $\mu:D\rightarrow D^{01}$ is a bijection. Now, to calculate the meet of $d_i$ and $d_j$, it is sufficient to calculate (bitwise) the \emph{and} (or equivalently the $\Min)$ of $(b_{i1},\ldots,b_{i{|D|}})$ and $(b_{j1},\ldots,b_{j{|D|}})$, namely $\mu(d_i \wedge d_j)= \mu(\psi(d_i, d_j))=\mu(d_i)\wedge \mu(d_j)$. The latter implies that $D^{01}$ is closed with respect to the $\Min$ polymorphism. %\marginnote{$\wedge$ vs $\psi$}
    
    {We call he above binary encoding $\mu$. It maps every element of $D$ to one element from $\{0,1\}^{|D|}$. Let $D^{01}=\{\mu(d_1),\ldots,\mu(d_{|D|})\}$. It is easy to observe that each element in $D^{01}$ is mapped to from at most one element of the domain, namely $\mu:D\rightarrow D^{01}$ is a bijection. Now we observe that the Boolean relation $D^{01} = \{\mu(d_1), \dots, \mu(d_s)\}$ is closed under the (bitwise) \textit{and} (or equivalently, $\Min \in \Pol(D^{01})$). Indeed, let $\phi$ be the binary $\Min$ over the Boolean domain and consider $\mu(d_{i_1}) = (b_{i_1 1}, \dots, b_{i_1 |D|})$ and $\mu(d_{i_2}) = (b_{i_2 1}, \dots, b_{i_2 |D|})$ for some $i_1, i_2 \in [|D|]$. By applying component-wise the map $\phi$ to $\mu(d_{i_1})$ and $\mu(d_{i_2})$ we obtain a string $(c_1, \dots, c_{|D|})$. First note that there is some $\ell \in {|D|}$ such that $c_k = b_{i_1 k} = b_{i_2 k}$ for all $k \leq \ell$ and $c_{\ell} = 1$, while $c_k = 0$ for all $\ell < k \leq |D|$. Thus $(c_1, \dots, c_{|D|}) \in D^{01}$: indeed any string is in $D^{01}$ if it is composed by a contiguous subsequence of $(b_{i_1 1}, \dots, b_{i_1 |D|})$ (or of $(b_{i_2 1}, \dots, b_{i_2 |D|})$) starting from $b_{i_1 1}$ ($b_{i_2 1}$) such that the last element of the subsequence is a 1 followed by only zeros. Moreover, it is easy to see that $(c_1, \dots, c_{|D|}) = \mu(d_{i_1} \wedge d_{i_2})$.
    }

    Finally, recall that a meet-semilattice is a partially ordered set characterized by having a greatest lower bound GLB with respect to the induced partial order $\preceq$ (also simply called \textit{meet}). By the previous construction we see that the column headed by GLB contains only 1's, since every other element is greater than GLB. So a slightly more compact binary encoding is obtained by dropping the bit $b_{iGLB}$ for every $i$ without any loss.
\end{proof}

%\begin{corollary}\label{th:gamma01}
%    Let $\Gamma$ be a constraint language over $D$ that is closed with respect to a meet (join) semilattice operation. Let $\Gamma^{01}$ be the constraint language over $\{0,1\}$ that is obtained from $\Gamma$ by replacing the values from $D$ appearing in the relations from $\Gamma$ with their corresponding binary encoding, as given by the mapping $\mu$ in \cref{th:bin_encoding}. Then $\Min$ ($\Max$) is a polymorphism of $\Gamma^{01}$.
%\end{corollary}

%Let $\psi$ be a semilattice polymorphism of $\Gamma$. Then $\mathcal{D} = (D,\psi)$ is a semilattice, and 
 
%%%%%%%%%%%%%%%%%%%%%%%%%%%%%%%%%%%%%%%%%%%%%%%%%%%%%%%%%%%%%%%%%%%%%%%
\subsubsection[Reducing CSP(Gamma) over a finite domain to the Boolean domain]{Reducing $\CSP(\Gamma)$ over a finite domain to the Boolean domain}\label{sect:mapping2bool}
%%%%%%%%%%%%%%%%%%%%%%%%%%%%%%%%%%%%%%%%%%%%%%%%%%%%%%%%%%%%%%%%%%%%%%%
By \cref{th:bin_encoding} we obtain the following.%
%we can encode $\mathcal{D}= (D,\psi)$ as a subalgebra of the direct power $\mathcal{B}^{\uval}$ , where
%$\mathcal{B} = (\{0, 1\}, \phi)$ is a 2-element semilattice and there is a bijection between the solution sets.

%%%%%%%%%%%%%%%%%%%%%%%%%%%%%%%%%%%
\begin{lemma}\label{th:reduction_csp}
Let $\Gamma$ be a constraint language over $D$ that is closed with respect to a meet (join) semilattice operation. Let $\Gamma^{01}$ be the constraint language over $\{0,1\}$ that is obtained from $\Gamma$ by replacing the values from $D$ appearing in the relations from $\Gamma$ with their corresponding binary encoding, as given by the mapping $\mu$ in \cref{th:bin_encoding}. Then 
\begin{itemize}
    \item $\Min$ ($\Max$) is a polymorphism of $\Gamma^{01}$.
    \item Any given instance $\Cc=(X,D,C)$ of $\CSP(\Gamma)$ is polynomial time reducible to an instance $\Cc^{01}=(Y,\{0,1\},C^{01})$ of $\CSP(\Gamma^{01})$. Moreover, the solution sets $Sol(\Cc)$ and $Sol(\Cc^{01})$ are in one-to-one correspondence.  
\end{itemize} 
\end{lemma}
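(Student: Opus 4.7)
The plan is to proceed in three steps: (i) define the binary-encoded constraint language $\Gamma^{01}$ precisely; (ii) verify that $\Min$ is a polymorphism of every relation in $\Gamma^{01}$; and (iii) exhibit the polynomial-time reduction together with the bijection between solution sets. Throughout, I will crucially use the key identity established inside the proof of \cref{th:bin_encoding}, namely $\mu(\psi(a,b)) = \Min(\mu(a),\mu(b))$ (coordinate-wise), which says that the encoding $\mu$ is a homomorphism from $(D,\psi)$ to $(\{0,1\}^{\uval},\Min)$.

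For step (i), given a $k$-ary relation $R \in \Gamma$, define $R^{01} \subseteq \{0,1\}^{k\uval}$ as the coordinate-wise image under $\mu$: $R^{01} := \{(\mu(a_1),\ldots,\mu(a_k)) : (a_1,\ldots,a_k) \in R\}$, viewed as a $k\uval$-tuple obtained by concatenating the $k$ blocks of $\uval$ bits. Set $\Gamma^{01} := \{R^{01} : R \in \Gamma\} \cup \{D^{01}\}$; the inclusion of $D^{01}$ is essential, as it will serve to enforce that each block of $\uval$ Boolean variables encodes a valid element of $D$. For step (ii), take two tuples $\mathbf{t}_1 = (\mu(a_1),\ldots,\mu(a_k))$ and $\mathbf{t}_2 = (\mu(b_1),\ldots,\mu(b_k))$ in $R^{01}$. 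Applying $\Min$ coordinate-wise and using the homomorphism identity yields
\begin{equation*}
\Min(\mathbf{t}_1,\mathbf{t}_2) = (\Min(\mu(a_1),\mu(b_1)),\ldots,\Min(\mu(a_k),\mu(b_k))) = (\mu(\psi(a_1,b_1)),\ldots,\mu(\psi(a_k,b_k))).
\end{equation*}
Since $\psi \in \Pol(\Gamma)$, the tuple $(\psi(a_1,b_1),\ldots,\psi(a_k,b_k))$ lies in $R$, so the Boolean result lies in $R^{01}$; closure of $D^{01}$ itself was already proved in \cref{th:bin_encoding}. The join-semilattice case is handled symmetrically by replacing $\Min$ with $\Max$.

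For step (iii), replace every original variable $x_i \in X$ with $\uval$ fresh Boolean variables $y_{i,1},\ldots,y_{i,\bits}$, producing a set $Y$ of size $n\uval$. Translate each constraint $R(x_{i_1},\ldots,x_{i_k}) \in C$ into $R^{01}(y_{i_1,1},\ldots,y_{i_1,\bits},\ldots,y_{i_k,1},\ldots,y_{i_k,\bits})$, and for every $i \in [n]$ add the constraint $D^{01}(y_{i,1},\ldots,y_{i,\bits})$; this gives $\Cc^{01}=(Y,\{0,1\},C^{01})$. The reduction is polynomial-time because $|D|$ is constant (each constraint is replaced by one of fixed arity and one $D^{01}$-constraint is added per variable). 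For the bijection, given $\sigma : X \to D$ satisfying $\Cc$, define $\sigma^{01}(y_{i,j}) := \mu(\sigma(x_i))_j$; by construction every $R^{01}$-constraint and every $D^{01}$-constraint is satisfied. Conversely, given $\tau : Y \to \{0,1\}$ satisfying $\Cc^{01}$, the $D^{01}$-constraints guarantee that each block $(\tau(y_{i,1}),\ldots,\tau(y_{i,\bits}))$ equals $\mu(d_i)$ for a unique $d_i \in D$ (uniqueness from the bijectivity of $\mu$ asserted in \cref{th:bin_encoding}); setting $\sigma(x_i) := d_i$ and inverting $\mu$ block-by-block makes every original $R$-constraint hold because $(\mu(d_{i_1}),\ldots,\mu(d_{i_k})) \in R^{01}$ by definition means $(d_{i_1},\ldots,d_{i_k}) \in R$. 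The two maps are mutually inverse.

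The main obstacle — and the reason $D^{01}$ must be explicitly included in $\Gamma^{01}$ — is ruling out spurious Boolean assignments to a block $(y_{i,1},\ldots,y_{i,\bits})$ that do not lie in the image of $\mu$; without this the bijection part of the claim would fail. Fortunately \cref{th:bin_encoding} already supplies both that $\mu$ is a bijection onto $D^{01}$ and that $D^{01}$ is closed under $\Min$, so no additional work is needed beyond carefully assembling the pieces. A minor bookkeeping point is that the polymorphism condition is preserved under the coordinate expansion used to encode $k$-ary relations into $k\uval$-ary Boolean relations, which is immediate since polymorphisms act coordinate-wise.
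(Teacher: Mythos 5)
Your proof is correct and follows essentially the same route as the paper's: encode each relation coordinate-wise via $\mu$, observe that $\Min$/$\Max$ is a polymorphism because $\mu$ is a semilattice homomorphism, and map each variable to a block of $|D|-1$ Boolean variables. You are in fact more careful than the paper's terse argument on two points it leaves implicit — the explicit polymorphism verification via $\mu(\psi(a,b))=\Min(\mu(a),\mu(b))$, and the need to include the block constraints $D^{01}(y_{i,1},\ldots,y_{i,|D|-1})$ to rule out Boolean assignments outside the image of $\mu$ (the paper defers this to the $\T(Y_i)$ polynomials in the subsequent ideal-level reduction) — so no gap remains.
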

%%%%%%%%%%%%%%%%%%%%%%%%%%%%%%%%%%%%
\begin{proof}
The claim that $\Min$ ($\Max$) is a polymorphism of $\Gamma^{01}$ is an immediate consequence of \cref{th:bin_encoding}. The claimed polynomial time reduction is obtained by the following construction.
\begin{enumerate} 
    \item For every variable $x_i\in X$ introduce $\uval$ new binary variables $y_{i1},\ldots, y_{i\uval}$. Let $Y_i=y_{i1},\ldots, y_{i\uval}$ and $Y=\{Y_1,\ldots, Y_n\}$.
    \item Replace the values from $D$ appearing in the relations from $\Gamma$ as described in \cref{th:bin_encoding}. This reduces the constraint language $\Gamma$ to its corresponding binary encoded constraint language $\Gamma^{01}$. 
    Indeed, any $k$-ary relation $R$ on a domain $D$ becomes a $\uval k$ relation $R^{01}$ on domain $\{0,1\}$.
    \item Replace every variable $x_i$ with $Y_i$. Then every constraint $R(x_1,\ldots, x_k)$ reduces to $R^{01}(Y_1,\ldots, Y_k)$.
    This maps any given instance $\Cc=(X,D,C)$ of $\CSP(\Gamma)$ to an instance $\Cc^{01}=(Y,\{0,1\},C^{01})$ of $\CSP(\Gamma^{01})$, where $C^{01}$ is essentially the binary encoding representation of $C$. The solution sets $Sol(\Cc)$ and $Sol(\Cc^{01})$ are in one-to-one correspondence by construction. 
\end{enumerate}    
\end{proof}
%
%%%%%%%%%%%%%%%%%%%%%%%%%%%%%%%%%%%%%%%%%%%%%%%%%%%%%%%%%%%%%%%%%%%%%%%%%%
\subsubsection[Reducing IMPd(Gamma) over a finite domain to the Boolean domain]{Reducing $\IMP_{d}(\Gamma)$ over a finite domain to the Boolean domain} \label{sect:2Bool}
%%%%%%%%%%%%%%%%%%%%%%%%%%%%%%%%%%%%%%%%%%%%%%%%%%%%%%%%%%%%%%%%%%%%%%%%%%
%
In the reduction considered in \cref{th:reduction_csp}  every variable $x_i\in X$ is mapped to $\uval$ new binary variables $y_{i1},\ldots, y_{i\uval}$.
In the following we reduce $\IMP_{d}(\Gamma)$ to $\IMP_{\uval d}(\Gamma^{01})$. This reduction, along with its corresponding ``inversion'' (see \cref{sect:mappingBack}), will be used to prove \cref{th:semilattice}, as summarized in \cref{sect:semilattice_structure}.

%%%%%%%%%%%%%%%%%%%%%%%%%%%%%%%%%%%%%%%%%%%%%%%
\paragraph{The interpolating polynomial $\p$ in the Boolean domain.} 
%%%%%%%%%%%%%%%%%%%%%%%%%%%%%%%%%%%%%%%%%%%%%%%
By a straightforward generalization of Lagrange interpolating polynomials (see e.g. \cite{phillips2003interpolation}, \cite{Gasca2000-hx}), given $|D|$ distinct values  $\mu(d_1),\ldots,\mu(d_{|D|})\in \{0,1\}^{\uval}$ (see \cref{th:bin_encoding}) and corresponding values $d_1,\ldots, d_{|D|}$, there exists a polynomial $\p$ of degree at most $|D|-1$ that interpolates the data, i.e. $\p(\mu(d_i))=d_i$ for each $i=1,\ldots,|D|$.
%%%%%%%%%%%%%%%%%%%%%%%%%%%%%%%%%%%%%
\paragraph{A reduction to the Boolean domain.}
%%%%%%%%%%%%%%%%%%%%%%%%%%%%%%%%%%%%%
As in the proof of \cref{th:reduction_csp}, we want to map every variable $x_i\in X$ to a tuple of $\uval$ new binary variables $y_{i1},\ldots, y_{i\uval}$. Let $Y_i=y_{i1},\ldots, y_{i\uval}$ and $Y=\{Y_1,\ldots, Y_n\}$. \\
To guarantee that each tuple $Y_i$ assumes only values that correspond to valid encondings of elements in $D$, we consider the following ``low'' degree polynomial:
\begin{align*}
    \T(y_{1},\ldots, y_{\uval})=\prod_{v_1,\ldots,v_\uval\in D^{01}}(1-\prod_{j=1}^{\uval}(1-v_j+y_j))).
\end{align*}
Every $\T(Y_i)$ has degree $|D|\uval$. \\
For any given $\CSP(\Gamma)$-instance $\Cc=(X,D,C)$, the corresponding combinatorial ideal  (see \cref{def:combinatorial_ideal})  
\begin{align}
    \F&=\{f_{R_1}(x_{1_{R_1}}\ldots,x_{k_{R_1}}),\ldots,f_{R_\ell}(x_{1_{R_\ell}}\ldots,x_{k_{R_\ell}}),f_D(x_1),\ldots,f_D(x_n)\} \label{eq:F}\\
    \I_\Cc&=\langle \F\rangle
\end{align}
is mapped to
\begin{align}
 \F^{01}&=\{f_{R_1}(\p(Y_{1_{R_1}}),\ldots,\p(Y_{k_{R_1}})),\ldots,f_{R_\ell}(\p(Y_{1_{R_\ell}})\ldots,\p(Y_{k_{R_\ell}})),\nonumber\\ 
&\qquad \T(Y_1), \ldots,\T(Y_n),y_1^2-y_1,\ldots,y_{n\uval}^2-y_{n\uval}\} \label{eq:F01}\\ 
\I_\Cc^{01}&=\langle \F^{01}\rangle.
\end{align}
Note that 
\begin{itemize}
    \item The polynomial constraints $\{\T(Y_i)=0\mid \forall i\in [n]\}$ (along with $y_i^2-y_i=0$) forces each tuple $Y_i$ to take only the values from $D^{01}$.
    \item The polynomial constraint  $f_{R_j}\left(\p(Y_{1_{R_j}}),\ldots,\p(Y_{k_{R_j}})\right)=0$, for all $j\in [\ell]$, forces the tuples $Y_{i_{R_j}}$ to take only the values whose corresponding value (according to $\mu$, see \cref{th:bin_encoding}) in finite domain satisfy $f_{R_1}(x_{1_{R_1}}\ldots,x_{k_{R_1}})$.
    \item Let $u(x_1, \dots, x_n)$ and consider $u^{01}(\p(Y_1), \dots, \p(Y_n))$. Then
    \begin{align*}
        u(x) = 0 \iff u^{01}(Y) = 0 \iff (\p(Y_1), \dots, \p(Y_n)) \in \Variety{\I_\Cc}.
    \end{align*}
    Thus $\Variety{\I_\Cc}$ and $\Variety{\I_{\Cc^{01}}}$ are in one-to-one correspondence and $u \in \I_\Cc$ if and only if $u^{01} \in \I_{\Cc}^{01}$.
    \item If $\deg(u) = d$, then $\deg(u^{01}) \leq \uval d$.
\end{itemize}
Note that the set of satisfying assignments $Sol(\Cc)$ corresponds to the variety $\Variety{\I_\Cc}$ of $\I_\Cc$, i.e. $Sol(\Cc)=\Variety{\I_\Cc}$. 
\begin{corollary}\label{th:imp_red}
 $\IMP_d(\Gamma)$ is polynomial time reducible to $\IMP_{\uval d}(\Gamma^{01})$, where $\Gamma^{01}$ if a finite constraint language over $\{0, 1\}$ that is closed under  a 2-element semilattice operation polymorphism, and such that $\Variety{\I_\Cc}$ and $\Variety{\I_{\Cc^{01}}}$ are in one-to-one correspondence. 
\end{corollary}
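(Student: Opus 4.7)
The plan is to assemble the reduction from the ingredients already set up in the preceding paragraphs: given an instance $(u, \mathcal{C})$ of $\IMP_d(\Gamma)$, we produce the instance $(u^{01}, \mathcal{C}^{01})$ of $\IMP_{(|D|-1)d}(\Gamma^{01})$, where $\mathcal{C}^{01}$ is the binary-encoded CSP instance from \cref{th:reduction_csp} whose combinatorial ideal is $\I_{\mathcal{C}^{01}} = \langle \F^{01} \rangle$ as defined in \eqref{eq:F01}, and $u^{01}(Y_1,\ldots,Y_n) := u(\p(Y_1), \ldots, \p(Y_n))$ is obtained by substituting each variable $x_i$ with the interpolating polynomial $\p$ evaluated on the corresponding block of Boolean variables $Y_i$.

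First I would verify that this transformation is polynomial-time computable and meets the degree bound: since $\p$ has degree at most $|D|-1$ and $u$ has degree at most $d$, the composition $u^{01}$ has degree at most $(|D|-1)d$, and the bit complexity of all coefficients is polynomial in the input size because $|D|$ is constant and $\p$ has a fixed representation computable once and for all from $D$. By \cref{th:reduction_csp}, $\Gamma^{01}$ is closed under a $2$-element semilattice polymorphism ($\Min$ or $\Max$), as required in the statement.

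Next I would establish the equivalence $u \in \I_{\mathcal{C}}$ if and only if $u^{01} \in \I_{\mathcal{C}^{01}}$. The forward direction uses the observation noted in the bullets before the corollary: because $\mu : D \to D^{01}$ is bijective (\cref{th:bin_encoding}) and $\p \circ \mu = \mathrm{id}_D$, every satisfying assignment $Y \in \Variety{\I_{\mathcal{C}^{01}}}$ corresponds via $\p$ to an assignment $(\p(Y_1), \ldots, \p(Y_n)) \in \Variety{\I_{\mathcal{C}}}$. Conversely, any $Y \in \Variety{\I_{\mathcal{C}^{01}}}$ has each block $Y_i$ restricted to $D^{01}$ by the polynomial $\T(Y_i)$ together with the Boolean axioms $y_{ij}^2 - y_{ij}$, and the polynomials $f_{R_j}(\p(Y_{1_{R_j}}), \ldots, \p(Y_{k_{R_j}}))$ enforce that the decoded values satisfy all original constraints. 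Since both combinatorial ideals are vanishing ideals over finite varieties contained in a finite product of finite domains, they are radical (see \cite{Cox}), so ideal membership reduces to vanishing on the variety. The bijective correspondence between $\Variety{\I_{\mathcal{C}}}$ and $\Variety{\I_{\mathcal{C}^{01}}}$ then yields the desired equivalence.

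The main obstacle I anticipate is being careful about radicality: one must be sure that $\I_{\mathcal{C}^{01}}$ is radical despite the introduction of the high-degree polynomial $\T(Y_i)$, which must be verified to enforce exactly the set $D^{01}$ without multiplicity (this follows from its construction as a product of distinct root factors together with the Boolean axioms). Once this point is handled, the rest of the argument is a routine bookkeeping of the reduction—verifying polynomial time, the degree bound, and the bijection of varieties—and the corollary follows directly from \cref{th:reduction_csp} and the equivalence above.
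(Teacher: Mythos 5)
Your proposal is correct and follows essentially the same route as the paper, which states the corollary as a direct consequence of the construction of $\F^{01}$, the substitution $x_i \mapsto \p(Y_i)$, the degree bound $\deg(u^{01}) \leq \uval d$, and the bijection between $\Variety{\I_\Cc}$ and $\Variety{\I_{\Cc^{01}}}$. The radicality concern you flag is immediate: both ideals contain separable univariate domain polynomials (the Boolean axioms $y_{ij}^2-y_{ij}$ on one side, $f_D(x_i)$ on the other) for every variable, so they are radical regardless of the extra generators such as $\T(Y_i)$.
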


%%%%%%%%%%%%%%%%%%%%%%%%%%%%%%%%%%%%%%%%%%%%%%%%%%%%%%%%%%%%%%%%%%%%%%%
\subsubsection[Mapping the Boolean PC proof back to finite domain]{Mapping the Boolean $\PC$ proof back to finite domain}\label{sect:mappingBack}
%%%%%%%%%%%%%%%%%%%%%%%%%%%%%%%%%%%%%%%%%%%%%%%%%%%%%%%%%%%%%%%%%%%%%%%
Let $|D|=O(1)$. By \cref{th:Min}, we can solve $\IMP_{\uval d}(\Gamma^{01})$ by bounded-degree $\PC$ in $n^{O(d)}$ time. In the following we show the existence of a polynomial time bounded-degree $\PC$ proof for $\IMP_{\uval d}(\Gamma^{01})$ implies a polynomial time bounded-degree $\PC$ proof for $\IMP_{d}(\Gamma)$, and hence the proof of \cref{th:semilattice} follows.
%By \cref{th:imp_red} and \cite{Mastrolilli21TALG} $\PC$ is $k$-complete for $\IMP_{k}(\Gamma^{01})$

To begin, we introduce the following terminology.
\begin{definition}
    Let $\I\subseteq \Field[x_1, \ldots, x_n]$ be an ideal, and let $f,g\in \Field[x_1, \ldots, x_n]$. We say that $f$ and $g$ are \textbf{\emph{congruent modulo $\I$}}, written $f\cong g \mod{\I}$, if $f-g\in \I$.
\end{definition}

%%%%%%%%%%%%%%%%%%%%%%%%%%%%%%%%%%%%%%%%%%%%%%%%%%%%%%%%%%%%%%%%%%
\paragraph{The interpolating polynomial $\q_j$ in finite domain.}
%%%%%%%%%%%%%%%%%%%%%%%%%%%%%%%%%%%%%%%%%%%%%%%%%%%%%%%%%%%%%%%%%
For any given value $d_i\in D$ and $j\in [\uval]$, we need a polynomial function $\q_j(d_i)$ that returns the $j$-th bit of $\mu(d_i)=b_{i1},\ldots,b_{i\uval}$, i.e. $\q_j(d_i)=b_{ij}$.
By Lagrange interpolating polynomials (see e.g. \cite{phillips2003interpolation}, \cite{Gasca2000-hx}), given $|D|$ distinct values  $d_1,\ldots,d_{|D|}\in D$ and corresponding values $b_{1j},\ldots,b_{|D|j}$, there exists a polynomial $\q_j$ of degree $|D|-1$ that interpolates the data, i.e. $\q_j(d_i)=b_{ij}$ for each $j=1,\ldots,\uval$.

\begin{lemma}\label{th:x}
    \begin{enumerate}
        \item $x \cong \p(\q_1(x),\ldots,\q_\uval(x)) \mod{\langle f_D(x)\rangle}$. \footnote{Recall $\langle f_D(x)\rangle$ is the ideal generated by the domain polynomial $f_D(x)$ of $x$.}
        \item $\T(\q_1(x), \q_2(x), \dots, \q_{\uval}(x)) \cong 0 \mod{\langle f_D(x)\rangle}.$
        \item $\q_{j}(x)^2 - \q_{j}(x) \cong 0 \mod{\langle f_D(x)\rangle}$
    \end{enumerate}
\end{lemma}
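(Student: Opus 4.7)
The plan is to exploit the fact that, since $\mathbb{R}[x]$ is a principal ideal domain and $f_D(x) = \prod_{i=1}^{|D|}(x - d_i)$ is the minimal polynomial vanishing on $D$, the ideal $\langle f_D(x)\rangle$ coincides with the vanishing ideal of $D$ in $\mathbb{R}[x]$. Consequently, for any $h \in \mathbb{R}[x]$, the congruence $h \cong 0 \mod \langle f_D(x)\rangle$ is equivalent to $h(d_i) = 0$ for every $d_i \in D$. Each of the three items will be reduced to a pointwise verification at the $|D|$ elements of $D$.

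For item (1), I would set $h(x) := x - \p(\q_1(x), \ldots, \q_{\uval}(x))$ and evaluate at an arbitrary $d_i \in D$. By the interpolation property of $\q_j$ we have $\q_j(d_i) = b_{ij}$ where $\mu(d_i) = (b_{i1}, \ldots, b_{i\uval})$. Plugging in gives $\p(\q_1(d_i), \ldots, \q_{\uval}(d_i)) = \p(\mu(d_i)) = d_i$, where the second equality is the defining interpolation property of $\p$ stated just before the lemma. Hence $h(d_i) = 0$.

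For item (2), I would evaluate $\T(\q_1(x), \ldots, \q_{\uval}(x))$ at $d_i \in D$. Again $\q_j(d_i) = b_{ij}$, so the expression becomes $\T(\mu(d_i))$. Since $\mu(d_i) \in D^{01}$ by the definition of $\mu$ in Lemma \ref{th:bin_encoding}, and since $\T$ was constructed to vanish on all tuples of $D^{01}$ (it is a product containing the factor associated with $v = \mu(d_i)$), we get $\T(\mu(d_i)) = 0$. For item (3), evaluating $\q_j(x)^2 - \q_j(x)$ at $d_i$ yields $b_{ij}^2 - b_{ij}$, which vanishes because $b_{ij} \in \{0,1\}$.

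I do not foresee a real obstacle: once one observes that $\langle f_D(x)\rangle = \I(D)$ in the univariate ring, the three statements are immediate consequences of the defining interpolation properties of $\p$, $\q_j$, and the construction of $\T$. The only point worth stating explicitly is the univariate reduction to pointwise vanishing, since the surrounding paper works in the multivariate setting where that equivalence can fail; here the congruence is taken modulo the ideal of a single univariate polynomial, so the equivalence is safe.
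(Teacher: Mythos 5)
Your proposal is correct and follows essentially the same route as the paper: both reduce each congruence to the observation that the relevant univariate polynomial vanishes at every point of $D$ and hence lies in $\langle f_D(x)\rangle$, using the interpolation properties of $\p$ and $\q_j$ and the construction of $\T$. Your write-up merely makes explicit the step the paper leaves implicit (that $\langle f_D(x)\rangle$ equals the vanishing ideal of $D$ in $\mathbb{R}[x]$), which is a harmless and indeed welcome clarification.
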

\begin{proof}
    First note that by definition of $\p$ and $\q_j$, we have that $\p(\q_1(x),\ldots,\q_\uval(x))$ is a univariate polynomial in $x$ that assumes the exact value of $x$ every time $x\in D$. It follows that  $\p(\q_1(x),\ldots,\q_\uval(x))= x + R(x)$, where $R\in \Field[x]$ and $R(x)=0$ for every $x\in D$, i.e $R(x)\in \langle f_D(x)\rangle$. \\
    For point 2. it suffices to observe that $\T(\q_1(x), \q_2(x), \dots, \q_{\uval}(x)) = 0$ for all $x \in D$. For point 3. it suffices to observe that $\q_{j}(x)^2 = \q_{j}(x)$ for all $x \in D$.
\end{proof}

\begin{proof} [Proof of \cref{th:semilattice}]
    Assume that $u(x_1,\ldots,x_n)$ is a polynomial of degree $d=O(1)$ such that  $u(x_1,\ldots,x_n)\in \I_\Cc$. We show that $u(x_1,\ldots,x_n)$ has a $\PC$ proof of degree $O(d)$.
    \begin{enumerate}
        \item Reduce the problem to a Boolean problem by replacing variables $x_i$ with $\p(Y_i)$, where $Y_i = y_{i1}, \dots, y_{i\uval}$ are tuples of $\uval$ binary variables (see \cref{sect:2Bool}). Thus $y_{ij}$ represents the $j$-th bit of the binary representation of $x_i$. Let us use $u^{01}$ to denote the polynomial $u$ after the above variables replacement. By \cref{th:2semilattice} and \cref{th:imp_red}, we know that $u^{01}(Y_1,\ldots,Y_n)$ admits a bounded-degree $\PC$ proofs in the Boolean domain. This means that there exists a {$\PC$ derivation} of $u^{01}$ from $\F^{01}$ (see \cref{eq:F01}), i.e. there is a sequence $(r_1^{01}=0,\ldots,r_L^{01}=0)$ of polynomial equations sequencially derived by using \eqref{eq:PCrules} and starting from from $\F^{01}$, with $u^{01}=r_L^{01}$.
        \item\label{def_corr} Consider any polynomial $r^{01}_\ell\in (r_1^{01},\ldots,r_L^{01})$, and replace each Boolean variable $y_{ij}$ that appears in $r^{01}_\ell$ with $\q_j(x_i)$.  Let $r_j$ denote the polynomial $r_j^{01}$ after the just described replacement.
        We will call \textbf{$r_j$ the corresponding polynomial of $r_j^{01}$ in $D$}.
        Note that $r_j\in \Field[x_1, \ldots, x_n]$ and has degree $O(d)$. Therefore we obtain the following sequence $(r_1=0,\ldots,r_L=0)$ of polynomial equations, that represents the $\PC$ proofs in the Boolean domain, but represented by using finite domain $D$ variables.
        \item Let $\I_D=\langle f_D(x_1),\ldots,f_D(x_n)\rangle$. We observe $r_L \cong u \mod{\I_D}$ by a simple application of \cref{th:x}, i.e. there exists a polynomial $R \in \I_D$ such that $r_L = u + R$. Moreover, it is immediate to see that $\deg(R) = O(d)$. The claim of \cref{th:semilattice} follows by proving the following lemma, which, in words, states the sequence $(r_1, \dots, r_L)$ of corresponding polynomials in $D$ can be derived in bounded-degree $\PC$ and there exists a bounded-degree $\PC$ proof of $u$ from $\mathcal{F}$ that is equivalent to $(r_1, \dots, r_L) \ \mod{\I_D}$. 
    \end{enumerate}

\begin{lemma}\label{th:PC_proof_semilattice}
        In the finite domain $D$ setting, starting from $\F$ (see \cref{eq:F}), there exists a degree-$O(d)$ $\PC$ proof, namely a sequence $\mathcal{S} = (s_1=0,\ldots,s_M=0)$ of polynomial equations sequencially derived by using \eqref{eq:PCrules}, such that for each $r_k\in (r_1,\ldots,r_L)$ there is $s_h\in \mathcal{S}$ such that $r_k \cong s_h \mod{\I_D}$.
    \end{lemma}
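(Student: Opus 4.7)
The plan is to prove \cref{th:PC_proof_semilattice} by induction on the length $k$ of the Boolean $\PC$ derivation $(r_1^{01}, \ldots, r_L^{01})$, simulating each derivation step in the finite-domain $\PC$ system starting from $\F$. At each inductive step, we show that from $\F$ one can derive a polynomial congruent to $r_k$ modulo $\I_D$, with the derivation degree bounded by $O(d)$ and the size increased by only a constant factor per Boolean step.

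First I would handle the base case, i.e.\ the three kinds of axioms in $\F^{01}$ (see \eqref{eq:F01}). The constraint axioms $f_{R_j}(\p(Y_{1_{R_j}}),\ldots,\p(Y_{k_{R_j}}))$ have corresponding polynomials $f_{R_j}(\p(\q_1(x_{1_{R_j}}),\ldots,\q_\uval(x_{1_{R_j}})),\ldots)$, which by \cref{th:x}(1) are congruent modulo $\I_D$ to $f_{R_j}(x_{1_{R_j}},\ldots,x_{k_{R_j}}) \in \F$ — so the axiom itself in $\F$ serves as $s_h$. The domain axioms $\T(Y_i)$ correspond to $\T(\q_1(x_i),\ldots,\q_\uval(x_i))$, which by \cref{th:x}(2) lies in $\langle f_D(x_i)\rangle$; that is, there exists a polynomial $w_i$ with $\T(\q_1(x_i),\ldots,\q_\uval(x_i)) = w_i\, f_D(x_i)$, so we derive it from $f_D(x_i)=0$ by repeatedly applying the multiplication rule (its total degree is a constant depending only on $|D|=O(1)$, so the derivation fits in the $O(d)$ degree budget). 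The Boolean axioms $y_{ij}^2 - y_{ij}$ are handled identically using \cref{th:x}(3), giving $s_h \equiv 0 \pmod{\I_D}$.

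Next I would handle the two derivation rules. For the linear combination rule, if $r_{k_1}^{01}$ and $r_{k_2}^{01}$ already have corresponding $s_{h_1}, s_{h_2}$ with $s_{h_i} \cong r_{k_i}\pmod{\I_D}$, then $a\, r_{k_1}^{01} + b\, r_{k_2}^{01}$ has corresponding polynomial $a\, r_{k_1} + b\, r_{k_2}$, which is congruent to $a\, s_{h_1} + b\, s_{h_2}$ modulo $\I_D$ and is derivable by a single application of the linear combination rule. For the multiplication rule, which is the main obstacle, suppose $r_k^{01}$ is obtained as $y_{ij}\, r_{k'}^{01}$. Its corresponding polynomial is $\q_j(x_i)\, r_{k'}$. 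Writing $\q_j(x_i) = \sum_{\ell=0}^{|D|-1} c_\ell\, x_i^\ell$, we repeatedly apply the variable-multiplication rule to $s_{h'}$ to obtain $x_i\, s_{h'},\; x_i^2\, s_{h'},\; \ldots,\; x_i^{|D|-1}\, s_{h'}$, and then take the linear combination $\sum_\ell c_\ell\, x_i^\ell\, s_{h'} = \q_j(x_i)\, s_{h'}$, which is congruent to $\q_j(x_i)\, r_{k'}$ modulo $\I_D$.

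Finally I would verify the degree and size bounds. By \cref{th:2semilattice} and \cref{th:imp_red}, the Boolean derivation has degree $O(\uval d) = O(d)$ and size $n^{O(d)}$, since $|D|=O(1)$. Each simulation step adds at most $|D|-1 = O(1)$ to the degree (the multiplication rule case) and $|D|$ new lines to the derivation, so the overall finite-domain derivation has degree $O(d)$ and size $n^{O(d)}$. Having established \cref{th:PC_proof_semilattice}, the proof of \cref{th:semilattice} is completed as follows: applying the lemma to $r_L^{01} = u^{01}$ gives some $s_M$ with $s_M \cong r_L \cong u \pmod{\I_D}$, where the second congruence is by \cref{th:x}(1) applied coordinatewise inside $u$. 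Then $s_M - u \in \I_D$ has degree $O(d)$ and its $\PC$ derivation from $\{f_D(x_1)=0,\ldots,f_D(x_n)=0\}$ is obtained by multiplying each $f_D(x_i)$ by the appropriate $O(d)$-degree quotient polynomial (via iterated variable multiplication and linear combination), after which a final linear combination yields $u$ itself within the $O(d)$ degree bound.
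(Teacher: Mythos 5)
Your proposal is correct and follows essentially the same route as the paper: induction along the Boolean $\PC$ derivation, handling the axioms of $\F^{01}$ via \cref{th:x}, simulating the linear rule directly, and simulating multiplication by $y_{ij}$ through an $O(1)$-step derivation of multiplication by $\q_j(x_i)$, with the same $O(d)$ degree accounting since $|D|=O(1)$. If anything, your base case is stated more carefully than the paper's (distinguishing the constraint axioms, which are congruent to elements of $\F$, from the domain and Boolean axioms, which are congruent to $0$ modulo $\I_D$).
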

    
    \begin{proof}
        By induction on $k=1,\ldots,L$, we prove that statement holds for every $r_k$ with $k\in[L]$. We will construct the sequence $\mathcal{S} = (s_1, \dots, s_M)$ with the desired properties. We start adding $\mathcal{F}$ (see \cref{eq:F}) to $\mathcal{S}$.\\
        \begin{itemize}
            \item Assume that $r^{01}_k\in \F^{01}$ (see \cref{eq:F01}), then by a \cref{th:x} we have that $r_k \cong 0 \mod{I_D}$. Thus we do not to have to add anything to $\mathcal{S}$.
            \item Assume $r_k^{01}$ is derived by the third rule, i.e.
                \begin{equation}
                    \frac{f^{01}=0 \qquad g^{01}=0}{r_k^{01}=a f^{01} + b g^{01} =0}. 
                \end{equation}
                Let $f$ and $g$ be the corresponding polynomials of $f^ {01}$ and $g^{01}$ in $D$. By the induction hypothesis there exist polynomials $s_f, s_g \in \mathcal{S}$ such that $s_f \cong f \mod{I_D}$ and $s_g \cong g \mod{I_D}$. Thus we add $a s_f + b s_g$ to $\mathcal{S}$.
            \item Assume $r_k^{01}$ is derived by the fourth rule, i.e.
                \begin{equation*}%\label{eq:PCrules}
                    \frac{f^{01}=0}{r_k^{01}=y_{ij}f^{01}=0}. 
                \end{equation*}
                Let $f$ be the corresponding polynomial of $f^{01}$. Then we have that 
                \begin{equation*}%\label{eq:PCrules}
                    \frac{f=0}{\q_j(x_i)f=0},
                \end{equation*}
                where this is a derivation that actually requires $O(1)$ derivations for any $|D| = O(1)$.\\
                By the induction hypothesis there exists $s_f \in \mathcal{S}$ such that $s_f \cong f \mod{I_D}$. Thus we add $Q_j(x_i)s_f$ to $\mathcal{S}$ for which we have that $r_k \cong Q_j(x_i)s_f \mod{I_D}$.
            \end{itemize}
    \end{proof}
    
    By \cref{th:PC_proof_semilattice} there exists $\mathcal{S} = (s_1, \dots, s_M)$ such that for each $r_k \in (r_1, \dots, r_L)$ there is some $s_h \in \mathcal{S}$ such that $r_k \cong s_h \mod{I_d}$. Then by simulating $\mathcal{S}$ with $\PC$ from $\mathcal{F}$, we obtain a $\PC$ proof of $u$ from $\mathcal{F}$ with degree bounded by $O(d)$.
    
\end{proof}

\section{Proof of \cref{th:dual_discriminator}}\label{sect:dual-proof}
%%%%%%%%%%%%%%%%%%%%%%%%%%%%%%%%%%%%%%%%%%%%%%%%%%%%%%%%%%
In this section, we focus on $\IMP_d(\Gamma)$, where $\Gamma$ is a language closed under the \emph{dual discriminator} polymorphism and show the proof of~\cref{th:dual_discriminator}. 
The dual discriminator is a well-known majority operation \cite{Jeavons:1997:CPC,barto_et_al:DFU:2017:6959} and is often used as a starting point in many CSP-related classifications \cite{barto_et_al:DFU:2017:6959}. For a finite domain $D$, a ternary operation $f$ is called a majority operation if  $f(a,a,b)=f(a,b,a)=f(b,a,a)=a$ for all $a,b\in D$. 
%%%%%%%%%%%%%%%%%%%%%%%%%%%%%%%%%%%%%%%%%%%%%%%%%%%%
\begin{definition}\label{def:dual discriminator}
  The \emph{dual discriminator} on a domain $D$, denoted by $\nabla$, is a majority operation such that $\nabla(a,b,c)=a$ for pairwise distinct $a,b,c\in D$.
\end{definition}

The input for $\IMP_d(\Gamma)$ consists of any given set of polynomials that defines the combinatorial ideal $\I_\Cc$ (see \cref{def:combinatorial_ideal}) corresponding to a dual discriminator closed language: 
\begin{align}\label{eq:dual_generators}
    f_{R_1}(X_{R_1}),\ldots,f_{R_\ell}(X_{R_\ell}),f_D(x_1),\ldots,f_D(x_n).
\end{align}
We want to show that $\PC$ is capable of computing the full \GB basis (in \grlex order) in polynomial time.

We will assume that the solution set is non-empty, as the search version of $\IMP_0(\Gamma)$ can be solved by \PC\ in polynomial time using ``local-consistency" algorithms, valid for the dual-discriminator, as shown in \cite{JeffersonJGD13}.

%%%%%%%%%%%%%%%%%%%%%%%%%%%%%%%%%%%%%%%%%%%%%%%%%%%%%%%%%%
\paragraph{\cref{th:dual_discriminator} proof structure.}\label{sect:dual_discriminator_structure}
%%%%%%%%%%%%%%%%%%%%%%%%%%%%%%%%%%%%%%%%%%%%%%%%%%%%%%%%%%
The central idea is to adapt the algorithms presented in \cite{BharathiM21, BharathiM25} and \cite{BulatovRSTOC22} to design a $\PC$ algorithm that runs in polynomial time for any given domain~$D$.
The main arguments are as follows.
\begin{enumerate}[(i)]
    \item For any given instance of $\IMP_d(\Gamma)$, consider the corresponding $\CSP(\Gamma)$ input instance $\Cc=(X,D,C)$ (see \cref{def:IMP}). It is known that any instance $\Cc = (X, D, C)$ of $\CSP(\Gamma)$ can be reduced to an equivalent $\CSP(\Gamma)$ instance with only binary constraints (that is, constraints with at most two variables in their scope). In this transformed instance, the constraints are organized into three categories: \emph{permutation} constraints, \emph{complete} constraints, or \emph{two-fan} constraints. This restructuring aligns with the classification introduced in prior work (see \cite{Cooper1994CharacterisingTC}). We derive binary constraints according to the $\CSP$ classification. \label{i}
    \item Consider the input combinatorial ideal $\I_\Cc$ associated with the given instance of $\IMP_d(\Gamma)$ (see \cref{eq:comb_Id}). The next phase consists in the decomposition of the combinatorial ideal $\I_{\Cc}$ into a collection of simpler ideals. Each of these simpler ideals arises from the structured binary constraints considered above. The advantage here is that these individual ideals have \GB bases that can be derived efficiently through the $\PC$ algorithm.
    \item Finally, we combine the \GB bases corresponding to the simpler ideals into a single \GB basis for the entire combinatorial ideal $\I_{\Cc}$. This approach allows us to efficiently compute a solution to the original problem in polynomial time.
\end{enumerate}

{Schematically, \cref{th:dual_discriminator} is proven by the following arguments:}
\begin{enumerate}
    \item In \cref{sect:intro_dual_discriminator}, by using the known result for $\CSP$s mentioned in \ref{i}, $\IMP_d(\Gamma)$ can be shown to be equivalent to $\IMP_{d}(\Pi)$, where $\Pi$ is a binary constraint language. To achieve this, we start from the given input polynomials \eqref{eq:dual_generators} and derive bivariate polynomials describing the mentioned binary constraints. The derivation can be done in polynomial time and entirely within the framework of $\PC$. 
    \item In \cref{sect:permutation_constraints}, it is shown that a \GB basis for the combinatorial ideal generated by the \textit{permutation} constraints $\I_{perm}$ can be calculated in polynomial time by $\PC$. In particular, we define new constraints $CPC_i$ that arise from ``chaining'' together permutation constraints, with the property that, if $X_i$ and $X_j$ are the variables in $CPC_i$ and $CPC_j$ respectively, then $X_i \cap X_j = \emptyset$. It follows that $\I_{perm} = \sum_i \I_{CPC_i}$, and a set of generators for each $\I_{CPC_i}$ is found.
    \item In \cref{sect:complete_two-fan_constraints}, similar to the permutation constraints case, a set of generators for the combinatorial ideal $\I_{CF}$ generated by the \textit{complete} and \textit{two-fan} constraints is found. In particular, we find a \GB basis for $\I_{CF}$.
    \item In \cref{sect:combin_gen}, we construct generators for simpler ideals by combining those of $\I_{CPC_p}$ and $\I_{CF}$. This combination preserve the structure that $\I_{\Cc} = \sum_{p \in J} CPC_p + \I_{CF}$. We will show that a \GB basis for $\I_{Cc}$ can be computed with bounded degree in polynomial time.
\end{enumerate}

We will show how to convert the ad-hoc algorithm in \cite{BharathiM21, BharathiM25} into a standard $\PC$ algorithm. Moreover, techniques from \cite{BulatovRSTOC22} are implemented for the case of the complete and two-fan constraints. This algorithm can be used to solve $\IMP_d(\Gamma)$ in polynomial time and leads to \cref{th:dual_discriminator}.

We define the sets of constraints
\begin{align*}
    &C_{P} = \{ C_{ij} \in C \, | \, C_{ij} \text{ is a \textit{permutation} constraint} \} \\
    &C_{CF} = \{ C_{ij} \in C \, | \, C_{ij} \text{ is a \textit{complete} or a \textit{two-fan} constraint} \}.
\end{align*}
Note that that $\Cc = C_{P} \cup C_{CF}$. Therefore,
\begin{equation}\label{eqn:dual_discr_ideal_decomposition}
    \I_{\Cc} = \I_{C_P} + \I_{C_{CF}}.
\end{equation}
The idea is to find a set of generators for each addenda in the sum on the RHS and the to combine these polynomials together in a single \GB basis for $\I_{\Cc}$. We recall that for having the identity \cref{eqn:dual_discr_ideal_decomposition}, by radicality, it is sufficient to have that
\begin{equation*}
    \Variety{I_{\Cc}} = \Variety{I_{C_P}} \cap \Variety{I_{C_{CF}}}.
\end{equation*}

%%%%%%%%%%%%%%%%%%%%%%%%%%%%%%%%%%%%%%%%%%%%%%%%%%%%%%%%%%%%%%%%%%%%%%%%%%%%%%%%%
\subsection{Binary constraints}\label{sect:intro_dual_discriminator}
%%%%%%%%%%%%%%%%%%%%%%%%%%%%%%%%%%%%%%%%%%%%%%%%%%%%%%%%%%%%%%%%%%%%%%%%%%%%%%%%%

Let $\Gamma$ be a language over a finite domain $D$ closed under the dual-discriminator polymorphism, i.e. $\nabla \in Pol(\Gamma)$. Let $\Cc = (X, D, C)$ be an instance of $\CSP(\Gamma)$. In general, if a language $\Gamma$ is closed under a majority polymorphism $\mu$, any instance of $CSP(\Gamma)$ is equivalent to an instance that has only binary constraints. 

Consider a relation $R$ with arity $m$ and let $J \subseteq [m]$ be a set of indices. We denote $X[J] = (x_j)_{j \in J}$ the subset of variables with indices in $J$, and similarly we denote $pr_J(R)$ the projection of $R$ to the components with indices in $J$, i.e. the tuples $(a_j)_{j \in J}$ such that there exists a $n$-tuple $(b_1, \ldots, b_n) \in R$ with $a_j = b_j$ for all $j \in J$.

\begin{proposition}[\cite{Jeavons:1997:CPC}]\label{prop:majority_to_binary_majority}
    Let $R$ be a relation of arity $m$ that is closed under a majority operation, and let $C$ be any constraint $(X,R)$ constraining the variables in $X$ with relation $R$.
    For any problem $\mathcal{P}$ containing the constraint $C$, the problem $\mathcal{P}'$ obtained by replacing $C$ with the set of binary constraints
    \begin{equation*}
        \{((X[i],X[j]), pr_{i,j}(R)) \, | \, 1 \leq i \leq j \leq m \}
    \end{equation*}
    has exactly the same solutions as $\mathcal{P}$.
\end{proposition}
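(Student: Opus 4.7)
The plan is to prove the proposition by establishing both inclusions of the solution sets. The forward direction ($Sol(\mathcal{P}) \subseteq Sol(\mathcal{P}')$) is immediate: any assignment $\phi$ satisfying $C = (X, R)$ has $\phi(X) \in R$, so its projection $(\phi(X[i]), \phi(X[j]))$ lies in $pr_{i,j}(R)$ for every pair $1 \leq i \leq j \leq m$. The other constraints of $\mathcal{P}$ are untouched by the replacement, so $\phi$ satisfies $\mathcal{P}'$ as well.

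The nontrivial direction is $Sol(\mathcal{P}') \subseteq Sol(\mathcal{P})$. Here I would reduce the problem to the following algebraic claim: if $\mu$ is a majority polymorphism of $R \subseteq D^m$ and $a = (a_1, \dots, a_m) \in D^m$ is such that for every pair $i,j$ there exists some $t^{ij} \in R$ with $t^{ij}_i = a_i$ and $t^{ij}_j = a_j$, then $a \in R$. Granting this, given any $\phi \in Sol(\mathcal{P}')$, setting $a_i := \phi(X[i])$ yields precisely the pairwise condition (witnessed by the fact that $(a_i,a_j) \in pr_{i,j}(R)$), hence $a \in R$ and $\phi \in Sol(\mathcal{P})$.

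The core of the proof is thus the algebraic claim, which I would establish by induction on $k \geq 2$ showing the stronger statement: for every $k$-element subset $I \subseteq [m]$, there is a tuple $t^I \in R$ with $t^I_i = a_i$ for all $i \in I$. The base case $k=2$ is the pairwise hypothesis. For the inductive step, given a $(k{+}1)$-subset $I = \{i_1, \dots, i_{k+1}\}$, I would apply the inductive hypothesis to the three $k$-subsets $I \setminus \{i_1\}$, $I \setminus \{i_2\}$, $I \setminus \{i_3\}$ to obtain tuples $t^1, t^2, t^3 \in R$ agreeing with $a$ on these respective subsets, and then form $t := \mu(t^1, t^2, t^3)$ coordinate-wise. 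Since $\mu$ is a polymorphism, $t \in R$; and on any coordinate $j \in I$, at least two of the three tuples agree with $a_j$ (if $j \notin \{i_1,i_2,i_3\}$ then all three do; otherwise exactly two do), so the majority property of $\mu$ forces $t_j = a_j$. Taking $k = m$ delivers the claim.

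The main obstacle is really the inductive step above, specifically the careful case analysis verifying that, for each choice of $j \in I$, at least two of $t^1, t^2, t^3$ agree with $a$ at coordinate $j$ (so that the defining property $\mu(x,x,y) = \mu(x,y,x) = \mu(y,x,x) = x$ can be invoked). Beyond this, the rest of the argument is essentially bookkeeping, and the result does not require the stronger dual discriminator polymorphism of \cref{def:dual discriminator} since an arbitrary majority operation suffices.
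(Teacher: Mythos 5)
Your proof is correct: the forward inclusion is immediate, and your inductive argument (building, for each $k$-subset $I$, a tuple of $R$ agreeing with $a$ on $I$ by applying the majority operation to three witnesses for $I\setminus\{i_1\}$, $I\setminus\{i_2\}$, $I\setminus\{i_3\}$) is exactly the standard $2$-decomposability argument for majority-closed relations. The paper states this proposition as a cited result from \cite{Jeavons:1997:CPC} without giving a proof, and your argument matches the one in that reference; the only cosmetic point is that when $j\in\{i_1,i_2,i_3\}$ you are guaranteed \emph{at least} two (not necessarily exactly two) agreements, which is all the majority identity needs.
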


Moreover, if the language $\Gamma$ is closed under the dual-discriminator polymorphism, i.e. $\nabla \in Pol(\Gamma)$, the binary constraints $C_{ij}$ are well-structured into three types.

\begin{proposition}\cite{Cooper1994CharacterisingTC}
    Suppose $\nabla \in Pol(\Gamma)$. Then each constraints $C_{ij} = \langle (x_i,x_j),R_{ij}) \rangle$ is one of the following three types.
    \begin{enumerate}
        \item Permutation constraint: $R_{ij} = \{(a, \pi(a)) \, | \, a \in D_i \}$ for some $D_i \subseteq D$ and some bijection $\pi:D_i \rightarrow D_j$, where $D_j \subseteq D$.
        \item Complete constraint: $R_{ij} = D_i \times D_j$ for some $D_i, D_j \subseteq D$.
        \item Two-fan constraint: $R_{ij} = \{(\{a\} \times D_j) \cup (D_i \times \{ b \})\}$ for some $D_i, D_j \subseteq D$ and $a \in D_i, b \in D_j$.
    \end{enumerate}
\end{proposition}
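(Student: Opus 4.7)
Fix a binary relation $R = R_{ij}$ closed under $\nabla$, and set $D_i = pr_1(R)$, $D_j = pr_2(R)$. For $a \in D_i$ introduce the fiber $S_a = \{b \in D_j : (a,b) \in R\}$ and symmetrically $T_b = \{a \in D_i : (a,b) \in R\}$ for $b \in D_j$. The plan is to prove an \emph{expansion lemma} that forces every fiber to be either a singleton or the entire projection, and then to read off the three cases from the resulting combinatorics.

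The expansion lemma states: if $|S_a| \geq 2$ then $S_a = D_j$. To prove it, pick distinct $b_1, b_2 \in S_a$ and an arbitrary $d \in D_j$; by definition of $D_j$ there exists $(c,d) \in R$, and we may assume $c \neq a$ (else $d \in S_a$ already). Apply $\nabla$ coordinatewise to the ordered triple $\bigl((c,d),(a,b_1),(a,b_2)\bigr)\in R^3$. The first coordinate is $\nabla(c,a,a)=a$ because two entries coincide. The second coordinate is $\nabla(d,b_1,b_2)$; since $b_1\neq b_2$, if $d\in\{b_1,b_2\}$ the majority rule returns that shared value (already in $S_a$), while if $d\notin\{b_1,b_2\}$ the three are pairwise distinct and $\nabla$ returns its first argument $d$ by \cref{def:dual discriminator}. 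In every case the resulting tuple lies in $R$ and witnesses $d \in S_a$. The symmetric statement for rows $T_b$ follows verbatim with coordinates swapped.

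Given the lemma, each fiber $S_a$ is either a singleton or all of $D_j$, and the same holds for each $T_b$. \textbf{Case 1}: every $S_a$ is a singleton. Then $R$ is the graph of a function $\pi:D_i\to D_j$. If $\pi$ is injective (hence bijective onto $D_j$) we obtain a permutation constraint. Otherwise some $b\in D_j$ has $|T_b|\geq 2$, so $T_b = D_i$; because $\pi$ is a function this forces $\pi\equiv b$, yielding $R = D_i\times\{b\}$, a complete constraint. \textbf{Case 2}: some fiber $S_{a^*} = D_j$. Partition $D_i$ into $I_1 = \{a : S_a = D_j\}$ and $I_2 = \{a : |S_a| = 1\}$. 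If $I_2 = \emptyset$ then $R = D_i\times D_j$ is complete. Otherwise fix $a_0\in I_2$ with $S_{a_0}=\{b\}$. The row lemma applied to $b$ gives $T_b = D_i$, so every element of $I_2$ maps to the same $b$. Finally, if $|I_1|\geq 2$, then for any $b'\in D_j$ the set $T_{b'}$ contains all of $I_1$, so $T_{b'} = D_i$, forcing $b'\in S_{a_0}=\{b\}$; assuming the nondegenerate regime $|D_j|\geq 2$, this is a contradiction. Hence $|I_1|=1$, say $I_1=\{a^*\}$, and $R = (\{a^*\}\times D_j)\cup(D_i\times\{b\})$, a two-fan constraint.

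\textbf{Main obstacle.} The delicate point is the expansion lemma: one has to use \emph{exactly} the triple $\bigl((c,d),(a,b_1),(a,b_2)\bigr)$, in this order, so that the two equal entries land in the last two slots (triggering a majority in the first coordinate) while the two distinct entries $b_1,b_2$ also occupy the last two slots (so that when $d$ is new to $S_a$ the dual discriminator outputs the \emph{first} argument $d$ in the second coordinate). A straightforward check shows that the other permutations of the three tuples only reproduce elements already known to lie in $R$, so choosing the right ordering is the one genuinely nontrivial idea. Once the lemma is in hand, the remaining argument is a mechanical dichotomy on fiber sizes; the degenerate small cases ($|R|\leq 1$ or $|D_j|=1$) collapse into the complete/permutation families and require no separate treatment.
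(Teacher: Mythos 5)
Your proof is correct. Note that the paper itself does not prove this proposition --- it is imported verbatim from Cooper, Cohen and Jeavons --- so there is no internal argument to compare against; your fiber-expansion lemma (every column $S_a$ with two elements must be all of $D_j$, obtained by applying $\nabla$ to the ordered triple $\bigl((c,d),(a,b_1),(a,b_2)\bigr)$) is precisely the classical ``singleton or full'' dichotomy on which the original proof rests, and your subsequent case analysis on $I_1$ and $I_2$ correctly recovers the permutation, complete and two-fan shapes, including the degenerate situations $|D_j|=1$ and non-injective functional $R$ (which land in the complete family). One small remark: in the expansion lemma the three subcases for the second coordinate all in fact output exactly $d$ (e.g.\ $\nabla(b_2,b_1,b_2)=b_2=d$ by the majority identity), so the tuple $(a,d)$ lands in $R$ uniformly and the case split there could be compressed; this does not affect correctness.
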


%\mon{the following is IMO a too long remark}
\begin{remark} \label{rmrk:binary_constraints_bounded_derivations} 
    We emphasize here that given an instance of a CSP $\Cc = (X,D,C)$ whose language $\Gamma$ is dual-discriminator closed, we can derive bivariate polynomials describing the binary constraints $C_{ij}$ in polynomial time entirely within the framework of \PC. 
    
    Indeed, first we note that $\Gamma$ is a \emph{finite} constraint language. Let $arity(R)$ denote the arity of $R$. Then $M := \max_{R \in \Gamma} arity(R) = O(1)$, i.e. the maximum arity of a relation in $\Gamma$ is constant. 
    
    Second, for any constraint we can easily find a set of generators for its combinatorial ideal. Let $T = R^{T}(x_{i_1}, \ldots, x_{i_m})$ be a $m$-ary constraint from $C$. Let $\mathcal{P}^{T}$ be a set of polynomials such that $Sol(T) = \Variety{\mathcal{P}^{T}}$ and such that the domain polynomials are in $\mathcal{P}^{T}$ for each variable appearing in any of the polynomials. Then the combinatorial ideal of $T$ is equal to the ideal generated by $\mathcal{P}^{T}$, i.e. $I_T = \langle \mathcal{P}^{T} \rangle$. 
    
    Third, since $arity(R^{T}) \leq M$ is bounded by a constant, it follows that the reduced \GB basis of $\langle \mathcal{P}^{T} \rangle$ can be calculated in constant time with respect to the number $n$ of variables in $X$. Indeed, finding the reduced \GB basis is in general an EXPSPACE-complete problem (see \cite{MAYR1982305, Mayr1989}) but only with respect to the number of variables in $\mathcal{P}^{T}$, which however is bounded by $M$ and so independent from $n$.

    Lastly, we observe that any set of polynomials describing a binary constraint can be derived from the generators of the initial (non-binary) constraint. Indeed, by \cref{prop:majority_to_binary_majority} we can describe the solution set of $T$ by using binary constraints $T_{ij}$. Let $\mathcal{P}^{T} \subseteq \mathbb{R}[x_1, \dots, x_n]$ be a set of polynomial generators such that $Sol(T) = \Variety{\mathcal{P}^{T}} \subseteq D^m$ and similarly let $\mathcal{P}^{T_{ij}} \subseteq \mathbb{R}[x_i,x_j]$ such that $Sol(T_{ij}) = \Variety{\mathcal{P}^{T_{ij}}} \subseteq D^2$. Note that the polynomial rings over which $\mathcal{P}^{T}$ and $\mathcal{P}^{T_{ij}}$ differ in the variables. Moreover, if $q \in \langle \mathcal{P}^{T_{ij}} \rangle$, then $q = 0$ over $\Variety{\mathcal{P}^{T_{ij}}}$, but if we interpret $q \in \mathbb{R}[x_1, \dots, x_n]$, then also $q = 0$ over $\Variety{\mathcal{P}^{T}}$ and therefore $q \in \langle \mathcal{P}^{T} \rangle$. Thus, $q$ can be derived by $\PC$ with bounded degree and bounded coefficients from $\mathcal{P}^{T}$.

    Note that if there are two constraints $T^1$ and $T^2$ that contain variables $x_i$ and $x_j$ in their scope, then by \cref{prop:majority_to_binary_majority} the initial $\CSP$ can be equivalently formulated with some binary constraint $T_{ij}$. By the same reasoning as in the previous case, a set of generators for $\mathcal{P}^{T_{ij}}$ can be derived by combining together polynomials in $\mathcal{P}^{T_1}_{ij}$ and $\mathcal{P}^{T_2}_{ij}$, the restrictions to variables $x_i$ and $x_j$ of $\mathcal{P}^{T_1}$ and $\mathcal{P}^{T_2}$ respectively.
\end{remark}

In light of the above results and remarks, we can assume without loss of generality that all the constraints in the CSP instance $\Cc$ are binary, and that the generators for the combinatorial ideal $I_{\Cc}$ are sets of polynomials $\mathcal{P}_{ij}$ with the property $Sol(C_{ij}) = \Variety{\mathcal{P}_{ij}}$. Furthermore, the points of any bivariate variety $\Variety{\mathcal{P}_{ij}}$ arise from either a permutation, a complete or a two-fan constraint.

\subsection{Generating sets}

In this section, we walk through the proof sketched at the beginning of \cref{sect:dual_discriminator_structure}. We start by presenting some derivations that can be efficiently made by $\PC$. We will refer these derivations in the subsequent sections. We then proceed to find generating sets for ideals arising from permutation and from complete and two-fan constraints. Finally, we will combine these generators together and find a \GB basis of $\I_{\Cc}$.

\subsubsection{Derivation schemes}

We present and prove five derivation schemes that can be performed by Polynomial Calculus in polynomial time. This subsection serves as a reference for later discussions and can be skipped at a first read.

Throughout this section $x$ will denote a variable and $D_f, D_g, D_h \subseteq D$.

\begin{lemma}[Derivation Scheme 1]\label{th:derivation_scheme_1}
    Let $h = \Pi_{a \in D_h}(x-a)$ and consider polynomials $f = h(x-\alpha)$ and $g = h(x-\beta)$ with $\alpha \neq \beta$. Then $h$ can be $\PC$-derived from $f$ and $g$ in polynomial time.
\end{lemma}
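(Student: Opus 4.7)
The plan is to exploit the simple algebraic identity that makes this derivation essentially immediate. Observe that, since $h$ is a common factor, we have the telescoping equality
\begin{equation*}
    f - g \;=\; h(x-\alpha) - h(x-\beta) \;=\; h\bigl((x-\alpha)-(x-\beta)\bigr) \;=\; (\beta - \alpha)\, h.
\end{equation*}
Because $\alpha \neq \beta$, the scalar $\beta - \alpha$ is a nonzero real, and therefore we can solve for $h$ as an explicit linear combination of $f$ and $g$ with rational coefficients:
\begin{equation*}
    h \;=\; \tfrac{1}{\beta-\alpha}\, f \;-\; \tfrac{1}{\beta-\alpha}\, g.
\end{equation*}

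The plan for the $\PC$ derivation is then to apply the linear-combination rule (the third rule in \eqref{eq:PCrules}) a single time with $a = \tfrac{1}{\beta-\alpha}$ and $b = -\tfrac{1}{\beta-\alpha}$, taking $f = 0$ and $g = 0$ as the two hypotheses and producing $``h = 0"$ in one step. There is no multiplication by a variable involved, and the degree of the derived polynomial equals the degree of $f$ (equivalently, $\deg(h)+1$), so the proof has degree $O(\deg(h))$ and size $O(1)$ derivation steps beyond the input axioms. In particular, the derivation runs in polynomial time.

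No step here is really an obstacle; the only point to verify is that the two coefficients $\pm \tfrac{1}{\beta-\alpha}$ can be encoded with a number of bits polynomial in the input size, which is immediate because $\alpha, \beta \in D$ are rational domain values with bounded bit complexity. This lemma is stated in isolation so that later derivations in \cref{sect:dual-proof} can simply invoke ``apply Derivation Scheme~1'' whenever two shifted copies of a common polynomial factor $h$ have been derived.
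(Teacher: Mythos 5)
Your proposal is correct and follows essentially the same route as the paper: both compute $f-g=(\beta-\alpha)h$ and then rescale by the nonzero constant $\beta-\alpha$ using the linear-combination rule (the paper does this in two applications of the rule, you compress it into one). No further comment is needed.
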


\begin{proof}
    \begin{align*}
        \frac{f=0 \qquad g=0}{f-g = (-\alpha + \beta) h =0} \quad \Rightarrow \quad \frac{(-\alpha + \beta)h = 0}{h = 0}.
    \end{align*}
\end{proof}

\begin{lemma}[Derivation Scheme 2]\label{th:derivation_scheme_2}
    Let $f = \Pi_{a \in D_f} (x-a)$ and $g = \Pi_{b \in D_g}(x-b)$ with $D_f \cap D_g \neq \emptyset$. Then $h = \Pi_{c \in D_f \cap D_g}(x-c)$ can be $\PC$-derived from $f$ and $g$ in polynomial time.
\end{lemma}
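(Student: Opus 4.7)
The plan is to reduce the lemma to a Bezout identity for the univariate gcd of $f$ and $g$, and then simulate that identity step by step in $\PC$. Factor $f = h\cdot \tilde f$ and $g = h\cdot \tilde g$, where $\tilde f = \prod_{a\in D_f\setminus D_g}(x-a)$ and $\tilde g = \prod_{b\in D_g\setminus D_f}(x-b)$. Since $D_f\setminus D_g$ and $D_g\setminus D_f$ are disjoint, the univariate polynomials $\tilde f$ and $\tilde g$ share no common root and are therefore coprime in $\mathbb{R}[x]$. Running the extended Euclidean algorithm on $\tilde f$ and $\tilde g$ produces cofactors $p, q \in \mathbb{R}[x]$, with $\deg(p) < \deg(\tilde g) \le |D|$ and $\deg(q) < \deg(\tilde f) \le |D|$, satisfying $p\tilde f + q\tilde g = 1$. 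Multiplying through by $h$ yields the Bezout identity $pf + qg = h$ that I will certify inside $\PC$.

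The next step is to translate this algebraic identity into a $\PC$ derivation starting from the axioms $f=0$ and $g=0$. For each monomial $x^i$ appearing in $p$, I will iteratively apply the multiplication rule $\frac{f=0}{xf=0}$ a constant number of times to derive $x^if = 0$, and then use the linear-combination rule with the corresponding coefficient of $p$ to produce $pf = 0$. The same procedure applied to the axiom $g=0$ yields $qg = 0$. A single final application of the linear-combination rule then gives the equation $(pf+qg)=h=0$, as required.

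The resulting derivation has degree bounded by $\max\{\deg(p)+\deg(f),\,\deg(q)+\deg(g)\} \le 2|D| = O(1)$, and its overall size is polynomial (indeed constant in the ambient number $n$ of variables) because all polynomials involved are univariate of degree at most $|D|$. The one technical point to verify is that the Bezout cofactors $p$ and $q$ have polynomially bounded bit complexity; this is immediate from standard bounds on the extended Euclidean algorithm applied to a pair of rational univariate polynomials of constant degree. No substantive obstacle arises here: the key structural fact is the disjointness of $D_f\setminus D_g$ and $D_g\setminus D_f$, which makes $\tilde f$ and $\tilde g$ coprime and thus puts $h$ in the $\mathbb{R}[x]$-module generated by $f$ and $g$, precisely the operations that $\PC$ can carry out.
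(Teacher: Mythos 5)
Your proof is correct, and it takes a genuinely different route from the paper's. The paper proves Derivation Scheme 2 by induction on the cardinality of the symmetric difference $|D_f \,\Delta\, D_g|$, repeatedly invoking Derivation Scheme 1 (subtracting two products that differ in a single linear factor) and recursing on smaller instances; the argument is dynamic in flavor and somewhat involved, with a case analysis on whether the intermediate root set $D_{\tilde h}$ already sits inside $D_f \cap D_g$. You instead observe that $h=\gcd(f,g)$ in the PID $\mathbb{R}[x]$, extract an explicit Bezout certificate $pf+qg=h$ with $\deg(p)<|D_g\setminus D_f|$ and $\deg(q)<|D_f\setminus D_g|$, and note that such a static Nullstellensatz-style identity is trivially simulable in $\PC$ via the variable-multiplication and linear-combination rules. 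Your approach buys a one-shot derivation with explicit degree bound $2|D|=O(1)$ and an immediate bit-complexity bound from the extended Euclidean algorithm on constant-degree rational polynomials; the paper's approach stays entirely within combinatorial manipulations of root sets (never leaving the family of products of linear factors), which matches the style of the surrounding derivation schemes but requires more bookkeeping. Both establish the claim; one minor point worth making explicit in your write-up is the edge case $D_f\subseteq D_g$ (or vice versa), where $\tilde f=1$ and the Bezout identity degenerates to $h=f$ — this is handled automatically by your argument but is worth a sentence, as the paper dispenses with it up front.
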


\begin{proof}
    Recall the definition of symmetric difference $\Delta$: given two sets $A$ and $B$, then $A \Delta B := (A \cup B) \setminus (A \cap B)$.

    We prove by induction on the cardinality of the symmetric difference $k = |D_f \Delta D_g|$. We can assume without loss of generality that $D_f \nsubseteq D_g$ and $D_g \nsubseteq D_f$ as otherwise it suffices to set $h := f$ and $D_h := D_f$, or $h := g$ and $D_h := D_g$ respectively. This also implies that $k \geq 2$.
    \begin{itemize}
        \item \textit{Base Case (k=2).} Follows from Derivation Scheme 1.
        \item \textit{Induction Step.} Suppose the result holds for $k \in \mathbb{N}$, we will prove it for $k+1$. Since $D_f \nsubseteq D_g$ and $D_g \nsubseteq D_f$, we can pick $\alpha \in D_f \setminus D_g$ and $\beta \in D_g \setminus D_f$. We first derive polynomials
        \begin{align*}
            &\tilde{f} = \Pi_{a \in (D_f \cup D_g) \setminus \{\beta\}} (x-a)\\
            &\tilde{g} = \Pi_{b \in (D_f \cup D_g) \setminus \{ \alpha \} } (x-b).
        \end{align*}
        By Derivation Scheme 1. we can also derive
        \begin{equation*}
            \tilde{h} = \Pi_{c \in D_{\tilde{h}}} (x-c) \quad \text{where} \quad D_{\tilde{h}} = (D_f \cup D_g) \setminus \{\alpha,\beta\}.
        \end{equation*}
    \end{itemize}
    If $D_{\tilde{h}} \subseteq D_f \cap D_g$ then it suffices to set $h := \tilde{h}$ and $D_h := D_{\tilde{h}}$. Otherwise, we have that $D_{\tilde{h}} \supseteq D_f \cap D_g$ since $\alpha,\beta \notin D_f \cap D_g$. Moreover, $|D_{\tilde{h}} \Delta D_f| < k+1$ and thus, by the inductive hypothesis, $h_f = \Pi_{a \in D_f \cap D_{\tilde{h}}}(x-a)$ can be derived. Similarly, $|D_{\tilde{h}} \Delta D_g| < k+1$ and $h_g = \Pi_{a \in D_g \cap D_{\tilde{h}}}(x-a)$ can be derived. Now $|(D_f \cap D_{\tilde{h}}) \cap (D_g \cap D_{\tilde{h}})| < k+1$ and $(D_f \cap D_{\tilde{h}}) \cap (D_g \cap D_{\tilde{h}}) = D_f \cap D_g$. Therefore, again by the inductive hypothesis, from polynomials $h_f$ and $h_g$ we can derive polynomial $h := \Pi_{c \in D_f \cap D_g}(x-c)$.
\end{proof}

\begin{lemma}[Derivation Scheme 3]
    Let $f = \Pi_{a \in D_f} (x-a)$ and $g = \Pi_{b \in D_g}(x-b)(x^2 + \alpha)$ with $D_f, D_g \subseteq D$, $D_f \cap D_g \neq \emptyset$ and $\alpha \in \mathbb{R}$. Then $h = \Pi_{c \in D_f \cap D_g}(x-c)$ can be $\PC$-derived from $f$ and $g$ in polynomial time.
\end{lemma}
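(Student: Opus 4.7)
The plan is to reduce Derivation Scheme~3 to Derivation Scheme~2 by eliminating the quadratic factor $(x^2+\alpha)$ from $g$ using the domain polynomial $D_{|D|}(x)=\Pi_{d\in D}(x-d)$, which is always available as an axiom. Since $|D|=O(1)$ in our setting, a preprocessing step with the extended Euclidean algorithm in $\mathbb{R}[x]$ yields polynomials $u(x),v(x)$ of constant degree with $u(x)(x^2+\alpha)+v(x)D_{|D|}(x)=\delta(x)$, where $\delta(x)=\gcd(x^2+\alpha,D_{|D|}(x))=\Pi_{e\in E}(x-e)$ and $E:=\{d\in D:d^2+\alpha=0\}$.

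Using $O(|D|)$ iterated multiplications-by-variable followed by a linear combination in $\PC$, first derive $u(x)\,g=0$ from $g=0$ and $v(x)D_{|D|}(x)\Pi_{b\in D_g}(x-b)=0$ from the domain axiom; summing yields a $\PC$-derivation of $\delta(x)\Pi_{b\in D_g}(x-b)=0$. After collapsing any repeated linear factor $(x-c)^2$ coming from $c\in E\cap D_g$ to $(x-c)$---which is possible in $\PC$ by radicality of the domain ideal, via an identity of the form $(x-c)=\lambda(x-c)^2 h(x)+\lambda D_{|D|}(x)$ for a constant $\lambda$ and a polynomial $h$ of constant degree computable externally---this becomes a $\PC$-derivation of $\Pi_{c\in E\cup D_g}(x-c)=0$, a product of distinct linear factors indexed by a subset of $D$. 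Now apply Derivation Scheme~2 to $f=\Pi_{a\in D_f}(x-a)$ and this polynomial to obtain $\Pi_{c\in D_f\cap(E\cup D_g)}(x-c)$, which equals $h$ under the structural assumption that $D_f\cap E\subseteq D_g$.

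The main obstacle I foresee is verifying the implicit hypothesis $D_f\cap E\subseteq D_g$: without it, $h$ would not lie in the ideal generated by $\{f,g,D_{|D|}\}$, since the common zeros of $f,g$ over $D$ include the set $D_f\cap(E\cup D_g)\supsetneq D_f\cap D_g$, and no $\PC$ derivation of $h$ can exist. In the intended application to dual-discriminator constraints, the quadratic factor $(x^2+\alpha)$ arises in a controlled way that guarantees this inclusion, and tracing this through the broader proof of \cref{th:dual_discriminator} is the contextual check that makes the lemma applicable. The claimed polynomial-time bound follows once the hypothesis is ensured, since $|D|=O(1)$ makes all preprocessing (extended Euclidean, factor collapsing) constant time and all $\PC$ steps of constant degree with polynomially bounded coefficients.
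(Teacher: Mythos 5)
Your route is genuinely different from the paper's. The paper's proof avoids gcd computations entirely: it picks $\beta \in D_f \setminus D_g$, derives $h_1 = x\,\Pi_{a \in D_f \cup D_g}(x-a)$ from $f$ and $h_2 = (x^2+\alpha)\,\Pi_{b \in (D_f \cup D_g)\setminus\{\beta\}}(x-b)$ from $g$, and subtracts so that the degree-two factors cancel, leaving $(\beta x + \alpha)\,\Pi_{b \in (D_f\cup D_g)\setminus\{\beta\}}(x-b)$; normalizing and iterating Derivation Scheme~2 then strips the elements of $D_f \setminus D_g$ one at a time. Your approach instead eliminates the quadratic factor in one shot via a B\'ezout identity $u(x)(x^2+\alpha) + v(x)\,\Pi_{d\in D}(x-d) = \gcd(x^2+\alpha,\,\Pi_{d\in D}(x-d))$, which is cleaner conceptually, makes the role of the real roots of $x^2+\alpha$ completely explicit, and generalizes immediately to Derivation Scheme~4 (several quadratic factors) without the case analysis the paper's cancellation trick would need (e.g.\ the degenerate case $\beta = 0$). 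The cost is that you rely on the domain axiom and on externally precomputed cofactors, whereas the paper's proof stays closer to an explicit two-line $\PC$ derivation. One small repair: your factor-collapsing identity $(x-c) = \lambda (x-c)^2 h(x) + \lambda D_{|D|}(x)$ with a single scalar $\lambda$ is not of the right form; what you need (and what exists, since the domain polynomial is squarefree) is a B\'ezout identity $(x-c)m(x) = u(x)(x-c)^2 m(x) + v(x)\,\Pi_{d\in D}(x-d)$ with two polynomial cofactors. This is a cosmetic fix.

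Your "main obstacle" is not an artifact of your proof: the condition $D_f \cap E \subseteq D_g$ (equivalently, $x^2+\alpha$ has no root in $D_f \setminus D_g$) is genuinely necessary, and the lemma as stated is false without it. Since $\mathbb{R}[x]$ is a PID, the ideal generated by $f$, $g$ and the domain polynomial is generated by their gcd, which is $\Pi_{c \in D_f \cap (D_g \cup E)}(x-c)$; so $h$ is derivable iff $D_f \cap E \subseteq D_g$. Concretely, with $D=\{0,1,2\}$, $D_f=\{0,1\}$, $D_g=\{0\}$, $\alpha=-1$, one has $f = x(x-1)$ and $g = x(x^2-1) = x(x-1)(x+1)$, the gcd is $x(x-1)$, and $h = x$ is not in the ideal. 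The paper's own proof silently runs into the same wall (in this example its $\tilde h$ equals $x(x-1)$ and no subsequent application of Scheme~2 can remove the factor $(x-1)$). The lemma survives in its intended use because the quadratic factors in Derivation Scheme~5 come from non-real conjugate root pairs of the interpolating polynomial, so $E = \emptyset$ there; but you are right that this hypothesis should be stated, and your proof is correct once it is.
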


\begin{proof}
    Assuming without loss of generality that $D_f \nsubseteq D_g$, let $\beta \in D_f \setminus D_g$. Now derive polynomials
    \begin{align*}
        &h_1 = x \Pi_{a \in D_f \cup D_g} (x-a), \\
        &h_2 = (x^2 + \alpha) \Pi_{b \in (D_f \cup D_g) \setminus \{\beta\}} (x-b).
    \end{align*}
    Thus
    \begin{equation*}
        h_2 - h_1 = \Pi_{b \in (D_f \cup D_g) \setminus \{\beta \}} (x-b)[\beta x + \alpha],
    \end{equation*}
    from which we can derive polynomial
    \begin{equation*}
        \tilde{h} := \left( x + \frac{\alpha}{\beta} \right)\Pi_{b \in (D_f \cup D_g) \setminus \{\beta\}}(x-b).
    \end{equation*}
    Applying Derivation Scheme 2. first with $f$ and $\tilde{h}$ and then with $g$ we obtain the result.
\end{proof}

\begin{corollary}[Derivation Scheme 4]
    Let $f = \Pi_{a \in D_f} (x-a)$ and $g = \Pi_{b \in D_g}(x-b)\Pi_{d \in F_g}(x^2 + \beta_d)$ with $D_f, D_g \subseteq D$ and some set of indices $F_g$, $D_f \cap D_g \neq \emptyset$ and $\beta_d \in \mathbb{R}$. Then $h = \Pi_{c \in D_f \cap D_g}(x-c)$ can be $\PC$-derived from $f$ and $g$ in polynomial time.
\end{corollary}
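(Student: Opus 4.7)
The plan is to prove Derivation Scheme 4 by induction on $|F_g|$, generalizing the cancellation trick of Derivation Scheme 3. The base case $|F_g|=0$ reduces immediately to Derivation Scheme 2, and $|F_g|=1$ is precisely Derivation Scheme 3.

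For the inductive step with $|F_g|=k+1\ge 1$, I may assume without loss of generality $D_f\not\subseteq D_g$ (otherwise $h=f$ trivially); then fix $\beta\in D_f\setminus D_g$ and $d^*\in F_g$, and set $Q^*(x):=\prod_{d\in F_g\setminus\{d^*\}}(x^2+\beta_d)$. Using the $\PC$ multiplication rule, from $f$ I derive $h_1 := x\,Q^*(x)\prod_{a\in D_f\cup D_g}(x-a)$, and from $g$ I derive $h_2 := (x^2+\beta_{d^*})\,Q^*(x)\prod_{b\in(D_f\cup D_g)\setminus\{\beta\}}(x-b)$. Both share the common factor $P(x):=Q^*(x)\prod_{b\in(D_f\cup D_g)\setminus\{\beta\}}(x-b)$, and a direct computation gives $h_2-h_1=(\beta x+\beta_{d^*})\,P(x)$. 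Dividing by the nonzero constant $\beta$ yields $\tilde g := \bigl(x+\beta_{d^*}/\beta\bigr)\,Q^*(x)\prod_{b\in(D_f\cup D_g)\setminus\{\beta\}}(x-b)$, which fits the hypothesis of Scheme 4 with $|F_{\tilde g}|=k$ quadratic factors and $D_{\tilde g}\supseteq D_g$, so $D_f\cap D_{\tilde g}\supseteq D_f\cap D_g\neq\emptyset$. The inductive hypothesis applied to $(f,\tilde g)$ then produces $\tilde h=\prod_{c\in D_f\cap D_{\tilde g}}(x-c)$.

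To finish, I strip from $\tilde h$ any spurious linear factor coming from the potential new root $-\beta_{d^*}/\beta$ by invoking Scheme 4 once more on $(\tilde h, g)$: a short set calculation using $\beta\notin D_g$ yields $D_{\tilde h}\cap D_g=(D_f\cap D_{\tilde g})\cap D_g=D_f\cap D_g$, so the second call delivers exactly $h$. Since $|D_{\tilde h}|\le|D_f|$ with strict inequality whenever $-\beta_{d^*}/\beta\ne\beta$, a lexicographic induction on $(|F_g|,|D_f|)$ terminates, and the whole derivation remains polynomial size in $|D_f|+|D_g|+|F_g|$. The main obstacle is the degenerate case $\beta_{d^*}=-\beta^2$, where $(x^2+\beta_{d^*})=(x-\beta)(x+\beta)$ makes $(x-\beta)$ a common real factor of $f$ and $g$ even though $\beta\notin D_g$; here the size measure fails to decrease, and in fact $h$ need not even lie in the ideal $\langle f,g\rangle$ over $\mathbb{R}[x]$. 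This edge case is circumvented either by choosing a different pair $(\beta,d^*)$ when one exists, or, in the intended dual-discriminator $\CSP$ application of \cref{sect:dual-proof}, by the fact that the quadratic factors arising in the generators have no real roots in $D$, so the configuration simply does not occur.
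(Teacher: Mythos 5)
Your approach---peeling off the quadratic factors of $g$ one at a time via the cancellation trick of Derivation Scheme 3, with an outer induction on $|F_g|$ and an inner descent on $|D_f|$---is exactly what the paper intends: the corollary is stated without proof, and your argument supplies the missing iteration correctly, including the verification that $D_{\tilde h}\cap D_g=D_f\cap D_g$ so that the final recursive call on $(\tilde h,g)$ returns $h$, and a valid lexicographic termination measure.

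The degenerate case you isolate is not a defect of your proof but of the statement itself: when some $\beta_{d^*}=-\beta^2$ with $\beta\in D_f\setminus D_g$, the conclusion can fail outright. Concretely, take $D_f=\{0,1\}$, $D_g=\{0\}$, $F_g=\{d^*\}$ with $\beta_{d^*}=-1$, so $f=x(x-1)$ and $g=x(x^2-1)=(x+1)f$. Then $\langle f,g\rangle=\langle x(x-1)\rangle$, every element of which vanishes at $x=1$, so $h=x$ does not lie in the ideal and cannot be $\PC$-derived from $f$ and $g$ (adjoining the domain polynomial for $D=\{0,1\}$ does not help, since it equals $f$). The same example already breaks Derivation Scheme 3, whose procedure returns $\tilde h=f$ and loops. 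So the corollary needs the additional hypothesis that no quadratic factor of $g$ has a real root in $D_f\setminus D_g$---for instance $\beta_d>0$, which is what holds in the dual-discriminator application where the quadratics arise from genuinely complex conjugate root pairs---and your closing remark is the right repair. One smaller gap you inherit from the paper's own Scheme 3: the division by $\beta$ silently assumes $\beta\neq0$; when $\beta=0$ the difference $h_2-h_1$ equals $\beta_{d^*}$ times the common factor, which still yields a usable $\tilde g$ provided $\beta_{d^*}\neq0$, while $\beta=\beta_{d^*}=0$ is again the shared-root degeneracy.
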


\begin{lemma}[Derivation Scheme 5]
    Let $\sigma_{ij}:D_i \rightarrow D_j$ be a bijection with $D_i, D_j \subseteq D$. Let $f$ be the Lagrange interpolating polynomial that simulates $\sigma_{ji} = \sigma_{ij}^{-1}$ over $D_j$. Consider polynomials $p_1 := x_i - f(x_j)$, $p_2 := x_i - a$ and $p_3 = \Pi_{b \in D_j}(x_j - b)$ for some $a \in D_i$. Then $h = x_j - \sigma_{ij}(a)$ can be $\PC$-derived from $p_1$ and $p_2$ in polynomial time.
\end{lemma}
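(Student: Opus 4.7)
The plan is to reduce everything to a univariate problem in $x_j$ and then extract $x_j - \sigma_{ij}(a)$ as a Bezout combination of two univariate polynomials. As a first step, I would simply form the linear combination $p_2 - p_1 = f(x_j) - a$, thereby deriving the equation $f(x_j) - a = 0$ from $p_1$ and $p_2$ by a single application of the addition rule.

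Next, I would examine the polynomial $f(x_j) - a$ together with $p_3 = \prod_{b \in D_j}(x_j - b)$. Since $f$ interpolates $\sigma_{ji} = \sigma_{ij}^{-1}$ on $D_j$, for every $b \in D_j$ we have $f(b) - a = \sigma_{ji}(b) - a$, which vanishes if and only if $b = \sigma_{ij}(a)$. Hence, among the roots $D_j$ of $p_3$, the unique common root of $p_3$ and $f(x_j) - a$ is $\sigma_{ij}(a)$, and it is a simple root of $p_3$. It follows that $\gcd(f(x_j) - a,\, p_3) = x_j - \sigma_{ij}(a)$ in $\mathbb{R}[x_j]$, and the univariate Euclidean algorithm produces polynomials $u, v \in \mathbb{R}[x_j]$ with $\deg u, \deg v \leq |D|$ such that
\begin{equation*}
    u(x_j)\bigl(f(x_j) - a\bigr) \;+\; v(x_j)\, p_3 \;=\; x_j - \sigma_{ij}(a).
\end{equation*}

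Finally, I would simulate this Bezout identity using PC's derivation rules. Writing $u(x_j) = \sum_{k=0}^{\deg u} c_k x_j^k$, I would apply the multiplication-by-$x_j$ rule $k$ times to $f(x_j) - a = 0$ to obtain $x_j^k(f(x_j) - a) = 0$ for each $k$, and then combine these equations linearly with coefficients $c_k$ to derive $u(x_j)(f(x_j) - a) = 0$. The same pattern applied to $p_3 = 0$ yields $v(x_j)\, p_3 = 0$. One last linear combination of the two produces $x_j - \sigma_{ij}(a) = 0$. Since $|D| = O(1)$, every intermediate polynomial has constant degree, $\deg u, \deg v$ are $O(1)$, and the whole derivation runs in constant size, hence certainly in polynomial time in the input.

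The only mild subtlety is the availability of $p_3$ within the PC derivation: if $D_j = D$ then $p_3$ coincides with the domain polynomial $D_k(x_j)$, which PC may introduce as an axiom for free; otherwise $p_3$ is produced in a preceding step by Derivation Scheme 2 applied to the domain polynomial and to another polynomial obtained earlier, whose polynomial-time guarantee carries over. All remaining steps are mechanical applications of the addition and multiplication rules of PC.
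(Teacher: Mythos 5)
Your proof is correct, and after the common first step ($p_2 - p_1 = f(x_j) - a$) it diverges from the paper's argument in how the linear factor is extracted. The paper factors $f(x_j) - a$ over $\mathbb{R}$ as $(x_j - \sigma_{ij}(a))$ times a product of real linear factors and real quadratic factors coming from conjugate pairs of complex roots, observes that none of the extraneous real roots lie in $D_j \setminus \{\sigma_{ij}(a)\}$ (because $\sigma_{ji}$ is a bijection), and then invokes Derivation Schemes 2 and 4 to intersect the root set of $f(x_j)-a$ with that of the partial domain polynomial $p_3$. You instead observe that $\gcd\bigl(f(x_j)-a,\ p_3\bigr) = x_j - \sigma_{ij}(a)$ in $\mathbb{R}[x_j]$ (since $\sigma_{ij}(a)$ is the unique common root and is simple in $p_3$), take the B\'ezout cofactors $u,v$ from the extended Euclidean algorithm, and simulate the identity
\begin{equation*}
    u(x_j)\bigl(f(x_j)-a\bigr) + v(x_j)\,p_3 = x_j - \sigma_{ij}(a)
\end{equation*}
inside \PC\ by repeated multiplication-by-$x_j$ followed by linear combinations. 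Both arguments are sound and both run in constant size since $|D|=O(1)$, with all cofactor degrees bounded by $O(|D|)$. Your route is more self-contained: it bypasses Derivation Schemes 2 and 4 entirely and avoids the real-factorization bookkeeping (where the paper's quadratic factors $x_j^2+b^2$ are in fact written slightly loosely for a general complex root $b$), at the cost of appealing to the extended Euclidean algorithm rather than the previously established schemes; the paper's choice keeps the proof uniform with the other derivation schemes in that section. Your closing remark about the availability of $p_3$ matches the paper's usage, where $p_3$ is likewise an additional premise beyond $p_1,p_2$.
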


\begin{proof}
    Start with the derivation
    \begin{align*}
        \frac{p_2 = 0 \qquad  p_1 = 0}{p_2 - p_1 = f(x_j) - a = 0}
    \end{align*}
    and observe that $f(x_j)$ has degree at most $|D_j| - 1$. By the Fundamental Theorem of Algebra, the polynomial $f(x_j) - a$ has $\deg(f) \leq |D_j| - 1$ roots with multiplicity and $\sigma_{ij}(a)$ is a root since $f(\sigma_{ij}(a)) = \sigma_{ji}(\sigma_{ij}(a)) = \sigma_{ij}^{-1}(\sigma_{ij}(a)) = a$. On the other hand, $\sigma_{ji}$ is a bijection, therefore $f(b) \neq a$ for every $b \in D_j \setminus \{\sigma_{ij}(a)\}$. Thus
    \begin{equation*}
        f(x_j) - a = (x_j - \sigma_{ij}(a))\Pi_{b \in R}(x_j - b)
    \end{equation*}
    for some $R \subseteq \mathbb{C}$ such that $R \cap \left( D_j \setminus \{ \sigma_{ij}(a) \} \right) = \emptyset$. While some roots might be complex, we have that if a root $b$ is complex also its conjugate $\bar{b}$ is a root. Multiplying together $x_j - b$ and $x_j - \bar{b}$ we get the degree 2 polynomial $x_j^2 + b^2$. Therefore we can rewrite
    \begin{equation*}
        f(x_j) - a = (x_j - \sigma_{ij}(a))\Pi_{b \in R\cap \mathbb{R}}(x_j - b) \Pi_{c \in S} (x_j^2 + \beta_c)
    \end{equation*}
    for some $S \subseteq \mathbb{N}$ and $\beta_c \in \mathbb{R}$.

    The results follows by applying Derivation Scheme 2. and Derivation Scheme 4. to $f(x_j) - a$ and the domain polynomial $p_3$.
\end{proof}

\subsubsection{Permutation constraints}\label{sect:permutation_constraints}

Let $C_{ij} = R_{ij}(x_i,x_j) \in C_P$ be a permutation constraint where $R_{ij} = \{(a, \pi_{ij}(a) \, | \, a \in D_i \}$ for some $D_i,D_j \subseteq D$ and a bijection $\pi_{ij}:D_i \rightarrow D_j$. To the constraint $C_{ij}$ it corresponds the set of polynomials $\{ x_j - f(x_i), x_i - g(x_j), \Pi_{a \in D_i} (x_i - a), \Pi_{b \in D_j}(x_j - b)\}$, where $f$ and $g$ are the (Lagrange) polynomials interpolating the points $\{(a,\pi_{ij}(a))\}_{a \in D_i}$ and $\{(\pi_{ij}^{-1}(b),b)\}_{b \in D_j}$ respectively. Recall \cref{rmrk:binary_constraints_bounded_derivations}, thus all these polynomials can be derived by $\PC$ in polynomial time.

Next we use a construction similar to the one in \cite{BharathiM21, BharathiM25} to define larger constraints called \emph{chain permutation constraints} (CPCs) that combine multiple permutation constraints together. We maintain the notation. More precisely, it is possible to define constraints
\begin{equation*}
    CPC_p := R_p(X_p = \{ x_{p_1}, \ldots, x_{p_r} \})
\end{equation*}
such that the solutions to the constraints $C_{P}$ are also the solutions to the constraints $CPC := \{CPC_p \, | \, p \in J\}$ for some $J \subseteq [n]$. Moreover, the following property holds.

\begin{lemma}[\cite{BharathiM21, BharathiM25}]
    Let $CPC_p = R_p(X_p)$ and $CPC_q = R_q(X_q)$ with $p,q \in J$ be two CPCs. If $p \neq q$, then $X_p \cap X_q = \emptyset$.
\end{lemma}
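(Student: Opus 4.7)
The plan is to show that the CPCs correspond exactly to the equivalence classes of a natural connectivity relation on the variables appearing in some constraint from $C_P$, which then forces disjointness of their scopes. Concretely, I would introduce the relation $\sim$ on the variables of $X$ by declaring $x_i \sim x_j$ whenever there exists a finite chain of permutation constraints $C_{i_0 i_1}, C_{i_1 i_2}, \ldots, C_{i_{k-1} i_k} \in C_P$ with $i_0 = i$ and $i_k = j$ (with the convention $x_i \sim x_i$ for every variable). This relation is reflexive by definition, symmetric because every permutation constraint $C_{ij}$ gives rise to a bijection $\pi_{ij} \colon D_i \to D_j$ which is invertible, and transitive by concatenation of chains. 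Hence $\sim$ is an equivalence relation on the set of variables occurring in some constraint of $C_P$.

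Next, I would argue that the construction of the CPCs in~\cite{BharathiM21, BharathiM25} produces exactly one constraint $CPC_p = R_p(X_p)$ per equivalence class of $\sim$, where $X_p$ is the full equivalence class and $R_p$ is the relation consisting of all tuples obtained by composing the permutations $\pi_{ij}$ along an arbitrary spanning tree of the chain. One then needs to verify two facts: (i) the relation $R_p$ is independent of the chosen spanning tree (otherwise the CPC would not be well-defined), and (ii) the set of assignments to $X$ satisfying all permutation constraints in $C_P$ coincides with the set of assignments satisfying all constraints in $CPC$. Both follow from the bijectivity of each $\pi_{ij}$: once a value is fixed for any single variable in an equivalence class, the values of all other variables in that class are uniquely determined by the chains, and consistency is automatic because any closed cycle of permutations must act as the identity on feasible values (otherwise the original $C_P$ would be inconsistent on that equivalence class, in which case $R_p$ is empty and the property is vacuous on that class).

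With this characterization, the conclusion is immediate: distinct equivalence classes of $\sim$ are disjoint by definition of an equivalence relation, and since $X_p$ and $X_q$ are the two distinct equivalence classes associated with $p \neq q$, we obtain $X_p \cap X_q = \emptyset$.

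The main obstacle in making this rigorous is not the combinatorial argument itself but the verification that the relation $R_p$ defined through chains is well-defined and faithful to the original set of solutions, especially when the chain contains cycles; this is where the bijectivity of the permutations $\pi_{ij}$ (and their inverses being representable by Lagrange interpolating polynomials, cf.\ Derivation Scheme~5) is essential. Once the well-definedness of $R_p$ is in hand, the disjointness of $X_p$ and $X_q$ for $p \neq q$ is a direct consequence of the partition induced by $\sim$.
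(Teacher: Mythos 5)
Your proposal is correct and is essentially the intended argument: the paper itself gives no proof (the lemma is imported from the cited works), but its own definition of the index sets $J_p$ --- as sets of indices pairwise connected by chains of permutation constraints --- already identifies each $X_p$ with an equivalence class (connected component) of exactly the relation $\sim$ you define, so disjointness for $p \neq q$ is immediate from the induced partition, just as you conclude. One small caveat in your auxiliary discussion: a closed cycle of permutations composing to a non-identity map $\sigma$ does not force inconsistency or $R_p = \emptyset$; it merely restricts the feasible values on that class to the fixed points of $\sigma$ (a possibly nonempty proper subset), but this concerns the well-definedness of $R_p$ rather than the disjointness claim and does not affect the lemma.
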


However, we do not really need to calculate any CPC: it suffices for us to derive a set of polynomials $\mathcal{P}^{CPC_p}$ such that $\Variety{\mathcal{P}^{CPC_p}} = Sol(CPC_p)$ for any $p \in J$. In order to do so, we define

\begin{equation*}
    \mathcal{P}^{CPC_p} := \bigcup_{i,j \in J_p} \mathcal{P}_{ij}
\end{equation*}

where $J_p \subseteq [n]$ is a set of indices such that for any pair of indices there exists a chain of pairs of indices "connecting" them. More precisely, let $i,j \in J_p$, then there exist pairs $$\{ l_1^1, l_2^1 \}, \{ l_1^2, l_2^2 \}, \{l_1^3, l_2^3\}, \ldots, \{l_1^k, l_2^k\} \subseteq J_p$$ for some $k \in [n]$ such that $i \in \{ l_1^1, l_2^1 \}, j \in \{l_1^k, l_2^k\}$ and $|\{l_1^w, l_2^w\} \cap \{l_1^{w+1}, l_2^{w+1}\}| = 1$ for any $w = 1, \ldots, k-1$.

Let $\I_{CPC_p}$ be the combinatorial ideal associated with $CPC_p = R_p(X_p = \{x_{p_1}, \ldots, x_{p_r} \})$. Let $S_i = pr_i(R_p) \subseteq D$ be the $i$-th projection of relation $R_p$, i.e. the set of values $x_i$ can assume for each valid solution in $R_p$. As a result of the construction of the CPCs, there exist bijections between any pair of variables in $X_p$. We denote $\sigma_{ij}: S_i \rightarrow S_j$ any such bijection between two variables $x_i, x_j \in X_p$.

\begin{lemma}
    Let $\mathcal{P}^{CPC_p}$ be a generating set of $\I_{CPC_p}$ for some $p \in J$. If $i,j \in J_p$, then there exist interpolating polynomials $f$ and $g$ simulating the bijections $\sigma_{ij}$ and $\sigma_{ji}$ respectively, i.e. $f(a) = \sigma_{ij}(a)$ for all $a \in D_i$ and similarly for $g$. It follows that $x_j - f(x_i), x_i - g(x_j) \in \langle \mathcal{P}^{CPC_p} \rangle$. Furthermore, $\deg(f), \deg(g) \leq |D| - 1$ and polynomials $x_j - f(x_i)$ and $ x_i - g(x_j)$ can be derived by $\PC$ in polynomial time.
\end{lemma}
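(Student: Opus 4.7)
The plan is to first handle the existence and degree bound, then the ideal membership and the $\PC$ derivation, treating the last two together by induction on the length of the chain that connects $i$ and $j$ inside $J_p$.

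For the existence of $f$ and $g$, I would simply appeal to Lagrange interpolation. Since $\sigma_{ij}:S_i\to S_j$ is a bijection between subsets of $D$, and $|S_i|\le |D|$, there is a unique polynomial $f\in\mathbb{R}[x_i]$ of degree at most $|S_i|-1\le |D|-1$ with $f(a)=\sigma_{ij}(a)$ for all $a\in S_i$, and symmetrically for $g$. Moreover, since $|D|=O(1)$, both $f$ and $g$ have constant degree and their coefficients can be computed in constant time. Ideal membership of $x_j-f(x_i)$ in $\langle\mathcal{P}^{CPC_p}\rangle$ then follows from the fact that $\Variety{\mathcal{P}^{CPC_p}}=Sol(CPC_p)$, which is exactly the tuples respecting all the bijections $\sigma_{ij}$, together with radicality (solutions lie in the finite domain $D$, so the combinatorial ideal is radical).

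For the $\PC$ derivation I would proceed by induction on the length $k$ of the chain $\{l_1^1,l_2^1\},\ldots,\{l_1^k,l_2^k\}$ witnessing that $i,j\in J_p$. The base case $k=1$ is immediate: the polynomial $x_j-f(x_i)$ belongs by construction to $\mathcal{P}_{ij}\subseteq\mathcal{P}^{CPC_p}$, and so is introduced for free by the first axiom rule of $\PC$. For the inductive step, suppose $\PC$ has already derived $x_m-f_1(x_i)$ of degree $\le |D|-1$ (corresponding to the chain up to step $k-1$) and the single-step polynomial $x_j-f_2(x_m)$ is available from $\mathcal{P}_{mj}$. The idea is to build $x_j-(f_2\circ f_1)(x_i)$ and then reduce modulo the domain polynomial $D_i(x_i)=\prod_{a\in D}(x_i-a)$ to obtain a polynomial of degree at most $|D|-1$ in $x_i$. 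Concretely, I would, using only the third and fourth rules of $\PC$ (linear combinations and multiplication by a variable), derive for each monomial $x_m^t$ appearing in $f_2$ the equality $x_m^t=f_1(x_i)^t \ (\mathrm{mod}\ x_m-f_1(x_i))$ by iterated multiplication of $x_m-f_1(x_i)$ by $x_m^{t-1}$ and telescoping. Since $t\le |D|-1=O(1)$, each such derivation is of constant size and constant degree. Combining these with the coefficients of $f_2$ and adding to $x_j-f_2(x_m)$ yields $x_j-f_2(f_1(x_i))$. Then I reduce modulo $D_i(x_i)$: the polynomial $f_2(f_1(x_i))$ has degree at most $(|D|-1)^2=O(1)$, so its reduction modulo $D_i(x_i)$ can also be simulated in $\PC$ in constant size by iteratively subtracting monomial multiples of $D_i(x_i)$ (which itself can be derived once, in constant size, from the univariate domain polynomial $f_D(x_i)\in\mathcal{P}^{CPC_p}$). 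Uniqueness of the interpolating polynomial modulo $D_i(x_i)$ restricted to $S_i$ (combined with the vanishing of $x_i^{|S_i|}-(\text{lower order})$ on $S_i$) guarantees that what remains is exactly $x_j-f(x_i)$ with $f$ the Lagrange interpolant of $\sigma_{ij}=\sigma_{mj}\circ\sigma_{im}$. The argument for $x_i-g(x_j)$ is symmetric.

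Since the chain has length $k\le n$ and each inductive step adds only a constant number of $\PC$ lines of constant degree and constant coefficient bit-size, the overall derivation has size $O(n)$ and degree $O(|D|^2)=O(1)$, which is polynomial. The main obstacle I anticipate is the bookkeeping for the reduction modulo $D_i(x_i)$: one must be careful that the intermediate polynomial $f_2(f_1(x_i))$, whose degree may be as large as $(|D|-1)^2$, is reduced \emph{inside} $\PC$ rather than externally, and that the reduction produces precisely the Lagrange interpolant of the composed bijection rather than an equivalent representative of the same class modulo the vanishing ideal of $S_i$ in $\mathbb{R}[x_i]$. This is handled by observing that $S_i\subseteq D$, so the univariate vanishing ideal of $S_i$ restricted to $\mathbb{R}[x_i]_{\le|D|-1}$ is trivial, making the reduced form unique and equal to $f$.
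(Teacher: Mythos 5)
Your proposal is correct and follows essentially the same route as the paper: induction on the length of the chain connecting $i$ and $j$ in $J_p$, with the base case discharged by the constant-size binary constraint $\mathcal{P}_{ij}$, and the inductive step given by composing the two interpolants and reducing the resulting degree-$(|D|-1)^2$ polynomial back down via domain polynomials inside $\PC$ (the paper phrases this reduction as a radicality/ideal-membership argument over the constant-size ring $\mathbb{R}[x_i,x_j]$, while you spell out the explicit telescoping, which is a harmless difference in detail). One negligible imprecision: reducing only modulo the full domain polynomial $\prod_{a\in D}(x_i-a)$ need not return the Lagrange interpolant of $\sigma_{ij}$ on $S_i$ when $S_i\subsetneq D$, but the resulting degree-$\le|D|-1$ polynomial still agrees with $\sigma_{ij}$ on $S_i$, which is all the lemma requires.
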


\begin{proof}
    Since $i,j \in J_p$, there exist pairs $\{ l_1^1, l_2^1 \}, \{ l_1^2, l_2^2 \}, \{l_1^3, l_2^3\}, \ldots, \{l_1^k, l_2^k\} \subseteq J_p$ for some $k \in [n]$ such that $i \in \{ l_1^1, l_2^1 \}, j \in \{l_1^k, l_2^k\}$ and $|\{l_1^w, l_2^w\} \cap \{l_1^{w+1}, l_2^{w+1}\}| = 1$ for any $w = 1, \ldots, k-1$. We proceed by induction on $k$.

    \textit{Base Case (k=1):} in this case there exists a constraint $C_{ij} = R_{ij}(x_i,x_j) \in CPC_p$ with $R_{ij} = \{(a,\pi_{ij}(a))\}_{a \in D_i}$. The result follows by \cref{rmrk:binary_constraints_bounded_derivations}, where $f$ and $g$ are the Lagrange polynomials simulating permutation $\pi_{ij}$, and hence with degree $\deg(f), \deg(g) \leq |D| - 1$.

    \textit{Induction step:} we assume that the statement holds for some $k \in [n-1]$ and we will prove it for $k+1$. By assumption, there exists the chain of pairs of indices $\{ l_1^1, l_2^1 \}, \{ l_1^2, l_2^2 \},$ $ \ldots,$  $ \{l_1^k, l_2^k\}, \{l_1^{k+1}, l_2^{k+1}\}$ such that $x_i$ is "connected" to $x_j$. Suppose $l_2^{k+1} = j$, then without loss of generality either $l_1^k = l_1^{k+1}$ or $l_2^k = l_1^{k+1}$. By the induction hypothesis, polynomials $x_i - f(x_{l_1^{k+1}})$ and $x_{l_1^{k+1}} - g(x_i)$ are in $\langle \mathcal{P}^{CPC_p} \rangle$, can be $\PC$ derived in polynomial time and $\deg(f), \deg(g) \leq |D| - 1$. Moreover, by the inductive hypothesis there also exist polynomials $f'(x_j)$ and $g'(x_{l_1^{k+1}})$ such that $x_{l_1^{k+1}} - f'(x_j)$ and $x_j - g'(x_{l_1^{k+1}})$ are in $\langle \mathcal{P}^{CPC_p} \rangle$ with $\deg(f'),\deg(g') \leq |D| - 1$ and can be derived within $\PC$. Now consider the polynomial $x_j - g'(g(x_i))$ and also consider polynomial $x_j - \tilde{g}(x_i)$, where $\tilde{g}$ is the Lagrange polynomial simulating $\sigma_{ij}$. It follows that these two polynomials evaluate to 0 on $\{(a,\sigma_{ij}(a))\}_{a \in S_i}$. Therefore $x_j - \tilde{g}(x_i) \in \langle x_j - g'(g(x_i)), \Pi_{a \in S_i}(x_i - a), \Pi_{b \in S_j} (x_j - b) \rangle $ since the generated ideal is radical. Moreover, the derivation can be simulated by $\PC$ in time and degree both independent from $n$, as noted in \cref{rmrk:binary_constraints_bounded_derivations}.
\end{proof}

\begin{remark}
    There are at most $|D|!$ different bijections between variables in $X_p$, implying that many variables are actually related by linear polynomials of the type $x_j - x_k$ for some $x_j, x_k \in X_p$. Indeed, consider variable $x_i \in X_p$ and consider all the bijections $\sigma_{il}$ such that $x_l \in X_p$. Suppose there exist two variables $x_j, x_k \in X_p$ such that $\sigma_{ij} = \sigma_{ik}$. Then $x_j = x_k$ for all values in $S_j = S_k$. It follows that $\sigma_{jk} = \sigma_{kj} = id$. Thus the Lagrange interpolating polynomials $f$ and $g$ are $x_j - x_k$ and $x_k - x_j$ respectively.
\end{remark}

From the above remark it follows that the number of variables which are \textit{not} linearly related is at most $|D|! = O(1)$, while for the remaining variables linear polynomials can be derived from $\mathcal{P}^{CPC_p}$. Finding a \GB basis for $\I_{CPC_p}$ thus reduces to finding the \GB basis of an ideal with at most $|D|!$ variables. Indeed, consider the set
\begin{equation}\label{eqn:CPC_linear_polynomials}
    \{ x_i - x_j \, | \, x_i > x_j \text{ and } \sigma_{ij} \text{ is the identity function } \forall x_i, x_j \in X_p \}.
\end{equation}

Let $\mathcal{S}_p$ be the reduced \GB basis of \cref{eqn:CPC_linear_polynomials}. Note that it can be derived by $\PC$ with bounded degree and bounded coefficients. Now define

\begin{equation*}
    M_p := \{LM(s) \, | \, s \in \mathcal{S}_p\}.
\end{equation*}

Therefore, by radicality it follows

\begin{equation}\label{eqn:dual_discriminator_CPC_decomposition}
    \I_{CPC_p} = \langle \mathcal{T}_p \rangle + \langle \mathcal{D}_p \rangle + \langle \mathcal{S}_p \rangle.
\end{equation}

where $\mathcal{T}_p$ is the set of interpolating polynomials for the bijections different from the identity $\sigma_{ij} \neq id$, and $\mathcal{D}_p = \{ \Pi_{a \in S_i}(x_i - a) \, | \, x_i \in X_p \setminus M_p\}$ is the set of domain polynomials.

At this stage, finding a \GB basis for $\I_{CPC_p}$ would not be difficult, but we choose to defer this step to \cref{sect:combin_gen}. As we will see, we will update the sets $\mathcal{D}_p$ so that it is easy to find a \GB basis for $\mathcal{T}_p \cup \mathcal{D}_p \cup \mathcal{S}_p \cup G$, where $G$ is a set of generators arising from the complete and two-fan constraints.

\subsubsection{Complete and two-fan constraints}\label{sect:complete_two-fan_constraints}

We consider the set of constraints $C_{CF}$ comprising of the complete and two-fan constraints. Let $G = \emptyset$. We will add polynomials to $G$ until it represents the constraints in $C_{CF}$.

A constraint $C_{ij} = R(x_i, x_j)$ is \textit{complete} whenever $R = D_i \times D_j$ with $D_i, D_j \subseteq D$. It is described by a pair of \emph{partial domain polynomials} defined as 
\begin{equation*}
    \Pi_{a \in D_i} (x_i - a), \qquad \Pi_{a \in D_j} (x_j - a).
\end{equation*}
For every complete constraint, we can derive such polynomial as seen in \cref{rmrk:binary_constraints_bounded_derivations} and add them to $G$.

A constraint $C_{ij} = R(x_i, x_j)$ is \textit{two-fan} if $R = \{(\{a\} \times D_j) \cup (D_i \times \{b\})\}$ with $D_i, D_j \subseteq D$, $a \in D_i$ and $b \in D_j$. A two-fan constraint is described by polynomials
\begin{equation*}
    (x_i - a)(x_j - b), \quad \Pi_{c \in D_i} (x_i - c), \quad \Pi_{d \in D_j} (x_j - d).
\end{equation*}
We also add those to $G$. 

It might happen that there exists a variable $x_i$ for which two partial domain polynomials have been added, say $\Pi_{c \in D_{i_1}} (x_i - c)$ and $\Pi_{d \in D_{i_2}} (x_i - d)$. In this case, we derive by Derivation Scheme 2. the polynomial $\Pi_{c \in D_i} (x_i - c)$ where $D_i = D_{i_1} \cap D_{i_2}$ and replace the two initial partial domain polynomials in $G$ with this new one. If for some variable $x_i$ no partial domain polynomial has been added to $G$, we add to $G$ the full domain polynomial $\Pi_{a \in D} (x_i - a)$.

Lastly, we observe that we can consider the equivalent $(2,3)$-consistent version $\Cc' = (X, D, C')$ of the initial $\CSP$ $\Cc = (X, D, C)$. We follow along the algorithm presented in \cite{BulatovRSTOC22}. However, we expand on that result by presenting a $\PC$ simulation of the algorithm.
\begin{itemize}
    \item Repeat until possible: consider three variables $x_i, x_j, x_k \in X$ and consider the set
    \begin{align*}
        T_{ij,k} = \{(a,b) \in R_{ij} \ | \ \nexists c \in D \text{ s.t } (a,c) \in R_{ik} \wedge (c,b) \in R_{kj} \} 
    \end{align*}
    If $T_{ij,k} \neq \emptyset$ do the following. Let $f$ and $g$ be interpolating polynomials vanishing at $R_{ik}$ and $R_{kj}$ respectively, i.e. $f(\alpha, \beta) = 0$ if and only if $(\alpha, \beta) \in R_{ik}$, and similarly we define $g$. Note that $\deg(f), \deg(g) = O(|D|^2)$. Define $h(x_i, x_j, x_k) := f(x_i, x_k)g(x_k, x_j)$. Then, by definition, $h(a,b,c) \neq 0$ if $(\alpha,\beta) \in T_{ij,k}$ and for all $c \in D$. It follows that, as done in \cref{rmrk:binary_constraints_bounded_derivations}, we can derive a bivariate polynomial $\tilde{h}(x_i,x_j)$ such that $\tilde{h}(a,b) = 0$ if and only if $(a,b) \in R_{ij} \setminus T_{ij,k}$. Add $\tilde{h}$ to $G$.
\end{itemize}
When the algorithm stops, we consider the $\CSP$ generated by the polynomials in $G$, i.e. for every $x_i, x_j \in X$ we consider the constraint $C' = \mathbf{V}_{ij}(x_i, x_j)$, where $\mathbf{V}_{ij}$ is the variety generated by the polynomials in $x_i$ and $x_j$. It turns out that $C'$ is the $(2,3)$-consistent version of $C$. Therefore, $C'$ and $C$ have the same solutions and $\nabla$ is a polymorphism of $C'$. Moreover, the following holds.

\begin{lemma}\cite[Lemma~4.1.5]{rafiey_constraint_2022}\label{th:complete_two-fan_grobner_basis}
    Let $G$ be defined as above. Then $G$ is a \GB basis of~$\I_{CF}.$
\end{lemma}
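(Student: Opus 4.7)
I would prove this by first showing that $\langle G \rangle = \I_{CF}$ as ideals and then verifying the Gr\"obner basis property in the grlex order. The inclusion $\langle G \rangle \subseteq \I_{CF}$ is immediate from the construction: every polynomial placed in $G$—the partial/full domain polynomials, the two-fan polynomials $(x_i-a)(x_j-b)$, and the bivariate polynomials $\tilde{h}$ arising from the $(2,3)$-consistency step—vanishes on $Sol(C_{CF})$, since removing from $R_{ij}$ pairs that fail to extend to a consistent triple with some $x_k$ preserves the solution set. The reverse inclusion $\I_{CF} \subseteq \langle G \rangle$ follows because $\langle G \rangle$ is radical (as $G$ contains a univariate domain polynomial for every variable) and $\mathbf{V}(G) = Sol(C_{CF})$ by the same argument.

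The core work is showing that $G$ is a Gr\"obner basis. My plan is a two-step normal form reduction: given any $f \in \I_{CF}$, first divide by the univariate domain polynomials, reducing $f$ to a polynomial in which each $x_i$ has degree less than $|D_i|$. Then divide by the bivariate polynomials in $G$ for each pair $(x_i, x_j)$ with $i > j$, using that, as noted in \cref{rmrk:binary_constraints_bounded_derivations}, the bivariate polynomials in $G \cap \mathbb{R}[x_i, x_j]$ (after $(2,3)$-consistency terminates) form a Gr\"obner basis of the vanishing ideal $\I(R'_{ij})$, where $R'_{ij}$ is the pairwise-consistent relation of the $(2,3)$-consistent CSP $C'$ equivalent to $C_{CF}$. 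Let $\bar{f}$ denote the resulting normal form; its support consists of monomials that are ``two-variable reduced'' in a precise sense.

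The crux of the argument is then to show that $\bar{f} = 0$. Equivalently, we must show that any nonzero $\bar{f}$ in normal form does not vanish identically on $Sol(C_{CF})$. This is where the dual-discriminator polymorphism together with $(2,3)$-consistency plays the decisive role: it is a classical fact for languages closed under a majority polymorphism that a $(2,3)$-consistent instance is globally consistent, so that every pairwise-consistent partial assignment extends to a full solution. Consequently, the leading monomial of $\bar{f}$ can be evaluated to a nonzero value by choosing a pairwise-consistent assignment for its constituent variables (which exists because $\bar{f}$ is already reduced modulo each $\I(R'_{ij})$ and therefore its leading monomial survives a pairwise-consistent evaluation) and then extending this partial assignment to a full solution of $C'$. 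This would contradict $\bar{f} \in \I_{CF}$, forcing $\bar{f} = 0$ and establishing the Gr\"obner basis property.

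The main obstacle is formalizing the last step: translating the ``extension of partial assignments'' property from the CSP literature into a statement about the leading monomials that survive the two-step reduction. In particular, one needs to verify that the support of $\bar{f}$ indexes monomials $x^\alpha = \prod x_i^{\alpha_i}$ whose variable-tuples are compatible with every $R'_{ij}$—precisely the assignments $(2,3)$-consistency guarantees to be globally extendable. Once this compatibility is in place, an evaluation argument (pick the leading monomial's witness and extend it to a full solution) closes the proof, and $G$ is seen to be the desired Gr\"obner basis.
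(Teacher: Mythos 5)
First, a point of reference: the paper does not prove this lemma at all. It is imported verbatim as Lemma~4.1.5 of Rafiey's thesis, so there is no internal proof to compare your attempt against; what follows is an assessment of your reconstruction on its own terms. Your overall architecture is the standard one for results of this kind and is almost certainly the intended route: (i) establish $\langle G\rangle=\I_{CF}$ by checking $\Variety{G}=Sol(C_{CF})$ and invoking radicality of a zero-dimensional ideal containing a squarefree univariate polynomial in each variable, and (ii) obtain the \GB property by reducing modulo the univariate and bivariate members of $G$ and exploiting the fact that, for a language preserved by a majority operation, a $(2,3)$-consistent instance is globally consistent, so every pairwise-consistent partial assignment extends to a point of the variety.

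The gap is in your closing step. You argue that a nonzero normal form $\bar f$ cannot lie in $\I_{CF}$ because its leading monomial ``can be evaluated to a nonzero value'' at a pairwise-consistent assignment that extends to a full solution. Making the leading monomial nonzero at some $\alpha\in S$ does not make $\bar f(\alpha)\neq 0$: the remaining terms of $\bar f$ may cancel the leading term at that point, so no contradiction with $\bar f\in\I_{CF}$ is obtained. (As stated, the same reasoning would certify essentially any generating set as a \GB basis, which is false.) What is actually needed is that the standard monomials --- those divisible by no $\LM(g)$ for $g\in G$ --- are linearly independent as functions on $S$, equivalently that their number equals $|S|=\dim \mathbb{R}[x_1,\dots,x_n]/\I_{CF}$. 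Global consistency is the right tool for this, but it must be deployed through a counting or inductive projection argument (for instance, writing $\bar f=\sum_k x_n^k f_k(x_1,\dots,x_{n-1})$, using the extension property to control the fibers of $S$ over its projection, and inducting on the number of variables), not through a single evaluation of the leading monomial. Until that linear-independence step is supplied, the proof is incomplete.
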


% It follows that the combinatorial ideal $I_{CF}$ generated by $C_{CF}$ is

% \begin{equation}\label{eqn:complete_two-fan_ideal_sum_decomposition}
%     \I_{CF} = \langle G \rangle.
% \end{equation}
% We emphasize that \cref{eqn:complete_two-fan_ideal_sum_decomposition} holds since $\Variety{I_{CF}} = \Variety{\langle G \rangle}$ and $\langle G \rangle$ is radical.

\subsubsection[Combining I(CPCp) and I(CF)]{Combining $\I_{CPC_p}$ and $\I_{CF}$}\label{sect:combin_gen}

Next, we want to combine the generators of $\I_{CPC_p}$ and $\I_{CF}$. For the moment we have
\begin{equation}\label{eqn:combinatorial_ideal_decomposition_1}
    \I_{\Cc} = \sum_{p \in J} \I_{CPC_p} + \I_{CF} = \sum_{p \in J} (\langle \mathcal{T}_p \rangle + \langle \mathcal{D}_p \rangle + \langle \mathcal{S}_p \rangle) + \langle G \rangle.
\end{equation}

The remainder of this section completes the proof of \cref{th:dual_discriminator}.

\begin{proof} [Proof of \cref{th:dual_discriminator}]
% The idea is to combine together the (partial) domain polynomials in $\mathcal{D}_p$ and $G$ and then to combine together the (partial) domain polynomials in $\mathcal{D}_p$ with the two-fan constraints.

Consider a variable in $M_p = \{LM(s) \ | \ s \in \mathcal{S}_p\}$. First we reduce to the case where all the variables in $G$ are in $\bigcup_{p \in J} X_p \setminus M_p$. Let $f \in \mathcal{S}_p$ and $g \in G$. Assume that $LM(f)$ and $LM(g)$ contain the same variable $x_i$. Therefore $f = x_i - x_j$ for some $x_i \in M_p$ and $x_j \in X_p$ and $g = \Pi_{a \in D_i}(x_i - a)$. Let $b \in D_i$. It suffices to consider $(x_i - x_j)\Pi_{a \in D_i \setminus \{ b \}}(x_i - a)$ and $g$ to obtain $g - f = (x_j - b)\Pi_{a \in D_i \setminus \{ b \}}(x_i - a)$. By iterating over $D_i$, we derive $\tilde{g} = \Pi_{a \in D_i}(x_j - a)$. We then remove $g$ from $G$ and add $\tilde{g}$. Similarly, if $g = (x_i - a)(x_k - b)$, we can derive $\tilde{g} = (x_j - a)(x_k -b)$ and substitute it to $g$ in $G$.

Next, let $f \in \mathcal{T}_p \cup \mathcal{D}_p$ and let $g \in G$. Assume that $LM(f)$ and $LM(g)$ contain $x_i \in X_p$.

\textbf{Case 1.} Suppose $g = \Pi_{a \in D_i} (x_i - a)$. If $g \notin \mathcal{D}_p$, then derive $\tilde{g} = \Pi_{a \in S_i \cap D_i} (x_i - a)$ using Derivation Scheme 2. and replace the (partial) domain polynomials of $x_i$ in $\mathcal{D}_p$ and in $G$ with $\tilde{g}$. Moreover, all the variables in $X_p$ are linked by bijections. So we must also derive updates for any variable $x_j \in X_p \setminus \{x_i\}$, that is, we have to add polynomials $\Pi_{b \in \sigma_{ij}(S_i \cap D_i)}(x_j - b)$ to $\mathcal{D}_p$ and $G$. To do so, it suffices to iteratively consider the factors of polynomial $\tilde{g}$, i.e. polynomials $x_i - a$ for some $a \in S_i \cap D_i$, then consider polynomial $x_i - f(x_j) \in \mathcal{T}_p$, and polynomial $\Pi_{b \in S_j} (x_j - b) \in \mathcal{D}_p$. Using Derivation Scheme 5., iterating over $a \in S_i \cap D_i$, we update each factor of $\tilde{g}$ from $x_i - a$ to $x_j - \sigma_{ij}(a)$, thus ending up with $\Pi_{b \in \sigma_{ij}(S_i \cap D_i)}(x_j - b)$ instead of $\tilde{g}$. We add these new partial domain polynomials to $\mathcal{D}_p$ and $G$.

\textbf{Case 2.} Suppose $ g = (x_i - a)(x_j - b)$. If $a \notin S_i$, then from $g$ and $\Pi_{c \in S_i}(x_i - c)$ we can derive $(\sum_{c \in S_i} c - |S_i| a)(x_j - b)$ and add it to $\mathcal{D}_p$ and $G$ in place of the partial domain polynomials corresponding to $x_j$. The derivation follows from the observation that $[(x_i - a)(x_j - b)] - [(x_i - c)(x_j - b)] = (c - a)(x_j - b)$.

\textbf{Case 3.} Suppose $ g = (x_i - a)(x_j - b)$, $a \in S_i$ and $x_j \in X_q$ with $X_p \neq X_q$. Then using Derivation Scheme 5. we add to $G$ all the polynomials $(x_k - \sigma_{ik}(a))(x_l - \sigma_{jl}(b))$, where $x_k \in X_p$, $x_l \in X_q$.

\textbf{Case 4.} Suppose $g = (x_i - a)(x_j - b)$, $a \in S_i$ and $x_j \notin \cup_q X_q$. Then using Derivation Scheme 5. we add to $G$ all the polynomials $(x_k - \sigma_{ik}(a))(x_j - b)$, where $x_k \in X_p$.

We obtain again that
\begin{equation}\label{eqn:combinatorial_ideal_decomposition_2}
    \I_{\Cc} = \sum_{p \in J} (\langle \mathcal{T}_p \rangle + \langle \mathcal{D}_p \rangle + \langle \mathcal{S}_p \rangle) + \langle G \rangle,
\end{equation}
where $\mathcal{D}_p$ and $G$ might have been updated in multiple instances. However, $\mathcal{T}_p \cup \mathcal{D}_p \cup \mathcal{S}_p$ generates the ideal generated by some $CPC_p$ and, similarly, $G$ generates the ideal generated by a set of complete and two-fan constraints $C_{CF}$. We show next how to compute a \GB for $\I_\Cc$.

First, we observe now that $\PC$ can simulate efficiently Buchberger's algorithm to calculate the \GB basis of the ideal $\langle \mathcal{T}_p \rangle + \langle \mathcal{D}_p \rangle + \langle \mathcal{S}_p \rangle$ for any $p \in J$. Indeed, we recall the definitions of $\mathcal{T}_p$, $\mathcal{D}_p$ and $\mathcal{S}_p$ of \cref{sect:permutation_constraints} and observe that the number of variables in $\mathcal{T}_p$ is at most $|D|!$. On the other hand, for any $s \in \mathcal{S}_p$ and $t \in \mathcal{T}_p \cup \mathcal{D}_p$ we have that $LM(s)$ and $LM(t)$ are coprime. Therefore, the reduced \GB  of $\I_{CPC_p}$ can be calculated in polynomial time, thus independent from $n$. We denote the reduced \GB basis of $\I_{CPC_p}$ with $\mathcal{G}_p$.

Second, by \cref{th:complete_two-fan_grobner_basis} we have again that $G$ is a \GB basis for $\I_{CF}$.

Lastly, we have the following lemma.

\begin{lemma}\cite{BharathiM21, BharathiM25}
    The set $\left( \bigcup_p \mathcal{G}_p \right) \cup G$ is a \GB basis for $\I_\Cc$.
\end{lemma}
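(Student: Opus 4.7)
The plan is to apply Buchberger's S-polynomial criterion: since equation~(\ref{eqn:combinatorial_ideal_decomposition_2}) already establishes that $\mathcal{B} := \left(\bigcup_p \mathcal{G}_p\right) \cup G$ generates $\I_{\Cc}$, it suffices to prove that every S-polynomial $S(f,g)$ with $f, g \in \mathcal{B}$ reduces to zero modulo $\mathcal{B}$. I will do a case analysis on where $f$ and $g$ come from.

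The first three cases are immediate. If $f, g \in \mathcal{G}_p$ for the same $p$, then $S(f,g)$ reduces to zero modulo $\mathcal{G}_p \subseteq \mathcal{B}$ because $\mathcal{G}_p$ is itself a Gröbner basis by the Buchberger's algorithm argument preceding the statement. If $f, g \in G$, the reduction holds by \cref{th:complete_two-fan_grobner_basis}. If $f \in \mathcal{G}_p$ and $g \in \mathcal{G}_q$ with $p \neq q$, then since $X_p \cap X_q = \emptyset$ (the disjointness property of chain permutation constraints), the leading monomials $\LM(f)$ and $\LM(g)$ involve disjoint sets of variables and are therefore coprime; the Product Criterion then gives $S(f,g) \to_{\mathcal{B}} 0$.

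The main obstacle is the cross case, where $f \in \mathcal{G}_p$ and $g \in G$, because here the leading monomials can a priori share a variable. The plan is to exploit the updates performed in Cases~1--4 of \cref{sect:combin_gen}. After the preliminary reduction, no variable of $M_p$ appears in any polynomial of $G$; hence if $f \in \mathcal{S}_p \subseteq \mathcal{G}_p$ (so $\LM(f) \in M_p$), then $\gcd(\LM(f), \LM(g)) = 1$ and the Product Criterion again applies. The remaining subcase is $f \in \mathcal{T}_p \cup \mathcal{D}_p$ and $g \in G$ with a shared variable $x_i \in X_p \setminus M_p$. Here I would check that the updates explicitly injected into $\mathcal{D}_p$ and $G$ polynomials whose leading monomials match those required to reduce the S-pairs: in Case~1, the common refined partial-domain polynomial $\Pi_{a \in S_i \cap D_i}(x_i - a)$ was placed in both $\mathcal{D}_p$ and $G$ and propagated through every bijection $\sigma_{ij}$ to the variables of $X_p$, so all partial-domain information on $X_p$ is synchronized across the two bases; in Cases~2--4, the two-fan factor $(x_i - a)(x_j - b)$ is either discarded (when $a \notin S_i$, yielding a smaller polynomial $(x_j - b)$ already in the basis) or lifted via Derivation Scheme~5 to include analogous polynomials on every companion variable $x_k \in X_p$. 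Consequently, for every such cross-pair the S-polynomial is reducible using these synchronized companions together with the bijection polynomials in $\mathcal{T}_p$ and the domain polynomials in $\mathcal{D}_p$.

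Finally, combining the four cases shows that every S-polynomial reduces to zero modulo $\mathcal{B}$, so by Buchberger's criterion $\mathcal{B}$ is a Gröbner basis of $\I_{\Cc}$. The bulk of the verification is routine bookkeeping; the essential design choice that makes it work is precisely the coordination of $\mathcal{D}_p$ and $G$ under the bijections $\sigma_{ij}$ enforced in Cases~1--4, which guarantees the coprimality or compatibility of leading monomials in every cross-case pair.
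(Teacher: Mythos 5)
First, note that the paper does not actually prove this lemma: it is imported wholesale from \cite{BharathiM21, BharathiM25}, and the whole construction in \cref{sect:combin_gen} (the synchronization of $\mathcal{D}_p$ and $G$, the elimination of $M_p$-variables from $G$) is carried out precisely so that the cited result applies. So there is no in-paper proof to match; the relevant comparison is with what the cited works must establish.

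Your overall strategy (Buchberger's criterion plus a case analysis on the provenance of $f$ and $g$, using the product criterion whenever leading monomials are coprime) is the right one, and the easy cases are handled correctly: pairs within one $\mathcal{G}_p$, pairs within $G$ (via \cref{th:complete_two-fan_grobner_basis}), pairs across distinct blocks $\mathcal{G}_p,\mathcal{G}_q$ (disjoint variable sets), and pairs $f\in\mathcal{S}_p$, $g\in G$ (no $M_p$-variable survives in $G$ after the preliminary reduction). The genuine gap is the remaining cross case, which is the entire mathematical content of the lemma and which you assert rather than prove. Two concrete problems. (i) You treat $\mathcal{G}_p$ as if it were literally $\mathcal{T}_p\cup\mathcal{D}_p\cup\mathcal{S}_p$, but it is the \emph{reduced Gr\"obner basis} output by Buchberger's algorithm on that set; its elements and their leading monomials are not the input polynomials, so the claim that the updates of Cases~1--4 ``injected polynomials whose leading monomials match those required'' is not grounded in anything you have established about $\LM(\mathcal{G}_p)$. (ii) Even at the level of the input polynomials, the coprimality/compatibility you invoke is doubtful: a bijection polynomial $x_i - h(x_j)\in\mathcal{T}_p$ has $\deg(h)$ up to $|D|-1\geq 2$, so under \grlex its leading monomial is a power of $x_j$, not $x_i$; its interaction with a partial domain polynomial $\prod_{a}(x_j-a)\in G$ or a two-fan polynomial containing $x_j$ is therefore a genuine S-pair on the variable $x_j$ that the product criterion does not dispose of, and whose reduction to zero requires exactly the $(2,3)$-consistency of $G$ and the domain synchronization — i.e., the substance of the cited lemma. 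Declaring this ``routine bookkeeping'' leaves the proof incomplete at its only nontrivial point.
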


The result follows from the lemma above.

\end{proof}

\section{Conclusions and research directions}\label{sect:open problem}

In this paper it is shown that for two classes of problems that generalize \textsc{HORN-SAT} and \textsc{2-SAT} a $\PC$ proof of degree $d$ can be found in time $n^{O(d)}$, if it exists (see also \cite{BharathiM21} for related results). This is obtained by first showing that a (truncated) \GB basis for the graded lexicographic order can be computed by $\PC$ in polynomial time for any fixed $d$ (and therefore with polynomial bit complexity). By a simple polynomial division argument (see \cref{sect:PC_bit}), the latter implies that for these two classes there are no bit-complexity issues. Furthermore, both \textsc{HORN-SAT} and \text{2-SAT}, along with their generalizations to finite domains---semilattice and dual-discriminator closed languages, respectively---fit within the framework of bounded width languages \cite{JEAVONS_TRACTABLE_CONSTRAINTS}. As a step towards understanding the boundary of tractability of the $\PC$ criterion, it would be interesting to explore how $\PC$ can be applied to solve the $\IMP_d(\Gamma)$ for bounded width languages. Moreover, results regarding the tractability of the $\IMP_d$, even when using restricted form of algorithms such as those encapsulated in the Polynomial Calculus proof system, would be valuable on their own right.

Similar to SoS, it has often been stated that a $\PC$ refutation of degree $d$ can be found in time $n^{O(d)}$, if it exists. For $\PC$ over finite fields, this is already clear from the algorithm provided in \cite{CleggEI96}. However, in the case of $\PC$ over reals or rationals, the search for proofs can potentially result in bit complexity issues as recently shown by Hakoniemi in \cite{Hakoniemi21}. Indeed, in \cite{Hakoniemi21} it is shown that there is a set of polynomial constraints $Q_n$ on Boolean variables that has both $\sos$ and $\PC$ over rationals refutations of degree 2, but for which any $\sos$ or $\PC$ refutation over rationals must have exponential bit-complexity. The author remarks that the constraints in $Q_n$ do not arise from any CNF, and raise the open question to understand whether the
two measures of bit-complexity and monomial-size are polynomially equivalent for CNFs. Our $\PC$ criterion does not apply to other CNF problems like \textsc{3Lin(2)}, where $\PC$ and $\sos$ are known to be not complete for any fixed $d$. Moreover, we remark that \textsc{3Lin(2)} problems do not arise from bounded width languages \cite{BartoK14}.
As an intermediate step for the open question raised in \cite{Hakoniemi21}, it would be interesting to understand the bit complexity of problems with these CNF constraints.

%\paragraph{Research directions.}
%\sout{In this paper we have considered $\PC$ and $\sos$ over the same integer domain.
%In general, it is not known whether $\PC$ and $\sos$ over \emph{different} integer domains are comparable even in the simple case of Boolean and $\pm 1$, and it would be interesting to investigate this further: Tseitin formulas show that the latter is exponentially stronger than the former but the opposite is not known (see e.g.~\cite{FlemingKothariPitassi19}). In some cases (see Impagliazzo, Mouli, Pitassi~\cite{ImpagliazzoMP23}) the presence of 0 in the domain might be crucial.
%It is known (see e.g.~\cite{Sokolov20}) that there are formulas (for instance Tseitin formulas on constant-degree expanders) that are easy for finite non-Boolean domains (namely, $\pm 1$) but require linear degree.} %An interesting research direction is to understand the relationship between SoS and $\PC$ over different integer domains.

In this paper, we have made partial advancements in the understanding of the bit complexity of \(\sos\), an issue that has only recently garnered attention and remains in its early stages of research. Since it was first raised 2017, progress has been relatively limited. In this section, we have offered some insights that we hope will stimulate further exploration and enhance our understanding of this fundamental problem.
%\section*{Acknowledgements}
%This research was supported by the Swiss National Science Foundation project 200021-207429 ``Ideal Membership Problems and the Bit Complexity of Sum of Squares Proofs''.

{
\bibliography{references}
}

\newpage

\appendix
%%%%%%%%%%%%%%%%%%%%%%%%%%%%%%%%%%
\section{Refutation degree for Horn clauses}\label{sect:ref_deg}
%%%%%%%%%%%%%%%%%%%%%%%%%%%%%%%%%%%%%%%%%%%%%%%%%%%%%%%%%%%%%%%%
Consider the case all clauses are duals of Horn clauses (for simplicity, Horn clauses work out identically), namely at most one variable is negated per clause. 
We encode these clauses as a set of polynomial identities in a way that preserves their semantics over $\{0,1\}^n$ assignments.
Namely, let $\Cc=C_1\wedge C_2 \wedge \ldots \wedge C_m$ be a dual Horn clause formula. We encode each clause $C_i=\neg x_{i_1}\lor x_{i_2}\lor \ldots \lor x_{i_k}$ by introducing a polynomial identity $P_i:x_{i_1}(x_{i_2}-1)\cdots (x_{i_k}-1)=0$. The set of $\{0,1\}$ assignments that satisfy the newly introduced set of polynomial identities is exactly the set of satisfying assignments to $\Cc$.

The refutation by \PC\ works as follows. Take all the variables that are already known to be false, say set $F$. These variables belong to $\I_\Cc$. Consider clause $C_i$ and the corresponding polynomial identity $P_i:x_{i_1}(x_{i_2}-1)\cdots (x_{i_k}-1)=0$. If $\{x_{i_2},\ldots, x_{i_k}\}\subseteq F$ then it is easy show that $x_{i_1}\in \I_\Cc$ since by using the aforementioned polynomial identity $P_i$ we can express $x_{i_1}$ as polynomial combination of the variables in $\{x_{i_2},\ldots, x_{i_k}\}$.\footnote{For example if $P:x_{1}(x_{2}-1)(x_{3}-1)=0$ and $x_2,x_3\in\I_\Cc$ then by the polynomial identity $P$ we have that $x_1=x_{1}x_{2}(1-x_{3})+x_{1}x_{3}$, implying that $x_1\in\I_\Cc$.}
So we have added a new variable to the set $F$ of known false variables. If the set of $F$ covers an entire clause with no negated variables, then we can derive that $1\in \I_\Cc$ and we are done. If at some point, neither is true, by setting all remaining variables to 1 we satisfy all the clauses. So if the Horn clauses were unsatisfiable we find a proof whose degree is at most the degree of the polynomial identities encoding the clauses. 

%%%%%%%%%%%%%%%%%%%%%%%%%%%%%%%%%%%%%%%%%%%%%%%%%%%%%%%%%%%%%%%%%
\section{Complexity of Polynomial Division}\label{sect:PC_bit}
%%%%%%%%%%%%%%%%%%%%%%%%%%%%%%%%%%%%%%%%%%%%%%%%%%%%%%%%%%%%%%%%
Consider the polynomial ring $\mathbb{R}[x_1, \dots, x_n]$ ordered according to the $\grlexns$ order, with $x_1 > x_2 > \dots > x_n$. We will study the complexity of the standard division algorithm for multivariate polynomials (see \cite[Section 2]{Cox}). In particular, we observe next that it is a polynomial time algorithm.

\begin{lemma}\label{th:complexity_polynomial_division}
    Let $\mathcal{P} = \{p_1, \dots, p_m\}$ be a set of polynomials in $\mathbb{R}[x_1, \dots, x_n]$ and consider a polynomial $f \in \mathbb{R}[x_1, \dots, x_n]$. Assume that $f, p_1, \dots, p_m$ have degree at most $d$ and bit complexity polynomial in $n$. Then $f$ can be written as
    \begin{equation*}
        f = h_1 p_1 + \dots h_m p_m + r,
    \end{equation*}
    with $r, h_1, \dots, h_m$ having bit complexity polynomial in $n$.
\end{lemma}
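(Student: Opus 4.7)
The plan is to invoke the standard multivariate division algorithm (see \cite[Chapter 2]{Cox}) with divisor list $(p_1, \dots, p_m)$ under the $\grlexns$ order, and to verify that it runs within the claimed bit-complexity bound. Recall the algorithm maintains a running dividend $p$ (initialized to $f$), quotients $h_i$ (initialized to $0$), and a remainder $r$ (initialized to $0$); at each step, if $\LT(p_i)$ divides $\LT(p)$ for some $i$, one updates $h_i \leftarrow h_i + \LT(p)/\LT(p_i)$ and $p \leftarrow p - (\LT(p)/\LT(p_i))\, p_i$; otherwise one moves $\LT(p)$ into $r$ and removes it from $p$.

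First I would observe that under $\grlexns$ the intermediate polynomial $p$ never exceeds degree $d$. Indeed $\deg((\LT(p)/\LT(p_i))p_i) \leq \deg(p)$, so every monomial ever appearing in $p$, $r$, or any $h_i$ lies in the set $M_d$ of monomials of degree at most $d$ in $n$ variables, whose cardinality is $\binom{n+d}{d}$. Under the standing assumption of the paper that $d = O(1)$ (so that $poly(n^d) = poly(n)$), one has $|M_d| = poly(n)$.

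Next I would bound the number of iterations by $|M_d|$. This relies on the defining property that $\LT(p)$ strictly decreases in the monomial order at each iteration: in both branches the leading monomial of $p$ is either canceled against $\LT((\LT(p)/\LT(p_i))\, p_i)$ or moved out of $p$ into $r$. Therefore the algorithm terminates after at most $|M_d| = poly(n)$ steps.

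Finally, I would control coefficient growth, which is the main delicate point. Each iteration modifies at most $|p_i|$ coefficients of $p$ through a rational arithmetic operation of the form $\alpha \mapsto \alpha - (\LC(p)/\LC(p_i))\,\beta$, where $\alpha$ is an existing coefficient of $p$ and $\beta$ is a coefficient of $p_i$. Writing rationals as reduced fractions, the bit size of the updated coefficient is at most the sum of the bit sizes of $\alpha$, $\LC(p)$, $\LC(p_i)^{-1}$, and $\beta$. A straightforward induction over the $poly(n)$ iterations, starting from the hypothesis that all input coefficients have $poly(n)$ bit complexity, shows that every coefficient encountered in $p$, $r$, or any $h_i$ retains $poly(n)$ bit complexity; the key point to check carefully is that denominators in reduced form accumulate only through $poly(n)$ multiplications of $poly(n)$-sized integers and hence stay polynomially bounded. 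Combining the three ingredients—bounded support $M_d$, $poly(n)$ iterations, and $poly(n)$-bit coefficients—yields the identity $f = h_1 p_1 + \cdots + h_m p_m + r$ with $h_1, \dots, h_m$ and $r$ all of bit complexity polynomial in $n$, as required.
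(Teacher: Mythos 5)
Your proposal follows essentially the same route as the paper's proof: bound the number of iterations of the division algorithm by the number of monomials of degree at most $d$ (which is $n^{O(d)} = poly(n)$ since the leading term of the running dividend strictly decreases in the $\grlexns$ order), and then bound the per-iteration growth of the coefficients. If anything, you are slightly more explicit than the paper about the one delicate point—that coefficient and denominator growth in the intermediate dividend must be checked not to compound across iterations—which the paper's own argument glosses over with the same level of detail.
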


\begin{proof}
    We will refer to $h_1, \dots, h_m$ as the \emph{quotients} and to $r$ as the \emph{remainder}.
    
    We start by observing that the algorithm runs at most in $n^{O(d)}$ iterations. Indeed, at every iteration the polynomial $f$ is divided by a polynomial from $\mathcal{P}$. Thus, its remainder either has smaller degree or the corresponding leading term is smaller with respect to the $lex$ order. Then $f$ is updated to be the division's remainder. It follows that the number of polynomial divisions is bounded by $n^{O(d)}$.

    Next we argue that the bit complexity of the remainder and of the quotients is polynomial in $n$. Let $b$ be the largest number of bits to encode (the coefficients of) a polynomial $f, p_1, \dots, p_m$. Recall that, by assumption, $b = n^{O(d)}$. At every iteration of the algorithm, the bit complexity of the quotients and of the remainder is increased by $O(b)$ bits since a polynomial (quotient or remainder) is updated by summing a term of bit complexity $O(b)$.
\end{proof}

From this lemma it follows immediately that the existence of a "small" \GB basis implies that the \IMP\ can be solved efficiently.

\begin{corollary}
    Let $\mathcal{G}_{2d} = \{g_1, \dots, g_s\}$ be a $2d$-truncated \GB basis of the polynomial ideal $\I \subseteq \mathbb{R}[x_1, \dots, x_n]$. Consider a polynomial $r$ of degree at most $2d$. If the polynomials $r,g_1, \dots, g_s$ have bit complexity polynomial in $n$, then the (search version) of the $\IMP_{2d}$ for $r$ can be solved in time polynomial in $n$.
\end{corollary}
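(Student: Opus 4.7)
The plan is to combine the defining property of a Gröbner basis with Lemma~\ref{th:complexity_polynomial_division} in a very direct way. Recall that, by definition of a Gröbner basis, for any polynomial $r$ we have $r \in \I$ if and only if the remainder $r|_{\mathcal{G}}$ of the multivariate division of $r$ by $\mathcal{G}$ is zero. Moreover, since $\mathbb{R}[x_1,\ldots,x_n]$ is ordered according to the $\grlexns$ order and $\deg(r)\le 2d$, every divisor that can possibly be used during the division of $r$ has degree at most $2d$; hence it is enough to divide by the truncated set $\mathcal{G}_{2d}$ in place of the full basis $\mathcal{G}$.

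Concretely, I would proceed as follows. First, run the standard multivariate division algorithm (see, e.g.,~\cite{Cox}) to produce polynomials $h_1,\ldots,h_s$ and $\tilde r$ such that
\begin{equation*}
    r \;=\; h_1 g_1 + \cdots + h_s g_s + \tilde r,
\end{equation*}
with $\tilde r$ reduced with respect to $\mathcal{G}_{2d}$. Second, test whether $\tilde r = 0$: in the affirmative case, output ``$r \in \I$'' together with the explicit certificate $(h_1,\ldots,h_s)$; otherwise, output ``$r \notin \I$''. Correctness of this test is exactly the Gröbner basis property, and when $r\in \I$ the tuple $(h_1,\ldots,h_s)$ is the required search-version witness.

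It remains to bound the running time. By hypothesis, $r, g_1,\ldots, g_s$ all have degree at most $2d$ and bit complexity polynomial in $n$; moreover $s$ is polynomial in $n$ since a set of polynomials with those bit-complexity bounds is polynomially-sized. Therefore Lemma~\ref{th:complexity_polynomial_division} applies directly to the division of $r$ by $\mathcal{G}_{2d}$, yielding quotients $h_i$ and remainder $\tilde r$ whose bit complexity is polynomial in $n$, computed in time polynomial in $n$. The zero-test on $\tilde r$ is then trivially polynomial-time, which completes the argument.

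There is no real obstacle here: the statement is essentially a repackaging of Lemma~\ref{th:complexity_polynomial_division} together with the Gröbner basis characterization of ideal membership. The only point that deserves a brief justification is why the $2d$-truncation $\mathcal{G}_{2d}$ is sufficient, and this follows from the graded nature of the $\grlexns$ order together with the degree bound on $r$.
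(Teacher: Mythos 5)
Your proposal is correct and follows essentially the same route as the paper: reduce ideal membership to checking that the remainder of the division of $r$ by $\mathcal{G}_{2d}$ vanishes, and invoke \cref{th:complexity_polynomial_division} to bound the cost and the bit complexity of the quotients and remainder. The extra justification you give for why the $2d$-truncation suffices (graded order plus the degree bound on $r$) matches the discussion surrounding \cref{def:dTruncated GB} in the paper.
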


\begin{proof}
    We have that 
    \begin{align*}
        r \in I \iff r|_{\mathcal{G}_{2d}} = 0.
    \end{align*}
    The result follows from \cref{th:complexity_polynomial_division}.
\end{proof}

\end{document}